\newif\ifbookmargins 
\newif\ifcoverpage 
\newif\ifoneside 

\newif\ifdraft 

\newif\ifdoubleview 
\newif\iffullscreen 

\documentclass[12pt, a4paper]{article}

\usepackage{etoolbox} 
\usepackage{verbatim} 
\usepackage{amsmath}
\usepackage{amsthm} 
\usepackage{amsfonts} 
\usepackage{amssymb} 
\usepackage{dsfont} 
\usepackage{bibentry}
\usepackage{latexsym}
\usepackage{amsfonts}
\usepackage{amstext}
\usepackage{enumerate}

\usepackage{graphicx}
\usepackage{float}
\usepackage{caption}
\usepackage{subcaption}
\usepackage{wrapfig}

\usepackage{anyfontsize}
\usepackage{fancyhdr} 
\usepackage[calcwidth,noindentafter]{titlesec} 

\usepackage[T1]{fontenc}
\usepackage[utf8]{inputenc}
\usepackage[english]{babel}

\usepackage[unicode=true]{hyperref} 


\newcommand*\parttitle{BUG} 

\titleformat{\part}[block]
  {\LARGE\bf\filleft}
  {\sc{\partname{}\ }\thepart\vspace{5mm}}
  {0pt}
  {\titleline*[r]{\titlerule[.7pt]}\vspace{5mm}\fontsize{35pt}{1.5em}\selectfont}

\makeatletter
\let\titlesec@part\part
\renewcommand*{\part}[2][]{
\hypersetup{bookmarksopen=true,bookmarksopenlevel=1}
   \cleardoublepage
   \ \\[0.2\textheight]
   \begin{flushright}
  \begin{minipage}{0.7\textwidth} 
   \ifx\\#1\\
      \titlesec@part{#2}
      \end{minipage}
      \end{flushright}
      \renewcommand*\parttitle{#2}
      \sectionmark[*]{\textsc{\partname{}\ }\thepart.\ #2}
   \else
      \titlesec@part[#1]{#2}
      \end{minipage}
      \end{flushright}
      \renewcommand*\parttitle{#1}
      \sectionmark[*]{\textsc{\partname{}\ }\thepart.\ #1}
   \fi
   \cleardoublepage
}
\makeatother

\numberwithin{equation}{subsection}

\newcommand\RR{\mathds{R}}
\newcommand\ZZ{\mathds{Z}}
\newcommand\CC{\mathds{C}}

\newcommand\one{\mathds{1}}

\newcommand{\lotimes}{\mathop{\otimes}\limits}
\newcommand{\loplus}{\mathop{\oplus}\limits}


\newcommand{\nph}{{\rm nph}}
\renewcommand{\d}{{\operatorname{d}}}
\newcommand{\e}{\operatorname{e}}
\newcommand{\Exc}{{\rm Exc}}
\newcommand{\exc}{{\epsilon}}
\newcommand{\0}{\mathbf{0}}
\renewcommand{\sp}{{\operatorname{sp}}}
\newcommand{\cl}{{\operatorname{cl}}}
\newcommand{\Ran}{{\operatorname{Ran}}}
\renewcommand{\a}{{\operatorname{a}}}
     \renewcommand{\d}{{\operatorname{d}}}
\newcommand{\s}{{\operatorname{s}}}
\newcommand{\sah}{{\operatorname{\varsigma}}}
\newcommand{\crit}{{\operatorname{cr}}}
\newcommand{\ess}{{\rm ess}}
\newcommand{\fr}{{\rm fr}}
\newcommand{\sgn}{{\operatorname{sgn}}}
\newcommand{\vecsp}{\overrightarrow{\operatorname{sp}}}
\newcommand{\hc}{{\rm hc}}
\newcommand{\ext}{{\operatorname{ext}}}

\newcommand\x{\mathbf{x}}

\newcommand\y{\mathbf{y}}
\newcommand\kk{\mathbf{k}}
\newcommand\p{\mathbf{p}}
\newcommand\q{\mathbf{q}}


\newcommand\zz{\mathrm{z}}

\newcommand\ii{\mathrm{i}}


\newcommand\fG{\mathfrak{G}}

\newcommand\cH{\mathcal{H}}

\newcommand\cQ{\mathcal{Q}}
\newcommand\cK{\mathcal{K}}

\newcommand\cZ{\mathcal{Z}}






\newcommand\bep{\begin{proposition}}
\newcommand\eep{\end{proposition}}
\newcommand\ber{\begin{remark}}
\newcommand\eer{\end{remark}}

\newcommand{\Bog}{{\operatorname{Bog}}}
\def\sa{{\rm s/a}}
\def\fG{{\mathfrak G}}

     \newtheorem{thm}{Theorem}[section]
     \newtheorem{prop}[thm]{Proposition}
     \newtheorem{lemma}[thm]{Lemma}
     \newtheorem{cor}[thm]{Corollary}

     \newtheorem{remark}[thm]{Remark}

\let\theparentequation\theequation
\patchcmd{\theparentequation}{equation}{parentequation}{}{}

\renewenvironment{subequations}[1][]{
  \refstepcounter{equation}%
  \setcounter{parentequation}{\value{equation}}
  \setcounter{equation}{0}
  \let\parentlabel\label
  \ifx\\#1\\\relax\else\label{#1}\fi
  \ignorespaces
}{%
  \setcounter{equation}{\value{parentequation}}
  \ignorespacesafterend
}

\newcommand*{\nextParentEquation}[1][]{
  \refstepcounter{parentequation}
  \setcounter{equation}{0}
  \ifx\\#1\\\relax\else\parentlabel{#1}\fi
}

\newcommand\longtitle{Excitation spectrum and quasiparticles in quantum gases. A rigorous approach.}
\newcommand\authors{Marcin Napiórkowski}
\newcommand\headtitle{Excitation spectrum and quasiparticles in quantum gases.} 

\hypersetup{
  pdfauthor={\authors},
  pdftitle={\longtitle},
  pdfsubject={spectral analysis},
  pdfkeywords={Bogoliubov approximation},
 }

 \hypersetup{
   pdfstartview={FitH},
   pdffitwindow=false,
   pdfborder={0 0 0},
   colorlinks=true,
   linktoc=page,
   linkcolor=blue,
   urlcolor=blue,
   hypertexnames=false,
   pdfdisplaydoctitle=true,
   pdfduplex=DuplexFlipLongEdge,
   ,
 }

\ifcoverpage\hypersetup{pdfstartview=FitV}\fi

\ifdoubleview\hypersetup{%
   pdfstartview={Fit}, 
   pdfpagemode=UseOutlines,
   pdfpagelayout=TwoPageRight,
   ,
   ,
}\fi

\iffullscreen\hypersetup{%
   pdfstartview={Fit}, 
   pdfnonfullscreenpagemode=UseOutlines, 
   pdfpagemode=FullScreen, 
}\fi

\ifoneside\hypersetup{pdfduplex=Simplex}\fi

\ifdraft\hypersetup{draft=true}\fi

\ifbookmargins
 \def\marginasymm{3.5mm}
\else
 \def\marginasymm{0mm}
\fi

 \setlength{\textwidth}{16cm}
 \setlength{\textheight}{22cm}
 \setlength{\voffset}{3mm}
 \setlength{\oddsidemargin}{\marginasymm}
 \setlength{\evensidemargin}{-\marginasymm}
 \setlength{\topmargin}{0mm}
 \setlength{\headsep}{11mm}
 \setlength{\headheight}{14pt}
 \setlength{\footskip}{0mm}

\titlespacing{\section}{0pt}{0em}{1.8em}
\titlespacing{\subsection}{0pt}{1.6em}{1.2em}

\begin{document}
\pagestyle{fancy}
\fancyhf{} 
\renewcommand{\sectionmark}[2][]{%
\ifx\\#1\\
    \markright{\ifnum\value{section}=0{}\else{\thesection.{\ }}\fi#2}
 \else{
   \markright{#2}
 }
\fi
}
\renewcommand{\subsectionmark}[1]{}
\ifoneside{
  \lhead{\nouppercase{\textsf{\rightmark}}}
  \rhead{\textsf{\thepage}}
}
\else
  \fancyhead[RE]{\textsf{\ifnum\value{part}=0{\headtitle}\else{\textsc{\partname{}{\ }}\thepart.{\ }\parttitle}\fi}}
  \fancyhead[LO]{\nouppercase{\textsf{\rightmark}}}
  \fancyhead[LE,RO]{\textsf{\thepage}}
\fi
\setcounter{page}{-1} 
\pagestyle{empty} 
\begin{center}
  \begin{figure}[h]
  \centering\includegraphics[height=8em]{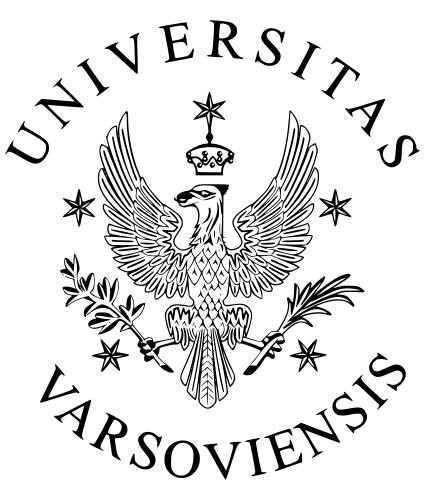}
  \end{figure}
\begin{minipage}[c]{0.8\textwidth}
  \centering{
  \LARGE{\ \\\textsc{University of Warsaw\\Faculty of Physics\\[12mm]}}
  \normalsize{\textsc{Ph.D. thesis }\\[2mm]}
  \rule{\linewidth}{0.5mm}\\[0.7em]
  \Large{\textbf{\longtitle}}\\
  \rule{\linewidth}{0.5mm}\\[2mm]
  \ \\
  }
 \end{minipage}
\begin{minipage}{0.37\textwidth}
 \begin{flushleft}
  \emph{Author:}\\
  Marcin \textsc{Napiórkowski}
 \end{flushleft}
\end{minipage}
\begin{minipage}{0.37\textwidth}
  \begin{flushright}
  \emph{Supervisor:}\\
  Prof. Jan \textsc{Dereziński}
  \end{flushright}
\end{minipage}
\vfill
A dissertation submitted for the degree of\\Doctor of Philosophy\\
at the University of Warsaw\\[1cm]
\begin{minipage}{0.21\textwidth}
 \begin{flushleft}
  \emph{\today}
 \end{flushleft}
\end{minipage}
\begin{minipage}{0.21\textwidth}
  \begin{flushright}
  \includegraphics[height=5em]{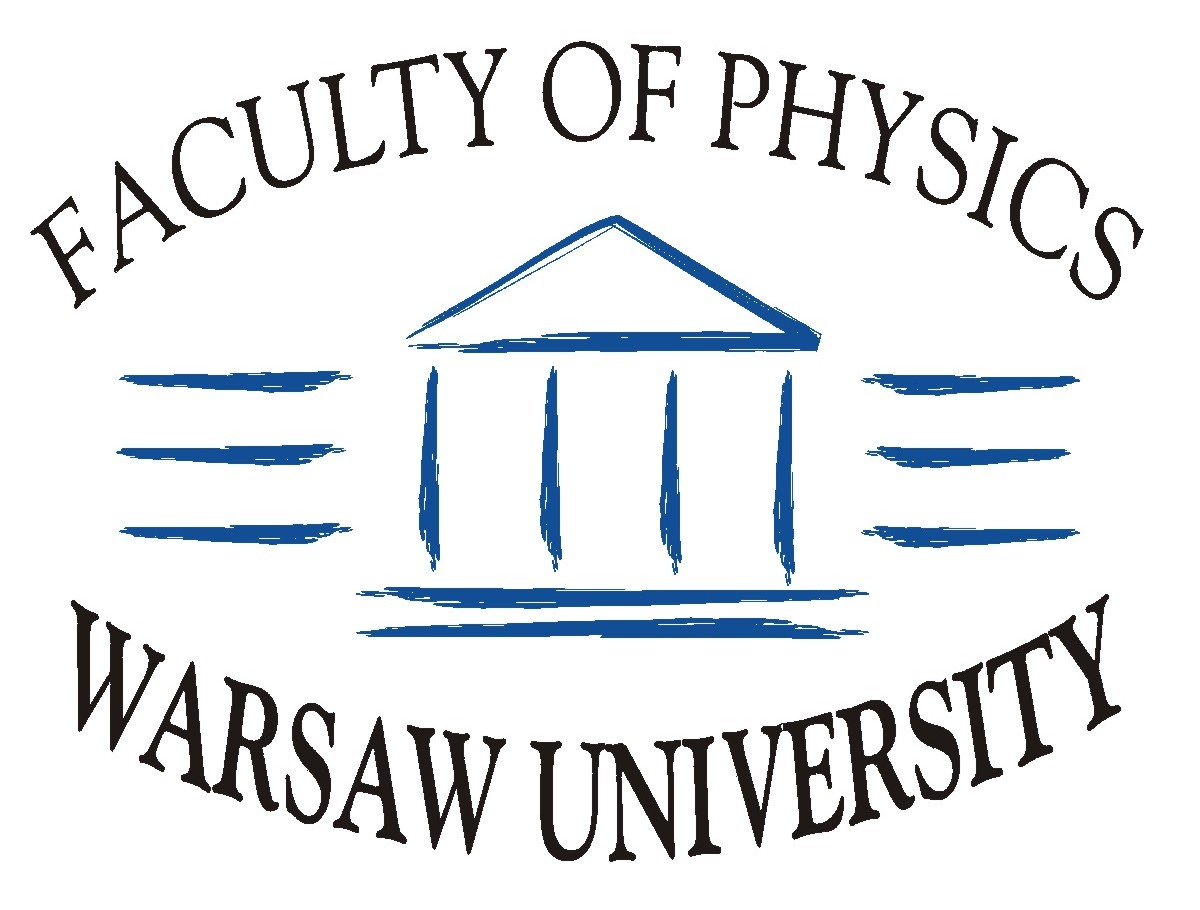}
  \end{flushright}
\end{minipage}
\end{center}
\cleardoublepage
\pagestyle{fancy}
\vspace*{\fill} 

\newpage \section*{Abstract}

This thesis is devoted to a rigorous study of interacting quantum gases. The main objects of interest are the closely related concepts of \textit{excitation spectrum} and \textit{quasiparticles}. The immediate motivation of this work is to propose a spectral point of view concerning these two concepts.

In the first part of this thesis we discuss the concepts of excitation spectrum and quasiparticles. We provide an overview of physical motivations and based on that we propose a spectral and Hamiltonian-based approach towards these terms. Based on that, we formulate definitions and propositions related to these concepts.

In the second part we recall the Bogoliubov and Hartree--Fock--Bogoliubov approximations, which in the physics literature are used to obtain the quasiparticle picture. We show how these two approaches fit into a universal scheme which allows us to arrive at a quasiparticle picture in a more general setup. This scheme is based on the minimization of Hamiltonians over the so-called \textit{Gaussian states}. Its abstract formulation is the content of \textit{Beliaev's Theorem}.

In the last part we present a rigorous result concerning the justification of the Bogoliubov approximation. This justification employs the concept of the mean-field and infinite-volume limit. We show that for a large number of particles, a large volume and a sufficiently high density, the low-lying energy-momentum spectrum of the homogeneous Bose gas is well described by the Bogoliubov approximation. This result, which is formulated in the form of a theorem, can be seen as the main result of this thesis.

\newpage
\thispagestyle{empty}
\mbox{}

\newpage \section*{Streszczenie}

Niniejsza rozprawa poświęcona jest ścisłej analizie gazów kwantowych, a w szczególności zbadaniu pojęć \textit{kwaziczątek} i \textit{widma wzbudzeń} w takich układach. Przedstawione w pracy podejście do tych terminów oparte jest o spektralne własności układów hamiltonowskich.

Rozprawa składa się z trzech części. Pierwsza z nich poświęcona jest dyskusji pojęć kwazicząstek i widma wzbudzeń. Przedstawiony został fizyczny kontekst pojawienia się tych terminów. Posłużył on za motywację do wprowadzenia innych - opartych o formalizm hamiltonowski i własności spektralne układów - definicji tych pojęć. Przedstawione zostały bezpośrednie konsekwencje zaproponowanych definicji. 

Druga część pracy poświęcona jest przybliżeniom Bogoliubowa oraz Hartree--Focka--Bogoliubowa. Nazwy te odnoszą się do schematów, które pozwalają na przybliżone wyznaczenie widma wzbudzeń gazu oddziałujących bozonów i fermionów. Pokazane zostało jak procedury te wpisują się w pewien ogólny schemat rachunkowy, który pozwala na otrzymanie obrazu kwazicząsteczkowego. Ten uniwersalny schemat związany jest z minimalizacją hamiltonianów w klasie tak zwanych \textit{stanów gaussowskich}. Jego abstrakcyjne sformułowanie ujęte jest w postaci \textit{twierdzenia Beliaewa}.

W trzeciej części rozprawy przedstawione zostały ścisłe wyniki dotyczące przybliżenia Bogoliubowa. Udowodniono poprawność tego przybliżenia w ramach teorii pola średniego, w tak zwanej granicy dużej gęstości i słabego sprzężenia. Wynik ten - sformułowany w postaci twierdzenia - uznać można za główny rezultat tej rozprawy.

\newpage
\thispagestyle{empty}
\mbox{}

\newpage \section*{Acknowledgements}
I express my deep gratitude to my advisor Jan Dereziński for giving me the opportunity to work in this interesting field of mathematical physics. This work would not be possible without his constant support and advice.

Part of this thesis is based on a joint work with Jan Philip Solovej who hosted me in Copenhagen in spring 2012. I am grateful for his warm hospitality, patience and many hours of educating discussions.

I would like to thank Peter Pickl and his group for their warm hospitality during my stay in Munich in spring 2013. Special thanks go to my colleagues from the Chair of Mathematical Methods in Physics at the University of Warsaw, especially Przemek Majewski and Paweł Kasprzak. 

Last, but certainly not least, I thank my family for their constant support during my Ph.D. studies.
\newpage
\thispagestyle{empty}
\mbox{}

\newpage\tableofcontents
\newpage
\thispagestyle{empty}
\mbox{}
\section[Introduction]{Introduction}
\sectionmark{Introduction}

One of the main goals of many-body quantum mechanics is to determine how properties of a system of interacting particles differ from those of  non-interacting particles. One can look at different features, e.g., the ground-state energy, excited energy levels, various temperature-dependent quantities and phenomena like free energy, specific heat, phase transitions, etc. These issues, due to the complexity of the system, are usually difficult to analyze rigorously. 

In general, the concept of quasiparticles in a translation invariant quantum system refers to the following idea: as a consequence of the interaction between particles, the single-particle motion becomes considerably modified. As a particle moves along, it drags some particles with it, repels others, etc.  Because of that interaction, one observes a different dispersion relation (energy versus momentum dependence) than for a freely moving particle. The expectation is that, for very low temperatures, the system may be regarded as made up of \textit{quasiparticles} - that is new, independent particles with a specific dispersion relation that depends on many features of the system \cite{Pines1962}.

As a consequence, one expects that quasiparticles provide a description of low-lying excited states. This makes the concepts of \textit{quasiparticles} and \textit{excitation spectrum} - by which we mean the joint energy-momentum spectrum with the ground-state energy subtracted - closely related. 

Historically, the idea of quasiparticles appeared first in the context of condensed matter problems, both in reference to bosonic \cite{Bogoliubov1947} and fermionic \cite{Landau1957,Landau1959} systems. It subsequently spread to other fields of physics, among others to the theory of atomic nuclei and plasma theory.\\
\textit{}\\
One of the possible ways to put the concept of quasiparticles on mathematical basis, is the introduction of the \textit{single-particle Green's function}. In this approach, which will be described in Section \ref{quasigreen}, the energy and lifetime of a quasiparticle are determined by a pole in the analytic continuation of the Green's function. This description has several advantages. In particular, it allows for the \textit{macroscopic} limit (we will use the terms macroscopic and \textit{infinite-volume} interchangeably) without referring to any kind of macroscopic limit of the Hamiltonian of the system. Moreover, one should \textit{not} expect that there exists a Hamiltonian corresponding to the Green's function in the infinite-volume limit.

On the other hand, one could argue that, since quantum theories based on Hamiltonians are in some sense more fundamental, a purely spectral approach to the concept of quasiparticles would be desirable. This was the main motivation for this thesis. In particular, in the first part of the thesis we propose an approach towards the concept of quasiparticles which does not refer to the notion of Green's functions. It is based on spectral properties of Hamiltonians and will be the keynote of this work. 

There exist many papers that study the energy spectrum of systems consisting of interacting fermions and bosons. In particular, there exist interesting works that study the Bogoliubov and Hartree--Fock--Bogoliubov approximations. In our approach, we would like to stress the significance of  \textit{translation invariance} of the systems considered. This enables us to ask questions about the excitation spectrum,  which we expect to have interesting properties. Some of these features play an important role in explaining various phenomena in condensed matter theory such as \textit{superfluidity} in bosonic systems \cite{Landau1941} and \textit{superconductivity} in fermionic ones \cite{BCS1957}.

Several papers discuss the way in which the models based on the concept of quasiparticles approximate the properties of models based on realistic Hamiltonians. However, only relatively crude features are  considered in essentially all these papers. Typically, they study the energy or the free energy per volume in the thermodynamic limit. We are particularly interested in the excitation spectrum. There are rather few rigorous results on the excitation spectrum of an interacting quantum system. One of them has been derived by the author and is presented in the third part of this dissertation. \\
\textit{}\\
The thesis is organized into three parts. They are related, but can be read independently.    

The first part of the thesis consists of Sections \ref{quasigreen} and \ref{excandquasi}. In Section \ref{quasigreen}, based on the books of Pines (\cite{Pines1962}) and Fetter-Walecka (\cite{FetterWalecka1971}), we present an approach towards the concept of quasiparticles that one encounters in many textbooks. It is based on field-theoretical techniques in many-body quantum mechanics and the concept of Green's functions. In Section \ref{excandquasi}, we present a different approach towards quasiparticles. It is formulated in terms of spectral properties of Hamiltonian systems. We would like to stress that the approach presented in this part of the thesis is quite general and applies to any translation invariant system. Section \ref{excandquasi} is based mainly on the article \textit{On the energy-momentum spectrum of a homogeneous Fermi gas} (Annales Henri Poincar\'e 14, 1-36 (2013))   written by Jan Derezi\'nski, Krzysztof A. Meissner and the author. 

Sections \ref{bosegas}, \ref{fermigassec} and \ref{beliaevthmsec} form the second part of this thesis in which we introduce approximate methods that lead to the derivation of the excitation spectrum. In Section \ref{bosegas}, we present the \textit{Bogoliubov approximation} and the so-called \textit{improved Bogoliubov method}. This section is based on the papers \cite{Bogoliubov1947} and \cite{CDZ2009}. Although these articles are not co-authored by the author of this thesis, we include some parts thereof in our presentation because they provide bosonic counterparts of the \textit{Hartree--Fock--Bogoliubov approximation} presented in Section \ref{fermigassec}. The latter section is also based on the article \cite{DMN2013} mentioned in the previous paragraph. In Section \ref{beliaevthmsec}, the last one in the second part of the thesis, we show how the above approximate methods fit into a universal scheme which allows us to arrive at a quasiparticle picture in a more general setup. This scheme is based on the minimization of Hamiltonians over the so-called \textit{Gaussian states}. Section \ref{beliaevthmsec} is based on the article \textit{On the minimization of Hamiltonians over pure Gaussian states} which appeared in the book "Complex Quantum Systems. Analysis of Large Coulomb Systems" (World Scientific 2013), written by Jan Derezi\'nski, Jan Philip Solovej and the author.

In the third and last part of the thesis, we present a rigorous justification of the Bogoliubov approximation. This justification employs the concept of the mean-field and infinite-volume limits. We show that for a large number of particles, a large volume and a sufficiently high density, the low-lying energy-momentum spectrum of the homogeneous Bose gas is well described by the Bogoliubov approximation. In particular, we give explicit bounds on the error terms. In Section \ref{s1}, we formulate that statement in a precise way, while in Section \ref{proofba} we present the proof of that statement. These sections are based on the article \textit{Excitation spectrum of interacting bosons in the mean-field infinite-volume limit} written by Jan Derezi\'nski and the author of this thesis. It will appear in Annales Henri Poincar\'e (DOI 10.1007/s00023-013-0302-4). 

We close the thesis with a short summary and present possible directions of future research.

\part{Excitation spectrum and quasiparticles}
\let\oldsection\section
\renewcommand\section{\clearpage\oldsection}

\section[Quasiparticles via Green's function]{Quasiparticles via Green's function}\label{quasigreen}
One of the possible ways to put the notion of quasiparticles on a reasonable mathematical basis, is the introduction of a concept used in quantum field theory, namely the single-particle propagator or \textit{single-particle Green's function}. Below, we give a brief presentation of that approach. It is based on the classical books of Fetter--Walecka \cite{FetterWalecka1971} and Pines \cite{Pines1962}. These techniques will not be used later in this thesis. 

\subsection{Single-particle Green's function}
Consider a spatially homogeneous (i.e. translation invariant) system which is described by the time-independent, many-body Hamiltonian $H$. Let $\psi_{0}$ represent its ground state with energy $E_{0}$. 

We define the \textit{single-particle Green's function}, or \textit{propagator}, as 
\begin{gather}
G(\p,\tau):=-\ii \langle \psi_{0}|T\{c_{\p}(\tau)c^{\dagger}_{\p}(0)\}|\psi_{0}\rangle,  \label{propagator}
\intertext{where}
c_{\p}(\tau)=\e^{\ii H \tau}c_{\p}\e^{-\ii H \tau} \nonumber
\end{gather}
and its hermitian conjugate are annihilation and creation operators in the Heisenberg representation (here and henceforth we assume $\hbar=1$) . $T$ is the Dyson chronological operator, which orders earlier times to the right. Thus, using \eqref{propagator}, we have 
\begin{subequations}[propagatory]
\begin{align}
G(\p,\tau)=-\ii \langle \psi_{0}|c_{\p}\e^{-\ii H \tau}c^{\dagger}_{\p}|\psi_{0}\rangle\e^{\ii E_{0} \tau} \quad \quad \tau>0 \\
  G(\p,\tau)=\pm\ii \langle \psi_{0}|c^{\dagger}_{\p}\e^{\ii H \tau}c_{\p}|\psi_{0}\rangle\e^{-\ii E_{0} \tau} \quad \quad \tau<0, 
\end{align}
\end{subequations}
where the $(+)$ sign corresponds to fermions and the $(-)$ sign to bosons. Note that we adopt the convention (as usual in physics) that the inner product is linear in the second argument. (In this section we shall suppress the spin index on fermion operators, unless needed to prevent ambiguity.)

$G(\p,\tau)$ represents the probability amplitude that if there is a particle with momentum $\p$ at $t=0$, then it will be found in that state at time $\tau>0$. In other words, $G(\p,\tau)$ describes the propagation in time of a particle in that momentum state.

We also define $G(\p,\epsilon)$ by the following relation:
\begin{eqnarray}
G(\p,\tau)=:\frac{1}{2\pi}\int_{\RR}G(\p,\epsilon)\e^{-\ii \epsilon \tau}\d\epsilon. \label{gpe}
\end{eqnarray}
$G(\p,\tau)$ is defined in the momentum representation. Of course, one can also define the space-dependent single-particle Green's function
$$ G(\x,\tau)=-\ii \langle \psi_{0}|T\{a_{\x}(\tau)a_{\x}^{\dagger}(0)\}|\psi_{0}\rangle.$$
Here $a_{\x}(\tau)$ and $a_{\x}^{\dagger}(0)$ are the second quantized field operators in the Heisenberg representation. Clearly, $G(\p,\tau)$ is the Fourier transform of $G(\x,\tau)$.

\subsection{Lehmann representation}
Assume there is a complete set of eigenstates $|\psi_{n}\rangle$, that is, $(N+1)$-particle states with momentum $\p$ such that $H\psi_{n}=E_{n}\psi_{n}$ and $\sum_{n} |\psi_{n}\rangle\langle\psi_{n}|=\one$. By introducing these states into the definition \eqref{propagatory}, we find for $\tau>0$
\begin{eqnarray*}
G(\p,\tau) &=& -\ii\sum_{n}\langle\psi_{0}|c_{\p}|\psi_{n}\rangle\langle\psi_{n}|\e^{-\ii H \tau}c^{\dagger}_{\p}|\psi_{0}\rangle \e^{\ii E_{0}\tau}\\
&=& -\ii\sum_{n}|\big(c^{\dagger}_{\p}\big)_{n0}|^2\e^{-\ii(E_{n}-E_{0})\tau}, 
\end{eqnarray*}
where $\big(c^{\dagger}_{\p}\big)_{n0}=\langle\psi_{n}|c^{\dagger}_{\p}|\psi_{0}\rangle$. We now write
\begin{eqnarray}
E_{n}(N+1)-E_{0}(N) &=& E_{n}(N+1)-E_{0}(N+1)+E_{0}(N+1)-E_{0}(N)\\
&=&\omega_{n0}+\mu  \label{wzbudzenia}
\end{eqnarray} 
where $\omega_{n0}$ is the excitation energy in the $(N+1)$-particle system and $\mu=E_{0}(N+1)-E_{0}(N)$. 

Note, that we allowed for states with different numbers of particles and thus it is natural to consider the problem in the grand-canonical approach where the number of particles is not fixed. In the grand-canonical setting one fixes the \textit{chemical potential}. From statistical mechanics we know that, for a system at zero temperature and of constant volume, the chemical potential describes the change of the energy of the system upon adding a particle to the system. Thus $\mu$ in \eqref{wzbudzenia} has the natural interpretation of the chemical potential.

   Doing the same for $\tau<0$ leads to
\begin{eqnarray*}
G(\p,\tau) = \pm\ii\sum_{n}|\big(c_{\p}\big)_{n0}|^2\e^{\ii(E_{n}-E_{0})\tau}, 
\end{eqnarray*}
where the eigenstates states now correspond to states of momentum $-\p$ of the $(N-1)$-particle system. Again one can write
$$E_{n}(N-1)-E_{0}(N)=\omega_{n0}'-\mu'$$
where $\omega_{n0}'$ is an excitation energy in the $(N-1)$-particle system and $\mu'$ is the chemical potential for going from $N-1$ to $N$ particles. In the large $N$ limit one can expect that 
$$\omega_{n0}=\omega_{n0}'\quad\quad \text{and} \quad\quad \mu=\mu'$$
to an accuracy of order $1/N$. With this assumption
\begin{subequations}[propagatory2]
\begin{align}
G(\p,\tau)=-\ii\sum_{n}|\big(c^{\dagger}_{\p}\big)_{n0}|^2\e^{-\ii(\omega_{n0}+\mu)\tau}, \quad \quad \tau>0 \\
  G(\p,\tau)= \pm\ii\sum_{n}|\big(c_{\p}\big)_{n0}|^2\e^{\ii(\omega_{n0}-\mu)\tau}, \quad \quad \tau<0
\end{align}
\end{subequations}
where the excitation energies $\omega_{n0}$ are necessarily positive. By introducing the \textit{spectral functions} 
\begin{eqnarray*}
A(\p,\omega)&=&\sum_{n}|\big(c^{\dagger}_{\p}\big)_{n0}|^2\delta(\omega-\omega_{n0}),\\
B(\p,\omega)&=&\sum_{n}|\big(c_{\p}\big)_{n0}|^2\delta(\omega-\omega_{n0})
\end{eqnarray*}
we can rewrite \eqref{propagatory2} in the following form

\begin{subequations}[propagatoryspektralne]
\begin{align}
G(\p,\tau)=-\ii\int_{0}^{\infty}A(\p,\omega)\e^{-\ii(\omega+\mu)\tau}\d\omega, \quad \quad \tau>0 \label{propagatoryspektralneczasdodatni}\\
  G(\p,\tau)= \pm\ii\int_{0}^{\infty}B(\p,\omega)\e^{\ii(\omega-\mu)\tau}\d\omega, \quad \quad \tau<0.
\end{align}
\end{subequations}
Furthermore, using contour integration, we obtain
\begin{eqnarray}
G(\p,\epsilon)=\int_{0}^{\infty}\left[\frac{A(\p,\omega)}{\epsilon-(\omega+\mu)+\ii\eta}+\frac{B(\p,\omega)}{\epsilon+\omega-\mu-\ii\eta}\right]\d\omega. \label{lehmann}
\end{eqnarray}
Here $\pm\ii\eta$ is an infinitesimal quantity which specifies the position of the pole in the complex $\epsilon$ plane. 

Formula \eqref{lehmann} is known as the Lehmann representation in quantum field theory \cite{Lehmann1954}.

\subsection{Quasiparticles as poles of the propagator}\label{polepropagator}
So far we have assumed that all eigenvalues are discrete. This is true for finite systems. With this assumption \eqref{wzbudzenia} and \eqref{lehmann} imply that the discrete poles of the function $G(\p,\epsilon)$ yield the excitation energies of an interacting system corresponding to total momentum $\p$.


This situations changes if one considers the macroscopic limit. In that case, the spectral function $A(\p,\omega)$ of an interacting system will no longer be a combination of $\delta$ functions. Its behaviour will be modified - one can expect it will become a continuous function. In particular, it will no longer correspond to discrete excited states.

As a consequence, the analytic structure of $G(\p,\epsilon)$ will change in the infinite-volume limit. Before the macroscopic limit was taken, $G(\p,\epsilon)$ was a \textit{meromorphic} function. After taking the infinite-volume limit the energy spectrum becomes continuous and $G(\p,\epsilon)$ has a branch cut along the real axis. One can expect that $G(\p,\epsilon)$ has an analytic continuation onto the second Riemann sheet. In particular it may have complex poles there. Thus there exists a qualitative difference between the analytic structure of $G(\p,\epsilon)$ before and after taking the macroscopic limit.  

Motivated by this argument assume that the spectral function has a peak which is smeared out (in contrast to a "pure" $\delta$ function) in the vicinity of $\omega=\tilde{\epsilon}(\p)-\mu$ in the following way
\begin{eqnarray*}
A(\p,\omega)=\frac{\ii}{2\pi}\frac{Z_{\p}}{\omega-(\tilde{\epsilon}(\p)-\mu)+\ii \Gamma_{\p}}+\text{c.c.},
\end{eqnarray*}
where c.c. means complex conjugate. Using this form of the spectral function in \eqref{propagatoryspektralneczasdodatni} leads by contour integration (a contour consisting of the positive real axis, negative imaginary axis and a curve in the lower right quadrant) to the following result  
\begin{eqnarray}
G(\p,\tau)=-\ii Z_{\p}\e^{-\ii \tilde{\epsilon}(\p)\tau}\e^{-\Gamma_\p\tau}+\,\,\text{correction terms}. \label{propagatorenergiatlumienie}
\end{eqnarray}
If one assumes that the correction terms are negligible, then one can see that $G(\p,\tau)$ describes the propagation of a state of energy $\tilde{\epsilon}(\p)$ and lifetime of $\tau\backsim 1/\Gamma_{\p}$. Thus the state has a finite lifetime and its propagation is damped. 

Thus, one arrives at the following conclusion: \textit{a pole in the analytic continuation of $G(\p,\epsilon)$ yields the energy and lifetime of a quasiparticle with momentum $\p$.}

\subsection{Remarks on quasiparticles via Green's function}
In the previous section we gave a brief presentation of an approach to quasiparticles based on analytic properties of Green's functions. It has several advantages (for more details see, e.g., \cite{FetterWalecka1971}). The biggest one is probably a natural way to detect quasiparticles with finite lifetime. Note that this procedure involves taking some kind of macroscopic limit.

Another advantage of using Green's function relies on the fact that properties derived in the infinite-volume limit \textit{do not} require an application of the concept of a macroscopic limit to the Hamiltonian - whatever that would mean. One should rather expect that there is \textit{no} Hamiltonian corresponding 
to the Green's  function after the infinite-volume limit.  

On the other hand, one could argue that quantum theories based on Hamiltonians are in some sense more fundamental, especially in the non-relativistic case. With that assumption, it would be desirable to have a different approach to quasiparticles - based on spectral properties of Hamiltonians of systems under consideration. This is done in the next section.

\section[Quasiparticles and quasiparticle-like excitation spectrum]{Quasiparticles and quasiparticle-like excitation spectrum}\label{excandquasi}
In this section we attempt to give a number of interpretations of the term \textit{quasiparticle} based on spectral properties of Hamiltonians. In particular, we will not a priori refer to field-theoretic techniques. We will also discuss spectral properties of
quantum systems that can be described in terms of quasiparticles.
The discussion of this section will be rather general and is based on \cite{DMN2013}.

\subsection{Translation invariant quantum systems}\label{transinvquansys}

From a spectral point of view, it is not completely obvious, how to describe properties of translation invariant quantum systems in {\em macroscopic limit}. There are at least two approaches that can be used to describe such systems.

In the first approach one starts with a construction of  a
 system in finite volume, using  $\Lambda=[-L/2,L/2]^d$, the $d$-dimensional cubic box of side length $L$, as the configuration space.
It is convenient, although somewhat unphysical, to impose the periodic boundary conditions. The system is described by its Hilbert space $\cH^L$, Hamiltonian $H^L$  and momentum $P^L$. The spectrum of the momentum is discrete and coincides with $\frac{2\pi}{L}\ZZ^d$. After computing appropriate quantities (such as the infimum of the excitation spectrum, which will be defined later) one tries to take the limit $L\to\infty$.

Sometimes a different approach is possible. One can try to construct a Hilbert space $\cH$, a Hamiltonian $H$ and a momentum $P$ that describe the system  on $\RR^d$. This may be not easy. It requires the use of refined techniques \cite{GlimmJaffe1987,BraRob19872} and is, probably, not always  possible. Note that in this case the spectrum of the momentum is expected to be absolutely continuous, with the exception of the ground state. 

The latter approach seems conceptually more elegant. Throughout most of this section we will adopt it. In many situations this will allow us to formulate some of the physical concepts in a concise manner.  

In the next two parts of this thesis we adopt the former approach. It is more down-to-earth and, probably, more useful when specific calculations need to be done. In this approach, only a family $(H^L,P^L)$ for finite $L$ will be defined.

To sum up, throughout  most of this section 
by a {\em translation invariant quantum system} we  will mean
$d+1$ commuting self-adjoint operators $(H,P_1,\dots,P_d)$ on a Hilbert space $\cH$. $H$ has the interpretation of a {\em Hamiltonian} and $P=(P_1,\dots, P_d)$ describes the {\em momentum}.

\subsection{Excitation spectrum}
\label{Excitation spectrum}

The joint  spectrum of the operators $(H,P)$  (which is a subset of $\RR^{1+d}$) will be denoted by $\sp(H,P)$ and called the  {\em energy-momentum spectrum of $(H,P)$}.

We will often assume that $H$ is bounded from below. If it is the case, we can define the {\em ground state energy} as $E:=\inf\sp H$. 
We will also often assume that $H$ possesses translation invariant ground state $\Phi$, which is a unique joint eigenvector of $H$ and $P$. In particular, $H\Phi=E\Phi$ and $P\Phi=0$.

Under these assumptions, by subtracting the ground state energy from the  energy-momentum spectrum we obtain the {\em excitation spectrum of $(H,P)$}, that is, $\sp(H-E,P)$.
We can also introduce the {\em strict excitation spectrum} as the joint spectrum of restriction of $(H-E,P)$ to the orthogonal complement of $\Phi$:
\begin{equation}
\Exc:=\sp\Big((H-E,P)\Big|_{\{\Phi\}^\perp}\Big).\label{exci-1}
\end{equation}
Thus if $(E,\0)$ is an isolated simple eigenvalue of $(H,P)$, then 
$$\Exc=\sp(H-E,P)\setminus (0,\0).$$
We introduce also a special notation for the infimum of $\Exc$:
\begin{eqnarray*}
\exc(\kk)&:=&\inf\{e\ :\ (e,\kk)\in\Exc\}.
\end{eqnarray*}

The following two parameters have interesting
physical implications.  The first is  the {\em energy gap}, defined as
\[\varepsilon:=\inf\Big(\sp (H-E)\Big|_{\{\Phi\}^\perp}\Big)=\inf\{\exc(\kk)\ :\ \kk\in\RR^d\}.\]
Another quantity of physical interest is the {\em critical velocity}:
\[c_{\rm{crit}}:=\inf_{\kk\neq\0}\frac{\epsilon(\kk)}{|\kk|}.\]

Physical properties of a system are especially interesting if the energy gap $\varepsilon$ is strictly positive. In such a case, the ground state energy
 is separated from the rest of the energy spectrum, and hence the ground state is {\em stable}.

Positive critical velocity is also very interesting.  Physically,  a positive critical velocity is closely related to the phenomenon
of  {\em superfluidity}, see e.g. a discussion in  \cite{CDZ2009}.

\subsection{Essential excitation spectrum}
\label{Essential excitation spectrum}

One expects that in a macroscopic limit most of a typical excitation spectrum is absolutely continuous with respect to the Lebesgue measure on $\RR^{d+1}$. However, it  may also contain  isolated shells continuously depending on the momentum. In this subsection we attempt to define  the part of the excitation spectrum that corresponds to such a situation. 


We say that $(e,\kk)\in \RR^{d+1}$ belongs to $\Exc_\d$, called the
 {\em discrete excitation spectrum}, if there exists
 $\delta>0$ such that the operator $P$ has an absolutely continuous spectrum of uniformly  finite multiplicity when restricted to 
 $$\Ran\, \one(|H-E-e|<\delta)\one(|P-\kk|<\delta).$$
The {\em essential excitation spectrum} is defined as $\Exc_\ess:=\Exc\setminus \Exc_\d$. 

Here, we use an obvious notation for spectral projections of self-adjoint operators $H$ and $P$: e.g. $\one(|H-e|<\delta)$ denotes the spectral projection of $H$ onto $]e-\delta,e+\delta[$.

We introduce also a special notation for the bottom of  $\Exc_\ess$:
\begin{eqnarray*}
\exc_\ess(\kk)&:=&\inf\{e\ :\ (e,\kk)\in\Exc_\ess\}.
\end{eqnarray*}

Obviously,
\[\Exc\supset
\Exc_\ess,\]
\[\exc(\kk)\leq\exc_\ess(\kk),\ \ \kk\in\RR^d.\]

One can expect that, typically,   $\Exc_\d$ consists of 
a finite number of  shells separated by
lacunas.

Assume now, that abstract theory allows us to
 represent the Hilbert space $\cH$ as the direct integral over $\RR^d$ given by the spectral decomposition of $P$ (see, e.g., Section 4.4.1. in  \cite{BraRob19871}). Suppose, in addition, that  this direct integral can be taken with respect to the Lebesgue measure, so that we can write
\begin{equation}
H=\int\limits_{\kk\in\RR^d}^\oplus H(\kk)\d\kk.\label{pw0}
\end{equation}
Then  it is tempting to claim that
\begin{eqnarray}
\sp(H-E,P)&=&\bigcup\limits_{\kk\in\RR^d}\sp\big( H(\kk)-E\big)\times\{\kk\},\label{pw1}\\
\Exc_\ess&=&\Big(\bigcup\limits_{\kk\in\RR^d}\sp_\ess \big(H(\kk)-E\big)\times\{\kk\}\Big)^\cl,\label{pw2}
\end{eqnarray}
where $\sp_\ess$ denotes the essential spectrum and the subscript $\cl$ denotes the closure.
Unfortunately, at this level of generality there is a problem with (\ref{pw1}) and (\ref{pw2}). First of all, there is no guarantee that we can put the Lebesgue measure in (\ref{pw0}). Secondly, the direct integral representation (\ref{pw0}) is not defined uniquely, but only modulo sets of measure zero.

 In concrete situations (such as quasiparticle systems considered in Sect.
\ref{Concept of a quasiparticle}), however,
the direct integral (\ref{pw0}) has an obvious distinguished realization involving the Lebesgue measure, for which the identities (\ref{pw1}) and (\ref{pw2}) are actually true.

\subsection{Quasiparticle quantum systems}
\label{Concept of a quasiparticle}

Let us now turn to the concept of quasiparticles in a more precise way.


For a Hilbert space $\cZ$, the notation $\Gamma_\s(\cZ)$, resp. $\Gamma_\a(\cZ)$ will stand for the {\em bosonic}, resp. {\em fermionic Fock space} with the {\em 1-particle space} $\cZ$. 

By a  {\em quasiparticle quantum system} we will mean
$(H_\fr,P_\fr)$, where
\begin{eqnarray}
H_\fr &=& \sum_{i\in\cQ}\int_{I_i}\omega_i(\kk)b_i^{\dagger}(\kk)b_i(\kk)\d\kk ,
\label{quasa1} \\
P_\fr&=&\sum_{i\in\cQ}\int_{I_i}\kk b_i^{\dagger}(\kk)b_i(\kk)\d\kk ,
\label{quasa2}
\end{eqnarray}
for some intervals $I_i\subset\RR^d$, real continuous functions $I_i\ni\kk\mapsto\omega_i(\kk)$, and (quasiparticle) {\em creation}, resp. {\em annihilation operators}  $b_i^{\dagger}(\kk)$ and $b_i(\kk)$.
 $\cQ$ is called the set of  {\em quasiparticle species} and it is partitioned into $\cQ_\s$ and $\cQ_\a$ -- bosonic and fermionic quasiparticles. 

 We are using the standard notation of the formalism of 2nd quantization:  $b_i^{\dagger}(\kk)$ and $b_i(\kk)$ satisfy the usual commutation/anticommutation relations. The right hand sides of (\ref{quasa1}) and (\ref{quasa2}) are  well defined as operators on the Fock space
\begin{equation}
\lotimes_{i\in\cQ_\s}\Gamma_\s\left(L^2(I_i)\right) \otimes
\lotimes_{j\in\cQ_\a}\Gamma_\a\left(L^2(I_j)\right). \label{fock}\end{equation}

 For $i\in\cQ$, the set $I_i$ describes the allowed  range of the momentum of a single $i$th quasiparticle and  $\omega_i(\kk)$ is its energy  (dispersion relation) for momentum $\kk\in\RR^d$. Note that $I_i$ can be strictly smaller than $\RR^d$ -- some quasiparticles may exist only for some momenta. This allows us more flexibility and is consistent with  applications to condensed matter physics. 
 It will be convenient to define 
$$I(\kk):=\{i\in\cQ\ :\ \kk\in I_i(\kk)\}$$
(the set of quasiparticles that may have momentum  $\kk\in\RR^d$).

Clearly, if we know the dispersion relations
 $I_i\ni\kk\mapsto \omega_i(\kk)$, $i\in\cQ$, then we can determine 
the energy-momentum spectrum of $(H_\fr,P_\fr)$:
\begin{eqnarray}
\sp
(H_\fr,P_\fr)&=&\{(0,\0)\} \label{sphfrpfr}\\
&& \!\!\cup \ \big\{\big(\omega_{i_1}(\kk_1)+\cdots+\omega_{i_n}(\kk_n),\kk_1+\cdots+\kk_n\big)\ :
\ n=1,2,3\dots\}^\cl . \notag
\end{eqnarray}
Note that there is an obvious direct integral representation of the form (\ref{pw0}) and the relations (\ref{pw1}) and (\ref{pw2}) hold.

\subsection{Properties of the excitation spectrum of quasiparticle systems}

Let $(H,P)$ be a  quasiparticle  system (note we dropped the subscripts fr). The energy-momentum spectrum of such systems has special properties. First, we have 
\begin{eqnarray}
(0,\0)&\in&\sp(H,P),\label{endeq0-}
\end{eqnarray} 
because of the Fock vacuum state, which is a unique joint eigenstate of      $(H,P)$.  Moreover, we have a remarkable addition property
\begin{eqnarray}
\sp(H,P)&=&\sp(H,P)+\sp(H,P).\label{endeq1-}
\end{eqnarray} 

Assume now that
the Hamiltonian (\ref{quasa1}) is bounded from below, or what is equivalent,  assume that all the dispersion relations are non-negative. Then the Fock vacuum is a ground state satisfying $E=0$, so that  the excitation spectrum coincides with the energy-momentum spectrum. Thus we can rewrite (\ref{endeq0-}) and (\ref{endeq1-}) as
\begin{eqnarray}
(0,\0)&\in&\sp(H-E,P),\label{endeq0}\\
\sp(H-E,P)&=&\sp(H-E,P)+\sp(H-E,P).\label{endeq1}
\end{eqnarray}

Given (\ref{endeq0}), (\ref{endeq1}) is equivalent to
\begin{eqnarray}
\Exc & \supset &\Exc+\Exc.\label{endeq1.}
\end{eqnarray}

Another remarkable property holds true if in addition the number of particle species is finite. We have  then
\begin{eqnarray}
\Exc_\ess &=& \big(\Exc+\Exc\big)^\cl.\label{endeq4}
\end{eqnarray}
Indeed, using the continuity of the momentum spectrum, we easily see that only 1-quasiparticle states can belong to the discrete spectrum of the fiber Hamiltonians $H(\kk)$.

Before we proceed, let us introduce some terminology
concerning  real functions that will be useful in our study of quasiparticle-like spectra.
Recall that a function $\RR^d\ni\kk\mapsto\epsilon(\kk)$ is called {\em
  subadditive} 
if
$$\epsilon(\kk_1+\kk_2)\leq \epsilon(\kk_1)+\epsilon(\kk_2).$$

Let 
$\RR^d\supset I\ni\kk\mapsto\omega(\kk)$ be a given function.
Define
\begin{eqnarray}
\sah_\omega(\kk)&=&\inf
\{\omega(\kk_1)+\cdots+\omega(\kk_n)\ :\ \kk_1+\cdots+\kk_n=\kk,
\ n=1,2,3,\dots\}, \nonumber
\\
\sah_{\ess,\omega}(\kk)&=&\inf
\{\omega(\kk_1)+\cdots+\omega(\kk_n)\ :\ \kk_1+\cdots+\kk_n=\kk,
\ n=2,3,\dots\}, \nonumber
\end{eqnarray}
(By definition, the infimum of an empty set is $+\infty$).
 $\sah_\omega$ is known under the name of the {\em 
subadditive hull of $\omega$}. Equivalently, 
 $\sah_\omega$ is the biggest subadditive function less than $\omega$.

Note the relation
\begin{eqnarray}
\sah_\omega(\kk)&=&\min\{\omega(\kk),\sah_{\ess,\omega}(\kk)\}. \label{sahmin}
\end{eqnarray}


For $\kk\in\RR^d$, define
\begin{equation}
  \omega_{\min}(\kk):=\min\{\omega_i\ :\ i\in I(\kk)\}.\label{mino}\end{equation}
Recall the functions  $\exc$ and $\exc_\ess$  and the parameters
$\varepsilon$ and $c_\crit$ that we defined in Subsects 
\ref{Excitation spectrum} and \ref{Essential excitation spectrum}.

\begin{thm} \label{spectralpropquantusys}
Consider the quasiparticle  system given by (\ref{quasa1}) and (\ref{quasa2}) with non-negative dispersion relations. Assume the number of quasiparticle species is finite.
\begin{enumerate}
\item The bottom of the strict excitation spectrum is the subadditive hull of $\omega_{\min}$:
\begin{eqnarray}
\exc(\kk)&=&
\sah_{\omega_{\min}}(\kk),\ \kk\in\RR^d.\nonumber
\end{eqnarray}
\item The energy gap satisfies
\begin{eqnarray*}
\varepsilon&=&\inf_\kk\omega_{\min}(\kk).
\end{eqnarray*}
\item 
The bottom of the essential excitation spectrum satisfies
\begin{eqnarray} \exc_\ess(\kk)&=&\sah_{\ess,\omega_{\min}}(\kk),\ \ \kk\in\RR^d.\nonumber
\end{eqnarray}
\item The critical velocity satisfies
\begin{eqnarray*}
c_\crit &=&\inf_{\kk\neq\0}\frac{\omega_{\min}(\kk)}{|\kk|}
=\inf_{\kk\neq\0}\frac{\epsilon_\ess(\kk)}{|\kk|}.
\end{eqnarray*}
\end{enumerate}
\end{thm}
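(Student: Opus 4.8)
The plan is to exploit the explicit description of the energy–momentum spectrum in \eqref{sphfrpfr} together with the fiber decomposition \eqref{pw0}--\eqref{pw2}, and to reduce every assertion to two elementary optimization principles: at a fixed collection of momenta $\kk_1,\dots,\kk_n$ the cheapest quasiparticle content is obtained by choosing, for each $\kk_j$, the species minimizing the energy, which replaces $\omega_{i_j}(\kk_j)$ by $\omega_{\min}(\kk_j)$; and optimizing afterwards over all partitions $\kk_1+\cdots+\kk_n=\kk$ produces precisely the subadditive hull. Throughout I shall work fiberwise, which is what makes the closures harmless: for each fixed $\kk$ the fiber spectrum $\sp(H(\kk)-E)$ is a closed subset of $\RR$, and the infimum of a set of reals is unchanged by taking its closure.

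For part (1) I would start from \eqref{pw1}, so that for $\kk\neq\0$ one has $\exc(\kk)=\inf\sp\big(H(\kk)-E\big)$, while on the fiber $\kk=\0$ the vacuum is removed by passing to $\{\Phi\}^\perp$, consistently with the convention $n\geq1$ in the definition of $\sah_{\omega_{\min}}$. By \eqref{sphfrpfr} the generators of this fiber are the sums $\sum_{j}\omega_{i_j}(\kk_j)$ with $\sum_j\kk_j=\kk$; minimizing first over species gives $\sum_j\omega_{\min}(\kk_j)$, and then over partitions gives exactly $\sah_{\omega_{\min}}(\kk)$. Since taking the closure in $\RR$ does not change the infimum, this yields $\exc(\kk)=\sah_{\omega_{\min}}(\kk)$.

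Parts (2) and (4) are then purely arithmetic consequences of part (1) and the non-negativity of the dispersions. For (2), the bound $\sah_{\omega_{\min}}(\kk)\leq\omega_{\min}(\kk)$ (take $n=1$) gives $\varepsilon\leq\inf_\kk\omega_{\min}(\kk)$, while writing $m:=\inf_\kk\omega_{\min}(\kk)\geq0$ and using $\sum_{j=1}^n\omega_{\min}(\kk_j)\geq m$ for every partition (each term is $\geq m\geq0$ and $n\geq1$) gives the reverse bound. For (4), setting $v:=\inf_{\kk\neq\0}\omega_{\min}(\kk)/|\kk|$, the elementary estimate $\omega_{\min}(\kk)\geq v|\kk|$ combined with the triangle inequality $\sum_j|\kk_j|\geq|\kk|$ shows $\sah_{\omega_{\min}}(\kk)\geq v|\kk|$ and $\sah_{\ess,\omega_{\min}}(\kk)\geq v|\kk|$; together with part (1), part (3) and $\sah_{\omega_{\min}}\leq\omega_{\min}$ this pins both infima at $v$. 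The matching upper bound for the essential ratio comes from a doubling trick: feeding two copies of a near-optimal momentum $\kk_*$ yields a genuine two-quasiparticle state with $\sah_{\ess,\omega_{\min}}(2\kk_*)\leq2\omega_{\min}(\kk_*)$ and ratio $\leq\omega_{\min}(\kk_*)/|\kk_*|\to v$.

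For part (3) I would use either \eqref{pw2} or, equivalently, the already-established identity \eqref{endeq4}, $\Exc_\ess=(\Exc+\Exc)^\cl$. Fiberwise, the states built from $n\geq2$ quasiparticles form genuine continua — the internal momenta may be varied continuously while keeping the total equal to $\kk$ — so their infimum $\sah_{\ess,\omega_{\min}}(\kk)$ is the bottom of $\sp_\ess\big(H(\kk)-E\big)$, and the splitting $\sah_{\omega_{\min}}=\min\{\omega_{\min},\sah_{\ess,\omega_{\min}}\}$ of \eqref{sahmin} reflects exactly the separation between the finitely many one-quasiparticle eigenvalues $\omega_i(\kk)$ and this continuum; those one-quasiparticle eigenvalues lying strictly below the continuum are isolated, hence discrete, and do not affect the bottom of the essential spectrum. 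This is the step requiring the most care, and I expect it to be the main obstacle: one must show that the closures appearing in \eqref{pw2} and \eqref{endeq4} do not artificially lower the infimum at a fixed $\kk$, i.e. that $\sah_{\omega_{\min}}$ and $\sah_{\ess,\omega_{\min}}$ are lower semicontinuous. This lower semicontinuity is where the two standing hypotheses — continuity of each $\omega_i$ and finiteness of $\cQ$ — are genuinely used, and it is what guarantees that the fiberwise infima assemble into the claimed global identities.
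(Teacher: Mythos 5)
Your proposal is correct and follows essentially the same route as the paper: part (1) read off from the explicit representation \eqref{sphfrpfr}, part (3) from \eqref{endeq4} via the observation that only 1-quasiparticle states can be discrete, and part (4) via \eqref{sahmin}. Your version is somewhat more explicit in places where the paper is terse — the direct two-sided bound in (2) replaces the paper's argument by contradiction, and your doubling trick $\sah_{\ess,\omega_{\min}}(2\kk_*)\leq 2\omega_{\min}(\kk_*)$ supplies the step the paper dismisses as ``obvious'' — but these are presentational differences, not a different method.
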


\begin{proof}
\textit{1.} is a direct consequence of the representation \eqref{sphfrpfr} and the fact that $(0,\0)$ is removed from the strict excitation spectrum.

Let us prove \textit{2}. Assume $\inf_\kk\omega_{\min}(\kk)=\omega_{\min}(\kk_0)>\varepsilon$. By \textit{1.} and the definition of $\varepsilon$ this implies that there exists a $\kk_1$ such that $\sah_{\omega_{\min}}(\kk_1)<\omega_{\min}(\kk_0)$. But since the dispersion relations are non-negative   this means there exists a $\kk_2$ (maybe equal $\kk_1$) such that $\omega_{\min}(\kk_2)\leq\omega_{\min}(\kk_0)$ which is a contradiction with   $\inf_\kk\omega_{\min}(\kk)=\omega_{\min}(\kk_0)$.

\textit{3.} follows from \textit{1.} and the fact that only 1-quasiparticle states belong to the discrete spectrum, as expressed by \eqref{endeq4}.

To prove \textit{4.} note that using \textit{1.} we have
\begin{eqnarray*}
c_\crit &:=&\inf_{\kk\neq\0}\frac{\exc(\kk)}{|\kk|}
=\inf_{\kk\neq\0}\frac{\sah_{\omega_{\min}}(\kk)}{|\kk|}.
\end{eqnarray*}
Using \eqref{sahmin} this can be rewritten as 
\begin{eqnarray*}
c_\crit &:=&\inf_{\kk\neq\0}\frac{\min \{ \omega_{\min}(\kk), \sah_{\ess,\omega_{\min}}(\kk)\}}{|\kk|}=\inf_{\kk\neq\0}\frac{\min \{ \omega_{\min}(\kk), \exc_\ess(\kk)\}}{|\kk|}
\end{eqnarray*}
where the last equality follows by \textit{3.} Obviously 
$$\inf_{\kk\neq\0}\frac{\omega_{\min}(\kk)}{|\kk|}
\geq\inf_{\kk\neq\0}\frac{\epsilon_\ess(\kk)}{|\kk|}.$$
On the other hand, a strict inequality would violate the assumption on non-negative dispersion relation. This ends the proof.
\end{proof}

Note that we assume that the momentum space is $\RR^d$. If we replace the momentum space
$\RR^d$ with $\frac{2\pi}{L}\ZZ^d$ (that is, if we put 
our system on a torus of side length $L$)  and we assume that all quasiparticles are bosonic, then
all statements of this
subsection generalize in an obvious way. However, because of the Pauli
principle, not all of them generalize  in
the  fermionic case.

\subsection{Quasiparticle-like quantum systems}
\label{Approximate versus exact quasiparticles}

One often considers quantum systems of the form
\begin{equation*}
H=H_\fr +V,\ \ \ P=P_\fr,
\end{equation*}
where $(H_\fr,P_\fr)$ is a quasiparticle system (defined in Subsection \ref{Concept of a quasiparticle}) and the perturbation $V$ is in some sense small. A description of physical systems in terms of approximate quasiparticles is very common in condensed matter physics. 

Clearly, there is a considerable freedom in choosing the splitting of $H$ into $H_\fr$ and $V$, and so  quasiparticles of this kind are only vaguely determined.
We will argue that in some cases  a different concept of quasiparticles is useful, which is rigorous and in a way much more interesting. This concept is expressed in the following definition.

Let $(H,P)$ be a translation invariant system on a Hilbert space $\cH$.
We will say that it is a {\em   quasiparticle-like system} if it is unitarily equivalent to a quasiparticle system.

\subsection{Asymptotic quasiparticles}

The above definition has one drawback. In practice we expect that the unitary equivalence mentioned in this definition is in some sense {\em natural} and constructed in the framework of {\em scattering theory}. 

Scattering theory is quite far from the main subject of this thesis, which is mostly concerned with purely spectral questions. However, since it has been mentioned and is related to the concept of a quasiparticle, let us give a brief discussion of this topic.

For a number of many-body systems the basic idea  of  scattering theory can be described as follows. Using the evolution $\e^{\ii tH}$ for $t\to\pm\infty$, we define  two  isometric operators
\begin{equation}
S^\pm:\lotimes_{i\in\cQ_\s}\Gamma_\s\left(L^2(I_i)\right)\otimes
\lotimes_{j\in\cQ_\a}\Gamma_\a\left(L^2(I_j)\right)\to\cH.
\label{paha}
\end{equation}
$S^\pm$ are called the {\em wave} or {\em M{\o}ller operators} and they satisfy
$$HS^\pm=S^\pm H_\fr,\ \ \ PS^\pm=S^\pm P_\fr,$$
where $(H_\fr,P_\fr)$ is  a quasiparticle quantum system. $S:=\left(S^{+}\right)^{\dagger}S^-$ is then called the {\em scattering operator}.

We will say that the system is {\em asymptotically complete} if the wave operators $S^\pm$ are unitary. Clearly, if a system is asymptotically complete, then it is  quasiparticle-like.

 There are at least two  classes of important physical system which possess a natural and rigorous scattering theory of this kind.

The first class
 consists of the 2nd quantization of Schr\"odinger many-body operators with  2-body  short range interactions  \cite{Der1998}. One can show that these systems are asymptotically complete
 (see \cite{Der1993} and references therein).
In this case the system is invariant with respect to the {\em Galileian group} and the dispersion relations have the form $\RR^d\ni\kk\mapsto E+\frac{\kk^2}{2m}$. 
Quasiparticles obtained in this context can be ``elementary'' -- in applications to physics these are typically electrons and nuclei -- as well as ``composite'' -- atoms, ions, molecules, etc.

Another important class of systems where the concept of asymptotic quasiparticles has a rigorous foundation belongs to (relativistic) quantum field theory, as axiomatized by the  Haag-Kastler or Wightman axioms. If we assume the existence of   discrete mass shells,  the so-called {\em Haag-Ruelle theory} allows us to construct  the wave operators, see e.g. \cite{Jost1965}. Note that in this case the system is covariant with respect to the {\em Poincar\'e group} and the dispersion relation has the form $\RR^d\ni \kk\mapsto\sqrt{m^2+\kk^2}$. 
 Here, quasiparticles are the usual stable particles.

Let us stress that both classes of systems  can be interacting in spite of the fact that they are equivalent to  free quasiparticle systems. In particular, their scattering operator can be nontrivial.

The above described  classes of quantum systems 
 are quite special. They are covariant with respect to rather large groups (Galilei or Poincar\'e) and have quite special dispersion relations.

\subsection{Quasiparticles in condensed matter physics}

The concept of a quasiparticle  is useful also in other  contexts, without the Galilei or Poincar\'{e} covariance.

An  interesting system which admits a quasiparticle interpretation is the free Fermi gas with a positive chemical potential. We describe this system in Subsection \ref{Non-interacting Fermi gas}. In this case the scattering theory is trivial:  $S^+=S^-$, and hence $S=\one$.

It seems that condensed matter physicists apply successfully the concept of a quasiparticle also to various interacting  translation invariant  systems.

One class of such systems seems to be the  Bose gas with repulsive interactions at zero temperature and positive density. In  this case, apparently, the system is  typically  well described by a free Bose gas of
 quasiparticles of (at least) two kinds: at low momenta we have {\em phonons} with an approximately linear dispersion relation, and at somewhat higher momenta  we have {\em rotons} (see, e.g., \cite{Griffin1993} and also Figure \ref{rotons} in Section \ref{mainresult}). This idea underlies the famous {\em Bogoliubov approximation} \cite{Bogoliubov1947,Bogoliubov1947a}. The phenomenon of
{\em superfluidity} can be to a large extent explained within this picture.
 The model of free asymptotic phonons seems to work well in real experiments \cite{CowWoo1971,Maris1977}. The Bogoliubov approximation will be discussed in greater details in the second and third part of this thesis.

Another class of strongly interacting systems  that seems to be
successfully  modelled by independent quasiparticles is the 
Fermi gas  with attractive interactions at zero temperature and positive chemical potential. 
By using the {\em Hartree-Fock-Bogoliubov (HFB) approach},
 which is closely related to the original {\em Bardeen-Cooper-Schrieffer (BCS)
approximation} \cite{BCS1957},
one obtains a simple model that can be used to explain
the {\em superconductivity} of the Fermi gas at very low temperatures. The corresponding quasiparticles are sometimes called {\em partiholes}. This will be explained in the second part of this work.

Note that the above two examples -- the interacting Bose and Fermi gas -- are in realistic circumstances neither Galilei nor Poincar\'{e} covariant. This  allows us to consider more general dispersion relations. However, we do not know whether these systems admit a quasiparticle interpretation or possess some kind of scattering theory.
 
Unfortunately, rigorous results in this direction are rather modest. One of such results is presented in the third part of this thesis.

\subsection{Quasiparticle-like excitation spectrum}
\label{Quasiparticle-like excitation spectrum}

The concept of a quasiparticle-like system, as defined in Subsection
\ref{Approximate versus exact quasiparticles},
is probably too strong for many applications. Let us propose a weaker property, which is  more likely to be satisfied in various situations. 

Again, our starting point is a translation invariant system described by its Hamiltonian and momentum $(H,P)$. Let us assume that $H$ is bounded from below, with $E$, as usual, denoting the ground state energy.
 
 We will say that the {\em excitation spectrum of  $(H,P)$ is quasiparticle-like} if it coincides with the excitation spectrum
 of a quasiparticle  system (see (\ref{quasa1}) and (\ref{quasa2})).

Clearly, the excitation spectrum of a quasiparticle-like system with a bounded from below Hamiltonian  is  quasiparticle-like. However, a system may have a quasiparticle-like excitation spectrum without being a quasiparticle-like system.

A quasiparticle-like excitation spectrum has special properties. In particular, it satisfies (\ref{endeq0}) and (\ref{endeq1}).

There exists a heuristic, but, we believe, a relatively convincing general argument why realistic translation invariant
quantum systems in thermodynamic limit  at zero temperature should satisfy 
 (\ref{endeq0}) and (\ref{endeq1}). We present it below. Note in particular that the infinite size of the quantum system plays an important role in this argument.

Consider a quantum gas in a  box of a very large side length $L$, described by $(H^L,P^L)$. For brevity, let us drop the superscript $L$. 
First of all, it seems reasonable to assume that the system possesses a translation invariant ground state, which we will denote by $\Phi$, so that $H\Phi=E\Phi$, $P\Phi=0$. Thus (\ref{endeq0}) holds.

Let $(E_{}+e_i,\kk_i)\in\sp(H,P)$, $i=1,2$. We can find eigenvectors with these
eigenvalues, that is, vectors $\Phi_i$ satisfying $H\Phi_i=(E_{}+e_i)\Phi_i$,
$P\Phi_i=\kk_i\Phi_i$.  Let us make the assumption that  it is possible to find
operators $A_i$ that are polynomials in creation and annihilation operator
smeared with functions well localized in  configuration space such that
$PA_i\approx A_i(P+\kk_i)$,  and which approximately 
create the vectors $\Phi_i$ from the
ground state, that is
 $\Phi_i\approx A_i\Phi_{}$.
 By replacing $\Phi_2$ with $\e^{\ii \y P}\Phi_2$ for
some $\y$ and $A_2$ with $\e^{\ii\y P}A_2 \e^{-\ii\y P}$, we can make sure that the
regions of localization of $A_1$ and $A_2$  are separated by a large distance. Note that here a large size of $L$ plays a role.

 Now consider the vector $\Phi_{12}:=
A_1A_2\Phi_{}$. Clearly, \[P\Phi_{12}\approx (\kk_1+\kk_2)\Phi_{12}.\]
$\Phi_{12}$ looks like the vector $\Phi_i$ in
 the region of localization of $A_i$, elsewhere it looks like $\Phi_{}$.
 The
Hamiltonian $H$  involves only
 expressions of short range (the potential decays in space). Therefore, 
we expect that
\[H\Phi_{12}\approx (E_{}+e_1+e_2)\Phi_{12}.\]
If this is the case, it
 implies
 that $(E_{}+e_1+e_2,\kk_1+\kk_2)\in\sp(H,P)$.
Thus  (\ref{endeq1}) holds.

\subsection{Bottom of a quasiparticle-like excitation spectrum}

Now suppose that $(H,P)$ is an arbitrary translation invariant  system with a bounded from below Hamiltonian. For simplicity, assume that its ground state energy is zero. We assume that we know its excitation spectrum $\sp(H,P)$.
There are two natural questions
\begin{enumerate}
\item Is $\sp(H,P)$ quasiparticle-like?
\item If it is the case, to what extent are its dispersion relations determined uniquely?
\end{enumerate}

In order to give partial answers to the above questions, recall 
the functions $\exc$ and $\exc_\ess$, as well as the sets $\Exc_\d$ and $\Exc_\ess$
that we defined in Subsections 
\ref{Excitation spectrum} and \ref{Essential excitation spectrum}. The following statements immediately result from the definitions of quasiparticle-like excitation spectrum and quasiparticle system. 

These theorems justify also the title of this subsection since they show the bottom of a quasiparticle-like excitation spectrum is the crucial object when giving (at least partial) answers to the above questions.

\begin{thm} \label{twbottom1}
Suppose that the excitation spectrum of $(H,P)$ is quasiparticle-like. Then the following is true:
\begin{enumerate}
\item 
 $\epsilon$ is subadditive.
\item
We can partly reconstruct some of the dispersion relations:
\begin{equation}\Exc_\d=\{(\omega_i(\kk),\kk)\ : \ i\in\cQ,\ \kk\in\RR^d\}\,\setminus\, \Exc_\ess
.\label{paio}\end{equation}
Consequently,
for $\kk$ satisfying $\epsilon(\kk)<\epsilon_\ess(\kk)$,
\[\epsilon(\kk)=\omega_{\min}(\kk),\]
where $\omega_{\min}$ was defined in (\ref{mino}).
\item If the number of quasiparticles species is finite, we can reconstruct 
$\epsilon_\ess$  from $\epsilon$:
\begin{eqnarray}
\epsilon_\ess(\kk)&=&\inf \{\epsilon(\kk_1)+
\epsilon(\kk_2)\ :\ \kk=\kk_1+\kk_2\}.\label{essop}
\end{eqnarray}
\end{enumerate}
\end{thm}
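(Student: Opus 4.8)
The hypothesis that the excitation spectrum of $(H,P)$ is quasiparticle-like means, by definition, that it coincides with the excitation spectrum of some quasiparticle system $(H_\fr,P_\fr)$ of the form (\ref{quasa1})--(\ref{quasa2}) with non-negative dispersion relations; since for such a system the Fock vacuum is a ground state of energy $0$, this common spectrum is exactly the set $\sp(H_\fr,P_\fr)$ written out in (\ref{sphfrpfr}). The plan is to transport, through this identification, the structural facts already established for genuine quasiparticle systems — the addition property (\ref{endeq1.}), the identification (\ref{endeq4}) of $\Exc_\ess$, and the explicit computations of Theorem \ref{spectralpropquantusys} — and then read off the three assertions for $(H,P)$ itself.

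For \textit{1.} I would use the addition property (\ref{endeq1.}), $\Exc\supset\Exc+\Exc$, which is inherited from the quasiparticle structure. Given $\kk_1,\kk_2$ (both with finite $\epsilon$, the inequality being trivial otherwise) and $\eta>0$, pick spectral points $(e_j,\kk_j)\in\Exc$ with $e_j<\epsilon(\kk_j)+\eta$; the addition property then places $(e_1+e_2,\kk_1+\kk_2)$ in $\Exc$, so $\epsilon(\kk_1+\kk_2)\le\epsilon(\kk_1)+\epsilon(\kk_2)+2\eta$, and $\eta\to0$ gives subadditivity. For \textit{2.} the decisive input is the observation following (\ref{endeq4}): by continuity of the momentum, only one-quasiparticle states can enter the discrete spectrum of the fibre Hamiltonians, so $\Exc_\d$ lies inside the union of the single-quasiparticle shells $\{(\omega_i(\kk),\kk)\}$. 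Conversely each shell point lies in $\Exc$ and, since $\Exc_\d$ and $\Exc_\ess$ partition $\Exc$, such a point is discrete precisely when it escapes the multi-quasiparticle continuum $\Exc_\ess$; this is exactly (\ref{paio}). The stated consequence is then immediate: if $\epsilon(\kk)<\epsilon_\ess(\kk)$, the bottom point $(\epsilon(\kk),\kk)$ lies strictly below $\Exc_\ess$, hence is discrete, hence equals some $(\omega_i(\kk),\kk)$, and being lowest it coincides with $\omega_{\min}(\kk)$ from (\ref{mino}).

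For \textit{3.}, with finitely many species Theorem \ref{spectralpropquantusys} supplies $\epsilon=\sah_{\omega_{\min}}$ and $\epsilon_\ess=\sah_{\ess,\omega_{\min}}$, so the task reduces to the combinatorial identity $\sah_{\ess,\omega_{\min}}(\kk)=\inf\{\sah_{\omega_{\min}}(\kk_1)+\sah_{\omega_{\min}}(\kk_2):\kk_1+\kk_2=\kk\}$. For the lower bound I would take any two-term splitting $\kk=\kk_1+\kk_2$, resolve each $\sah_{\omega_{\min}}(\kk_j)$ into its defining infimum over single-quasiparticle decompositions of $\kk_j$, and note that concatenating the two families yields a decomposition of $\kk$ into at least two pieces, so the total is bounded below by $\sah_{\ess,\omega_{\min}}(\kk)$. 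For the upper bound I would take any decomposition $\kk=\kk_1+\cdots+\kk_n$ with $n\ge2$, group it as $\kk_1$ against $\kk_2+\cdots+\kk_n$, and estimate using $\omega_{\min}\ge\sah_{\omega_{\min}}$ on the first piece and the very definition of the subadditive hull on the second.

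The hard part will be the structural claim feeding \textit{2.}, namely that no genuinely multi-quasiparticle state can be isolated in a fibre Hamiltonian $H(\kk)$; this is where continuity of the dispersion relations $\omega_i$ and finiteness of $\cQ$ are really used, and it is the one step that is not pure bookkeeping. Once it is granted, the remaining arguments are manipulations of infima and of the subadditive hull, which I expect to be routine.
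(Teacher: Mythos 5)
Your proof is correct and follows exactly the route the paper intends: the paper states this theorem without proof, remarking only that the statements ``immediately result from the definitions of quasiparticle-like excitation spectrum and quasiparticle system,'' and your argument is precisely the elaboration of that remark via the addition property (\ref{endeq1.}), the identification (\ref{endeq4}) together with its accompanying one-quasiparticle observation, and parts 1 and 3 of Theorem \ref{spectralpropquantusys}. The one step you flag as hard --- that genuinely multi-quasiparticle states cannot be isolated in a fibre Hamiltonian $H(\kk)$ --- is the same assertion the paper itself takes for granted in the remark following (\ref{endeq4}), so nothing in your write-up goes beyond or against what the text relies on.
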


The existential part of the inverse problem has a partial solution:
\begin{thm}\label{twbottom2}
Suppose that  $\RR^d\ni\kk\mapsto\omega(\kk)$  be a given subbadditive function. Consider the translation invariant system
\[
 H_\fr=\int\omega(\kk)b_\kk^*b_\kk\d\kk,\ \ \ 
 P_\fr=\int \kk b_\kk^*
b_\kk \d\kk.\]
Then 
\begin{eqnarray*}\epsilon(\kk)&=&\omega(\kk),\\
\epsilon_\ess(\kk)&=&\inf \{\omega(\kk_1)+
\omega(\kk_2)\ :\ \kk=\kk_1+\kk_2\}.
\end{eqnarray*}
\end{thm}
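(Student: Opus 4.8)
The plan is to obtain both identities as specializations of Theorem~\ref{spectralpropquantusys}, which already settles the bottom of the strict and essential excitation spectra for a finite-species quasiparticle system. Here there is a single species with range $I=\RR^d$ and dispersion $\omega$, so in the notation of \eqref{mino} one has $\omega_{\min}=\omega$. Before applying that theorem I would record that $\omega$, being intended as the bottom of an excitation spectrum, is non-negative (equivalently, that $H_\fr\geq 0$ with the Fock vacuum as its unique translation-invariant ground state at energy $0$); this is exactly the hypothesis under which Theorem~\ref{spectralpropquantusys} is formulated, and it guarantees that the excitation spectrum coincides with the energy-momentum spectrum described by \eqref{sphfrpfr}.

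For the first identity, Theorem~\ref{spectralpropquantusys}(1) gives $\epsilon(\kk)=\sah_{\omega_{\min}}(\kk)=\sah_\omega(\kk)$. The key remark is that the subadditive hull of an already subadditive function is the function itself: since $\omega$ is subadditive and $\sah_\omega$ is by definition the largest subadditive minorant of $\omega$, the inequalities $\sah_\omega\leq\omega$ and $\sah_\omega\geq\omega$ both hold, whence $\sah_\omega=\omega$ and therefore $\epsilon(\kk)=\omega(\kk)$.

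For the second identity, Theorem~\ref{spectralpropquantusys}(3) gives $\epsilon_\ess(\kk)=\sah_{\ess,\omega}(\kk)$, the infimum of $\omega(\kk_1)+\cdots+\omega(\kk_n)$ over all decompositions $\kk=\kk_1+\cdots+\kk_n$ with $n\geq 2$. It then remains to collapse this $n$-fold infimum to the two-fold one. Including the $n=2$ decompositions shows $\sah_{\ess,\omega}(\kk)\leq\inf\{\omega(\kk_1)+\omega(\kk_2):\kk=\kk_1+\kk_2\}$. For the reverse inequality I would take an arbitrary decomposition with $n\geq 2$, set $\kk_1'=\kk_1$ and $\kk_2'=\kk_2+\cdots+\kk_n$, and invoke iterated subadditivity to get $\omega(\kk_1')+\omega(\kk_2')\leq\omega(\kk_1)+\cdots+\omega(\kk_n)$; since $(\kk_1',\kk_2')$ is a two-fold decomposition of $\kk$, the two-fold infimum is bounded above by every $n$-fold sum, hence by $\sah_{\ess,\omega}(\kk)$. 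Combining the two inequalities yields the claimed formula for $\epsilon_\ess$.

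The argument is essentially bookkeeping once Theorem~\ref{spectralpropquantusys} is in hand, so the only real point requiring care is the well-posedness of the excitation-spectrum picture: one must check that the hypotheses of that theorem genuinely hold here, i.e. that the single-species integral over $I=\RR^d$ produces a non-negative $H_\fr$ with an isolated simple joint ground state at $(0,\0)$, so that the strict and essential excitation spectra are defined and the fiber decomposition \eqref{pw0}--\eqref{pw2} applies. Everything else reduces to the two elementary subadditivity manipulations above.
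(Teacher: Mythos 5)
Your argument is correct and is essentially the one the paper intends: Theorem~\ref{twbottom2} is stated there without an explicit proof (it is said to "immediately result from the definitions"), and the implicit route is exactly your reduction to Theorem~\ref{spectralpropquantusys} with a single species, $\omega_{\min}=\omega$, plus the two elementary facts that $\sah_\omega=\omega$ for subadditive $\omega$ and that the $n\geq 2$ infimum collapses to the two-fold one by grouping $\kk_2+\cdots+\kk_n$. Your remark that non-negativity of $\omega$ must be added as a (tacit) hypothesis so that the Fock vacuum is the ground state is a legitimate point the paper glosses over, and you handle it appropriately.
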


The answer to the uniqueness part of the inverse problem is negative. The only situation where we can identify
dispersion relations from the spectral information involves
$\Exc_\d$, see (\ref{paio}).
The following example shows that we have quite a lot of freedom in choosing a dispersion relation giving a prescribed excitation spectrum.
For instance, all the Hamiltonians below have the same excitation spectrum and essential excitation spectrum  with  $\epsilon(\kk)=\epsilon_\ess(\kk)=|\kk|$:
\[H=\int_{|\kk|<c}|\kk|(1+d|\kk|^\alpha)b_\kk^*b_\kk\d\kk,\]
where $c>0$, $d\geq0$ and $\alpha>0$ are arbitrary.

\subsection{Translation invariant  systems with  two superselection sectors}
\label{translationtwosector}

We believe that it is relevant to introduce another concept. Suppose that a Hilbert space $\cH$ has a decomposition $\cH=\cH^+\oplus\cH^-$, which can be treated as a {\em superselection rule} (see e.g. \cite{Moretti2013}). This means that all observables decompose into direct sums. In particular, the Hamiltonian and momentum decompose as 
$(H,P)=(H^+,P^+)\oplus( H^-,P^-)$.
Clearly, 
\begin{eqnarray}
\sp(H,P)&=&\sp(H^+,P^+)\cup\sp(H^-,P^-).
\end{eqnarray}

 We will often assume that $H$ is bounded from below and possesses a translation invariant ground state $\Phi$ with energy $E$, which belongs to the sector $\cH^+$. The sector $\cH^+$ will be called {\em even}. The other sector $\cH^-$ will be called {\em odd}.

Under these assumptions we will call $\sp(H^+-E,P^+)$, resp.  $\sp(H^--E,P^-)$ the {\em even}, resp. {\em odd excitation spectrum}.
We introduce also the {\em strict even excitation spectrum}:
\begin{equation}\Exc^+:=\sp\Bigl((H^+-E,P^+)\Big|_{\{\Phi\}^\perp}\Bigr)\label{exci-2}
\end{equation}
The  {\em strict odd excitation spectrum} will coincide with
the full odd excitation spectrum:
\begin{equation}\Exc^-:=\sp(H^--E,P^-).\label{exci-3}
\end{equation}

Finally, we define the {\em even} and {\em odd essential  excitation spectrum}
$\Exc_\ess^\pm$ just as in Subsect. 
\ref{Essential excitation spectrum}, except that we replace $(H,P)$ with $(H^\pm,P^\pm)$.

We introduce also a special notation for the bottom of the sets $\Exc^\pm$ and $\Exc_\ess^\pm$:
\begin{eqnarray*}
\exc^\pm(\kk)&:=&\inf\{e\ :\ (e,\kk)\in\Exc^\pm\},\\
\exc_\ess^\pm(\kk)&:=&\inf\{e\ :\ (e,\kk)\in\Exc_\ess^\pm\}.
\end{eqnarray*}

We have,
\begin{eqnarray}
\sp(H-E,P)&=&\sp(H^+-E,P^+)\cup\sp(H^--E,P^-),\\
\Exc&=&\Exc^+\cup\Exc^-,\\
\Exc_\ess&=&\Exc_\ess^+\cup\Exc_\ess^-,\\
\exc(\kk)&=&\min\{\exc^-(\kk),\exc^+(\kk)\},\\
\exc_\ess(\kk)&=&\min\{\exc_\ess^-(\kk),\exc_\ess^+(\kk)\}
.\end{eqnarray}

\subsection{Quasiparticle  systems with the fermionic  superselection rule}

Consider a quasiparticle  system $(H_\fr,P_\fr)$ on the Fock space (\ref{fock}).
Define the {\em fermionic number operator} as
\[ N_\a=\sum_{i\in\cQ_\a}b_i^*(\kk)b_i(\kk).\]
The {\em fermionic parity}  $(-1)^{N_\a}$ provides a natural superselection rule.
If $\cH=\cH^+\oplus\cH^-$ denotes the corresponding direct sum decomposition, then the  Hamiltonian and momentum  decompose as
 \begin{equation}
(H_\fr,P_\fr)=(H_\fr^+,P_\fr^+)\oplus(H_\fr^-,P_\fr^-).\label{qrq}\end{equation}
(\ref{qrq}) will be called a {\em two-sector quasiparticle  system}.

If we know the dispersion relations
 $I_i\ni\kk\mapsto \omega_i(\kk)$, $i\in\cQ$, then we can determine 
the even and odd energy momentum spectrum of $(H_\fr^+,P_\fr^+)$:
\begin{eqnarray*}
\sp
(H_\fr^+,P_\fr^+)&=&\{(0,\0)\}\\
&&\!\!\cup\ \bigl\{\bigl(\omega_{i_1}(\kk_1)+\cdots+\omega_{i_n}(\kk_n),\kk_1+\cdots+\kk_n\bigr)\ :\\
&&\hskip 5ex
  \hbox{even number of fermions}, n=1,2,3\dots\}^\cl,\\
  \\
\sp
(H_\fr^-,P_\fr^-)&=& \bigl\{\bigl(\omega_{i_1}(\kk_1)+\cdots+\omega_{i_n}(\kk_n),\kk_1+\cdots+\kk_n\bigr)\ :\\
&&\hskip 5ex
  \hbox{odd number of fermions}, n=1,2,3\dots\}^\cl
.\end{eqnarray*}

\subsection{Properties of the excitation spectrum of two-sector quasiparticle systems}\label{twosectorproperties}

Let $(H,P_\fr)=(H^+,P^+)\oplus( H^-,P^-)$ be a  two-sector quasiparticle system.
Clearly, we have
\begin{eqnarray}
(0,\0)&\in&\sp(H^+,P^+)\label{pas1a0-}
\end{eqnarray}
because of the Fock vacuum. Here are the  properties of the even and odd excitation spectrum:
\begin{eqnarray}
\sp(H^+,P^+)&=&\sp(H^+,P^+)+\sp(H^+,P^+)\label{pas1a0}
\\
&\supset&
\sp(H^-,P^-)+\sp(H^-,P^-),\label{pas20}\\
\sp(H^-,P^-)&=&\sp(H^-,P^-)+\sp(H^+,P^+).\label{pas30}
\end{eqnarray}

Assume now that
the Hamiltonian is bounded from below.
 Then the Fock vacuum is a translation invariant ground state satisfying $E=0$, so that  the excitation spectrum coincides with the energy-momentum spectrum.
Thus we can rewrite (\ref{pas1a0-})-(\ref{pas30}) as
\begin{eqnarray}
(0,\0)&\in&\sp(H^+-E,P^+)\label{pas1a00},\\
\sp(H^+-E,P^+)&=&\sp(H^+-E,P^+)+\sp(H^+-E,P^+)\label{pas1a00.}
\\
&\supset&
\sp(H^--E,P^-)+\sp(H^--E,P^-),\label{pas200}\\
\sp(H^--E,P^-)&=&\sp(H^--E,P^-)+\sp(H^+-E,P^+).\label{pas300}
\end{eqnarray}

Given (\ref{pas1a00}), (\ref{pas1a00.})-(\ref{pas300}) are equivalent to
\begin{eqnarray}
\Exc^+&\supset&\bigl(\Exc^++\Exc^+\bigr)\cup\bigl(\Exc^-+\Exc^-\bigr),
\label{pas1a00a}
\\
\Exc^-&\supset&\Exc^-+\Exc^+.\label{pas300.}
\end{eqnarray}

If in addition the number of particle species is finite, then
\begin{eqnarray}
\Exc_\ess^+&=&\bigl(\Exc^++\Exc^+\bigr)^\cl\cup \bigl(\Exc^-+\Exc^-\bigr)^\cl,
\label{pas1a000}
\\
\Exc_\ess^-&=&\bigl(\Exc^-+\Exc^+\bigr)^\cl.\label{pas3000}
\end{eqnarray}

\subsection{Two-sector quasiparticle-like spectrum}

Consider now an arbitrary translation invariant  system with two superselection sectors
$(H,P)=(H^+,P^+)\oplus( H^-,P^-)$. We will assume that $H$ is bounded from below and the ground state with energy $E$ is translation invariant and belongs to the sector $\cH^+$.

We will say that the excitation spectrum of the system $(H^+,P^+)\oplus( H^-,P^-)$ {\em is two-sector quasiparticle-like} if it coincides with the excitation spectrum of a two-sector quasiparticle  system. Such an excitation spectrum has special properties. In particular, it satisfies (\ref{pas1a00})-(\ref{pas300}).

There exists a heuristic 
general argument why realistic translation invariant
quantum systems in thermodynamic limit should satisfy
 (\ref{pas1a00})-(\ref{pas300}).
 It is an obvious modification of the argument given in Subsect.
\ref{Quasiparticle-like excitation spectrum}. 

 Indeed, we need to notice what follows.
 $(-1)^{N_\a}$ is always a superselection rule for  realistic quantum system.
In particular, if we assume that the ground state is nondegenerate, it has to be either bosonic or fermionic. We make an assumption that it is bosonic.

The eigenvectors $\Phi_1$ and $\Phi_2$, discussed in
Subsect.
\ref{Quasiparticle-like excitation spectrum},
 can be chosen to be purely bosonic or fermionic. Using the fact that the ground state is purely bosonic, we see that we can chose the operators $A_1$ and $A_2$ 
to be purely bosonic or fermionic. (That means, they either commute or anticommute with $(-1)^{N_\a}$). Consequently, we have the following possibilities:
\begin{itemize}
\item Both $\Phi_1$ and $\Phi_2$ are bosonic. Then $\Phi_{12}$ is bosonic.
\item Both $\Phi_1$ and $\Phi_2$ are fermionic. Then $\Phi_{12}$ is bosonic.
\item One of $\Phi_1$ and $\Phi_2$  is bosonic, the other is fermionic. Then $\Phi_{12}$ is fermionic. \end{itemize}
This implies (\ref{pas1a00a}) and
(\ref{pas300.}).

\subsection{Bottom of a two-sector quasiparticle-like excitation spectrum}

Suppose again that 
$(H,P)=(H^+,P^+)\oplus(H^-,P^-)$ is a translation invariant 
 system with two superselection sectors.
 We assume that we know  its excitation spectrum.
We would like to describe some criteria to verify whether
it is two-sector quasiparticle-like. 

These criteria will involve the properties of the bottom of the even and odd excitation spectrum. The following theorem follows directly from the properties described in Subsection \ref{translationtwosector} and the definition two-sector quasiparticle-like quantum system. It is in some sense analogous to Theorems \ref{twbottom1} and \ref{twbottom2}.

\begin{thm} Suppose that the excitation spectrum of $(H^+,P^+)\oplus( H^-,P^-)$ is two-sector quasiparticle-like.
\begin{enumerate}
\item We have the following subadditivity properties:
\begin{eqnarray*}
\exc^-(\kk_1+\kk_2)\leq
\exc^-(\kk_1)+\exc^+(\kk_2),\\
\exc^+(\kk_1+\kk_2)\leq
\exc^-(\kk_1)+\exc^-(\kk_2),\\
\exc^+(\kk_1+\kk_2)\leq
\exc^+(\kk_1)+\exc^+(\kk_2).
\end{eqnarray*}
\item If the number of species of quasiparticles is finite, then we can reconstruct 
$\exc_\ess^-$ and $\exc_\ess^+$ from $\exc^-$ and
$\exc^+$:
\begin{eqnarray*}
\exc_\ess^-(\kk)&=&\inf \{\exc^-(\kk_1)+
\exc^+(\kk_2)\ :\ \kk=\kk_1+\kk_2\},\\
\exc_\ess^+(\kk)&=&\inf \{\exc^+(\kk_1)+
\exc^+(\kk_2),\ \ \ \exc^-(\kk_1)+
\exc^-(\kk_2)\ :\ \kk=\kk_1+\kk_2\}.
\end{eqnarray*}\end{enumerate}
\end{thm}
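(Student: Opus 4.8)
The plan is to derive each claimed identity directly from the energy-momentum spectrum characterization of two-sector quasiparticle systems given in Subsection \ref{twosectorproperties}, specifically the relations \eqref{pas1a00a}, \eqref{pas300.}, \eqref{pas1a000} and \eqref{pas3000}, combined with the definition of the bottom functions $\exc^\pm$ and $\exc_\ess^\pm$. Since the excitation spectrum is assumed two-sector quasiparticle-like, it coincides with that of a genuine two-sector quasiparticle system, so all these set-theoretic relations transfer verbatim to the system $(H^+,P^+)\oplus(H^-,P^-)$.

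For part 1, I would translate the set inclusions \eqref{pas1a00a} and \eqref{pas300.} into pointwise inequalities for the bottom functions. The key elementary fact is that if $\Exc^\bullet \supset \Exc^\circ + \Exc^\star$, then taking any $\kk_1,\kk_2$ and choosing energies approaching $\exc^\circ(\kk_1)$ and $\exc^\star(\kk_2)$, the point $(\exc^\circ(\kk_1)+\exc^\star(\kk_2),\kk_1+\kk_2)$ lies in the closure of $\Exc^\bullet$; since $\exc^\bullet(\kk_1+\kk_2)$ is the infimum over the fiber at $\kk_1+\kk_2$, this yields
\[
\exc^\bullet(\kk_1+\kk_2)\leq \exc^\circ(\kk_1)+\exc^\star(\kk_2).
\]
Applying this with the three inclusions encoded in \eqref{pas1a00a} (which gives both $\Exc^+\supset\Exc^++\Exc^+$ and $\Exc^+\supset\Exc^-+\Exc^-$) and \eqref{pas300.} ($\Exc^-\supset\Exc^-+\Exc^+$) produces exactly the three subadditivity inequalities in the statement.

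For part 2, under the extra hypothesis that the number of quasiparticle species is finite, I would invoke \eqref{pas1a000} and \eqref{pas3000}, which express $\Exc_\ess^\pm$ as closures of sums. From $\Exc_\ess^-=(\Exc^-+\Exc^+)^\cl$ one reads off the bottom function as an infimum over decompositions $\kk=\kk_1+\kk_2$ of $\exc^-(\kk_1)+\exc^+(\kk_2)$, using continuity of the momentum spectrum so that taking the closure does not lower the infimum below what the two-fold sum already attains. The same reasoning applied to $\Exc_\ess^+=(\Exc^++\Exc^+)^\cl\cup(\Exc^-+\Exc^-)^\cl$ yields the combined infimum over both types of decompositions. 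The main technical point, and the only real obstacle, is justifying that passing to the closure in \eqref{pas1a000} and \eqref{pas3000} does not change the infimum defining $\exc_\ess^\pm$; this is where the continuity of the dispersion relations $\omega_i$ and the finiteness of $\cQ$ are essential, exactly as in the argument following \eqref{endeq4}. Once that is in place, the identities for $\exc_\ess^\pm$ follow immediately from the definition of the bottom of a set.
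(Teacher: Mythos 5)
Your proposal is correct and follows essentially the same route as the paper, which states only that the theorem ``follows directly'' from the set-theoretic relations (\ref{pas1a00a}), (\ref{pas300.}), (\ref{pas1a000}) and (\ref{pas3000}) together with the definition of a two-sector quasiparticle-like spectrum; you have simply made that derivation explicit, including the one genuine technical point (that passing to the closure does not lower the fiberwise infimum, by continuity of the dispersion relations and finiteness of $\cQ$).
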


\subsection{Non-interacting Fermi gas}
\label{Non-interacting Fermi gas}
As an example of the introduced concepts, let us give a brief discussion of the free Fermi gas with chemical potential $\mu$ in $d$
dimensions. For simplicity, we will
assume that particles have no internal degrees of freedom such as spin.

The Hilbert space of $N$ fermions equals $\Gamma_\a^N\left(L^2(\RR^d)\right)$ 
(antisymmetric square integrable functions on $(\RR^d)^N$).
Let $\Delta_{(i)}$ denote  the Laplacian $\Delta$ acting on
the $i$th variable. Then the  Hamiltonian  equals
\begin{equation}
H_{N}=
\sum_{ i=1}^N(-\Delta_{(i)}-\mu).
\label{sch4}\end{equation}
It commutes with the  momentum operator 
\[P_{N}:=\sum_{i=1}^N -\ii\nabla_{(i)}.\]

It is convenient to put together various $N$-particle sectors in a single Fock
space 
 \[\Gamma_\a(L^2(\RR^d)):=\mathop{\oplus}\limits_{N=0}^\infty 
\Gamma_\a^N\left(L^2(\RR^d)\right).\]
Then the basic observables are the
 Hamiltonian, the total momentum  and the particle number operator:
\begin{eqnarray}\nonumber
H&=&\loplus_{N=0}^\infty H_N=\int a_\x^{\dagger}(-\Delta-\mu)a_\x\d\x,\\
 P&=&\loplus_{N=0}^\infty P_N=-\ii\int a_\x^{\dagger}\nabla_\x a_\x\d\x
,\label{postu1}\\
 N&=&\loplus_{N=0}^\infty N=\int a_\x^{\dagger} a_\x\d\x
,\nonumber
\end{eqnarray}
where
 $a_\x^{\dagger}$/ $a_\x$ are the usual fermionic creation/annihilation operators.

The three operators in (\ref{postu1}) describe only a finite number of particles in an
infinite space. We would like to
investigate  homogeneous Fermi gas at a positive density in the thermodynamic limit.
Following the accepted, although somewhat unphysical tradition, 
we first consider our system on
$\Lambda=[-L/2,L/2]^d$, the $d$-dimensional
 cubic box of side length $L$, with periodic boundary conditions.
 Note that the spectrum of the momentum becomes
$ \frac{2\pi }{L}\ZZ^d$. At the end we let
 $L\to\infty$. The Fock space is now $\Gamma_\a(L^2(\Lambda))$.

It is convenient to pass to the momentum representation:
\begin{eqnarray}\label{2B}
H_L 
&=&\sum_{\kk}(\kk^2-\mu) a^{\dagger}_{\kk}
a_{\kk}\nonumber 
\\
P_L&=&\sum_{\kk}\kk a^{\dagger}_{\kk}a_\kk,\label{2C}\\
N_L&=&\sum_{\kk}a^{\dagger}_{\kk}a_\kk,\nonumber\end{eqnarray}
where we used (\ref{postu1})
and $a_\x=L^{-d/2}\sum_\kk \e^{\ii \kk\x}a_\kk$. We sum over
$\kk\in \frac{2\pi }{L}\ZZ^d$.

It is natural to change  the representation of canonical anticommutation
relations
and replace the usual fermionic creation/annihilation operators
by new ones, which
 kill the ground state of the Hamiltonian:
\begin{eqnarray*}
b_\kk^{\dagger}:&=&a_\kk^{\dagger},\ b_\kk:=a_\kk,\ \kk^2>\mu,\\
b_\kk^{\dagger}:&=&a_\kk,\ b_\kk:=a_\kk^{\dagger},\ \kk^2\leq\mu.
\end{eqnarray*}
Then,
\begin{eqnarray*}
 H_L&=&\sum_\kk |\kk^2-\mu|b_\kk^{\dagger}b_\kk+E_L,\\
 P_L&=&\sum_\kk \kk b_\kk^{\dagger}
b_\kk ,\\
 N_L&=&\sum_\kk \sgn(\kk^2-\mu) b_\kk^{\dagger}
b_\kk +C_L
,\label{ham2}\end{eqnarray*}
where 
\begin{eqnarray*}
E_L&=&\sum_{\kk^2\leq\mu} (\kk^2-\mu),\\
C_L&=&\sum_{\kk^2\leq\mu} 1.
\end{eqnarray*}
It is customary to drop the constants $E_L$ and $C_L$.

Set $\omega(\kk)=|\kk^2-\mu|$.
In the case of an infinite space, the above analysis suggests that it is natural to postulate
\begin{eqnarray}
 H&=&\int\omega(\kk)b_\kk^{\dagger}b_\kk\d\kk,\label{ham3}\\
 P&=&\int \kk b_\kk^{\dagger}
b_\kk \d\kk,\label{ham3a}\\
 N&=&\int \sgn(\kk^2-\mu) b_\kk^{\dagger}
b_\kk \d\kk,
\label{ham4}\end{eqnarray}
 as the Hamiltonian, total momentum and number operator of the free Fermi gas from the beginning, instead of (\ref{postu1}).

The operators $b_\kk^{\dagger}/b_\kk$ can be called quasiparticle creation/annihilation operators and the function $\kk\mapsto\omega(\kk)$ the quasiparticle dispersion relation.  Thus  a quasiparticle is a true particle above the Fermi level and a  hole below the Fermi level.
\\
\\
\\
\\


\subsection{Energy-momentum spectrum of non-interacting Fermi gas}\label{noninteractinggraphics}
The analysis in the previous subsection implies that the energy-momentum spectrum of a non-interacting Fermi gas is described by 
(\ref{ham3}) and (\ref{ham3a}) with the
dispersion relation $\omega(\kk)=|\kk^2-\mu|$. Below we present present
diagrams representing the energy-momentum spectrum. 

In the  full and the odd cases, that is $\sp (H,P)$ and $\sp(H^-,P^-)$, the dispersion relation
$\omega$ is a singular part of the spectrum and it 
is be denoted by a solid line. In the even case, $\sp(H^+,P^+)$, the dispersion relation is denoted by a dotted line.

\begin{figure}[!h]
\centering
\includegraphics{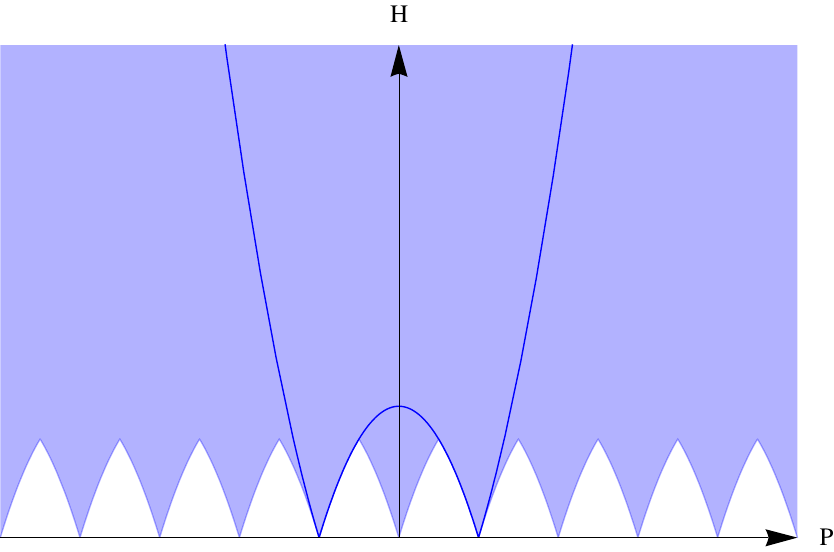}
\label{fig:1}
\caption{$\sp(H,P)$ in the non-interacting case, $d=1$.}
\end{figure}

\begin{figure}[ht]
\centering
\includegraphics{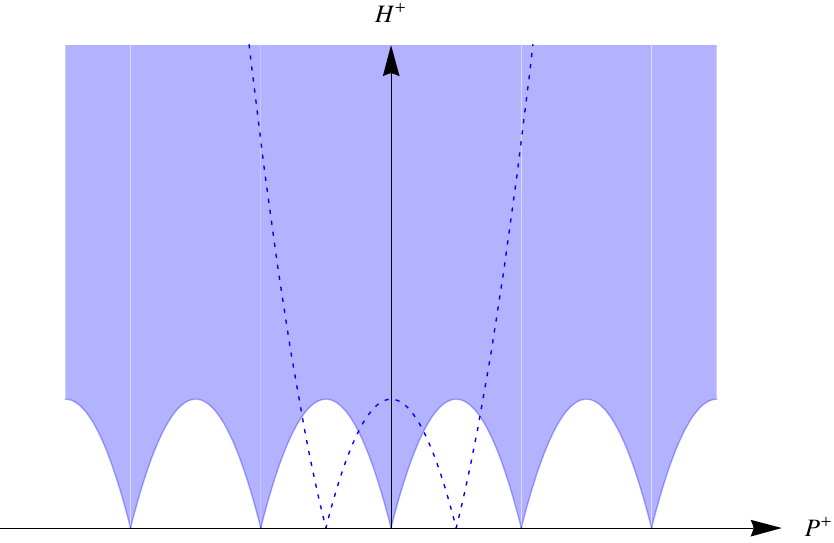}
\label{fig:2}
\caption{$\sp(H^+,P^+)$ in the non-interacting case, $d=1$.}
\end{figure}

\begin{figure}[ht]
\centering
\includegraphics{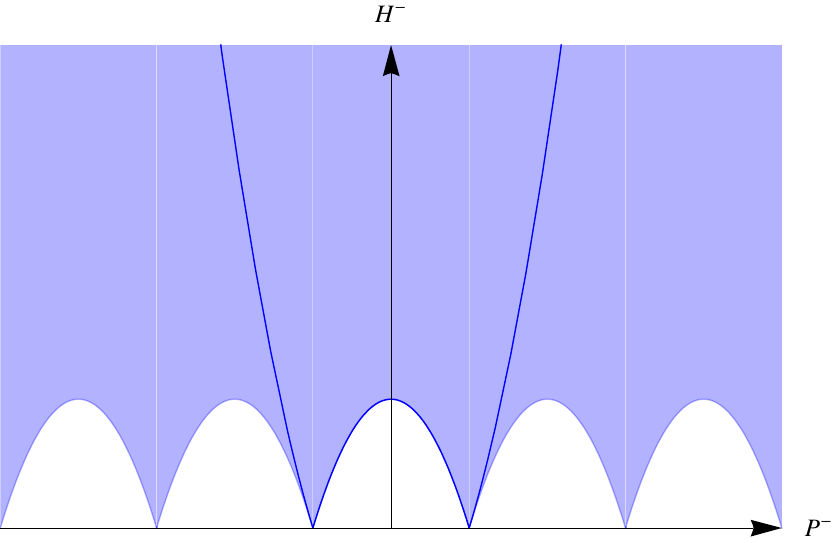}
\label{fig:3}
\caption{$\sp(H^-,P^-)$ in the non-interacting case, $d=1$.}
\end{figure}
\newpage
For $d\geq2$ the energy-momentum spectrum is rather boring:
\begin{figure}[!h]
\centering
\includegraphics{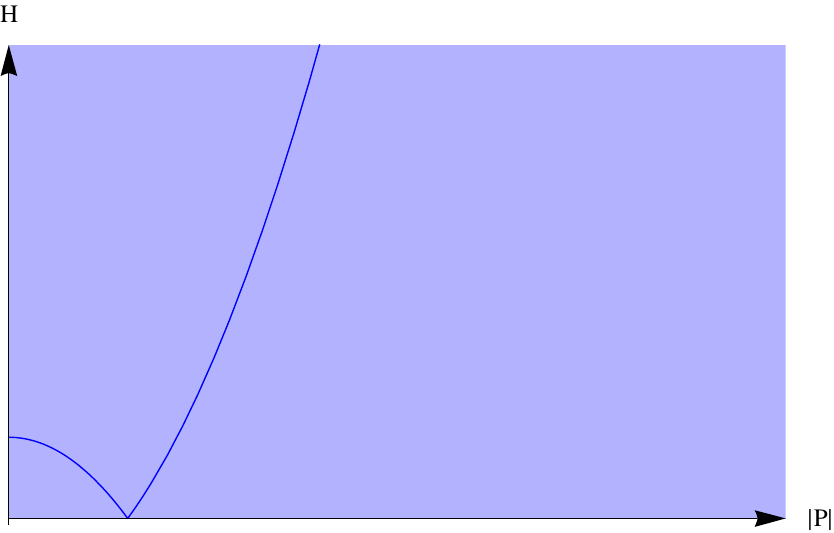}
\label{fig:4}
\caption{$\sp(H,P)$, $\sp(H^+,P^+)$, $\sp(H^-,P^-)$ in the \\ non-interacting case, $d\geq2$.}
\end{figure}

In the next parts of the thesis we shall also present diagrams representing the energy-momentum spectrum of interacting Bose and Fermi gases, which one can obtain using the Bogoliubov (in the bosonic case) and Hartree--Fock--Bogoliubov (in the fermionic one) approximations. 


\part{Calculation of the excitation spectrum - approximate methods}

\section[Energy-momentum spectrum of a homogeneous Bose gas]{Energy-momentum spectrum of a homogeneous Bose gas}\label{bosegas}
In this section we shall consider a homogeneous Bose gas. After defining the model, we will present the so-called \textit{Bogoliubov approximation}. This approximation leads to a quantitative description of the low-lying energy-momentum spectrum. This section is based on \cite{Bogoliubov1947} and \cite{CDZ2009}.

\subsection[The model]{The model} \label{themodelbose}
 \subsubsection{Bose gas in canonical approach}
Consider a system of $N$ bosons interacting via a 2-body potential. In the canonical approach one assumes that the number of particles is fixed. 

Suppose that the 2-body potential of an interacting Bose gas is described by real function
$\mathbb{R}^{d}\ni \x\mapsto v
(\x)$, with its Fourier transform defined  by
\[\hat{v}(\p):=\int_{\mathbb{R}^{d}}v(\x)\e^{-\ii\p\x} \d \x.\]
 We assume that $v(\x)=v(-\x)$, and that $v$ decays sufficiently fast at infinity. 
 
 We also suppose  that the Fourier transform of the potential is positive, i.e.
$$\hat{v}(\p)\geq 0, \,\,\, \p \in \mathbb{R}^{d}.$$
Such potentials are sometimes called \textit{repsulsive}.

A homogeneous Bose gas is described by the Hilbert space  $L^{2}_{\s}((\RR^d)^{N})$ (symmetric
square integrable functions on $(\RR^d)^{N})$ , the $N$-body Schr\"odinger Hamiltonian
\begin{equation}
H_{N}=-\sum_{i=1}^{N}\Delta_{i}+\sum_{1\leq i<j\leq N} v(\x_{i}-\x_{j}), \label{hamiltonianbosepart2}
\end{equation}
and the momentum operator
\begin{equation}
P_N=\sum_{i=1}^{N}-\ii\nabla_{i}. \label{momentum}
\end{equation}
Clearly \eqref{hamiltonianbosepart2} and \eqref{momentum} commute, thus they define a \textit{translation invariant quantum system} (recall Subsection \ref{transinvquansys}).

As in the case of the Fermi gas, we
want to investigate homogeneous Bose gas at \textit{positive} density. Therefore, we will consider the Bose gas on a torus, that is in the box $\Lambda=[-L/2,L/2]^{d}$ with periodic boundary conditions. 

The original potential $v$ is then replaced by its 
{\em periodized} version
\[v^{L}(\x):=\frac{1}{L^d}\sum_{\p\in(2\pi/L)\mathbb{Z}^{d}}\e^{\ii\p\x}\hat{v}(\p).\]
Here, $\p\in(2\pi/L)\mathbb{Z}^{d}$ is the discrete momentum variable.
 Note that $v^{L}$ is periodic with respect to the domain $\Lambda$ and that $v^{L}(\x)\rightarrow v(\x)$ as $L\to \infty$. 
 
 The homogeneous Bose gas is thus described by the Hamiltonian 
\begin{equation}
H_N^L=-\sum_{i=1}^{N}\Delta^{L}_{i}+\sum_{1\leq i<j\leq N} v^{L}(\x_{i}-\x_{j}) \label{hamiltonianboseboxpart2}
\end{equation}
acting on the space $L^{2}_{\s}(\Lambda^{N})$ (the symmetric subspace
of  $L^{2}(\Lambda^{N})$). The Laplacian is assumed to have periodic boundary conditions.

Accordingly, the momentum operator is given by
\begin{equation}
P_N^L=\sum_{i=1}^{N}-\ii\nabla^L_{i}, \label{momentumbox}
\end{equation}
where $\nabla^L_{i}$ denotes the gradient acting on the $i$-th variable, with periodic boundary conditions. 

\subsubsection{Grand-canonical Hamiltonian of the Bose gas}
\label{grandcanonicalbose}
Instead of studying the Bose gas in the canonical
formalism, fixing the number of particles, it is mathematically more convenient to use the grand-canonical formalism and fix the chemical potential $\mu$. One can pass from the chemical potential to the density by the Legendre transformation.

In the grand-canonical approach one allows the system to have an arbitrary number of particles. Thus, it is convenient to put all $N$-particle spaces into a single bosonic Fock space
$$\Gamma_\mathrm{s}(L^2(\Lambda)):=\mathop{\oplus}\limits_{N=0}^\infty 
\Gamma_\mathrm{s}^N\left(L^2(\Lambda)\right)$$
with the Hamiltonian 
\begin{eqnarray*}
H_\mu^L&:=&\mathop{\oplus}\limits_{N=0}^\infty (H^{L}_N -\mu N)\\
&=&\int a_{\x}^{\dagger} (-\Delta_{\x}^L-\mu) a_{\x}\d
\x 
+\frac12\int\int a_{\x}^{\dagger} a_{\y}^{\dagger} 
v^L(\x-\y)a_{\y} a_{\x}\d
\x\d\y, 
\end{eqnarray*}
where $a_{\x}$, $a_{\x}^{\dagger}$ are the usual bosonic annihilation and creation operators. The second quantized momentum and number operators are defined as
\begin{eqnarray*}
P^L&:=&
\mathop{\oplus}\limits_{N=0}^\infty P^{L}_N=-\ii\int
a_{\x}^{\dagger} \nabla_\x^La_{\x} \d\x,\\
N^L&:
=&\mathop{\oplus}\limits_{N=0}^\infty N=\int a_{\x}^{\dagger} a_{\x}\d\x
.\end{eqnarray*}

Due to periodic boundary conditions it is convenient to pass to momentum representation. Then
\begin{eqnarray}
H_\mu^L 
&=&\sum_{\kk}(\frac12 \kk^2-\mu) a^{\dagger} _{\kk}
a_{\kk}\nonumber \\
&&+\frac{1}{2L^d}\sum_{\kk_1,\kk_2,\kk_3,\kk_4}\delta(\kk_1+\kk_2-\kk_3-\kk_4)\hat{v}(\kk_2-\kk_3)
a^{\dagger}_{\kk_1}a^{\dagger}_{\kk_2}a_{\kk_3}a_{\kk_4}
,\label{hamboseboxpedpart2}
\\
P^L&=&\sum_{\kk}\kk a^{\dagger} _{\kk}a_\kk,\nonumber
\\
N^L&=&\sum_{\kk}a^{\dagger} _{\kk}a_\kk.\nonumber
\end{eqnarray}
Here $a_{\x}=L^{-d/2}\sum_{\kk}\e^{\ii\kk\x}a_{\kk}$ and the summation over momenta runs over $\kk \in \frac{2\pi}{L}\mathbb{Z}^d$.

\subsection{The Bogoliubov approximation}
In this subsection we would like to introduce the famous Bogoliubov approximation. This scheme was introduced by Bogoliubov in 1947 in his famous paper "On the Theory of Superfluidity" \cite{Bogoliubov1947}. His goal was to derive a microscopic theory of superfluidity. In particular, Bogoliubov wanted to show that the homogeneous Bose gas meets Landau's criterion for superfluidity. This criterion says that a system can behave in a superfluid manner only if its low-lying excited states depend linearly on the total momentum of the system \cite{Landau1941}.

Below we will describe the original ideas of Bogoliubov. Later, based on \cite{CDZ2009}, we would like to present a modification of that scheme. This improved method fits into a more general scheme which will be described in Section \ref{beliaevthmsec}.

\subsubsection{The original Bogoliubov approximation}\label{The original Bogoliubov approximation}
In his original approach \cite{Bogoliubov1947,BogLectures1}, Bogoliubov considered the canonical setup with the Hamiltonian
\begin{eqnarray}
H_0^L 
&=&\sum_{\kk}\frac12 \kk^2 a^{\dagger} _{\kk}
a_{\kk}\nonumber \\
&&+\frac{1}{2L^d}\sum_{\kk_1,\kk_2,\kk_3,\kk_4}\delta(\kk_1+\kk_2-\kk_3-\kk_4)\hat{v}(\kk_2-\kk_3)
a^{\dagger}_{\kk_1}a^{\dagger}_{\kk_2}a_{\kk_3}a_{\kk_4}.
\label{hambogoriginal}
\end{eqnarray}
From now on we will drop the subscript $0$ and superscript $L$.

Bogoliubov starts with the observation that the operators $N_{0}=a^{\dagger}_0 a_0$ and $N_{0}+1=a_0 a^{\dagger}_0$ enter the Hamiltonian \eqref{hambogoriginal} only via the ratios 
$$\frac{a^{\dagger}_0 a_0}{L^{d}},\,\,\,\frac{a_0 a^{\dagger}_0}{L^{d}}.$$
The difference of these ratios equals $1/L^d$, which goes to zero when taking the \textit{thermodynamic limit}, that is when 
\begin{eqnarray}
N\rightarrow \infty, \quad L\rightarrow \infty \quad \,\,\, \text{with} \,\,\, \quad \frac{N}{L^d}=\rho=\rm{const.}\label{thermlimit}
\end{eqnarray}
Thus, according to Bogoliubov, one can neglect the non-commutativity of the operators $a^{\dagger}_0$ and $a_0$ when deriving an approximate expression for the Hamiltonian \eqref{hambogoriginal}, and replace them by (complex) numbers. This is the so-called  \textit{$c$-number substitution} \cite{LSY05}.

In the next step Bogoliubov assumes the considered system is \textit{weakly interacting}. With this assumption one can expect that  the ground state of this interacting is not too different from the ground state of the non-interacting system, i.e. most particles have zero momentum. Thus, when looking at the low-lying states, the quantity 
$$\sum\nolimits '\frac{N_{\p}}{N}=\frac{N-N_0}{N}$$
is small. Here, the prime next to sum symbol means we are summing over momenta $\p\in(2\pi/L)\mathbb{Z}^{d}\setminus\{\0\}.$ Since $N_{\p}=a^\dagger_\p a_\p$ we conclude that for $\p\neq \0$ the operators (or rather operator amplitudes) $a_\p$ and $a^\dagger_\p$ will be small compared with $a_\0$ and $a^\dagger_\0$ respectively. Thus, in the approximate expression for   \eqref{hambogoriginal} Bogoliubov neglects the terms with more than two $a_\p$ and $a^\dagger_\p$ ($\p\neq \0$).

With these assumptions one arrives at the following expression which is an approximation for  \eqref{hambogoriginal}:
\begin{eqnarray}
H&=&\sum\nolimits '\frac{\p^2}{2}a^\dagger_\p a_\p+\hat{v}(\0)\left(\frac{N_\0^2}{2L^d}+\frac{N_\0}{L^d}\sum\nolimits 'a^\dagger_\p a_\p\right)+\frac{N_\0}{L^d}\sum\nolimits '\hat{v}(\p)a^\dagger_\p a_\p \nonumber \\
&+& \frac{a_\0^2}{2L^d}\sum\nolimits '\hat{v}(\p)a^\dagger_\p a^\dagger_{-\p} +\frac{a_\0^{\dagger,2}}{2L^d}\sum\nolimits '\hat{v}(\p)a_\p a_{-\p}. \label{firstbogapprox}
\end{eqnarray}
(For simplicity, we will keep denoting the Hamiltonians after successive approximations also by $H$). Since
$$\sum\nolimits ' N_\p=N-N_\0,$$
then in the considered approximation
$$\frac{N_\0^2}{2L^d}+\frac{N_\0}{L^d}\sum\nolimits ' a^\dagger_\p a_\p=\frac{N_\0^2+2N_\0(N-N_\0)}{2L^d} \approx \frac{N^2}{2L^d}$$
up to terms with four creation or annihilation operators with non-zero momentum which, due to our assumptions, can be dropped.

Therefore, using \eqref{firstbogapprox}, we obtain
\begin{align}
H=\hat{v}(\0)\frac{N^2}{2L^d}+\sum\nolimits '\left(\frac{\p^2}{2}+\frac{N_\0}{L^d}\hat{v}(\p)\right)a^\dagger_\p a_\p +\sum\nolimits '\hat{v}(\p)\left(\frac{a_\0^2}{2L^d}a^\dagger_\p a^\dagger_{-\p} +\frac{a_\0^{\dagger,2}}{2L^d}a_\p a_{-\p}\right). \label{secondbogapprox}
\end{align}
Let us now introduce the operators ($\p\neq \0$)
\begin{eqnarray}
b_{\p}=a_\0^{\dagger}N_\0^{-1/2}a_\p, \,\,\,b^{\dagger}_{\p}=a_\0 N_\0^{-1/2}a^{\dagger}_\p.  \label{boperators}
\end{eqnarray}
Note that
$$[b_{\p},b^{\dagger}_{\q}]=\delta_{\p,\q}-\frac{a_\p^{\dagger}a_\q}{N_\0}$$
and thus the assumption about the macroscopic occupation of the zero momentum mode yields in the thermodynamic limit bosonic commutation relations for the operators $b_\p$ and $b_\p^{\dagger}$.

Using \eqref{boperators} we can rewrite \eqref{secondbogapprox} as
\begin{eqnarray}
H=\hat{v}(\0)\frac{N^2}{2L^d}+\sum\nolimits '\left(\frac{\p^2}{2}+\frac{N_\0}{L^d}\hat{v}(\p)\right)b^\dagger_\p b_\p +\sum\nolimits '\hat{v}(\p)\left(\frac{N_\0}{2L^d}b^\dagger_\p b^\dagger_{-\p} +\frac{N_\0}{2L^d}b_\p b_{-\p}\right).\quad \label{thirdbogapprox}
\end{eqnarray}
This expression can be looked upon as a quadratic Hamiltonian in terms of the operators $b_\p$ and $b_\p^{\dagger}$ and it can be diagonalized by introducing new operators
\begin{eqnarray}
\xi_\p=\frac{b_\p-A_\p b^\dagger_{-\p}}{\sqrt{1-A_\p^2}}, \quad \xi^{\dagger}_\p=\frac{b^{\dagger}_\p-A_\p b_{-\p}}{\sqrt{1-A_\p^2}}, \label{diagbogoperators}
\end{eqnarray}
where
\begin{eqnarray}
A_\p=\frac{L^{d}}{N_\0\hat{v}(\p)}\left(\tilde{E}(\p)-\frac{\p^2}{2}-\frac{N_\0}{L^d}\hat{v}(\p)\right)\quad \text{and} \quad \tilde{E}(\p)=\sqrt{\frac{N_\0}{L^d}\p^2\hat{v}(\p)+\frac{\p^4}{4}}. \label{diagbogopcoeff}
\end{eqnarray}
The operators $\xi^{\dagger}_\p$ and $\xi_\p$ also satisfy the (approximate) canonical commutation relations. Inverting \eqref{diagbogoperators} we have 
\begin{eqnarray}
b_\p=\frac{\xi_\p+A_\p \xi^\dagger_{-\p}}{\sqrt{1-A_\p^2}}, \quad b^{\dagger}_\p=\frac{\xi^{\dagger}_\p+A_\p \xi_{-\p}}{\sqrt{1-A_\p^2}}. \label{diagbogoperatorsinverted}
\end{eqnarray}
Substituting these expressions in the Hamiltonian \eqref{thirdbogapprox} we obtain 
\begin{eqnarray*}
H=\frac{N^2}{2L^d}\hat{v}(\0)+\frac12 \sum\nolimits '\left(\tilde{E}(\p)-\frac{\p^2}{2}-\frac{N_\0}{L^d}\hat{v}(\p)\right)+\sum\nolimits ' \tilde{E}(\p)\xi^{\dagger}_\p \xi_\p. 
\end{eqnarray*}
In the approximation we are considering, we can replace $N_\0/L^d$ by the density $\rho$. Then, finally,
\begin{eqnarray}
H=\frac{N\rho}{2}\hat{v}(\0)+\frac12 \sum\nolimits '\left(E(\p)-\frac{\p^2}{2}-\rho\hat{v}(\p)\right)+\sum\nolimits ' E(\p)\xi^{\dagger}_\p \xi_\p \label{fourthbogapprox}
\end{eqnarray}
where 
$$E(\p)=\sqrt{\p^2\rho\hat{v}(\p)+\frac{\p^4}{4}}.$$
Thus, the Bogoliubov approximation predicts that the low-lying spectrum of a weakly interacting Bose gas is quasiparticle-like in the sense of the definition in Subsection \ref{Quasiparticle-like excitation spectrum}. 

Since 
$$\inf_{\p\neq \0}\frac{E(\p)}{|\p|}>0,$$
it also predicts a \textit{positive critical velocity} and \textit{no energy gap} (recall the definitions in Subsection \ref{Excitation spectrum}).

\subsubsection{The improved Bogoliubov method}\label{improvedbogmethod}
Let us now present a version of the Bogoliubov approximation adapted to the grand-canonical setting with arbitrary chemical potential $\mu$. This presentation is based on \cite{CDZ2009}, which, as mentioned before, is not co-authored by the author of this thesis. We include this presentation nevertheless, because it was the immediate motivation for the work presented in the next sections and it fits perfectly into the structure of this thesis. 

We start by defining two operators. For $\alpha\in \CC$, we define the \textit{displacement} or \textit{Weyl operator} of the mode $\kk=\0$:
\begin{eqnarray}
W_\alpha :=\e^{-\alpha a^{\dagger}_\0+\bar{\alpha}a_\0}. \label{weyloperator}
\end{eqnarray} 
If $\Omega$ denotes Fock vacuum, then we define the \textit{coherent vector} by
\begin{eqnarray}
\Omega_{\alpha}:= W_{\alpha}^{\dagger}\Omega. \label{coherent}
\end{eqnarray}
 Then using the Lie formula
\begin{gather*}
\e^{-X}B\e^X=\sum_{j=0}^\infty\frac{(-1)^j}{j!}\underset{\hbox{ $j$ times\hskip 3ex}}{[X,...[X,B]\dots]}
\end{gather*} 
we obtain
\begin{eqnarray}
\tilde{a}_{\0}:=W_{\alpha}^{\dagger}a_\0 W_{\alpha}=a_\0-\alpha \quad \text{and} \quad \tilde{a}_{\kk}:=W_{\alpha}^{\dagger}a_\kk W_{\alpha}=a_\kk \quad \text{if} \quad \kk\neq \0. \label{bogtranslation}
\end{eqnarray}
Such a transformation is sometimes called the \textit{Bogoliubov translation}. Note that the operators with and without tildes satisfy the same commutation relations. In addition, the annihilation operators with tildes destroy the "translated vacuum" $\Omega_\alpha$. 

Now, let $\frac{2\pi}{L}\mathbb{Z}^{d}\ni \kk\mapsto \theta_\kk\in \CC$ be a square summable sequence with $\theta_\kk=\theta_{-\kk}$. For such a sequence let us define the unitary operator
\begin{eqnarray}
U_\theta := \prod_{\kk} \e^{-\frac12\theta_\kk a_\kk^{\dagger}a_{-\kk}^{\dagger}+\frac12\bar{\theta}_\kk a_\kk a_{-\kk}}. \label{ubogtransf}
\end{eqnarray}
 We then have 
 \begin{eqnarray}
 U^{\dagger}_\theta a_\kk U_\theta = \cosh |\theta_\kk | a_\kk-\frac{\theta_\kk}{|\theta_\kk |}\sinh|\theta_\kk |a_{-\kk}^{\dagger}=: b_\kk. \label{bogrotation}
 \end{eqnarray}
 The transformation above is called \textit{Bogoliubov rotation}. Furthermore, introducing 
\begin{eqnarray}
c_\kk := \cosh |\theta_\kk | \quad \text{and} \quad s_\kk :=-\frac{\theta_\kk}{|\theta_\kk |}\sinh|\theta_\kk |, \label{cksk}
\end{eqnarray}
we also have
\begin{eqnarray}
a^{\dagger}_\kk=c_\kk b^{\dagger}_\kk-\bar{s}_\kk b_{-\kk} \quad \text{and} \quad a_\kk=c_\kk b_\kk-s_\kk b^{\dagger}_{-\kk}. \label{invbogrot}
\end{eqnarray}
Both the Bogoliubov translation and the Bogoliubov rotation are special cases of the more general \textit{Bogoliubov transformations}. They will be discussed in more detail in Section \ref{beliaevthmsec}. The operator
$$U_{\alpha,\theta}:=W_{\alpha}U_{\theta}$$
is the general form of a Bogoliubov transformation commuting with the total momentum operator $P^L$. The vector
$$\Omega_{\alpha,\theta}:=U_{\alpha,\theta}^{\dagger} \Omega$$
is called a \textit{pure Gaussian vector} or \textit{squeezed vector}.

We shall now look for a pure Gaussian vector that minimizes the expectation value of $H^L_\mu$  (recall \eqref{hamboseboxpedpart2}). Clearly, if
\begin{eqnarray}
\tilde{b}_{\kk}:=U_{\alpha,\theta}^{\dagger}a_\kk U_{\alpha,\theta}, \label{tildeb} 
\end{eqnarray}
then
\begin{eqnarray*}
\tilde{b}_{\kk}\Omega_{\alpha,\theta}=0. 
\end{eqnarray*}
Thus, to calculate the expectation value mentioned above, it is useful to express $H^L_\mu$ in terms of $\tilde{b}_{\kk}$. 
We start by performing a Bogoliubov translation and expressing the Hamiltonian in terms of $\tilde{a}_\kk$. By \eqref{bogtranslation} 
we have 
\begin{eqnarray}\label{bogtranslationham}
H^L_{\mu} &=& -\mu|\alpha|^2+\frac{\hat{v}(\0)}{2L^d}|\alpha|^{4} \\
&+& \left(\frac{\hat{v}(\0)}{L^d}|\alpha|^{2}-\mu\right)(\bar{\alpha}a_\0+\alpha a_\0^{\dagger}) \nonumber \\
&+& \sum_{\kk}\left(\frac12 \kk^2-\mu+\frac{(\hat{v}(\0)+\hat{v}(\kk))}{L^d}|\alpha|^2 \right)a_\kk^{\dagger}a_\kk \nonumber \\
&+& \sum_{\kk}\frac{\hat{v}(\kk)}{2L^d}(\bar{\alpha}^2 a_\kk a_{-\kk}+\alpha^2 a^{\dagger}_\kk a^{\dagger}_{-\kk}) \nonumber \\
&+& \sum_{\kk,\kk'}\frac{\hat{v}(\kk)}{L^d}(\bar{\alpha}a^{\dagger}_{\kk+\kk'} a_\kk a_{\kk'}+\alpha a^{\dagger}_\kk a^{\dagger}_{\kk'}a_{\kk+\kk'}) \nonumber \\
&+&\frac{1}{2L^d}\sum_{\kk_1,\kk_2,\kk_3,\kk_4}\delta(\kk_1+\kk_2-\kk_3-\kk_4)\hat{v}(\kk_2-\kk_3)
a^{\dagger}_{\kk_1}a^{\dagger}_{\kk_2}a_{\kk_3}a_{\kk_4} \nonumber
\end{eqnarray}
where we dropped the tildes for notational simplicity. 

Now we perform a Bogoliubov rotation \eqref{invbogrot}. After this substitution the Hamiltonian in the Wick ordered form equals
\begin{eqnarray}
H &=& B +C b^{\dagger}_{\0}+\bar{C}b_{\0} \nonumber \\
&+& \frac12\sum_{\kk}O(\kk)b^{\dagger}_{\kk}b^{\dagger}_{-\kk}+\frac12\sum_{\kk}\bar{O}(\kk)b_{\kk}b_{-\kk}+\sum_{\kk}D(\kk)b^{\dagger}_{\kk}b_{\kk} \nonumber \\
&+& \text{terms higher order in $b$'s.} \label{rotbogham}
\end{eqnarray}
Above, we have dropped the tildes, superscript $L$ and subscript $\mu$. From now on, if not needed, we will not use them. In \eqref{rotbogham} B and C are given by
\begin{eqnarray*}
B&=& -\mu|\alpha|^2+\frac{\hat{v}(\0)}{2L^d}|\alpha|^{4} \\
 &+& \sum_{\kk}\left(\frac12 \kk^2-\mu+\frac{(\hat{v}(\0)+\hat{v}(\kk))}{L^d}|\alpha|^2 \right)|s_\kk|^2\\
 &-& \sum_{\kk}\frac{\hat{v}(\kk)}{2L^d}(\bar{\alpha}^2s_\kk c_\kk+\alpha^2\bar{s}_\kk c_\kk)\\
 &+& \sum_{\kk,\kk'}\frac{\hat{v}(\kk-\kk')}{2L^d}c_\kk s_\kk c_{\kk'}\bar{s}_{\kk'}\\
 &+&  \sum_{\kk,\kk'}\frac{\hat{v}(\0)+\hat{v}(\kk-\kk')}{2L^d}|s_\kk|^2|s_{\kk'}|^2
\end{eqnarray*}
and
\begin{eqnarray*}
C&=& \left(\frac{\hat{v}(\0)}{L^d}|\alpha|^{2}-\mu+ \sum_{\kk}\frac{\hat{v}(\0)+\hat{v}(\kk)}{L^d}|s_\kk|^2\right)(\alpha c_\0-\bar{\alpha}s_\0)\\
 &+&\sum_{\kk}\frac{\hat{v}(\kk)}{L^d}(\alpha s_\0 c_\kk \bar{s}_{\kk}-\bar{\alpha} c_\0 c_\kk s_\kk).
\end{eqnarray*}
Furthermore, introducing 
\begin{eqnarray}
f_\kk := \frac{\kk^2}{2}-\mu+|\alpha|^2\frac{\hat{v}(\0)+\hat{v}(\kk)}{L^d}+\sum_{\kk'}\frac{\hat{v}(\0)+\hat{v}(\kk'-\kk)}{L^d}|s_{\kk'}|^2 \label{f_k}
\end{eqnarray}
and
\begin{eqnarray}
g_{\kk}:=\alpha^2\frac{\hat{v}(\kk)}{L^d}-\sum_{\kk'}\frac{\hat{v}(\kk'-\kk)}{L^d}s_{\kk'}c_{\kk'}\label{g_k}
\end{eqnarray}
one can express $O(\kk)$ and $D(\kk)$ by
\begin{eqnarray}
O(\kk)&=&-2c_{\kk}s_{\kk}f_{\kk}+s_{\kk}^2\bar{g}_{\kk}+c^2_{\kk}g_{\kk}, \label{O(k)}\\
D(\kk)&=&f_{\kk}(c^2_\kk +|s_\kk|^2)-c_\kk(s_\kk\bar{g}_{\kk}+\bar{s}_\kk g_\kk). \label{D(k)}
\end{eqnarray}
Note that $f_\kk$ is real. 

Recall we want to minimize $H$ over $\Omega_{\alpha, \theta}$. By \eqref{rotbogham} we have
$$\langle \Omega_{\alpha, \theta}|H \Omega_{\alpha, \theta}\rangle =B.$$
Thus we demand that $B$ attains a minimum. We shall start by minimizing $B$ over $\alpha$. Since $\alpha$ is a complex parameter we can minimize independently with respect to $\alpha$ and $\bar{\alpha}$. The derivatives are
\begin{eqnarray*}
\partial_{\alpha}B&=&\left(\frac{\hat{v}(\0)}{L^d}|\alpha|^{2}-\mu+ \sum_{\kk}\frac{\hat{v}(\0)+\hat{v}(\kk)}{L^d}|s_\kk|^2\right)\bar{\alpha}-\sum_{\kk}\frac{\hat{v}(\kk)}{L^d}\alpha c_\kk \bar{s}_{\kk}, \\
\partial_{\bar{\alpha}}B&=&\left(\frac{\hat{v}(\0)}{L^d}|\alpha|^{2}-\mu+ \sum_{\kk}\frac{\hat{v}(\0)+\hat{v}(\kk)}{L^d}|s_\kk|^2\right)\alpha-\sum_{\kk}\frac{\hat{v}(\kk)}{L^d}\bar{\alpha} c_\kk s_{\kk}.
\end{eqnarray*} 
It follows that
$$C=c_\0 \partial_{\bar{\alpha}}B-s_\0 \partial_{\alpha}B,$$
and thus the condition
\begin{eqnarray}
\partial_{\bar{\alpha}}B=\partial_{\alpha}B=0 \label{alphaminimizationB}
\end{eqnarray}
entails $C=0$.
By \eqref{alphaminimizationB} we obtain 
\begin{eqnarray}
\mu=\frac{\hat{v}(\0)}{L^d}|\alpha|^{2}+\sum_{\kk}\frac{\hat{v}(\0)+\hat{v}(\kk)}{L^d}|s_\kk|^2-\frac{\alpha^2}{|\alpha|^2}\sum_{\kk}\frac{\hat{v}(\kk)}{L^d}\bar{s}_{\kk'}c_{\kk'}. \label{muminimizing}
\end{eqnarray}
Eliminating $\mu$ from $f_\kk$ we obtain
\begin{eqnarray}
f_\kk=\frac{\kk^2}{2}+|\alpha|^2\frac{\hat{v}(\kk)}{L^d}+\sum_{\kk'}\frac{\hat{v}(\kk'-\kk)-\hat{v}(\kk')}{L^d}|s_{\kk'}|^2+ \frac{\alpha^2}{|\alpha|^2}\sum_{\kk}\frac{\hat{v}(\kk)}{L^d}\bar{s}_{\kk'}c_{\kk'}.\label{f_kbezmu}
\end{eqnarray}
Let us now derive the conditions arising from the minimization of the energy over $\theta_\kk$. By \eqref{cksk} we see that instead of minimizing over $\theta_\kk\in \CC$ we can choose $s_\kk$  and $\bar{s}_\kk$ as the independent parameters. \eqref{cksk} implies 
\begin{eqnarray*}
s_\kk \bar{s}_\kk=c_\kk^2-1
\end{eqnarray*}
and thus
\begin{eqnarray*}
\partial_{s_\kk}c_\kk=\frac{\bar{s}_\kk}{2c_\kk}, \quad\quad \partial_{\bar{s}_\kk}c_\kk=\frac{s_\kk}{2c_\kk}.
\end{eqnarray*}
Using that we obtain
\begin{eqnarray}
\partial_{s_\kk}B&=& f_\kk \bar{s}_\kk -\frac{\bar{g}_\kk}{2}\left(c_\kk+\frac{|s_\kk|^2}{2c_\kk}\right)-g_\kk \frac{\bar{s}_\kk^2}{4c_\kk}, \label{partialsB}\\
\partial_{\bar{s}_\kk}B&=& f_\kk s_\kk -\frac{g_\kk}{2}\left(c_\kk+\frac{|s_\kk|^2}{2c_\kk}\right)-\bar{g}_\kk \frac{s_\kk^2}{4c_\kk}. \label{partialsbarB}
\end{eqnarray}
We then calculate that 
\begin{eqnarray*}
O(\kk)=\left(-2c_\kk+\frac{|s_\kk|^2}{c_\kk}\right)\partial_{\bar{s}_\kk}B-\frac{s_\kk^2}{c_\kk}\partial_{s_\kk}B.
\end{eqnarray*}
Thus, the condition
\begin{eqnarray}
\partial_{s_\kk}B=\partial_{\bar{s}_\kk}B=0 \label{sminimizationB}
\end{eqnarray}
entails $O(\kk)=0$.

Note also, that \eqref{partialsB} and \eqref{partialsbarB} imply
$$s_\kk \partial_{s_\kk}B-\bar{s}_\kk \partial_{\bar{s}_\kk}B= \frac{c_\kk}{2}(g_\kk \bar{s}_\kk-\bar{g}_\kk s_\kk),$$
and hence \eqref{sminimizationB} implies
\begin{eqnarray}
g_\kk \bar{s}_\kk =\bar{g}_\kk s_\kk. \label{gs=gs}
\end{eqnarray}
Using that we obtain
\begin{eqnarray}
D(\kk)&=&(c^2_\kk+|s_\kk|^2)f_\kk-2 s_\kk c_\kk\bar{g}_\kk, \label{D(k)dwa}\\
O(\kk)&=&-2 s_\kk c_\kk f_\kk+ (c^2_\kk+|s_\kk|^2)g_\kk. \label{O(k)dwa} 
\end{eqnarray}
If we assume $O(\kk)=0$ and $f_\kk\neq 0$, then
$$D(\kk)=\frac{c_\kk+|s_\kk|^2}{f_\kk}\left(f_\kk^2-|g_\kk|^2\right)=\frac{2c_\kk s_\kk}{g_\kk}\left(f_\kk^2-|g_\kk|^2\right).$$
We obtain the solution
\begin{eqnarray}
D(\kk)&=& (\sgn f_\kk) \sqrt{f_\kk^2-|g_\kk|^2}, \label{D(k)sol}\\
S_\kk  &=& \frac{g_\kk}{D(\kk)}, \label{Sksol}\\
C_\kk &=& \frac{f_\kk}{D(\kk)}, \label{Cksol}
\end{eqnarray}
where we introduced
$$S_\kk := 2s_\kk c_\kk, \quad \quad C_\kk=c_\kk^2+|s_\kk|^2.$$
If we also set $\e^{\ii \tau}:=\frac{\alpha}{|\alpha|}$,  then we can write
\begin{eqnarray}
f_\kk &=& \frac{\kk^2}{2}+|\alpha|^2\frac{\hat{v}(\kk)}{L^d} \nonumber \\
&+&\sum_{\kk'}\frac{\hat{v}(\kk'-\kk)-\hat{v}(\kk')}{2L^d}(C_{\kk'}-1)+ \sum_{\kk'}\frac{\hat{v}(\kk')}{2L^d}\e^{2\ii \tau}\bar{S}_{\kk'}, \label{fktau} \\
g_\kk &=& \alpha^2\frac{\hat{v}(\kk)}{L^d}-\sum_{\kk'}\frac{\hat{v}(\kk'-\kk)}{2L^d}S_{\kk'}. \label{gktau}
\end{eqnarray}

Let us now recap what we have presented above. We started with the Hamiltonian \eqref{hamboseboxpedpart2} and we minimized its expectation value with respect to pure Gaussian states $\Omega_{\alpha,\theta}$. To this end we expressed the Hamiltonian in terms of creation and annihilation operators $b_{\kk}^{\dagger}$ and $b_{\kk}$ such that $b\Omega_{\alpha,\theta}=0$. 

Then, after some tedious calculations, we noticed that the minimizing conditions imply $C=0$ and $O(\kk)=0$ in \eqref{rotbogham}. We would like to stress that this turns out to be a special case of a more general fact which will be described in abstract terms in Section \ref{beliaevthmsec}. Putting $C=0$ and $O(\kk)=0$ in \eqref{rotbogham} yields
\begin{eqnarray}
H=B+\sum_{\kk}D(\kk)b^{\dagger}_{\kk}b_{\kk}.
\end{eqnarray} 
Clearly $B$ is a rigorous upper bound for the ground state energy of \eqref{hamboseboxpedpart2}. 

We shall now look at $D(\kk)$ given by \eqref{D(k)sol}. First notice that the case of $D(\kk)<0$ seems physically irrelevant. Thus, we can assume that $f_\kk>0$ and thus 
$$D(\kk)=   \sqrt{f_\kk^2-|g_\kk|^2}$$
where $f_\kk$ is given by \eqref{fktau} and $g_\kk$ is given by \eqref{gktau}. 

We shall now try to find parameters $\alpha$ and $S_\kk$ in \eqref{fktau} and \eqref{gktau} which satisfy the minimization condition. This is of course a difficult task and one could try to do it by iterations.

 A natural starting point seems to be $S_\kk=0$  (and thus also $s_\kk=0$). Then by \eqref{muminimizing} we have 
 $$\mu=\hat{v}(\0)\kappa,$$
where we put $\alpha=\sqrt{V\kappa}\e^{\ii \tau}$. The reason we introduced the fixed parameter $\kappa>0$ is that it has the interpretation of the density of the condensate. We shall not elaborate on that.

Then, after one iteration, we obtain
\begin{eqnarray*}
f_\kk &=& \frac{\kk^2}{2}+\kappa\hat{v}(\kk), \\
g_\kk &=& \kappa \hat{v}(\kk)
\end{eqnarray*}
and thus
\begin{eqnarray*}
D(\kk) = \sqrt{(\kk^2/2)^2+\kk^2\kappa\hat{v}(\kk)}.
\end{eqnarray*}

We thus reconstruct the quasiparticle-like excitation spectrum of the Bogoliubov approximation presented in Section \ref{The original Bogoliubov approximation}. In particular we reconstruct the same dispersion relation of the quasiparticles as in  \eqref{diagbogopcoeff}, that is before replacing the condensate density by the full density of the considered system. For further advantages and consequences of this approach we refer to \cite{CDZ2009}.

\section[Energy-momentum spectrum of a homogeneous Fermi gas]{Energy-momentum spectrum of a homogeneous Fermi gas} \label{fermigassec}
We shall now turn our attention to the description of the low-lying excitation spectrum of a homogeneous Fermi gas. This description can be obtained through the \textit{Hartree--Fock--Bogoliubov} approximation which can be looked upon as the fermionic counterpart of the bosonic Bogoliubov approximation described in the previous section. 

Internal degrees of freedom of particles, such as spin, play an important  role in fermionic systems. In particular, they are crucial in the BCS approach (\cite{BCS1957}). Therefore, we will take them into account. This, however, leads to a more general form of the kinetic and potential energy than in the case of spinless bosons. This will be discussed in the next subsection. In Subsection \ref{HFBBCS} the Hartree-Fock-Bogoliubov approximation will be applied to a general, spin-dependent Hamiltonian. 

This section is based mainly on \cite{DMN2013}.

\subsection{The model}
\subsubsection{Kinetic energy}

We assume that the internal degrees of freedom are described
by a finite dimensional Hilbert space $\CC^m$. Thus the one-particle space of
the system is $L^2(\RR^d,\CC^m)$.

The kinetic energy of one particle including its chemical potential is given by  a self-adjoint operator $T$ on $L^2(\RR^d,\CC^m)$.
We use the following notation for its integral kernel: for $\Phi\in
L^2(\RR^d,\CC^m)$, 
$$(T\Phi)_{i_1}(\x_{1})=
\sum_{i_{2}}\int T_{i_{1},i_{2}}(\x_{1},\x_{2}) \Phi_{i_{2}}(\x_{2})\d \x_{2}.$$
 We assume that $T$ is a self-adjoint and translation invariant 1-body operator. Then,
\begin{eqnarray*}
T_{i_1,i_2}(\x_1,\x_2)&=&\overline{ T_{i_2,i_1}(\x_2,\x_1)}\\
&=&T_{i_1,i_2}(\x_1+\y,\x_2+\y).\end{eqnarray*}
The first identity expresses the hermiticity of $T$ while the second the
translation invariance of $T$.

We will sometimes assume that $T$ is {\em real}, that is, invariant with respect to the complex conjugation. This means that $T_{i_1,i_2}(\x_1,\x_2)$ are real.
An example of a real 1-particle energy is
\[T_{ij}=\big(-\frac{1}{2m_i}\Delta-\mu_i\big)\delta_{i,j},\] where the $i$th
``spin'' has the mass $m_i$  and the chemical
potential $\mu_i$.

If the operator $T$ has the form 
$$T_{i,j}(\x_i,\x_j)=t(\x_i,\x_j)\delta_{i,j},$$
for some function $t$ satisfying

\begin{eqnarray*}
t(\x_1,\x_2)&=&\overline{t(\x_2,\x_1)}\\
&=&t(\x_1+\y,\x_2+\y),
\end{eqnarray*}

then we will say that $T$ is {\em spin-independent}.

Clearly, the 1-particle energy can be written as
$$T_{i,j}(\x_{1},\x_{2})=(2\pi)^{-d}\int\tau_{i,j}(\kk)\e^{\ii\kk(\x_{1}-\x_{2})}\d\kk.$$

If it is real, then $\tau_{i,j}(\kk)=\tau_{i,j}(-\kk).$

If it is spin independent, then $\tau_{i,j}(\kk)=\tau(\kk)\delta_{i,j}.$

In the real spin-independent case we have $\tau(\kk)=\tau(-\kk)$.

\subsubsection{Interaction}

The interaction in the Fermi gas will be described by a
2-body operator $V$. It acts on the antisymmetric 2-particle
space as
\[(V\Phi)_{i_1,i_2}(\x_1,\x_2)=
\sum_{i_3,i_4}\int\int
V_{i_1,i_2,i_3,i_4}(\x_1,\x_2,\x_3,\x_4)\Phi_{i_4,i_3}(\x_4,\x_3)\d \x_3\d \x_4,\]
where  $\Phi\in \Gamma_{\mathrm{a}}^{2}\left( L^2(\RR^d,\CC^m)\right)$.
We will assume that it is self-adjoint and translation invariant. Its  integral kernel satisfies 
\begin{eqnarray*}
V_{i_1,i_2,i_3,i_4}(\x_1,\x_2,\x_3,\x_4)&=&
-V_{i_2,i_1,i_3,i_4}(\x_2,\x_1,\x_3,\x_4)
\\
&=&-V_{i_1,i_2,i_4,i_3}(\x_1,\x_2,\x_4,\x_3)\\
&=&\overline{V_{i_4,i_3,i_2,i_1}(\x_4,\x_3,\x_2,\x_1)}\\
&=&V_{i_1,i_2,i_3,i_4}(\x_1+\y,\x_2+\y,\x_3+\y,\x_4+\y).
\end{eqnarray*}
The first two identities express the antisymmetry of the interaction, the
third --  its hermiticity and the fourth -- its translation invariance.
 We also  assume 
that $V(\x_1,\x_2,\x_3,\x_4)$ decays for large differences of its
 arguments sufficiently fast.

We will sometimes assume that $V$  are {\em real}, that means, they are invariant with respect to
the complex conjugation. This means
 $V_{i_1,i_2,i_3,i_4}(\x_1,\x_2,\x_3,\x_4)$ is real.

 We will say 
that the operator $V$ is {\em spin-independent}
if there exists a function
 $v(\x_1,\x_2,\x_3,\x_4)$
such that \begin{eqnarray*}
&&V_{i_1,i_2,i_3,i_4}(\x_1,\x_2,\x_3,\x_4)\\&=&
\frac12\bigl(v(\x_1,\x_2,\x_3,\x_4)
\delta_{i_1,i_4}\delta_{i_2,i_3}
-v(\x_1,\x_2,\x_4,\x_3)\delta_{i_1,i_3}\delta_{i_2,i_4}\bigr),
\end{eqnarray*}
Note that
\begin{eqnarray*}
v(\x_1,\x_2,\x_3,\x_4)&=&
v(\x_2,\x_1,\x_4,\x_3)\\
&=&\overline{v(\x_4,\x_3,\x_2,\x_1)}\\
&=&v(\x_1+\y,\x_2+\y,\x_3+\y,\x_4+\y).
\end{eqnarray*}

It will be convenient to write the Fourier transform of $V$ as follows
\begin{eqnarray*}
&&V(\x_1,\x_2,\x_3,\x_4)\\&=&
(2\pi)^{-4d}
\int\e^{\ii \kk_1\x_1+\ii\kk_2\x_2-\ii \kk_3\x_3-\ii\kk_4\x_4}
 Q(\kk_1,\kk_2,\kk_3,\kk_4)\\&&\times\delta
(\kk_1+\kk_2-\kk_3-\kk_4)\d\kk_1\d\kk_2\d\kk_3\d\kk_4,\end{eqnarray*}
where
$ Q(\kk_1,\kk_2,\kk_3,\kk_4)$ is a function defined on the subspace
$\kk_1+\kk_2=\kk_3+\kk_4$. (Thus we could drop, say, $\kk_4$ from its
arguments; we do not do it for the sake of the symmetry of formulas).
Clearly,
\begin{eqnarray*}
Q_{i_1,i_2,i_3,i_4}(\kk_1,\kk_2,\kk_3,\kk_4)&=&
-Q_{i_2,i_1,i_3,i_4}(\kk_2,\kk_1,\kk_3,\kk_4)
\\
&=&-Q_{i_1,i_2,i_4,i_3}(\kk_1,\kk_2,\kk_4,\kk_3)\\
&=&\overline{Q_{i_4,i_3,i_2,i_1}(\kk_4,\kk_3,\kk_2,\kk_1)}.
\end{eqnarray*}

If we assume that the interaction is real, then
\begin{eqnarray*}
 Q_{i_1,i_2,i_3,i_k}(\kk_1,\kk_2,\kk_3,\kk_4)&=&\overline{ Q_{i_1,i_2,i_3,i_k}
(-\kk_1,-\kk_2,-\kk_3,-\kk_4)}.
\end{eqnarray*}

If  we assume that the interaction is spin-independent, then 
\begin{eqnarray*} 
Q_{i_{1}i_{2}i_{3}i_{4}}(\kk_1,\kk_2,\kk_3,\kk_4) & = &\frac12\bigl(
q(\kk_1,\kk_2,\kk_3,\kk_4)\delta_{i_{1},i_{4}}\delta_{i_{2},i_{3}}
-q(\kk_1,\kk_2,\kk_4,\kk_3)\delta_{i_{1},i_{3}}\delta_{i_{2},i_{4}}\bigr),
\end{eqnarray*}
for some function $q$ defined on $\kk_1+\kk_2=\kk_3+\kk_4$
satisfying
\begin{eqnarray*}
q(\kk_1,\kk_2,\kk_3,\kk_4)&=&
q(\kk_2,\kk_1,\kk_4,\kk_3)\\
&=&\overline{q(\kk_4,\kk_3,\kk_2,\kk_1)}.
\end{eqnarray*}

In the real spin-independent case we have in addition
$$q(\kk_1,\kk_2,\kk_3,\kk_4)=\overline{q(-\kk_1,-\kk_2,-\kk_3,-\kk_4)}.$$

For example,
 a 2-body potential $V(\x)$ such that $V(\x)=V(-\x)$ corresponds to the real spin-independent interaction with
\begin{eqnarray*}
v(\x_1,\x_2,\x_3,\x_4)
&=& V(\x_1-\x_2)
\delta(\x_1-\x_4)\delta(\x_2-\x_3),\\
q(\kk_1,\kk_2,\kk_3,\kk_4)&=&\int\d\q
\hat V(\q)\delta(\kk_1-\kk_4-\q)\delta(\kk_2-\kk_3+\q).
\end{eqnarray*}

\subsubsection{$N$-body Hamiltonian}

\quad The $N$-body  Hamiltonian of the homogeneous Fermi gas 
acts on  the Hilbert space $\Gamma_\mathrm{a}^N\left(L^2(\RR^d,\CC^m)\right)$ 
(antisymmetric square integrable functions on $(\RR^d)^N$ with values in $(\CC^m)^{\otimes N}$).
Let $T_{i}$ denote  the operator $T$ acting on
the $i$th variable and $V_{ij}$ denote the  operator $V$ acting on the $ij$th pair of variables. The full $N$-body Hamiltonian  equals
\begin{equation}
H_{N}=
\sum_{1\leq i\leq N}T_{i}+
\sum_{1\leq i<j\leq N}V_{ij}. \label{Nbodyham}
\end{equation} 
Recall we have assumed that the kinetic energy $T$ and interaction $V$ are translation invariant. Thus $H_N$ commutes with the total momentum operator 
$$P_{N}:=\sum_{i=1}^N-\ii\nabla_{i}.$$

\subsubsection{Putting system in a box}

As before, since we want to investigate homogeneous Fermi gas at positive density, we restrict \eqref{Nbodyham}
to $\Lambda=[-L/2,L/2]^d$ , the $d$-dimensional cubic box of side length $L$ with periodic boundary conditions. 

This means in particular that
the kinetic energy is replaced by
$$T^L(\x_1,\x_2)=\frac{1}{L^d}
\sum_{\kk \in \frac{2\pi}{L}\mathbb{Z}^d}\e^{\ii \kk\cdot( \x_1-\x_2)}\tau(\kk) ,$$
and the  potential $V$ is replaced  by
\begin{eqnarray*}
&&V^L(\x_1,\x_2,\x_3,\x_4)=\\
&=&\frac{1}{L^{4d}}\sum_{\substack{\kk_1,\dots,\kk_4
\in \frac{2\pi}{L}\mathbb{Z}^d, \\ \kk_1+\kk_2=\kk_3+\kk_4}}
\e^{\ii\kk_1 \x_1+\ii\kk_2\x_2-\ii\kk_3\x_3-
\ii\kk_4\x_4} Q(\kk_1,\kk_2,\kk_3,\kk_4).
\end{eqnarray*}
Note that $V^L$ is periodic with respect to the domain
$\Lambda$, and $V^L(\x)\to V(\x)$ as $L\to\infty$. 
 The system  on a torus  is described by the Hamiltonian
\begin{equation}
H^{L}_N=
\sum_{1\leq i\leq N}T_{i}^L+
\sum_{1\leq i<j\leq N}V_{ij}^L
 \label{boxNbodyham}
 \end{equation}
acting on the space $\Gamma_\mathrm{a}^N\left(L^2(\Lambda,\CC^m)\right)$.

\subsubsection{Grand-canonical Hamiltonian of the Fermi gas}
As before, it is convenient to put all the $N$-particle spaces into a single fermionic Fock space
 $$\Gamma_\mathrm{a}(L^2(\Lambda,\CC^m)):=\mathop{\oplus}\limits_{N=0}^\infty 
\Gamma_\mathrm{a}^N\left(L^2(\Lambda,\CC^m)\right)$$
with the Hamiltonian 
\begin{eqnarray*}
H^L&:=&\mathop{\oplus}\limits_{N=0}^\infty H^{L}_N \nonumber\\
&=&\int a_{\x,i_1}^{\dagger} T_{i_1,i_2}^L(\x_{i_1}-\x_{i_2}) a_{\x,i_2}\d
\x_1\d\x_2 \nonumber\\
&&+\frac12\int\int a_{\x_1,i_1}^{\dagger} a_{\x_2,i_2}^{\dagger} 
V_{i_1,i_2,i_3,i_4}^L(\x_1,\x_2,\x_3,\x_4) a_{\x_3,i_3} a_{\x_4,i_4}\d
\x_1\d\x_2\d\x_3\d\x_4, 
\end{eqnarray*}
where $a_{\x,i}$, $a_{\x,i}^{\dagger}$ are the usual fermionic annihilation and
creation operators. The second quantized
momentum and number operators are defined as
\begin{eqnarray*}
P^L&:=&
\mathop{\oplus}\limits_{N=0}^\infty P^{L}_N=-\ii\int
a_{\x,i}^{\dagger} \nabla_\x^La_{\x,i} \d\x,\\
N^L&:
=&\mathop{\oplus}\limits_{N=0}^\infty N=\int a_{\x,i}^{\dagger} a_{\x,i}\d\x
.\end{eqnarray*}
Above we use the summation convention. 

In the momentum representation (with the indices omitted),
\begin{eqnarray}
H^L 
&=&\sum_{\kk}\tau(\kk)a^{\dagger} _{\kk}
a_{\kk}\nonumber \\
&&+\frac{1}{2L^{d}}\sum_{\kk_1+\kk_2=\kk_3+\kk_4}Q(\kk_1,\kk_2,\kk_3,\kk_4)
a^{\dagger} _{\kk_1}a^{\dagger} _{\kk_2}a_{\kk_3}a_{\kk_4},\label{ham2kwantped}
\\
P^L&=&\sum_{\kk}\kk a^{\dagger} _{\kk}a_\kk,\nonumber
\\
N^L&=&\sum_{\kk}a^{\dagger} _{\kk}a_\kk.\nonumber
\end{eqnarray}
Here the summation over momenta runs over $\kk
\in \frac{2\pi}{L}\mathbb{Z}^d$. In the spin-independent case, the interaction equals
\[\frac{1}{2L^{d}}\sum_{\kk_1+\kk_2=\kk_3+\kk_4}q(\kk_1,\kk_2,\kk_3,\kk_4)
a^{\dagger} _{\kk_1,i}a^{\dagger} _{\kk_2,j}a_{\kk_3,j}a_{\kk_4,i}
\]
In the case of a (local) potential, it is
\[\frac{1}{2L^d}\sum_{\kk,\kk',\q}\hat V(\q)
a^{\dagger} _{\kk+\q,i}a^{\dagger} _{\kk'-\q,j}a_{\kk',j}a_{\kk,i}.
\]

\subsection{The Hartree-Fock-Bogoliubov approximation applied to a homogeneous Fermi gas}\label{HFBBCS}
We shall now present how one can try to compute the excitation spectrum of the homogeneous interacting Fermi gas by approximate methods. Historically, the
first computation of this sort is due to Bardeen-Cooper-Schrieffer in 1957 (\cite{BCS1957}). In its original version, the BCS method involved a replacement of quadratic fermionic operators with bosonic ones. We will use an approach based on a Bogoliubov rotation of fermionic variables, which is commonly called the Hartree-Fock-Bogoliubov method. Its main idea is the same as in Subsection \ref{improvedbogmethod}, that is to minimize the energy in the so-called fermionic Gaussian
 states -- states obtained by a Bogoliubov rotation from the fermionic Fock vacuum. The minimizing state will define new creation/annihilation operators. We express the Hamiltonian in the new creation/annihilation operators and drop all higher order terms. This defines a new Hamiltonian, that we expect to give an approximate description of low-energy part of the excitation spectrum.

\subsubsection{The rotated Hamiltonian}
We start the HFB method with a rotation of the fermionic creation and annihilation operators. 
Similarly to the bosonic case, for any $\kk$, this corresponds to a substitution
\begin{eqnarray}
&a_\kk^{\dagger}=c_\kk b_\kk^{\dagger}+s_\kk b_{-\kk},&
a_\kk=\bar{c}_\kk b_\kk+\bar{s}_\kk b_{-\kk}^{\dagger},\label{rota}\end{eqnarray}
where $c_\kk$ and  $s_\kk$ are matrices on $\CC^m$ satisfying
\begin{eqnarray}\label{susti1}
c_\kk c_\kk^{\dagger}+s_\kk s_\kk^{\dagger}&=&1,\\ \label{susti2}
c_\kk s_{-\kk}^{\text{T}}+s_\kk c_{-\kk}^{\text{T}}&=&0.
\end{eqnarray}
($\dagger$ denotes the hermitian conjugation,
$\cdot^{\text{T}}$ denotes the transposition and $\overline{\cdot}$ denotes the complex conjugation). 



For a sequence
$\frac{2\pi}{L}{\ZZ}^d\ni \kk\mapsto \theta_\kk$ 
with values in matrices on $\CC^m$ such that
 $\theta_\kk=\theta_{-\kk}$, set
\begin{equation}
U_{\theta}:=\prod_\kk\e^{-\frac12
\theta_\kk a_\kk^{\dagger} a_{-\kk}^{\dagger}+\frac12
    \theta_\kk^{\dagger} a_\kk a_{-\kk}}.\label{uthe}\end{equation}
It is well known that for an appropriate sequence $\theta$
we have
\[U_\theta^{\dagger}
a_\kk
U_\theta=b_\kk,\ \
\ \ 
U_\theta^{\dagger}
a_\kk^{\dagger}U_\theta
=b_\kk^{\dagger}.\]
As in the bosonic case $U_{\theta}$
is the general form of an even
 Bogoliubov transformation commuting with $P^L$.

Note that \eqref{susti1} guarantees that $[a_\kk^{\dagger},a_\kk]_+=1$,
\eqref{susti2} guarantees that $[a_\kk^{\dagger},a_{\kk}^{\dagger}]_+=0$ while 
 $[a_\kk^{\dagger},a_{\kk'}]_+=0$,  $[a_\kk^{\dagger},a_{\kk'}^{\dagger}]_+=0$ for $\kk\neq\kk'$
 are satisfied automatically.

In this subsection  we drop the superscript $L$, writing, e.g., $H$ for $H^L$.
The Hamiltonian \eqref{ham2kwantped} after the substitution (\ref{rota}) and
 the Wick ordering equals
\begin{align}\nonumber
   H 
&=
B
\\&+\frac12\sum_\kk O(\kk)b_\kk^{\dagger}b_{-\kk}^{\dagger}+
\frac12\sum_\kk\bar O(\kk)b_{-\kk} b_{\kk} +\sum_\kk
D(\kk)b_\kk^{\dagger}b_\kk\nonumber
\\&
+\hbox{terms  higher order in {\it b}'s}\label{non}.
\end{align}
 A tedious computations leads to the following explicit formulas for $B$,
 $D(\kk)$ and $O(\kk)$:
\begin{eqnarray}
B&=&
\sum_\kk\tau(\kk) s_\kk \bar s_\kk
\nonumber\\&&
+\frac{1}{2L^d}\sum_{\kk,\kk'}
 Q(\kk,-\kk,-\kk',\kk') s_{\kk}c_{-\kk}\bar c_{-\kk'}\bar s_{\kk'}\nonumber\\
&&+\frac{1}{L^d}
\sum_{\kk,\kk'}Q(\kk,\kk',\kk',\kk)s_{\kk}s_{\kk'}
\bar s_{\kk'}\bar s_{\kk}
;\nonumber
\end{eqnarray}
\begin{eqnarray}
O(\kk)&=&
2\tau(\kk) c_\kk\bar s_\kk\nonumber\\
&&+\frac{1}{L^d}\sum_{\kk'}
Q(\kk',-\kk',-\kk,\kk)s_{\kk'}c_{-\kk'}\bar s_{-\kk}\bar
s_{\kk}\nonumber\\ 
 &&+\frac{1}{L^d}\sum_{\kk'}
Q(\kk,-\kk,-\kk',\kk')c_{\kk}c_{-\kk}\bar c_{-\kk'}\bar s_{\kk'}
\nonumber\\
&&+\frac{4}{L^d}\sum_{\kk'} Q(\kk,\kk',\kk',\kk)c_{\kk}s_{\kk'}
\bar s_{\kk'}\bar
s_{\kk}\nonumber\\ 
D(\kk)&=&\tau(\kk)
c_\kk\bar c_\kk-\bigl(\tau(\kk) s_{-\kk} \bar s_{-\kk}\bigr)^{\rm T}\nonumber\\
&&+\frac{1}{L^d}\sum_{\kk'}
 Q(\kk',-\kk',-\kk,\kk)s_{\kk'}c_{-\kk'}\bar s_{-\kk}\bar
c_{\kk}\nonumber\\ 
 &&+\frac{1}{L^d}\sum_{\kk'}
 Q(\kk,-\kk,-\kk',\kk')c_{\kk}s_{-\kk}\bar c_{-\kk'}\bar s_{\kk'}
\nonumber\\
&&+\frac{2}{L^d}\sum_{
\kk'}
 Q(\kk,\kk',\kk',\kk)
c_{\kk}s_{\kk'}\bar s_{\kk'}\bar
c_{\kk}
\nonumber\\
&&-\frac{2}{L^d}\sum_{
\kk'}\bigl(
 Q(-\kk,\kk',\kk',-\kk)
 s_{-\kk}s_{\kk'}\bar s_{\kk'}
\bar s_{-\kk}\bigr)^{\rm T}.
\nonumber
\end{eqnarray}

Note that the       formulas for $B$, $O(\kk)$ and $D(\kk)$ are
written in a special notation, whose aim is to avoid putting a big
number of internal indices.
 The matrices $c_\kk$ and $s_\kk$ have two internal indices: right and left. 
 We sum over the right internal indices, whenever we sum over the
 corresponding momenta. The left internal indices are contracted with the
 corresponding indices of $\tau$ or $ Q$.
The superscript $\rm T$ stands for the transposition (swapping
the indices).

\subsubsection{Minimization over Gaussian states}
\label{Minimization over Gaussian states}
Let $\Omega$ denote the  vacuum vector. $\Omega_{\theta}:=U_{\theta}^{\dagger}\Omega$
is the general form of an even fermionic Gaussian  vector of zero momentum.
Clearly,
\begin{eqnarray}
(\Omega_{\theta}|H \Omega_{\theta})&=&B,\label{min1}
\\ (b_\kk^{\dagger}\Omega_{\theta}|H 
b_\kk^{\dagger}\Omega_{\theta})
&=&B+D(\kk).\label{min2}\end{eqnarray}

We would like to find a fermionic Gaussian  vector that minimizes $B$ -- the expectation value of
$H$.
We assume that there exists a stationary point $(\tilde{s}_\kk,\tilde{c}_\kk)$
of $B$ considered as a function of $c$ and $s$. 
 Bogoliubov transformations form a group,
hence the neighbourhood of the stationary point can be expressed in the
following way: 
\begin{equation}
\left[ \begin{array}{cc}
  c_\kk & s_\kk \\
  \bar s_\kk & \bar c_\kk\\       
     \end{array}  \right]=
\left[ \begin{array}{cc}
  \tilde{c}_\kk & \tilde{s}_\kk \\
  \bar{\tilde s}_\kk & \bar{\tilde c}_\kk\\       
     \end{array}  \right]
\left[ \begin{array}{cc}
  c_\kk' & s'_\kk \\
  \bar s'_\kk & \bar c_\kk'\\       
     \end{array}  \right].
\end{equation}
This means (including internal indices) that
\begin{eqnarray*}
c_{il, \kk}&=&\tilde{c}_{im, \kk}c'_{ml, \kk}+\tilde{s}_{im, \kk}\bar{s'}_{ml,
  \kk}, \nonumber \\
\bar{c}_{il, \kk}&=&\bar{\tilde{s}}_{im, \kk}s'_{ml, \kk}+\bar{\tilde{c}}_{im,
  \kk}\bar c'_{ml, \kk}, \nonumber \\
s_{il, \kk}&=&\tilde{c}_{im, \kk}s'_{ml, \kk}+\tilde{s}_{im, \kk}\bar c'_{ml,
  \kk}  \nonumber \\
\bar{s}_{il, \kk}&=&\bar{\tilde{s}}_{im, \kk}c'_{ml, \kk}+\bar{\tilde{c}}_{im, \kk}\bar{s'}_{ml, \kk}. \nonumber
\end{eqnarray*}
We enter the above formulas into the expressions for $B, O(\kk)$ and
$D(\kk)$. 

We can always multiply $c_\kk$ and $s_\kk$ by a unitary matrix without changing
the Gaussian state. Hence, we can assume that
\begin{eqnarray}
c_\kk'&=&\sqrt{1-(s_\kk')^{\dagger}s_\kk'}.
\label{pas1}\end{eqnarray}
Since $s'$ is a complex function we can treat $s'$ and $\bar{s}'$ as
independent variables.
$c_\kk=\tilde c_\kk$, $s_\kk= \tilde s_\kk$ corresponds to
$s'=0$, $\bar s'=0.$
Because of (\ref{pas1}), we have
\begin{eqnarray*}
\frac{\partial}{\partial s'_{\kk}}c_\kk'\Big|_{\substack{s'=0
    \\ \bar{s'}=0}}&=&0,\\ 
\frac{\partial}{\partial \bar s'_{\kk}} c_\kk'\Big|_{\substack{s'=0
    \\ \bar{s'}=0}}&=&0.
\end{eqnarray*}
Then, for
example, taking the first term of $B$ one gets 
\begin{align*}
\frac{\partial}{\partial s'_{rt, \kk'}}\sum_{\kk}\tau_{\alpha\beta,
  \kk}s_{\alpha\alpha', \kk}\bar{s}_{\beta \alpha',
  \kk}\Big|_{\substack{s'=0 \\ \bar{s'}=0}}
=\tau_{\alpha\beta, \kk'}\tilde{c}_{\alpha r,
  \kk'}\bar{\tilde{s}}_{\beta t, \kk'} ,
\end{align*} 
which equals the first term of $O(\kk)$ at $c=\tilde{c}$ and
$s=\tilde{s}$. Calculating other terms of $B$ one finally gets 
\begin{equation}
\frac{\partial B}{\partial s'}\Big|_{\substack{s'=0
    \\ \bar{s'}=0}}=\frac12O(\kk)|_{\substack{c=\tilde{c}
    \\ s=\tilde{s}}}. \label{zerowanie} 
\end{equation}
Thus the minimizing procedure is equivalent to $O(\kk)=0$ in exactly the same way as it happened in the improved Bogoliubov method in Subsection \ref{improvedbogmethod}.

As in the bosonic case described in the previous section , it turns out this result is a special case of a more general fact discussed in the next section.

Thus, if we choose the Bogoliubov transformation according to the minimization procedure, the Hamiltonian equals
\begin{align}
   H 
&=
B+\sum_\kk
D(\kk)b_\kk^{\dagger}b_\kk
+\hbox{terms  higher order in {\it b}'s}\label{non-}.
\end{align}

In the case of the model interaction considered by Bardeen-Cooper-Schrieffer, described in many texts, e.g. in \cite{FetterWalecka1971},
the minimization of $B$ yields  a dispersion relation that has a positive energy gap and a positive critical velocity uniformly as $L\to\infty$, that is,
\begin{eqnarray}\inf_\kk D(\kk)>0,
&&\inf_{\kk\neq\0}\frac{D(\kk)}{|\kk|}>0.\label{resa4}
\end{eqnarray}
 This phenomenon is probably much more general.  In particular, we expect that it is true for a large class of real, spin-independent and attractive interactions. In the next two subsections we provide computations that seem to support this claim.

Note that the reality and spin-independence of the interactions leads to a considerable computational simplification. By an attractive interaction we mean an interaction, which in some sense, described later on, is negative definite.

Let us assume in addition that higher order terms in (\ref{non-}) are in some sense negligible. Then, \textit{formally}, $H$ is approximated by a quadratic Hamiltonian $B+\sum_\kk
D(\kk)b_\kk^{\dagger}b_\kk$ whose dispersion relation, as we will argue below,  has a strictly positive energy gap and critical velocity (recall Subsection \ref{Excitation spectrum}).

\subsubsection{Reality condition}

Let us first apply the assumption about the reality of the interaction. In this
case, it is natural to assume that the trial vector is real as well. This
means that we impose the conditions
\[\bar c_\kk=c_{-\kk}, \ \ \ \bar s_\kk =s_{-\kk}.\] This allows us to
simplify the formulas for $B$, $D(\kk)$ and $O(\kk)$:
\begin{eqnarray*}\nonumber
B&=&\sum_\kk\tau(\kk) s_\kk\bar s_\kk\nonumber\\
&&+\frac{1}{2L^d}\sum_{\kk,\kk'}
 Q(\kk,-\kk,-\kk',\kk') s_\kk\bar c_\kk
c_{\kk'}
\bar s_{\kk'}\nonumber\\ 
&&+\frac{1}{L^d}\sum_{\kk,\kk'}Q(\kk,\kk',\kk',\kk)\bar s_{\kk}\bar s_{\kk'}s_{\kk'}s_{\kk},\nonumber
\end{eqnarray*}
\begin{eqnarray*}\nonumber
O(\kk)&=&
2\tau(\kk) c_\kk\bar s_\kk\nonumber\\
&&+\frac{1}{L^d}\sum_{\kk'}
 Q(\kk,-\kk,-\kk',\kk')(c_{\kk}\bar c_{\kk}-s_\kk\bar s_\kk)
c_{\kk'}\bar
s_{\kk'}\nonumber\\ 
&&+\frac{4}{L^d}\sum_{\kk'}Q(\kk,\kk',\kk',\kk)c_{\kk}s_{\kk'}
\bar s_{\kk'}\bar
s_{\kk},\nonumber\\[3ex]
\end{eqnarray*}
\begin{eqnarray*}
D(\kk)&=&\tau(\kk)
(c_\kk\bar c_\kk- s_{\kk} \bar s_{\kk})\nonumber\\
&&+\frac{2}{L^d}\sum_{\kk'}
 Q(\kk,-\kk,-\kk',\kk')c_{\kk}\bar s_{\kk}c_{\kk'}\bar
s_{\kk'}\nonumber\\ 
&&+\frac{2}{L^d}\sum_{
\kk'}
 Q(\kk,\kk',\kk',\kk)
(c_{\kk}s_{\kk'}\bar s_{\kk'}\bar
c_{\kk}
-s_\kk s_{\kk'}\bar s_{\kk'}\bar s_\kk).
\nonumber
\end{eqnarray*}

\subsubsection{Spin $\frac12$ case}
Assume that the ``spin space'' is $\CC^2$ and the Hamiltonian is spin
independent. 
We make the BCS ansatz:
\begin{eqnarray*}
c_\kk&=&\cos\theta_\kk\left[\begin{array}{cc}
{1}&{0}\\{0}&{1}\end{array}\right],\\
s_\kk&=&\sin\theta_\kk\left[\begin{array}{cc}
{0}&{1}\\{-1}&{0}\end{array}\right],
\end{eqnarray*}
where, keping in mind the reality condition, the parameters $\theta_{\kk}$ are
real.
Then 
\begin{eqnarray*}
B&=&\sum_\kk\tau(\kk)(1-\cos2\theta_\kk)\\
&&+\frac{1}{4L^d}\sum_{\kk,\kk'}\alpha(\kk,\kk')\sin2\theta_\kk\sin2\theta_{\kk'}\\
&&+\frac{1}{4L^d}\sum_{\kk,\kk'}\beta(\kk,\kk')
(1-\cos2\theta_\kk)(1-\cos2\theta_{\kk'}),\end{eqnarray*}
where
\begin{eqnarray*}
\alpha(\kk,\kk')&:=&\frac12\bigl(
q(\kk,-\kk,-\kk',\kk')+q(-\kk,\kk,-\kk',\kk')\bigr),\\
\beta(\kk,\kk')&=&2q(\kk,\kk',\kk',\kk)
-q(\kk',\kk,\kk',\kk).
\end{eqnarray*}
Note that
\[\alpha(\kk,\kk')=\alpha(\kk',\kk),\ \ \ 
\beta(\kk,\kk')=\beta(\kk',\kk).\]
In particular, in the case of local potentials we have
\begin{eqnarray*}
\alpha(\kk,\kk')&:=&\frac12
\bigl(\hat V(\kk- \kk')+\hat V(\kk+ \kk')\bigr),
\\
\beta(\kk,\kk')&=&2\hat V(\0)-\hat V
(\kk-\kk').
\end{eqnarray*}
We further compute:
\begin{eqnarray*}
O(\kk)&= &\bigl(\delta(\kk)\cos2\theta_{\kk}+
\xi(\kk)\sin2\theta_{\kk}\bigr)\left[\begin{array}{cc}
{0}&{1}\\{-1}&{0}\end{array}\right],\\
D(\kk)& = & (\xi(\kk)\cos2\theta_\kk-\delta(\kk)\sin2\theta_{\kk}) \left[\begin{array}{cc}
{1}&{0}\\{0}&{1}\end{array}\right],
\end{eqnarray*} 
where
\begin{eqnarray*}
\delta(\kk)&=& \frac{1}{2L^d}\sum_{\kk'}\alpha(\kk,\kk')\sin2\theta_{\kk'},\\ 
\xi(\kk) & = &
\tau(\kk)+\frac{1}{2L^d}\sum_{\kk'}\beta(\kk,\kk')(1-\cos2\theta_{\kk'}).
\end{eqnarray*}

We are looking for a minimum of $B$. To this end, we first analyze critical points of $B$. We compute the derivative of $B$:
\[\partial_{2\theta_\kk}B=\delta(\kk)\cos2\theta_\kk+\xi(\kk)\sin2\theta_\kk.\]
The condition $\partial_{2\theta_\kk}B=0$, or equivalently $O(\kk)=0$,
has many solutions. We can have
\begin{eqnarray} \sin2\theta_\kk=0,&&\cos2\theta_\kk=\pm1,\label{pqi1}
\end{eqnarray}
or
\begin{eqnarray*}
\sin2\theta_\kk=-\epsilon_\kk\frac{\delta(\kk)}{\sqrt{\delta^2(\kk)+
\xi^2(\kk)}}\neq0,&&
\cos2\theta_\kk=\epsilon_\kk\frac{\xi(\kk)}{\sqrt{\delta^2(\kk)+
\xi^2(\kk)}},
\end{eqnarray*}
where $\epsilon_\kk=\pm1$. 

In particular, there are many solutions with  all $\theta_\kk$ satisfying (\ref{pqi1}). They correspond to Slater determinants and have a fixed number of particles. The solution of this kind that minimizes $B$ is called the {\em normal} or {\em Hartree-Fock solution}.

One expects that under some conditions the normal solution is not the global minimum of $B$. More precisely, one expects that a global minimum is reached by a configuration satisfying
\begin{eqnarray}
\sin2\theta_\kk=-\frac{\delta(\kk)}{\sqrt{\delta^2(\kk)+
\xi^2(\kk)}},&&
\cos2\theta_\kk=\frac{\xi(\kk)}{\sqrt{\delta^2(\kk)+
\xi^2(\kk)}},\label{resa1}
\end{eqnarray}
where at least some of $\sin 2\theta_\kk$ are different from $0$. It is sometimes called a {\em superconducting solution}. 
In such a case we  get
\begin{eqnarray}
D(\kk)=\sqrt{\xi^2(\kk)+\delta^2(\kk)}\left[\begin{array}{cc}
{1}&{0}\\{0}&{1}\end{array}\right]. \label{oddz}
\end{eqnarray}
Thus we obtain a positive dispersion relation. One can expect that it is strictly positive, since otherwise the two functions $\delta$ and $\xi$ would have  a coinciding zero, which seems unlikely. 
Thus we expect that the dispersion relation $D(\kk)$ has a positive energy gap.

If the interaction is small, then $\xi(\kk)$ is close to $\tau(\kk)$ and $\delta(\kk)$  is small. This implies that $D(\kk)$ is close to $|\tau(\kk)|$.
Thus, if $\tau(\kk)$ has a critical velocity and $D(\kk)$ has an energy gap, this implies that $D(\kk)$ also has a critical velocity.

In other words,  we expect that for a large class of interactions, if the minimum of $B$ is reached at a superconducting state, then $D(\kk)$ satisfies (\ref{resa4}).

We will not study conditions guaranteeing that a superconducting solution minimizes the energy in this  thesis. Let us only remark that such conditions involve some kind of negative definiteness of the quadratic form  $\alpha$ -- this is what we vaguely indicated by saying that the interaction is attractive. Indeed,
 multiply the definition of $\delta(\kk)$ with $\sin2\theta_\kk$ and sum it up over $\kk$. We then obtain
\begin{equation}
\sum_\kk\sin^22\theta_\kk\sqrt{\delta^2(\kk)+\xi^2(\kk)}
=-\frac{1}{2L^d}\sum_{\kk,\kk'}\sin2\theta_\kk\alpha(\kk,\kk')\sin2\theta_{\kk'}
.\label{resa9}\end{equation}
The left hand side of (\ref{resa9}) is positive.
This means that the quadratic form given by the kernel $\alpha(\kk,\kk')$ has to be negative at least at the vector given by $\sin2\theta_\kk$.

Let us also indicate why one expects that the solution corresponding to  (\ref{resa1}) is a  minimum of $B$.
We compute the second derivative:
\begin{eqnarray}\nonumber
\partial_{2\theta_\kk}\partial_{2\theta_{\kk'}}B&=&
\delta_{\kk,\kk'}\bigl(-\sin2\theta_\kk\delta(\kk)+\cos2\theta_{\kk}\xi(\kk)
\bigr)
\\\nonumber
&&+\frac{1}{2L^d}\alpha(\kk,\kk')\cos2\theta_\kk\cos2\theta_{\kk'}\\\label{resa2}
&&+\frac{1}{2L^d}\beta(\kk,\kk')\sin2\theta_\kk\sin2\theta_{\kk'}.
\end{eqnarray}
Substituting (\ref{resa1}) to the first term on the right of (\ref{resa2}) gives
\[\delta_{\kk,\kk'}\sqrt{\delta^2(\kk)+\xi^2(\kk)},\]
which is positive definite. One can hope that  the other two terms in the second derivative of $B$ do not spoil its positive definiteness, especially in the large-volume limit.

\subsubsection{Examples of energy-momentum spectrum of an interacting Fermi gas}\label{interactingfermigraphics}
In Subsection \ref{noninteractinggraphics} we have presented figures with the expected shape of the energy-momentum spectrum of a non-interacting Fermi gas. 

Here we will consider the case of an interacting Fermi gas. Calculations presented in the previous subsections, in particular equation (\ref{oddz}), suggest that the dispersion relation obtained by the HFB method is qualitatively similar to 
\begin{eqnarray*}
\omega(\kk)=\sqrt{(\kk^2-\mu)^2+\gamma^2}.\label{dispersion_oddz}
\end{eqnarray*}
Here, we have chosen the 1-body operator to be of the form
$$\tau(\kk)=\kk^2-\mu.$$
In this way we arrive at figures presented below. As in Subsection \ref{noninteractinggraphics}, in the  full and the odd cases, that is $\sp (H,P)$ and $\sp(H^-,P^-)$, the dispersion relation
$\omega$ is a singular part of the spectrum and it  is be denoted by a solid line. In the even case, $\sp(H^+,P^+)$, the dispersion relation is denoted by a dotted line. Note in particular, that the energy-momentum spectrum presented on these figures possesses properties described in Subsection \ref{twosectorproperties}.

\begin{figure}[!h]
\centering
\includegraphics{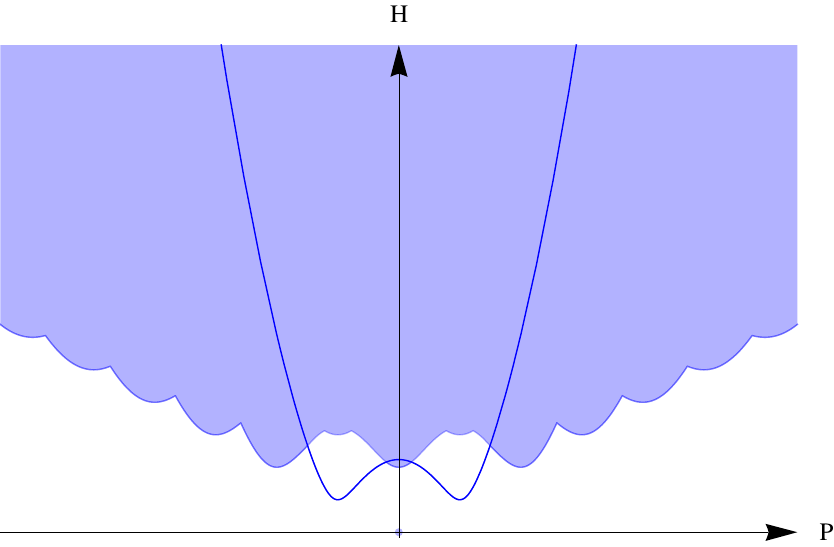}
\label{fig:5}
\caption{$\sp(H,P)$ in the interacting case,  $d=1$.}
\end{figure}

\begin{figure}[!h]
\centering
\includegraphics{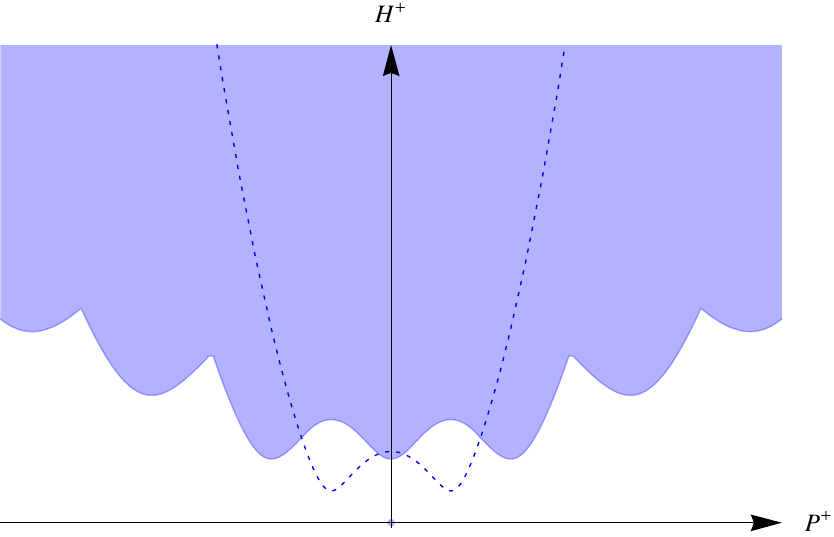}
\label{fig:6}
\caption{$\sp(H^+,P^+)$ in the interacting case, $d=1$.}
\end{figure}

\begin{figure}[!h]
\centering
\includegraphics{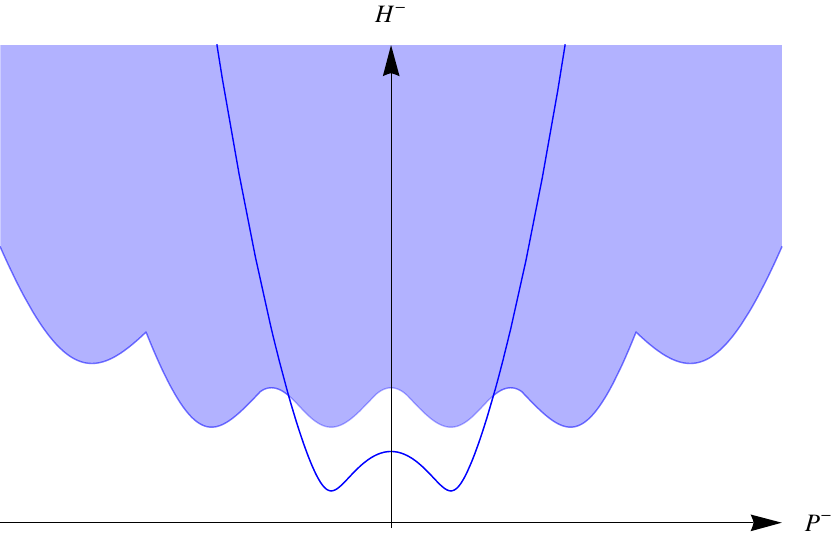}
\label{fig:7}
\caption{$\sp(H^-,P^-)$ in the interacting case, $d=1$.}
\end{figure}

\newpage
\begin{figure}[!h]
\centering
\includegraphics{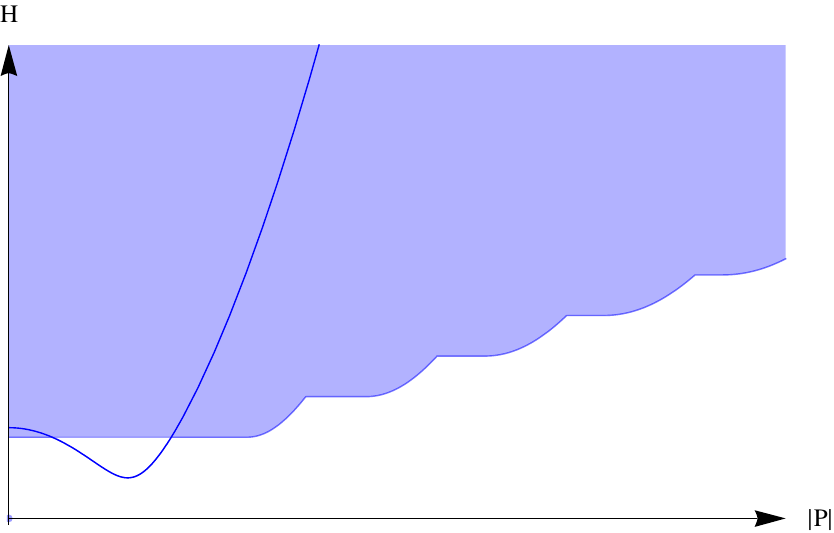}
\label{fig:8}
\caption{$\sp(H,P)$ in the interacting case, $d\geq 2$.}
\end{figure}

\begin{figure}[!h]
\centering
\includegraphics{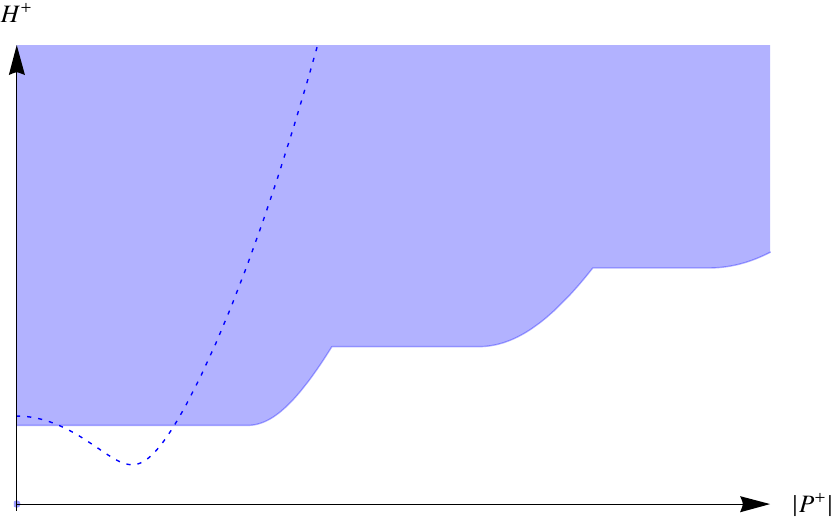}
\label{fig:9}
\caption{$\sp(H^+,P^+)$ in the interacting case, $d\geq 2$.}
\end{figure}

\begin{figure}[!h]
\centering
\includegraphics{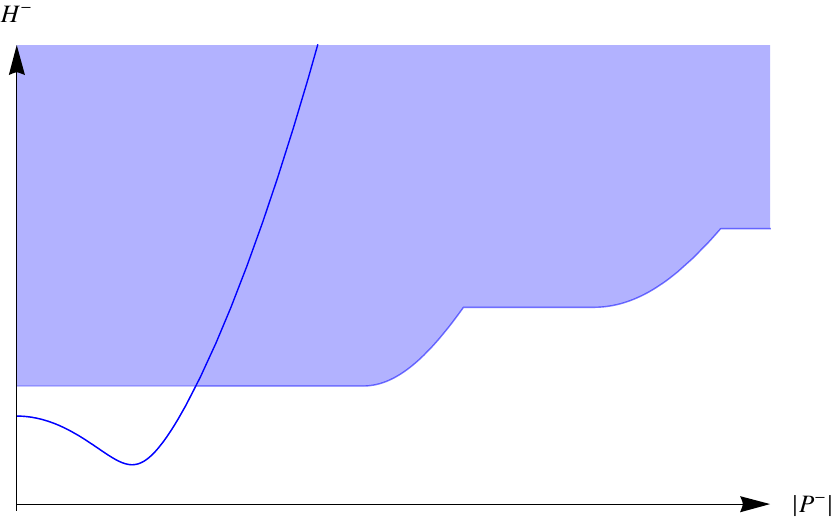}
\label{fig:10}
\caption{$\sp(H^-,P^-)$ in the interacting case, $d\geq 2$.}
\end{figure}

Again, the case $d=1$ differs from
$d\geq2$.  However, in all dimensions
the energy gap and the critical velocity are
 strictly positive.

\section{Beliaev's theorem}\label{beliaevthmsec}

In this section, which is based on \cite{DMS2013}, we shall show how the \textit{improved Bogoliubov method} described in Subsection \ref{improvedbogmethod} and \textit{Hartree-Fock-Bogoliubov approximation} presented in Section \ref{HFBBCS} fit into a more general scheme which leads to a description of the initial system in terms of approximate quasiparticles as discussed in Section \ref{Approximate versus exact quasiparticles}. 

Let us recall briefly that a typical situation when
one speaks about a system with approximate quasiparticles seems to be the following. Suppose that the Hamiltonian of a system can be written as
$H=H_0+V$, where $H_0$ is in some sense  dominant and $V$ can be  neglected. Suppose also that 
\begin{equation}
H_0=B+\sum_i \omega_ib_i^{\dagger}b_i,\label{form}
\end{equation}
where $B$ is a number, operators $b_i^{\dagger}$/$b_i$ 
satisfy the standard CCR/CAR relations and the Hilbert space contains a state annihilated by $b_i$ (the  Fock vacuum for $b_i$).
Then, after subtracting $B$, we obtain a quasiparticle quantum system (recall  Subsection \ref{Concept of a quasiparticle}). One can thus hope, that in a certain approximation (e.g. mean-field limit) the full system has quasiparticle-like excitation spectrum.

The improved Bogoliubov method and the Hartree-Fock-Bogoliubov approximation described in the previous sections led to such a decomposition. Here we shall make it a little bit more general.

Our starting point is a fairly general Hamiltonian $H$ defined on a bosonic or fermionic Fock space. For simplicity we assume that the $1$-particle space is finite
dimensional. With some technical assumptions, the whole picture should be easy to generalize to the infinite dimensional case.  
We assume that the Hamiltonian is a polynomial in creation and annihilation
operators $a_i^{\dagger}$/$a_i$, $i=1,\dots,n$.  This is a typical assumption in Many Body Quantum Physics and Quantum Field Theory. 

As we have seen in the previous sections, an important role in Many Body Quantum Physics is played by the so-called {\em
 Gaussian states}, called also {\em quasi-free states}. Gaussian states can be {\em pure} or {\em mixed}. 
The former are typical for the zero temperature, whereas the latter for positive temperatures. Here, we do not consider
mixed Gaussian states.
 
Pure Gaussian states are obtained by applying  Bogoliubov transformations to the Fock vacuum state (given by the vector 
$\Omega$ annihilated by $a_i$'s). Pure Gaussian states are especially convenient for computations. 

We minimize the expectation value of the Hamiltonian $H$ with 
respect to pure Gaussian states, obtaining a state given by a vector
$\tilde\Omega$. By applying an appropriate Bogoliubov transformation, we can replace the old creation and annihilation operators $a_i^{\dagger}$, $a_i$ by new ones  $b_i^{\dagger}$, $b_i$, which are adapted to the ``new vacuum'' $\tilde\Omega$, i.e., that satisfy $b_i\tilde\Omega=0$. We can rewrite the Hamiltonian $H$ in the new operators and Wick order them, that is, put $b_i^{\dagger}$ on the left and $b_i$ on
the right. The theorem that we prove says that
\[H=B+\sum_{ij} D_{ij}b_i^{\dagger}b_j+V,\]
where $V$ has only terms of the order greater than $2$. In  particular, $H$
does not contain terms of the type $b_i^{\dagger}$, $b_i$,  $b_i^{\dagger}b_j^{\dagger}$, or
$b_ib_j$. It is thus natural to set $H_0:=B+\sum_{ij}
D_{ij}b_i^{\dagger}b_j$. $D_{ij}$ is a hermitian matrix. Clearly, it can be diagonalized, so that $H_0$ acquires the form of (\ref{form}).

We will present several versions of this theorem. First we assume that the Hamiltonian is even. In this case it is natural to restrict the minimization to even pure Gaussian
states. In the fermionic case, we can also minimize over odd pure Gaussian states. In the bosonic case, we consider also Hamiltonians without the evenness assumption, and then we minimize with respect to all pure Gaussian states.

 The fact  that we describe below is probably very well known, at
least on the intuitive level, to many physicists, especially in condensed matter theory. One can probably say that it
 summarizes in abstract terms one of the most widely used methods of contemporary quantum physics. The earliest reference that we know to a statement similar to our main result is formulated in a paper of Beliaev \cite{Beliaev1959}. Beliaev
studied fairly general femionic Hamiltonians by what we would nowadays call the Hartree-Fock-Bogoliubov approximation. In a footnote on page 10 he writes:

\noindent
{\em The condition $H_{20}=0$ may be easily shown to be exactly equivalent to
  the requirement of a minimum ``vacuum'' energy $U$. Therefore, the ground
  state of the system in terms of new particles is a ``vacuum'' state. The
  excited states are characterized by definite numbers of new particles,
  elementary excitations.} 

\noindent Therefore, we propose to call the main result of this part of the thesis {\em Beliaev's Theorem}.

The proof of Beliaev's Theorem is not difficult, especially when it is is formulated in an abstract way,
as we do. Nevertheless, in concrete situations, when similar computations are performed, consequences 
of this result may often appear somewhat miraculous. We witnessed it two times in the previous sections. As we show, these terms have to disappear by a general
argument.

\subsection{Preliminaries}
\subsubsection{2nd quantization}

We will consider in parallel the bosonic and fermionic case. 

Let us briefly recall our notation concerning the 2nd quantization. 
We will always assume that the 1-particle space is $\CC^n$. The {\em bosonic Fock space} will be denoted $\Gamma_{\rm{s}}(\CC^n)$ and the {\em fermionic Fock space}
$\Gamma_{\rm{a}}(\CC^n)$.
We use the notation $\Gamma_{\rm{s}/\rm{a}}(\CC^n)$ for either the bosonic or fermionic
Fock space. $\Omega\in \Gamma_\sa(\CC^n)$ stands for the {\em Fock vacuum}.
If $r$ is an operator on $\CC^n$, then $\Gamma(r)$ stands for its {\em 2nd
quantization}, that is
\[\Gamma(r):=\left(\loplus_{n=0}^\infty r^{\otimes n}\right)\Big|_{\Gamma_\sa(\CC^n)}.\]
$a_i^{\dagger}$, $a_i$ denote the standard {\em creation} and {\em annihilation operators} on
$\Gamma_\sa(\CC^n)$, satisfying the usual canonical
commutation/anticommutation relations.

\subsubsection{Wick quantization} 

Consider an arbitrary  polynomial on $\CC^n$, that is a function of the form
\begin{align}
 h(\bar{z},z):=\sum_{\alpha,\beta} h_{\alpha,\beta}\bar{z}^{\alpha}z^{\beta}, \label{wielomian}
\end{align}
where $z=(z_{1},\ldots, z_{n})\in \mathbb{C}^{n}$, $\bar z$ denotes the
complex conjugate of $z$  and 
 $\alpha=(\alpha_{1},\ldots,\alpha_{n})
 \in (\mathbb{N}\cup\{0\})^{n}$  represent multiindices.
In the bosonic/fermionic case we  always assume that the coefficients
$h_{\alpha,\beta}$ are  symmetric/antisymmetric separately in the
indices of $\bar z$ and $z$.

We write
 $|\alpha|=\alpha_{1}+\cdots+\alpha_{n}.$ 
We say that $h$ is {\em even} if the sum  in  (\ref{wielomian})
is restricted to even
$|\alpha|+|\beta|$ .

{\em The Wick quantization} of (\ref{wielomian})
is  the operator on $\Gamma_\sa(\CC^n)$ defined as
\begin{align}
h(a^{\dagger},a):=\sum_{\alpha, \beta} h_{\alpha,
  \beta}(a^{\dagger})^{\alpha}a^{\beta}. \label{wick} 
\end{align}
In the fermionic case, (\ref{wick}) defines a bounded operator on
 $\Gamma_{\rm{a}}(\CC^n)$. In the bosonic case,  (\ref{wick}) can be viewed as an
operator on $\bigcap\limits_{n>0}\text{Dom} N^n\subset \Gamma_{\rm{s}}(\CC^n)$, where 
\[N=\sum_{i=1}^na_i^{\dagger}a_i\]
is the number operator.

\subsection{Bogoliubov transformations} 

We will now present some basic well known facts
about Bogoliubov transformations. For proofs and additional information we refer to \cite{Berezin1966} (see also \cite{DerGer2013}, \cite{Friedrichs1953}). We will use the 
summation convention of summing with respect to repeated indices.

Operators of the form 
\begin{align}
Q=\theta_{ij}a_{i}^{\dagger}a_{j}^{\dagger}+h_{kl}a_{k}^{\dagger}a_{l}
+\bar{\theta}_{ij}
a_{j}a_{i}\pm \frac12 h_{kk}, 
\label{quadratic}
\end{align}
where $h$ is a self-adjoint matrix, will be called \emph{quadratic Hamiltonians}.
In the bosonic/fermionic case we can always assume  that 
$\theta$ is symmetric/antisymmetric. (The term $\pm\frac12 h_{kk}$, with the
sign depending on the bosonic/fermionic case, means that $Q$ is the
Weyl quantization of the corresponding quadratic expression.)
  
   The group generated by operators of the form $\e^{\ii Q}$, where $Q$ is a
quadratic Hamiltonian, is called the \emph{metaplectic (Mp)} group in the
bosonic case and the \emph{Spin} group in the fermionic case.

In the bosonic case, the group generated by  $Mp$ together with
$\e^{\ii( y_{i}a_i^{\dagger}+\bar{y}_{i} a_i)}$, $y_{i}\in\CC$,
$i=1,\dots,n$, is called the
\emph{affine mataplectic (AMp)}
group.

In the fermionic case, the group generated by operators 
$y_i a_i^{\dagger}+\bar{y}_{i}a_i$ with $\sum |y_i|^2=1$ (which are unitary) is called
the \emph{Pin} group. $Spin$ is a subgroup of $Pin$.

In the bosonic case, consider $U\in\emph{AMp}$. It is well known that
\begin{align}
Ua_{i}U^{\dagger}=p_{ij}a_{j}+q_{ij}a_{j}^{\dagger}+\xi_i, \quad Ua^{\dagger}_{i}U^{\dagger}=\bar{p}_{ij}a^{\dagger}_{j}+\bar{q}_{ij}a_{j}+\bar\xi_i  \label{linear}
\end{align} 
for some matrices $p$ and $q$ and a vector $\xi$. 

In the fermionic case, consider $U\in\emph{Pin}$. Then
\begin{align}
Ua_{i}U^{\dagger}=p_{ij}a_{j}+q_{ij}a_{j}^{\dagger}, \quad Ua^{\dagger}_{i}U^{\dagger}=\bar{p}_{ij}a^{\dagger}_{j}+\bar{q}_{ij}a_{j}  \label{linear1}
\end{align} 
for some matrices $p$ and $q$.

The maps
(\ref{linear}) and (\ref{linear1}) 
 are often called {\em Bogoliubov transformations}.
Bogoliubov transformations can be interpreted as automorphism of the
corresponding {\em classical phase space} (see \cite{DerGer2013} for more details).

\subsection{Pure Gaussian states}
We will use  the term {\em pure state} to denote a normalized vector modulo a phase factor.  In particular, we will distinguish between a pure state and its {\em vector  representative}.

On Fock spaces we have a distinguished pure state called the  {\em (Fock) vacuum state}, corresponding to  $\Omega$. 
States given by vectors of the form $U\Omega,$ where $U\in Mp$ or $U\in Spin$, will be called \emph{even pure Gaussian states}.
 The family of even pure Gaussian states will be denoted by $\fG_{\sa,0}$.

In the bosonic case, states given by vectors
 of the form $U\Omega$ where $U\in AMp$ will be
called \emph{Gaussian pure states}. 
 The family of bosonic pure Gaussian states will be denoted by $\fG_{\s}$.

In the fermionic case, states given by vectors of the form $U\Omega$, where
$U\in Pin$ will be called \emph{fermionic pure Gaussian states}. The family of
fermionic pure Gaussian states is denoted $\fG_\a$.

Fermionic pure Gaussian states that are not even will be called {\em odd  fermionic pure  Gaussian states}. The family of odd fermionic pure Gaussian states is denoted $\fG_{\a,1}$.

One can ask whether pure  Gaussian states have {\em  natural}
 vector representatives
(that is, whether one can naturally fix the phase factor of their vector
 representatives).  In the
bosonic case this is indeed always possible. If $c=[c_{ij}]$ is a symmetric matrix satisfying $\|c\|<1$, then the vector
\begin{equation}
\det(1-c^*c)^{1/4}\e^{\frac12 c_{ij} a_i^{\dagger}a_j^{\dagger}}\Omega\label{bosgau}
\end{equation}
defines a state in $\fG_{\s,0}$ (see \cite{Ruijsenaars1978}). 
If  $\theta=[\theta_{ij}]$ is a symmetric matrix satisfying
$c=\ii\frac{\tanh\sqrt{ \theta\theta^*}}{\sqrt{\theta\theta^*}}\theta$, then
 (\ref{bosgau}) equals
\begin{equation}
 \e^{\ii X_\theta}\Omega \label{bosgau1} 
 \end{equation} 
with
$$X_\theta:=\theta_{ij}a_i^{\dagger}a_j^{\dagger}+\bar\theta_{ij} a_ja_i.$$

 Each state in  $\fG_{\s,0}$ is represented
uniquely as (\ref{bosgau}) (or equivalently as (\ref{bosgau1})).
In particular,
 (\ref{bosgau1}) provides a smooth parametrization of $\fG_{\s,0}$ by symmetric matrices.

The manifold of fermionic even pure Gaussian states is more complicated.
 We will say that a  fermionic even pure 
Gaussian state given by $\Psi$ is {\em
    nondegenerate} if $(\Omega|\Psi)\neq0$ (if it has a non-zero overlap with the vacuum).
 Every nondegenerate fermionic even pure Gaussian state can be
  represented by a vector
\begin{equation}
 \det(1+c^*c)^{-1/4}\e^{\frac12c_{ij} a_i^*a_j^*}\Omega,\label{fergau}\end{equation}
where  $c=[c_{ij}]$ is an antisymmetric matrix.
If 
$\theta=[\theta_{ij}]$ is an antisymmetric matrix satisfying
$c=\ii\frac{\tan\sqrt {\theta\theta^{\dagger}}}{\sqrt{\theta\theta^{\dagger}}}\theta$,
$\|\theta\|<\pi/2$, then
 (\ref{fergau})  equals
\begin{equation} \e^{\ii X_\theta}\Omega \label{fergau1}
\end{equation}
with
\begin{equation*} X_\theta:=\theta_{ij}a_i^{\dagger}a_j^{\dagger}+\bar\theta_{ij} a_ja_i.
\end{equation*}
Vectors (\ref{fergau}) are natural representatives of their states. It can be shown that only nondegenerate fermionic pure Gaussian states possess natural vector representatives.

Not all even fermionic pure Gaussian states are nondegenerate. 
{\em  Slater determinants} with an even non-zero 
number of particles are examples of even
Gaussian pure states that are not 
nondegenerate.

Nondegenerate pure
Gaussian states form  an open dense subset of $\fG_{\a,0}$
 containing the Fock state
(corresponding to $c=\theta=0$).
In particular, (\ref{fergau}) provides
 a smooth parametrization of a neighborhood of the Fock state in 
$\fG_{\a,0}$ by antisymmetric matrices.

The fact that each even bosonic/nondegenerate fermionic pure Gaussian state
can be represented by a vector of the form
(\ref{bosgau})/(\ref{fergau}) goes under the name of the {\em Thouless
  Theorem}. (See \cite{Thouless1960}; this name is used e.g. in the monograph by Ring and Schuck \cite{Ring1980}). The closely
related fact saying that these vectors can be represented in the form 
(\ref{bosgau1})/(\ref{fergau1}) is sometimes called the {\em Ring-Schuck
  Theorem.}

By definition, the group $AMp/Pin$ acts transitively on $\fG_{\sa}$.
In other words, for any $\tilde\Omega
\in \fG_{\sa} $ we can find $U\in AMp/Pin$ such
that $\tilde\Omega=U\Omega$. Such a $U$ is not defined uniquely -- it can be replaced
by $U\Gamma(r)$, where $r$ is unitary on $\CC^n$.

Clearly, if we set 
\begin{align}
b_{i}:=Ua_{i}U^{\dagger}, \quad b^{\dagger}_{i}:=Ua^{\dagger}_{i}U^{\dagger}, \label{proper}
\end{align} 
then $b_i\tilde\Omega=0$,
 $i=1,\dots,n$, and they satisfy the same CCR/CAR as $a_i$,
$i=1,\dots,n$. 
If $h$ is a polynomial of the form (\ref{wielomian}), then we can Wick
quantize it using the transformed operators:
\begin{align}
h(b^{\dagger},b)=\sum_{\alpha, \beta} h_{\alpha, \beta}(b^{\dagger})^{\alpha}b^{\beta}. \nonumber
\end{align}
Obviously, $Uh(a^{\dagger},a)U^{\dagger}=h(b^{\dagger},b)$.

 \subsection{Beliaev's Theorem}
  As explained above, we think that the following result should be called {\em Beliaev's Theorem}. 

\begin{thm} 
Let $h$ be a polynomial on $\CC^n$ and $H:=h(a^{\dagger},a)$ its Wick
quantization. We consider the following functions:
\begin{enumerate} 
\item (bosonic case, even pure Gaussian states)
$
\fG_{\s,0}\ni \Phi\mapsto (\Phi|H\Phi)$;
\item (bosonic case, arbitrary pure Gaussian states)
$\fG_{\s}\ni\Phi\mapsto (\Phi|H\Phi)$;
\item (fermionic case, even pure Gaussian states)
$\fG_{\a,0}\ni\Phi\mapsto (\Phi|H\Phi)$;
\item (fermionic case, odd pure Gaussian states)
$\fG_{\a,1}\ni \Phi\mapsto (\Phi|H\Phi)$.
\end{enumerate}
In (1), (3) and (4) we assume in addition that the polynomial $h$
is even. 

For a vector $\tilde\Omega$ representing a pure Gaussian state, 
let $U\in AMp/Pin$ satisfy $\tilde\Omega=U\Omega$. Set $b_i=Ua_i U^{\dagger}$ and suppose that $ \tilde h$ is the polynomial satisfying
$H=\tilde h(b^{\dagger},b)$.
Then the following statements are equivalent:
\medskip

\noindent (A)
 $\tilde\Omega$ represents a stationary point of the function defined in
(1)--(4); 

\medskip

\noindent (B)
\begin{align}
\tilde{h}(b^{\dagger},b)=B+D_{ij}b^{\dagger}_{i}b_{j}+\text{\emph{terms of higher order
    in $b$'s}}. \nonumber 
\end{align}
\label{main}\end{thm}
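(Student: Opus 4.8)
The plan is to reduce the statement to a first-variation computation on the manifold of pure Gaussian states, working throughout in the operators $b_i=Ua_iU^{\dagger}$ adapted to the trial vacuum $\tilde\Omega=U\Omega$, so that $b_i\tilde\Omega=0$ and $H=\tilde h(b^{\dagger},b)$ is already Wick ordered. The first step is to describe the tangent directions to the relevant family of Gaussian states at $\tilde\Omega$. Using the Thouless/Ring--Schuck parametrizations recalled above, a neighbourhood of $\tilde\Omega$ inside the even Gaussian states is swept out by $\e^{\ii tX_\theta}\tilde\Omega$, where $X_\theta=\theta_{ij}b_i^{\dagger}b_j^{\dagger}+\bar\theta_{ij}b_jb_i$ is self-adjoint and $\theta$ runs over symmetric (bosonic), resp.\ antisymmetric (fermionic), matrices; the number-conserving generators $b_i^{\dagger}b_j$ stabilize $\tilde\Omega$ and hence contribute no genuine tangent direction. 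In the non-even bosonic case (2) one enlarges this set by the affine directions $\e^{\ii t(y_ib_i^{\dagger}+\bar y_ib_i)}\tilde\Omega$, and in the odd fermionic case (4) one checks that $\e^{\ii tX_\theta}$ preserves the odd parity class, so that the same quadratic generators parametrize the tangent space.

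Next I would compute the directional derivative. For a self-adjoint generator $Y$ one has $\frac{d}{dt}\big|_{t=0}(\e^{\ii tY}\tilde\Omega|H\e^{\ii tY}\tilde\Omega)=\ii(\tilde\Omega|[H,Y]\tilde\Omega)$, so stationarity of the energy functional is equivalent to the vanishing of $(\tilde\Omega|[H,Y]\tilde\Omega)$ for every admissible generator $Y$. The key point is a particle-counting argument: since $X_\theta\tilde\Omega$ is a pure two-quasiparticle vector and $b_i\tilde\Omega=0$, in $(\tilde\Omega|HX_\theta\tilde\Omega)$ only the Wick monomials of $\tilde h$ carrying exactly two annihilation operators and no creation operator can return the two-particle state to the vacuum, while in $(\tilde\Omega|X_\theta H\tilde\Omega)$ only those carrying two creation operators and no annihilation operator survive. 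Thus the constant $B$, the number term $D_{ij}b_i^{\dagger}b_j$ and all monomials of order $\geq 3$ drop out, and the derivative along $X_\theta$ depends only on $\theta$ and the anomalous coefficients $O_{ij}$; likewise the derivative along $y_ib_i^{\dagger}+\bar y_ib_i$ involves only $y$ and the linear coefficients $C_i$.

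It then remains to read off the equivalence. The resulting pairing is nondegenerate: choosing $\theta=\ii O$, for instance, shows that the vanishing of $(\tilde\Omega|[H,X_\theta]\tilde\Omega)$ for all admissible $\theta$ forces $O_{ij}=0$, and similarly the linear variations force $C_i=0$. Hence $\tilde\Omega$ is stationary precisely when the $b_i^{\dagger}b_j^{\dagger}$ and $b_ib_j$ terms (and, in case (2), the linear terms) of the Wick-ordered $\tilde h$ vanish, which is exactly statement (B). In the even cases (1), (3), (4) the evenness of $h$ is preserved by the Bogoliubov transformation, so $\tilde h$ has no odd terms at all and the linear coefficients are automatically absent; only the condition $O=0$ carries content, matching the reduced form of (B). The step I expect to be the main obstacle is the careful handling of the tangent space: verifying that the stabilizer of $\tilde\Omega$ is exactly the number-conserving subgroup, so that the $X_\theta$ genuinely exhaust the tangent directions, and treating the degenerate fermionic Gaussian states and the odd component, where the natural vector representatives of (\ref{fergau}) are unavailable and one must argue directly with the group action rather than with the explicit exponential parametrization.
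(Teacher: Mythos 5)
Your proposal is correct and follows essentially the same route as the paper: parametrize a neighbourhood of $\tilde\Omega$ by $U\e^{\ii\phi(y)}\e^{\ii X_\theta}\Omega$ (dropping $\phi(y)$ in the even and fermionic cases), expand the energy to first order, observe that only the coefficients $O_{ij}$ and $K_i$ of the Wick-ordered $\tilde h$ survive the vacuum expectation, and conclude by nondegeneracy of the resulting pairing in $\theta$ and $y$. Your commutator/particle-counting formulation is just a repackaging of the paper's expansion of $\e^{-\ii X_\theta}\e^{-\ii\phi(y)}\tilde h(a^{\dagger},a)\e^{\ii\phi(y)}\e^{\ii X_\theta}$, and the tangent-space issue you flag is exactly what the Thouless/Ring--Schuck parametrization recalled in the preliminaries settles.
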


\begin{proof}
Let us prove the case (2), which is a little more complicated than the
remaining cases.
Let us fix $U\in AMp$ so that $\tilde\Omega=U\Omega$.
 Clearly, we can write
\begin{align}
H=\tilde{h}(b^{\dagger},b)=B+\bar{K}_{i}b_{i}+K_{i}b_{i}^{\dagger}+O_{ij}b^{\dagger}_{j}b^{\dagger}_{i}+\bar{O}_{ij}b_{i}b_{j}
+ D_{ij}b^{\dagger}_{i}b_{j} + \nonumber \\ 
+ \text{terms of higher order in
  $b$'s}. 
\label{transH}
\end{align}  
We know that in a neighborhood of $\tilde\Omega$ arbitrary pure
 Gaussian states are
parametrized by a symmetric matrix $\theta$ and a vector $y$:
\[\theta\mapsto U \e^{\ii\phi(y)}\e^{\ii X_\theta}\Omega,\] 
where $X_\theta:=\theta_{ij}a_i^{\dagger}a_j^{\dagger}+\bar\theta_{ij}a_ja_i$ and $\phi(y)= y_{i}a_i^{\dagger}+\bar y_{i} a_i$.
We get
\begin{eqnarray}
(U\e^{\ii\phi(y)}\e^{\ii X_\theta}\Omega|HU \e^{\ii\phi(y)}e^{\ii X_\theta}\Omega)&=&(\e^{\ii\phi(y)}\e^{\ii X_\theta}\Omega|U^{\dagger}\tilde h(b^{\dagger},b)U\e^{\ii\phi(y)}\e^{\ii X_\theta}\Omega)\nonumber \\
&=&(\Omega|\e^{-\ii X_\theta}\e^{-\ii\phi(y)}\tilde h(a^{\dagger},a)\e^{\ii\phi(y)}\e^{\ii X_\theta}\Omega). \label{obroty} \end{eqnarray}
Now 
\begin{eqnarray*}
\e^{-\ii X_\theta}\e^{-\ii\phi(y)}\tilde h(a^{\dagger},a)\e^{\ii\phi(y)}\e^{\ii
  X_\theta}
&=&B- \ii(\bar{\theta}_{ij}O_{ij}- \theta_{ij}\bar{O}_{ij})
-\ii(\bar{y}_{i}K_{i}-y_{i}\bar{K}_{i})\\&&+\hbox{terms containing $a_i$ or $a_i^{\dagger}$
  }+O(\|\theta\|^{2},\|y\|^{2}). \end{eqnarray*}
Therefore, (\ref{obroty}) equals
\begin{eqnarray}
B- \ii(\bar{\theta}_{ij}O_{ij}- \theta_{ij}\bar{O}_{ij})
-\ii(\bar{y}_{i}K_{i}-y_{i}\bar{K}_{i})+O(\|\theta\|^{2},\|y\|^{2}).
\label{obrot} 
\end{eqnarray}
Since vectors $y$ and matrices $\theta$ are independent variables,
 (\ref{obrot}) is stationary at $\tilde\Omega$ if and only if $[O_{ij}]$ is a
zero matrix and $[K_{i}]$ is a zero vector. This ends the proof of part (2).

To prove  (3) and (4) we note that,
 for $U\in Pin$, the neighborhood of $\tilde{\Omega}=U\Omega$  in the set of
 fermionic pure Gaussian states is
parametrized by antisymmetric matrices $\theta$:
\[\theta\mapsto U \e^{\ii X_\theta}\Omega,\] 
where again
$X_\theta:=\theta_{ij}a_i^{\dagger}a_j^{\dagger}+\bar\theta_{ij}a_ja_i$. Therefore, it
suffices to repeat the above proof with $y_i=K_i=0$, $i=1,\dots,n$.

The proof of (1) is similar. \end{proof}

\begin{prop} In addition to the assumptions of Theorem \ref{main} (2),
  suppose that
 $\tilde\Omega$ corresponds to a minimum. Then
the matrix $[D_{ij}] $ is positive.
\end{prop}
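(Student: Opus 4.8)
The plan is to test the minimality of $\tilde\Omega$ against the one-parameter family of \emph{coherent} (displacement) excitations and to read off the quadratic form $[D_{ij}]$ directly from the second-order term in the energy. First recall that by Theorem \ref{main}(2) stationarity of $\tilde\Omega$ forces $K_i=0$ and $O_{ij}=0$ in the expansion \eqref{transH}, so that
\begin{align}
\tilde h(b^{\dagger},b)=B+D_{ij}b_i^{\dagger}b_j+\text{terms of order $\geq 3$ in the $b$'s}. \nonumber
\end{align}
The key idea is that in the bosonic case we have genuine displacement operators at our disposal, so that $\tilde\Omega$ can be varied within $\fG_{\s}$ along the ``number-changing'' direction, which is precisely the direction that exposes $[D_{ij}]$.

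Concretely, I would restrict the functional of case (2) to the subfamily
\[\CC^n\ni y\mapsto \Phi_y:=U\e^{\ii\phi(y)}\Omega,\qquad \phi(y)=y_ia_i^{\dagger}+\bar y_ia_i,\]
which lie in $\fG_{\s}$ (because $\e^{\ii\phi(y)}\in\emph{AMp}$), are normalized, and reduce to $\tilde\Omega$ at $y=\0$. Writing $b_i=Ua_iU^{\dagger}$ and using unitarity of $U$, one has $E(y):=(\Phi_y|H\Phi_y)=(\Omega|\e^{-\ii\phi(y)}\tilde h(a^{\dagger},a)\e^{\ii\phi(y)}\Omega)$. The displacement identity $\e^{-\ii\phi(y)}a_i\e^{\ii\phi(y)}=a_i+\ii y_i$ is a pure $c$-number shift, hence it preserves Wick ordering, and the vacuum expectation of a shifted Wick monomial $(a^{\dagger})^\alpha a^\beta$ reduces to its operator-free part $(-\ii\bar y)^\alpha(\ii y)^\beta$. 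In other words $E(y)=\tilde h(-\ii\bar y,\ii y)$ is simply the Wick symbol evaluated at $z=\ii y$. Inserting $K=O=0$ and collecting by degree gives
\begin{align}
E(y)=B+\sum_{ij}\bar y_iD_{ij}y_j+O(\|y\|^{3}). \nonumber
\end{align}

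Finally, since $\tilde\Omega$ is assumed to minimize the functional over $\fG_{\s}$ and every $\Phi_y$ belongs to $\fG_{\s}$, we have $E(y)\geq E(\0)=B$ for all $y$. Replacing $y$ by $ty$, subtracting $B$, dividing by $t^{2}$ and letting $t\to 0^{+}$ annihilates the cubic remainder and leaves $\sum_{ij}\bar y_iD_{ij}y_j\geq 0$ for every $y\in\CC^n$. As $[D_{ij}]$ is Hermitian, this is exactly the assertion that it is positive (in the sense of positive semidefinite).

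The only step needing genuine care is the symbol computation: one must verify that the anomalous ($b^{\dagger}b^{\dagger}$, $bb$) and linear terms drop out thanks to $O=0$, $K=0$, and that the degree-$\geq 3$ part of $\tilde h$ contributes only at order $\|y\|^{3}$ to the coherent-state expectation. This is where the whole argument hinges on the bosonic case (2): there is no fermionic analogue of the displacement operator, so the coherent test family is unavailable in cases (1), (3), (4), which is consistent with the statement being formulated only for the bosonic arbitrary-Gaussian minimization.
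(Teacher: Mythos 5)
Your proof is correct and follows essentially the same route as the paper: both test minimality along the displaced family $U\e^{\ii\phi(y)}\Omega$, use the stationarity conditions $K=O=0$ to kill the linear and anomalous terms, and read off $\bar y_iD_{ij}y_j$ as the quadratic term of the vacuum expectation, concluding positivity. Your added explicit scaling argument ($y\mapsto ty$, $t\to0^+$) and the remark on why the coherent family is unavailable in the other cases are sound but only make explicit what the paper leaves implicit.
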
 

\begin{proof}
Using that
  $O$ and $K$ are zero, we obtain
\begin{eqnarray*}
\e^{-\ii\phi(y)}\tilde h(a^{\dagger},a)\e^{\ii\phi(y)}
&=&B
+\bar{y}_{i}D_{ij}y_j\\&&+\hbox{terms containing $a_i$ or $a_i^{\dagger}$
  }+O(\|y\|^{3}). \end{eqnarray*}
Therefore, (\ref{obroty}) equals
\begin{eqnarray}
B+\bar{y}_{i}D_{ij}y_j+O(\|y\|^{3}).
\label{obrot1} 
\end{eqnarray}
Hence the matrix
$[D_{ij}]$ is positive.
\end{proof}

 Note that in cases (1), (3) and (4) the matrix $[D_{ij}]$ does not have
  to be positive.

\part{Rigorous justification of the Bogoliubov approximation}
\section{Formulation and discussion of the main result}\label{mainresult}
\label{s1}
In this part of the thesis, which is based on the article \cite{DN2014}, we shall present a rigorous result concerning the \textit{Bogoliubov approximation} presented in Subsection \ref{The original Bogoliubov approximation}.
 In the following subsections we will formulate the main theorem and explain how it is related to the ideas presented in the first part of the thesis. 
Section \ref{proofba} will be dedicated to the presentation of the proof.   
\subsection{Setup}
Let us state precisely the assumptions on the 2-body potential that we will use
from now on. Consider a real function
$\mathbb{R}^{d}\ni \x\mapsto v
(\x)$, with its Fourier transform defined  by
\[\hat{v}(\p):=\int_{\mathbb{R}^{d}}v(\x)\e^{-\ii\p\x} \d \x.\]
 We assume that $v(\x)=v(-\x)$, and that $v\in L^1(\mathbb{R}^{d})$ and $\hat v\in L^1(\mathbb{R}^{d})$. We also suppose  that the potential is positive and positive definite, i.e.
\[v(\x)\geq0,\ \ \x\in\mathbb{R}^d,\ \ \ \ \hat{v}(\p)\geq 0, \,\,\, \p \in \mathbb{R}^{d}.\]

As explained in Section \ref{themodelbose}, we will consider Bose gas on the torus  $\Lambda=]-L/2,L/2]^{d}$, that is, the $d$-dimensional cubic box of side length $L$ with periodic boundary conditions. We will always assume that $L\geq1$.

The original potential $v$ is replaced by its  {\em periodized} version
\begin{equation}
v^{L}(\x):=\frac{1}{L^d}\sum_{\p\in(2\pi/L)\mathbb{Z}^{d}}\e^{\ii\p\x}\hat{v}(\p). \label{perpot}
\end{equation}
Here, $\p\in(2\pi/L)\mathbb{Z}^{d}$ is the discrete momentum variable.
Note that $v^{L}$ is periodic with respect to the domain $\Lambda$ and that $v^{L}(\x)\rightarrow v(\x)$ as $L\to \infty$. 

Consider the Hamiltonian 
\begin{equation}
-\sum_{i=1}^{N}\Delta^{L}_{i}+\lambda\sum_{1\leq i<j\leq N} v^{L}(\x_{i}-\x_{j}) \label{hamiltonian1}
\end{equation}
acting on the space $L^{2}_{\rm{s}}(\Lambda^{N})$ (the symmetric subspace of $L^{2}(\Lambda^{N})$). The Laplacian is assumed to have periodic boundary conditions. 
\subsection{Mean-field limit}
 Let $\rho=N/L^d$ be the density of the  gas. The Bogoliubov approximation (see Subsection \ref{The original Bogoliubov approximation}) predicts that the ground state energy is
\[\frac12\lambda\rho\hat v(\0)(N-1)-\frac{1}{2}\sum_{\p\in\frac{2\pi}{L}\mathbb{Z}^{d}\setminus \{ 0\}}\left(|\p|^{2}+\rho\lambda\hat{v}(\p)-|\p|\sqrt{|\p|^{2}+2\rho\lambda\hat{v}(\p)}\right) \]
and that the low-lying excited states can  be derived from the
following {\em elementary excitation spectrum}:
\begin{equation}
|\p|\sqrt{|\p|^{2}+2\rho\lambda \hat{v}(\p)}. \label{bogol}
\end{equation}

Note that within the Bogoliubov approximation
both the ground state energy and the excitation spectrum depend on $ \rho$ and $\lambda$ only through the product $\rho\lambda$. The dependence on $L$ is very weak:
\begin{enumerate}
\item The elementary excitation spectrum  (\ref{bogol})  depends on
  $L$ 
 only through the spacing of the momentum lattice $\frac{2\pi}{L}\ZZ^d$.
\item The expression for the ground state energy
divided by the volume $L^d$ converges for $L\to\infty$  to a finite expression
\begin{equation} \frac12\rho^2\lambda\hat v(\0) -\frac{1}{2(2\pi)^d}\int\left(|\p|^{2}+\rho\lambda\hat{v}(\p)-|\p|\sqrt{|\p|^{2}+2\rho\lambda\hat{v}(\p)}\right)\d\p. \label{hammi} \end{equation}
\end{enumerate}

 We believe that it is important to understand the Bogoliubov approximation for large $L$. As pointed out in the first part of this thesis, important physical properties, such as the phonon group velocity and the description of the Beliaev damping in terms of analyticity properties of Green's functions, have an elegant description when we can view the momentum as a continuous variable, which is equivalent to taking the limit $L\to\infty$.

Note that in our problem there are three {\em a priori} uncorrelated parameters: $\lambda$, $N$ and $L$. By the {\em mean field limit} one usually understands $N\to\infty$ with $\lambda\simeq\frac1N$ and $L=\rm{const.}$ However, when  both $N$ and $L$ are large it is natural to consider a somewhat different scaling. Here the mean field limit will correspond to $N\to\infty$ with $\lambda\simeq\frac1\rho=\frac{L^d}{N}$.

Motivated by the above argument we will consider a system described by the Hamiltonian
\begin{equation}
H_N^{L}=-\sum_{i=1}^{N}\Delta^{L}_{i}+\frac{L^d}{N}\sum_{1\leq i<j\leq N} v^{L}(\x_{i}-\x_{j}). \label{hamiltonian}
\end{equation}  
It is translation invariant -- it commutes with the total momentum operator
\begin{equation}
P_N^L:=-\sum_{i=1}^{N}\ii\partial_{\x_i}^L.
\end{equation}

\subsection{Excitation spectrum in the Bogoliubov approximation}
We will denote by $E_N^L$ the ground state energy of the mean-field Hamiltonian \eqref{hamiltonian}.
 If
 $\p\in\frac{2\pi}{L}\ZZ^d\backslash\{\0\}$ let 
$K_N^{L,1}(\p),K_N^{L,2}(\p),\dots$ be the eigenvalues of $H_N^L-E_N^L$
 of total momentum $\p$ in the order of increasing values, counting the multiplicity.
The lowest eigenvalue of $H_N^L-E_N^L$ of total momentum $\p=\0$ is $0$
by general arguments (see Prop. 3.3 in \cite{CDZ2009}). Let
$K_N^{L,1}(\0),K_N^{L,2}(\0),\dots$ be the next eigenvalues of
$H_N^L-E_N^L$ of total momentum $\0$, also in the order of increasing values,  counting the multiplicity.

 We also introduce the {\em Bogoliubov  energy}
 \begin{equation}
  E_\Bog^{L}:=-\frac{1}{2}\sum_{\p\in\frac{2\pi}{L}\mathbb{Z}^{d}\setminus \{ 0\}}\left(|\p|^{2}+\hat{v}(\p)-|\p|\sqrt{|\p|^{2}+2\hat{v}(\p)}\right)  \label{energiabogoliubowa}
  \end{equation}
 and the {\em Bogoliubov elementary excitation spectrum}
\begin{gather}
e_{\p}=|\p|\sqrt{|\p|^{2}+2\hat{v}(\p)}. \label{ep}
\end{gather}
For any $\p\in\frac{2\pi}{L}\zz^d$ we consider the corresponding excitation energies with momentum $\p$:
\begin{equation*}
\left\{\sum_{i=1}^j e_{\kk_i}\ :\ 
\kk_1,\dots,\kk_j\in\frac{2\pi}{L}\ZZ^d\backslash\{\0\}, \quad \kk_1+\cdots+\kk_j=\p,\quad \quad 
j=1,2,\dots\right\}.
\end{equation*}
Let $K_\Bog^{L,1}(\p),K_\Bog^{L,2}(\p),\dots$ be these excitation
energies in the 
order of increasing values, counting the multiplicity. 
 Motivated by the discussion in Sections \ref{excandquasi} and \ref{bosegas}, we will use the term {\em excitation spectrum in the Bogoliubov approximation} to denote the set of pairs $\big(K_\Bog^{L,j}(\p),\p\big)\subset\RR\times\RR^d$.
Later on, we will see that it coincides with the joint spectrum
of commuting operators  $H_\Bog^L-E^L_\Bog$ and $P^L$
with  $(0,\0)$ removed. (See \eqref{stan} for the definition of  $H_\Bog^L$).

Below we present pictures  of the excitation spectrum of 1-dimensional Bose gas in the
Bogoliubov approximation for two potentials, $v_1$ and $v_2$.  Both
potentials are appropriately scaled Gaussians. (Note that Gaussians 
satisfy the assumptions of our main theorem).
On both pictures  the black dot at the origin corresponds to the quasiparticle vacuum, red dots
 correspond to 1-quasiparticle excitations, blue triangles correspond to 2-quasiparticles excitations,
 while green squares correspond to $n$-quasiparticles excitations with $n\geq 3$.
We also give the graphs of the Fourier transforms of both potentials. Figures \ref{rys1bosons} and \ref{rotons} can be seen as bosonic counterparts of the figures at the end of Section \ref{interactingfermigraphics}, but for a system in finite volume.

 Note that all figures are drawn in the same scale, apart from
 Figure \ref{scaledpotential} where the potential had to be scaled down because of 
 space limitations. In our units of length 
 $\frac{2\pi}{L}=\frac{15}{100}$. 

 At this point, let us make a short digression. Looking at Figure \ref{rys1bosons}, one can notice that for some given total momentum $\p$, many-quasiparticle excitation energies are lower than the elementary excitation spectrum. Physically this means that  the corresponding 1-quasiparticle excitation is not stable: it may decay to $m$-quasiparticle states, $m\geq 2$, with a lower energy. This phenomenon has been observed experimentally (\cite{HMHF2001})
and is called the \textit{Beliaev damping}. We believe it is a nice feature of the excitation spectrum (as defined in Section \ref{Excitation spectrum}) that traces of this phenomenon are visible already at this level, even without passing to the macroscopic limit (although the volume of the system has to be sufficiently large).

Recall from the discussion in Section \ref{polepropagator} that quasiparticles with finite lifetime can be described within the framework of Green's functions. If one assumes that the momentum variable is continuous, then Beliaev damping corresponds to a pole of the Green's function on a non-physical sheet of the energy complex plane. The imaginary part of the position of this pole which is responsible for the rate of decay of quasiparticles has been computed by Beliaev (\cite{Beliaev1958}), hence the name of this phenomenon. 

\begin{figure}[H]
\begin{center}
\includegraphics[scale=0.75]{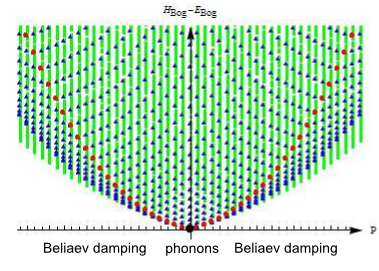}
\caption{Excitation spectrum of  1-dimensional 
homogeneous Bose gas with
potential
$v_1$ in the Bogoliubov approximation.}\label{rys1bosons}
\end{center}
\end{figure}
\begin{figure}[H]
\begin{center}
\includegraphics[scale=0.75]{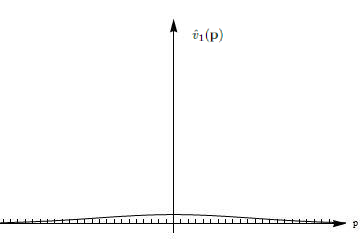}
\caption{$\hat{v}_1(\p)=\frac{\e^{-\p^2/5}}{10}$.} 
\end{center}
\end{figure}

The excitation spectrum for potential $v_2$ (see Figure \ref{rotons}) has a very different shape
-- it has  local maxima and  local minima away from the zero momentum.
 On the  picture we 
 show  traditional names of quasiparticles -- \textit{phonons} in the low momentum region, where the dispersion relation is
 approximately linear,  \textit{maxons} near the local maximum and
 \textit{rotons} near the local minimum of the elementary excitation
 spectrum. For a discussion about the physical meaning of these objects we refer to \cite{Griffin1993}. Let us only note that the definition of a qusiparticle quantum system introduced in Section \ref{Concept of a quasiparticle} is consistent with this terminology since it allows for a situation where some quasiparticles exist only for some momenta.
\begin{figure}[H]
\begin{center}
\includegraphics[scale=0.75]{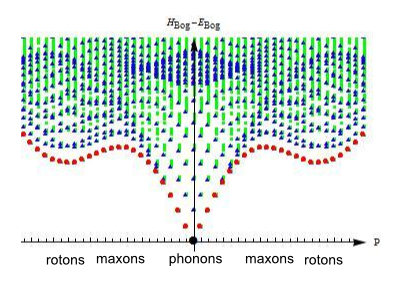}
\caption{Excitation spectrum of  1-dimensional 
homogeneous Bose gas with  potential $v_2$ in the Bogoliubov approximation.}
\label{rotons}
\end{center}
\end{figure}
\begin{figure}[H]
\begin{center}
\includegraphics[scale=0.75]{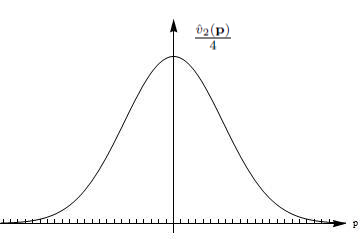}
\caption{$\hat{v}_2(\p)=\frac{15\e^{-\p^2/2}}{2}$.} 
\label{scaledpotential}
\end{center}
\end{figure} 
 
\subsection{Main result} 
 
In this Section we state our main result. It takes a slightly different form for the upper and the lower bound. From now on we drop the superscript $L$. 

\begin{thm}\label{thm}
\begin{enumerate}
\item Let $c>0$. Then there exists  $C$ such that
\begin{enumerate}
\item  if 
\begin{equation} L^{2d+2}\leq cN,\label{war1a} \end{equation}
 then 
\begin{equation} E_N\geq
\frac{1}{2}\hat{v}(\0)(N-1)+E_\Bog-CN^{-1/2}L^{2d+3}; \label{thmnier1}
\end{equation}
\item if in addition 
\begin{equation} K_N^j(\p)\leq c NL^{-d-2},\label{war1b}\end{equation} then
\begin{eqnarray}
E_N+K_N^j(\p)&\geq&
\frac{1}{2}\hat{v}(\0)(N-1)+E_\Bog+K_\Bog^j(\p)\notag \\&&
-CN^{-1/2}L^{d/2+3}\big(K_N^j(\p)+L^d\big)^{3/2}.  \label{thmnier2}
\end{eqnarray}
\end{enumerate}
\item  Let $c>0$. Then there exists  $c_1>0$ and $C$ such that
\begin{enumerate}
\item if 
\begin{eqnarray}
 L^{2d+1}&\leq &cN \label{war2a}\\
\hbox{ and }\hspace{4ex} L^{d+1}&\leq &c_1 N,\label{war2a+}
\end{eqnarray}
 then
\begin{equation} E_N\leq
\frac{1}{2}\hat{v}(\0)(N-1)+E_\Bog+CN^{-1/2}L^{2d+3/2}; \label{thmnier3}\end{equation}
\item  if in addition 
\begin{eqnarray}
 K_\Bog^j(\p)&\leq &cN L^{-d-2}\label{war2b}\\
 \hbox{ and }\hspace{4ex} K_\Bog^j(\p)&\leq &
c_1NL^{-2},\label{war2b+}\end{eqnarray}
 then
\begin{eqnarray}
E_N+K_N^j(\p)&\leq&
\frac{1}{2}\hat{v}(\0)(N-1)+E_\Bog+K_\Bog^j(\p)\notag \\&&
+CN^{-1/2}L^{d/2+3}(K_\Bog^j(\p)+L^{d-1})^{3/2}.
\label{thmnier4}
\end{eqnarray}
\end{enumerate}
\end{enumerate}
\end{thm}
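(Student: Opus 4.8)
The plan is to work in second quantization on the bosonic Fock space over $L^2(\Lambda)$, restricted to the $N$-particle sector, and to split $H_N^L$ into a leading condensate contribution, the quadratic Bogoliubov Hamiltonian governing fluctuations around the condensate, and a remainder collecting all cubic and quartic terms in the non-condensate modes. Concretely, I would isolate the zero mode $a_\0$, use the mean-field scaling $\frac{L^d}{N}v^L$ to check that the fully condensed configuration contributes exactly $\half\hat v(\0)(N-1)$, and organize the remaining terms by the number of creation/annihilation operators carrying non-zero momentum. After the pairing (Bogoliubov) rotation of Subsection \ref{The original Bogoliubov approximation}, the quadratic part has spectrum generated by the dispersion relation $e_\p=|\p|\sqrt{|\p|^2+2\hat v(\p)}$, so its ground and excited energies reproduce $E_\Bog$ and the values $K_\Bog^j(\p)$; everything else must be shown to be an error of the advertised size.

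For the upper bounds (parts (2a) and (2b)) I would proceed variationally. For the ground-state energy I would take as trial vector the image of the condensate under the transformation that diagonalizes the quadratic part, and restore the exact particle number $N$ either by projection onto the $N$-particle sector or, more cleanly, by using the number-conserving operators of \eqref{boperators}, which automatically keep the trial vector in the $N$-sector. The expectation of $H_N^L$ in this state equals the Bogoliubov prediction plus contributions from the cubic and quartic terms and from the number constraint, which I would estimate by Wick's theorem and Cauchy--Schwarz, keeping careful track of the number of momentum modes in $\frac{2\pi}{L}\ZZ^d$. For the excited states I would apply products of the quasiparticle creation operators $\xi_\p^\dagger$ of \eqref{diagbogoperators} to this trial ground state to build, within the sector of total momentum $\p$, a trial subspace of the correct dimension, and conclude by the min-max principle, which yields \eqref{thmnier3} and \eqref{thmnier4}.

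The lower bounds (parts (1a) and (1b)) are where the real work lies. Here I would first establish an a priori estimate showing that low-energy states have a controlled number of excited particles $N_+=\sum_{\kk\neq\0}a_\kk^\dagger a_\kk$; this uses the assumed energy bounds \eqref{war1a}, \eqref{war1b} together with the positivity $v\geq0$, $\hat v\geq0$, which makes the full interaction a positive operator and lets me discard the quartic excitation terms from below. The decisive step is to dominate the cubic terms --- those with exactly one condensate operator and three non-zero-momentum operators, carrying a prefactor of order $N^{-1/2}$ after the substitution $a_\0\approx\sqrt N$ --- by a small multiple of the positive quadratic Hamiltonian plus the kinetic energy, at the cost of the error terms $N^{-1/2}L^{\cdots}$ appearing in \eqref{thmnier1} and \eqref{thmnier2}. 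Once the cubic and quartic remainders are absorbed, $H_N^L$ is bounded below by $\half\hat v(\0)(N-1)+E_\Bog$ plus, in the fiber of total momentum $\p$, the $j$-th eigenvalue of the Bogoliubov fiber operator, which is precisely $K_\Bog^j(\p)$.

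The main obstacle, as usual in rigorous Bogoliubov theory, is the control of the cubic terms in the lower bound: they are absent from the quadratic Bogoliubov Hamiltonian yet are only formally negligible, and bounding them requires replacing $a_\0,a_\0^\dagger$ by $\sqrt N$ in a quantitative way and commuting them past the excitation operators while keeping every error linear in $N^{-1/2}$. A secondary but genuine difficulty is the bookkeeping of the $L$-dependence: sums over the lattice $\frac{2\pi}{L}\ZZ^d$ and norms of fluctuation operators grow with $L$, so obtaining the exact exponents in \eqref{thmnier1}--\eqref{thmnier4}, together with the matching conditions \eqref{war1a}--\eqref{war2b+}, demands that each estimate be tracked with its precise power of $L$ rather than merely up to an $L$-independent constant.
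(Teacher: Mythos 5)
Your overall architecture (condensate energy $\tfrac12\hat v(\0)(N-1)$ plus quadratic Bogoliubov part plus remainder; a priori control of $N^>=\sum_{\p\neq\0}a_\p^\dagger a_\p$ on low-energy spectral subspaces; Cauchy--Schwarz absorption of the cubic terms into $\epsilon\times(\text{quadratic})+\epsilon^{-1}\times(\text{quartic})$; variational upper bound and min-max lower bound) is the correct skeleton and matches the first layer of the paper's argument. But there is a genuine gap at the point you yourself flag as ``secondary'': the $L$-dependence. The route you describe --- restoring particle number by projecting onto the $N$-sector, or using the number-conserving operators \eqref{boperators} $b_\p=a_\0^\dagger N_\0^{-1/2}a_\p$, and diagonalizing with the heuristic $\xi_\p$ of \eqref{diagbogoperators} --- is essentially Seiringer's construction, and the paper shows explicitly that mimicking it produces constants of the form $\e^{C_2}$ with $C_2\sim\big(\sum_{\p\neq\0}\beta_\p^2\big)^{1/2}\sim L^{d/2}$, i.e.\ an error that forces $N\geq C\e^{cL^{d/2}}$ instead of the polynomial conditions \eqref{war1a}--\eqref{war2b+}. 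The operators \eqref{boperators} satisfy the CCR only up to $a_\p^\dagger a_\q/N_\0$ corrections and are singular where $N_0$ is small, so the Bogoliubov rotation is not exact for them; every commutator error then compounds across the $O(L^d)$ momentum modes. The missing idea is the \emph{extended Fock space} of Section \ref{extendedspacesection}: one embeds $\Gamma_\s(\CC)$ into $l^2(\ZZ)$, introduces the unitary shift $U$ of the zero-mode occupation, and sets $b_\p:=a_\p U^\dagger$. These satisfy the CCR \emph{exactly}, so the quadratic Bogoliubov Hamiltonian can be defined and diagonalized exactly on each extended $N$-particle sector, and all remainders are polynomial in $N^>/N$ --- which is what ultimately yields the exponents in \eqref{thmnier1}--\eqref{thmnier4}.

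A second, related gap is in your upper bound. The Bogoliubov eigenvectors live (after the extension) partly outside the physical space $\Ran\,\one_{[0,\infty[}(N_0^\ext)$, and a hard projection back onto it does not obviously preserve the low-lying spectrum to the required accuracy. The paper instead conjugates the low-energy Bogoliubov spectral projection $\one_\kappa^\Bog$ with a \emph{smooth} cutoff $A_N=G(N^>/N)$, builds from it a genuine partial isometry $Z_\kappa$ into the physical $N$-sector, and pays for the localization with a double-commutator estimate on $[A_N,[A_N,H_\Bog]]=O(N^{-2}L^2(\kappa+L^{d-1}))$ plus second-moment bounds $\one_\kappa^\Bog(N^>)^2\one_\kappa^\Bog\leq CL^4(L^{d-1}+\kappa)^2$. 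Without some substitute for this localization step, your trial subspace is not guaranteed to lie in $L^2_\s(\Lambda^N)$, and the Rayleigh--Ritz comparison does not close. Finally, a small correction to your lower-bound sketch: after the Cauchy--Schwarz step the quartic term $Q\otimes QvQ\otimes Q$ enters with the \emph{negative} coefficient $1-\epsilon^{-1}$, so it cannot be discarded by positivity of $v$; it must be controlled by the a priori bound on $(N^>)^2$, and the optimization over $\epsilon$ is precisely what produces the $N^{-1/2}L^{d/2+3}(\kappa+L^d)^{3/2}$ error.
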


 Let us stress that the constants $C$ and $c_1$ that appear in the theorem  depend only the potential $v$, the dimension $d$, and the constant $c$, but {\em do not} depend on $N$, $j$ and $L$. Note also that both in the first (1), resp. (2) part of the theorem we can deduce (a) from (b) by setting $K_N^j(\p)=0$, resp. $K_\Bog^j(\p)=0$.

Theorem \ref{thm}  expresses the idea that the Bogoliubov approximation becomes exact for large $N$ and $L$ provided that the volume does not grow too fast. This may appear not very transparent, since the error terms in the theorem depend on two parameters $L$ and $N$ as well as on the excitation energy. Therefore, we present some consequences of our theorem, where the error term depends only on $N$. They  generalize the corresponding remarks of \cite{Seiringer2011}.

\begin{cor}\label{coro}
Let $b>1$, $-1-\frac{1}{2d+1}< \alpha \leq 1$ and $L^{4d+6}\leq b N^{1-\alpha}$. Then there exists $M$ such that if $N>M$, then \vspace{2ex}
\begin{enumerate}
\item \hspace{7ex} $E_N=\frac{1}{2}\hat v(\0)(N-1)+E_\Bog+O(N^{-\alpha/2})$;
\vspace{2ex}
\item if $\min\left(K_N^j(\p),K_\Bog^j(\p)\right)\leq (bN^{1-\alpha}L^{-d-6})^{1/3}$, then
\[K_N^j(\p)= K_\Bog^j(\p)+O(N^{-\alpha/2});\]
\item 
  if  $0<\alpha\leq1$ and
 $\min\left(K_N^j(\p),K_\Bog^j(\p)\right)\leq bN^{1-\alpha}L^{-d-6}$, then
 \[ K_N^j(\p)= K_\Bog^j(\p)+\left(1+K_\Bog^j(\p)\right)O(N^{-\alpha/2}).\]
\end{enumerate}
\end{cor}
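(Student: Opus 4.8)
The plan is to derive all three statements by feeding the constraint $L^{4d+6}\le bN^{1-\alpha}$ into the two-sided estimates of Theorem \ref{thm} and checking that every error term collapses to the advertised order. First I would record the elementary but decisive identity
\[
N^{-1/2}L^{2d+3}=N^{-1/2}\bigl(L^{4d+6}\bigr)^{1/2}\le N^{-1/2}\bigl(bN^{1-\alpha}\bigr)^{1/2}=b^{1/2}N^{-\alpha/2},
\]
which shows that the pure ``energy'' error of \eqref{thmnier1} and \eqref{thmnier3} is $O(N^{-\alpha/2})$ (the upper-bound error $L^{2d+3/2}$ being even smaller since $L\ge1$). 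Before invoking the theorem I must confirm its hypotheses: since $\alpha>-1-\frac1{2d+1}$ and $L^{4d+6}\le bN^{1-\alpha}$, raising to the appropriate fractional powers gives $L^{2d+2},L^{2d+1},L^{d+1}=o(N)$, so that \eqref{war1a}, \eqref{war2a} and \eqref{war2a+} hold once $N>M$ for a suitable $M$ depending only on $b,d,\alpha,c,c_1$. Sandwiching the lower bound \eqref{thmnier1} against the upper bound \eqref{thmnier3} then yields part (1).

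For part (2) I would first combine the four inequalities of the theorem to isolate $K_N^j(\p)$: subtracting the energy lower bound \eqref{thmnier1} for $E_N$ from the joint upper bound \eqref{thmnier4} for $E_N+K_N^j(\p)$ gives
\[
K_N^j(\p)\le K_\Bog^j(\p)+CN^{-1/2}L^{d/2+3}\bigl(K_\Bog^j(\p)+L^{d-1}\bigr)^{3/2}+CN^{-1/2}L^{2d+3},
\]
while subtracting the energy upper bound \eqref{thmnier3} from the joint lower bound \eqref{thmnier2} gives the mirror inequality in which $K_N^j(\p)$ and $L^{d}$ appear inside the $3/2$-power. The key observation is that the cube root in the hypothesis $\min\bigl(K_N^j,K_\Bog^j\bigr)\le(bN^{1-\alpha}L^{-d-6})^{1/3}$ is tuned precisely so that $N^{-1/2}L^{d/2+3}\kappa^{3/2}=b^{1/2}N^{-\alpha/2}$ when $\kappa=(bN^{1-\alpha}L^{-d-6})^{1/3}$; the residual $L^{d}$ and $L^{d-1}$ contributions reduce to $N^{-1/2}L^{2d+3/2}$ and $N^{-1/2}L^{2d+3}$, both $O(N^{-\alpha/2})$ by the displayed identity. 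Since the hypothesis only controls the smaller of the two quantities, I would run a short bootstrap: whichever of $K_N^j,K_\Bog^j$ is assumed $\le\kappa$, the one-sided estimate that features it directly forces the other to lie within $O(N^{-\alpha/2})$ and hence $\lesssim\kappa$, which in turn licenses the remaining one-sided estimate and the theorem's spectral smallness hypotheses \eqref{war1b}, \eqref{war2b}, \eqref{war2b+}.

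Part (3) follows the same template but trades absolute smallness for a relative bound. Here I would use $\bigl(K_\Bog^j+L^{d-1}\bigr)^{3/2}\le 2^{3/2}\bigl((K_\Bog^j)^{3/2}+L^{3(d-1)/2}\bigr)$ and write $(K_\Bog^j)^{3/2}=K_\Bog^j\cdot(K_\Bog^j)^{1/2}$; the factor $(K_\Bog^j)^{1/2}$ is bounded by $(bN^{1-\alpha}L^{-d-6})^{1/2}$, which combines with $N^{-1/2}L^{d/2+3}$ to give exactly $b^{1/2}N^{-\alpha/2}$, leaving one factor of $K_\Bog^j$ to be absorbed into $(1+K_\Bog^j)$. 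The pure-power remainder $N^{-1/2}L^{3(d-1)/2+d/2+3}=N^{-1/2}L^{2d+3/2}$ is again $O(N^{-\alpha/2})$. The restriction $0<\alpha$ enters exactly here: it guarantees $N^{-\alpha/2}\to0$, so the relative error genuinely shrinks and the bootstrap transferring smallness between $K_N^j$ and $K_\Bog^j$ closes.

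I expect the main obstacle to be the bookkeeping of the bootstrap in parts (2) and (3): the theorem's two-sided errors are asymmetric—one side carries $K_N^j$ with $L^d$, the other $K_\Bog^j$ with $L^{d-1}$—whereas the hypotheses constrain only $\min\bigl(K_N^j,K_\Bog^j\bigr)$, so the estimates must be ordered so that each application of Theorem \ref{thm} is licensed by a smallness bound already established, with no circularity. Verifying that all of \eqref{war1a}--\eqref{war2b+} survive uniformly in $j$ and $L$ throughout, so that the final constants depend only on $v,d,b,\alpha$, is the part that demands genuine care rather than routine estimation.
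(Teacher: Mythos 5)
Your overall strategy --- feeding $L^{4d+6}\le bN^{1-\alpha}$ into Theorem \ref{thm}, verifying conditions \eqref{war1a}--\eqref{war2b+} for large $N$, and the exponent arithmetic $N^{-1/2}L^{2d+3}\le b^{1/2}N^{-\alpha/2}$ and $N^{-1/2}L^{d/2+3}\kappa^{3/2}\le b^{1/2}N^{-\alpha/2}$ --- is exactly the paper's, and those computations are correct. The genuine gap is in the bootstrap you propose for handling $\min\bigl(K_N^j(\p),K_\Bog^j(\p)\bigr)$. Suppose $K_\Bog^j(\p)\le\kappa$ is the controlled quantity. Your first step yields only the one-sided bound $K_N^j(\p)\le K_\Bog^j(\p)+O(N^{-\alpha/2})$, so to close the estimate you must invoke \eqref{thmnier2}, whose hypothesis \eqref{war1b} demands $K_N^j(\p)\le cNL^{-d-2}$. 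At that point you only know $K_N^j(\p)\le\kappa+O(N^{-\alpha/2})$, and for $\alpha<0$ (allowed in part (2) down to $-1-\frac{1}{2d+1}$) the term $N^{-\alpha/2}=N^{|\alpha|/2}$ need not satisfy $N^{-\alpha/2}\le cNL^{-d-2}$: with $L$ at its maximal admissible size $(bN^{1-\alpha})^{1/(4d+6)}$ this would require $(1-\alpha)\frac{d+2}{4d+6}\le 1+\frac{\alpha}{2}$, which already fails at $\alpha=-1$ for every $d$ (it reads $\frac{d+2}{2d+3}\le\frac12$). So the second application of the theorem is not licensed and the bootstrap does not close in that regime; the symmetric obstruction appears in the case $K_N^j(\p)\le\kappa$.

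The repair --- and the paper's actual argument, via its Lemma \ref{lemmaappendix} --- is to split on which of $K_N^j(\p)$, $K_\Bog^j(\p)$ is smaller rather than to bootstrap. If $K_N^j(\p)\le K_\Bog^j(\p)$, then $K_N^j(\p)$ is the minimum, so \eqref{war1b} is licensed directly by the hypothesis, and \eqref{thmnier2} combined with the ground-state upper bound \eqref{thmnier3} gives $K_\Bog^j(\p)\le K_N^j(\p)+O(N^{-\alpha/2})$; the reverse inequality $K_N^j(\p)\le K_\Bog^j(\p)$ is the case assumption itself, so \eqref{thmnier4} is never invoked. Symmetrically, if $K_\Bog^j(\p)\le K_N^j(\p)$ one uses only \eqref{thmnier4} together with \eqref{thmnier1} and the case assumption. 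Each case thus requires exactly one application of the spectral estimates, its hypothesis being supplied verbatim by the $\min$ condition, and no comparison between $N^{-\alpha/2}$ and $NL^{-d-2}$ is ever needed. (In part (3), where $0<\alpha$ forces $N^{-\alpha/2}\le1\le cNL^{-d-2}$ for large $N$, your bootstrap does go through, but the case-split argument covers all three parts uniformly.)
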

 The proof that Thm \ref{thm} implies  Cor. \ref{coro} will be given in Subsection \ref{proof of coro}.

\begin{remark}
\begin{enumerate}
\item The case $\alpha=1$, $L=1$ of Corollary \ref{coro}
corresponds directly to the result of \cite{Seiringer2011}.
\item In part (3) of Corollary \ref{coro} one can also include the case $\alpha=0$ provided that $L$ is sufficiently large.
\end{enumerate}
\end{remark}

Thus, for large $N$ within a growing range of the volume, the low-lying
energy-momentum spectrum of the homogeneous Bose gas is well described
by the Bogoliubov approximation. 

In the infinite-volume limit  momentum 
 becomes a continuous variable, which is important when we want to consider the so-called \textit{critical velocity} and \textit{phase velocity} introduced by Landau. They play a crucial role in his theory of superfluidity (\cite{Landau1941}, \cite{Landau1947}, see also \cite{CDZ2009},\cite{ZagBru2001}).
 

 Mathematically, the Bogoliubov approximation has been studied mostly in the context of the ground state energy (\cite{LS2001}, \cite{LS2004}, \cite{ESY2008}, \cite{GuSei2009}, \cite{Sol2006} \cite{YY2009}, see also \cite{LSSY2005}). This makes the work of Seiringer (\cite{Seiringer2011}), Grech-Seiringer (\cite{GreSei2013}) and more recently by Lewin, Nam, Serfaty and Solovej (\cite{LNSS2014}) even more notable, since they are devoted to a rigorous study of the excitation spectrum of a Bose gas.

  In \cite{Seiringer2011} Seiringer proves that for a system of $N$ bosons on a flat unit torus $\mathbb{T}^{d}$ which interact via a  two-body potential $v(\x)/(N-1)$, the excitation spectrum up to an energy $\kappa$ is formed by elementary excitations of momentum $\p$ with a corresponding energy of the form
(\ref{bogol})
up to an error term  of the order $O(\kappa^{3/2}N^{-1/2})$. Also in \cite{GreSei2013} and \cite{LNSS2014} the authors are concerned with  finite systems in the large particle number limit.

Our result can be considered as an extension of Seiringer's result to systems of arbitrary volume. Motivated by the discussion in the first part of this thesis, the ultimate goal would be to prove similar results in the thermodynamic limit with a fixed coupling constant. Since at the moment this is out of reach, we consider other limits, which involve convergence of the volume to infinity.

\subsection{Outline of the proof}

The rest of this thesis is devoted to the presentation of the proof of Theorem \ref{thm} and Corollary \ref{coro}. Let us briefly outline the main ideas behind that proof.

There are two main difficulties that one needs to overcome when proving the main result of this thesis. The first one refers to the fact that one needs to compare spectral properties of two Hamiltonians which are a priori defined on different spaces: the Hamiltonian \eqref{secondbogapprox} obtained by Bogoliubov within his approximation "lives and acts" in the full Fock space (because it contains terms which are not particle-number-conserving) while the mean-field Hamiltonian \eqref{hamiltonian} is particle-number-conserving and thus acts only on fixed-particle-number subspaces of the Fock space.

The second main problem is related to the \textit{c-number substitution} described in Subsection \ref{The original Bogoliubov approximation}. According to this substitution one neglects the non-commutativity of the operators $a^{\dagger}_0$ and $a_0$.

We deal with the second problem by introducing the so-called  {\em extended Fock space} which contains non-physical states with a negative number of zero-momentum modes (see Section \ref{extendedspacesection}). This allows us to treat the zero-momentum mode in a special way.

To deal with the first problem, motived by Bogoliubov's introduction of the operators \eqref{boperators}, we introduce their counterparts adapted to the extended Fock space. In particular, they allow us to simplify the
algebraic computations involved in the proof.

 Our proof uses partly methods presented in \cite{Seiringer2011}. Note, however, that naive mimicking leads to a much weaker result, which involves assuming that $N\geq C \e^{cL^{d/2}}$ to ensure that the error terms tend to zero when taking the infinite-volume limit. This can be easily seen by looking for example at equation (24) of \cite{Seiringer2011}. In this equation one of the constants is given by the expression $\e^{C_{2}}$ where $C_{2}$ is given by 
$$\sqrt{\frac{64N}{N-1}\sum_{\p\neq\0}\beta^{2}_{\p}}.$$
In the infinite-volume limit the sum in the above expression can be replaced by a Riemann integral at the cost of a factor $L^{d/2}.$  This leads to a factor $\e^{c L^{d/2}}$ in the estimates.

Our method  leads to relatively simple
algebraic calculations, which is helpful when we want to control the
volume dependence. Note also  that our 
method yields the same results as in \cite{Seiringer2011} if one takes $L=1$. 

Interestingly, we have never seen the method of the extended space in the literature. We believe it might be useful for further applications.


\section{Proofs of the theorem and corollary}\label{proofba}
\subsection{Miscellanea}

Let us describe some notation and basic facts from operator theory
used below.

If $A$, $B$ are operators, then the following inequality will be often used:
\begin{equation*}
 -A^\dagger A-B^\dagger B\leq A^\dagger B+B^\dagger A\leq A^\dagger A+B^\dagger B.
 \end{equation*}

We will write $A+\hc$ for $A+A^\dagger $.

If $A$ is a self-adjoint operator and $\Omega$ a Borel subset of the spectrum of $A$, then $\one_\Omega(A)$ will denote the spectral projection of $A$ onto  $\Omega$.

Let $A$ be a bounded from below self-adjoint operator on  Hilbert space $\cH$.  For simplicity, let us assume that it has only discrete spectrum.
We define
\[\vecsp(A):=(E_1,E_2,\dots),\]
where $E_1,E_2,\dots$ are the eigenvalues of $A$ in the  order of
increasing values, counting the multiplicity. If $\dim\cH=n$, then we set $E_{n+1}=E_{n+2}=\dots=\infty$.

We will use repeatedly two consequences of the {\em min-max principle} \cite{RS1978}: 
\[A\leq B \ \hbox{ implies }\vecsp(A)\leq \vecsp(B),\]
and
the so-called {\em Rayleigh-Ritz principle}:
If $\cK$ is a closed subspace of $\cH$, let
 $P_\cK$ be the projection onto $\cK$. Then we have 
\begin{equation*}
\vecsp(A)\leq\vecsp\Big(P_\cK AP_\cK\Big|_\cK\Big).
\end{equation*}

\subsection{Second quantization}

The Hamiltonian $H_N$ is defined on the $N$-particle bosonic space
\[\cH_N:
=L^{2}_{\s}(\Lambda^{N}).\] 
We will work most of the time in the momentum representation, in which the 1-particle space $L^2(\Lambda)$ 
is represented as $l^2\bigl(\frac{2\pi}{L}\ZZ^d\bigr)$, thus
\[\cH_N\simeq
\otimes_{\s}^Nl^2\bigl(\frac{2\pi}{L}\ZZ^d\bigr).\] 

It is convenient to consider simultanously the direct sum of the
$N$-particle spaces, the {\em bosonic Fock space}
\begin{equation}
\mathcal{H}:=\mathop\oplus\limits_{N=0}^\infty\cH_N
=\Gamma_\s\Big(l^2\bigl(\frac{2\pi}{L}\ZZ^d\bigr)\Bigr).
\label{physicalspace1}
\end{equation}

The direct sum of the Hamiltonians $H_N$ 
will be denoted $H$. Using  the notation of the second quantization it can be written (recall \eqref{hamboseboxpedpart2}) as
\[H:=\mathop\oplus\limits_{N=0}^\infty H_N=\sum_\p \p^2a_\p^\dagger a_\p+\frac{1}{2N}
\sum_{\p,\q,\kk}\hat{v}(\kk)a_{\p+\kk}^\dagger a_{\q-\kk}^\dagger a_\q a_\p.\]


Let us introduce some special notation for various operators and their 2nd quantization. 

Let $P$ be the projection onto the constant function in $L^{2}(]L/2,L/2]^d)$, and $Q=1-P$. The operator that counts the number of particles in, resp. outside the zero momentum mode will be denoted by $N_0$, resp. $N^{>}$, i.e.
\begin{gather}
N_0=\sum_{i=1}^{N}P_{i},\ \ \ 
N^{>}=\sum_{i=1}^{N}Q_{i}. \label{particlenoop}
\end{gather}
In the 2nd quantization notation,
\[N_0=a_\0^\dagger a_\0,\ \ \ N^>=\sum_{\p\neq\0}a_\p^\dagger a_\p.\]

For $N$-particle bosonic wave functions $\Psi,\Phi$, due to their symmetry, we have
\begin{eqnarray}
\langle\Psi|N^{>}|\Phi\rangle&=&N\langle\Psi|Q_{1}|\Phi\rangle,\label{Q1}\\ 
\langle\Psi|N^{>}(N^>-1)|\Phi\rangle&=&N(N-1)\langle\Psi|Q_{1}Q_2|\Phi\rangle.  \label{Q1Q2}
\end{eqnarray}

The symbol $T$ will denote the kinetic energy of the system: $T=-\sum\limits_{i=1}^{N}\Delta_{i}$. For further reference, note that
\begin{equation*}
\langle\Psi|N^{>}|\Psi\rangle \leq \frac{L^{2}}{(2\pi)^{2}}\langle\Psi|T|\Psi\rangle
. \label{lemma1b-}  
\end{equation*}

\subsection{Bounds on interaction}

The potential $v$ can be interpreted as an operator of multiplication
by $v(\x_1-\x_2)$ on $L_\s^2(\Lambda^2)$.
Following \cite{Seiringer2011}, we would like to estimate this 2-body operator by simpler, 1-body operators. As a preliminary step we record the following bound:

\begin{lemma} Let $\epsilon>0$. Then
\begin{eqnarray*}
v&\geq& P\otimes PvP\otimes P+P\otimes PvQ\otimes Q+Q\otimes QvP\otimes P \\&&+(1-\epsilon)(P\otimes Q+Q\otimes P)v(P\otimes Q+Q\otimes P)\\&&+(1-\epsilon^{-1})Q\otimes QvQ\otimes Q,\\
v&\leq& P\otimes PvP\otimes P+P\otimes PvQ\otimes Q+Q\otimes QvP\otimes P \\&&+(1+\epsilon)(P\otimes Q+Q\otimes P)v(P\otimes Q+Q\otimes P)\\&&+(1+\epsilon^{-1})Q\otimes QvQ\otimes Q.
\end{eqnarray*} \label{rhs}
\end{lemma}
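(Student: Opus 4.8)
The plan is to resolve the identity on the symmetric two-particle space into three mutually orthogonal projections adapted to the zero-momentum mode, expand $v$ accordingly, note that momentum conservation kills one pair of off-diagonal terms, and control the single surviving off-diagonal term with the elementary quadratic inequality recorded in the Miscellanea.

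First I would set $\Pi_1:=P\otimes P$, $\Pi_2:=P\otimes Q+Q\otimes P$ and $\Pi_3:=Q\otimes Q$. Since $P$ and $Q=1-P$ are complementary orthogonal projections, these are three mutually orthogonal projections on $L^2_\s(\Lambda^2)$ with $\Pi_1+\Pi_2+\Pi_3=\one$, each of which preserves the symmetric subspace. Inserting this resolution of the identity on both sides of $v$ gives
\[
v=\sum_{i,j=1}^{3}\Pi_i\, v\,\Pi_j .
\]

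Next I would argue that the cross terms between $\Pi_1$ and $\Pi_2$ vanish. The operator $v$ acts as multiplication by $v(\x_1-\x_2)$ and hence conserves total momentum; in the momentum representation its kernel is proportional to $\hat v(\p_1-\p_1')\,\delta_{\p_1+\p_2,\,\p_1'+\p_2'}$. The range of $\Pi_1$ consists of states of total momentum $\0$, whereas every state in the range of $\Pi_2$ has one particle in the zero mode and the other in a nonzero mode, hence lies in the sector of nonzero total momentum. Momentum conservation therefore forces $\Pi_1 v\Pi_2=\Pi_2 v\Pi_1=0$. By contrast $\Pi_1 v\Pi_3$ does \emph{not} vanish, since it describes two zero-momentum particles scattering into a pair $(\kk,-\kk)$. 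After discarding the vanishing terms and recognizing $\Pi_1 v\Pi_1=P\otimes P v P\otimes P$, $\Pi_1 v\Pi_3+\Pi_3 v\Pi_1=P\otimes P vQ\otimes Q+Q\otimes Q vP\otimes P$, $\Pi_2 v\Pi_2=(P\otimes Q+Q\otimes P)v(P\otimes Q+Q\otimes P)$ and $\Pi_3 v\Pi_3=Q\otimes Q vQ\otimes Q$, the only remaining off-diagonal contribution is $\Pi_2 v\Pi_3+\Pi_3 v\Pi_2$.

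Finally I would estimate this last term. Since $v(\x)\geq0$, the operator $v$ is nonnegative and factorizes as $v=v^{1/2}v^{1/2}$, with $v^{1/2}$ the multiplication by $\sqrt{v(\x_1-\x_2)}$. Setting $A:=v^{1/2}\Pi_2$ and $B:=v^{1/2}\Pi_3$, one has $A^\dagger A=\Pi_2 v\Pi_2$, $B^\dagger B=\Pi_3 v\Pi_3$ and $A^\dagger B+B^\dagger A=\Pi_2 v\Pi_3+\Pi_3 v\Pi_2$. Applying the inequality $-A^\dagger A-B^\dagger B\leq A^\dagger B+B^\dagger A\leq A^\dagger A+B^\dagger B$ to the rescaled pair $\sqrt{\epsilon}\,A$, $\epsilon^{-1/2}B$ yields
\[
-\epsilon\,\Pi_2 v\Pi_2-\epsilon^{-1}\Pi_3 v\Pi_3\ \leq\ \Pi_2 v\Pi_3+\Pi_3 v\Pi_2\ \leq\ \epsilon\,\Pi_2 v\Pi_2+\epsilon^{-1}\Pi_3 v\Pi_3 .
\]
Substituting the upper, resp.\ lower, bound into the expansion of $v$ and collecting the coefficients of $\Pi_2 v\Pi_2$ and $\Pi_3 v\Pi_3$ produces the factors $(1\pm\epsilon)$ and $(1\pm\epsilon^{-1})$ and gives the two asserted inequalities. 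The only genuinely structural step is the vanishing $\Pi_1 v\Pi_2=0$; everything else is bookkeeping together with the quadratic estimate, so I expect the momentum-conservation argument to be the main point to get right.
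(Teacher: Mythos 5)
Your proposal is correct and follows essentially the same route as the paper: the paper likewise drops the cross terms between $P\otimes P$ and $P\otimes Q+Q\otimes P$ by translation invariance (your momentum-conservation argument) and then applies the Schwarz inequality, with the $\epsilon$-weights, to the single surviving off-diagonal piece $\Pi_2 v\Pi_3+\Pi_3 v\Pi_2$. You merely spell out what the paper leaves implicit, namely that the Schwarz step uses the pointwise positivity $v\geq0$ through the factorization $v=v^{1/2}v^{1/2}$.
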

\begin{proof} Using the translation invariance of $v$ we obtain
\begin{eqnarray*}
v&=& (P\otimes P+Q\otimes Q)v (P\otimes P+Q\otimes Q)
 \\&&+(P\otimes Q+Q\otimes P)v(P\otimes Q+Q\otimes P)
\\&&+(P\otimes Q+Q\otimes P)vQ\otimes Q
+Q\otimes Qv(P\otimes Q+Q\otimes P).
\end{eqnarray*} Then we apply the Schwarz inequality to the last two terms.
\end{proof}

Let us now identify the second quantization of various terms on the r.h.s of the estimates of Lemma \ref{rhs}. For instance, let us have a look at the term involving $P\otimes QvQ\otimes P$. To this end recall the notion of the second quantization of certain 2-body operators. More precisely, let $w$ be an operator on the symmetrized 2-particle space. Then by its second quantization we will mean the operator that restricted to the 
$N$-particle space equals
\[\sum_{1\leq i<j\leq N} w_{ij}.\]
If $w$ is an operator on the unsymmetrized 2-particle space, then we can also speak about its second quantization, but now its restriction to
the 
$N$-particle space equals
\[\frac12\sum_{1\leq i\neq j\leq N} w_{ij}.\]
In the momentum
 basis  this operator written in the 2nd quantized language equals
\[\frac12\sum_{\p_1,\p_2,\p_3,\p_4}\langle \p_1,\p_2|w|\p_3,\p_4\rangle
a_{\p_1}^\dagger
a_{\p_2}^\dagger
a_{\p_3}
a_{\p_4}.
\]
Taking for $w$ the operator $P\otimes QvQ\otimes P$ we calculate
\begin{eqnarray*}
&&\frac12\sum_{\p_1,\p_2,\p_3,\p_4}\langle \p_1,\p_2|P\otimes QvQ\otimes P|\p_3,\p_4\rangle a_{\p_1}^\dagger
a_{\p_2}^\dagger
a_{\p_3}
a_{\p_4} = \\
&&\sum_{\p_1,\p_2,\p_3,\p_4}\int_{\Lambda}\int_{\Lambda}(P\e^{-\ii \p_1 \x})(Q\e^{-\ii \p_2 \y})  \frac{v(\x-\y)}{2L^{2d}}(Q\e^{-\ii \p_3 \y})(P\e^{-\ii \p_4 \x})\d \x \d \y a_{\p_1}^\dagger
a_{\p_2}^\dagger
a_{\p_3}
a_{\p_4} =\\
&&\sum_{\p,\p_1,\p_2,\p_3,\p_4}\int_{\Lambda}\int_{\Lambda}(P\e^{-\ii \p_1 \x})(Q\e^{-\ii \p_2 \y}) \e^{\ii\p(\x-\y)}\frac{\hat{v}(\p)}{2L^{3d}}(Q\e^{-\ii \p_3 \y})(P\e^{-\ii \p_4 \x})\d \x \d \y a_{\p_1}^\dagger
a_{\p_2}^\dagger
a_{\p_3}
a_{\p_4}, 
\end{eqnarray*}
where in the last step we used \eqref{perpot}. Since $P$ is a projection onto the constant function, this implies $\p_1=\p_4=\0$. Since $Q=1-P$, we have $\p_2\neq \0$ and $\p_3\neq \0$. Thus we obtain 
\begin{eqnarray*}
\frac12\sum_{\p_1,\p_2,\p_3,\p_4}\langle \p_1,\p_2|P\otimes QvQ\otimes P|\p_3,\p_4\rangle a_{\p_1}^\dagger
a_{\p_2}^\dagger
a_{\p_3}
a_{\p_4} = \\
\sum_{\p,\p_2\neq\0,\p_3\neq \0}\int_{\Lambda}\int_{\Lambda}\e^{-\ii \p_2 \y} \e^{\ii\p(\x-\y)}\frac{\hat{v}(\p)}{2L^{3d}}\e^{-\ii \p_3 \y}\d \x \d \y a_{\0}^\dagger
a_{\p_2}^\dagger
a_{\p_3}
a_{\0}.
\end{eqnarray*}
Using $\frac{1}{L^d}\int_{\Lambda}\e^{\ii \p \x}=\delta_{\p,\0}$ it leads to
\begin{eqnarray*}
\frac12\sum_{\p_1,\p_2,\p_3,\p_4}\langle \p_1,\p_2|P\otimes QvQ\otimes P|\p_3,\p_4\rangle a_{\p_1}^\dagger
a_{\p_2}^\dagger
a_{\p_3}
a_{\p_4} = \frac{1}{2L^d}\sum_{\p\neq\0}\hat{v}(\p)a_{\p}^\dagger
a_{\p}
a_{\0}^\dagger
a_{\0}. 
\end{eqnarray*}
We thus identify the second quantization of various terms on the r.h.s of the estimates of Lemma \ref{rhs}:
\begin{eqnarray*}
 P\otimes PvP\otimes P&&
\frac{1}{2L^d}\hat v(\0)a_\0^\dagger a_\0^\dagger a_\0a_\0=
\frac{1}{2L^d}\hat v(\0)N_0(N_0-1),\\
P\otimes PvQ\otimes Q&&
\frac{1}{2L^d}\sum_{\p\neq\0} \hat v(\p)a_0^\dagger a_0^\dagger a_\p a_{-\p},\\
Q\otimes QvP\otimes P&&
\frac{1}{2L^d}\sum_{\p\neq\0} \hat v(\p)a_\p^\dagger a_{-\p}^\dagger a_0 a_0,\\
P\otimes QvQ\otimes P,\ Q\otimes PvP\otimes Q
&&
\frac{1}{2L^d}\sum_{\p\neq\0} \hat v(\0)a_\p^\dagger  a_{\p}N_0,\\
P\otimes QvP\otimes Q,\ Q\otimes PvQ\otimes P
&&
\frac{1}{2L^d}\sum_{\p\neq\0} \hat v(\p)a_\p^\dagger  a_{\p}N_0
.
\end{eqnarray*} 
The second quantization of $ Q\otimes QvQ\otimes Q$
 can be bounded  from above by
\[v(\0)\sum_{1\leq i<j\leq  N}Q_iQ_j=v(\0)\frac12N^>(N^>{-}1).\]

Introduce the family of {\em estimating Hamiltonians}
\begin{eqnarray*}
H_{N,\epsilon}&:= &\frac{1}{2}\hat v(\0)(N-1)+\sum_{\p\neq\0}\big(|\p|^2+\hat v(\p)\big)
a_\p^\dagger a_\p\\
&&+\frac{1}{2N}\sum_{\p\neq\0}\hat v(\p)\Big(a_\0^\dagger a_\0^\dagger a_\p a_{-\p}+a_\p^\dagger a_{-\p}^\dagger a_\0 a_\0 \Big)\\
&&-\frac{1}{N}\sum_{\p\neq\0}\big(\hat v(\p)+\frac{\hat v(\0)}{2}\big)a_\p^\dagger a_\p N^>+\frac{\hat v(\0)}{2N}N^>\\
&&+\frac{\epsilon}{N}\sum_{\p\neq\0}\big(\hat v(\p)+\hat v(\0)\big)a_\p^\dagger a_\p N_0\\
&&+(1+\epsilon^{-1})\frac{1}{2N} v(\0)L^d N^>(N^>-1).
\end{eqnarray*}

The operators $H_{N,\epsilon}$ preserve the $N$-particle sectors. By the above calculations  we obtain the following estimates on the Hamiltonian:
\begin{eqnarray}
H_N&\geq &H_{N,-\epsilon}, \ \  0<\epsilon\leq1;\label{hamiltonianboundlower}  \\
H_N&\leq &H_{N,\epsilon},\ \ 0<\epsilon. \label{hamiltonianboundupper}
\end{eqnarray}

\subsection{Extended space}\label{extendedspacesection}
So far we used the {\em physical  Hilbert space} (\ref{physicalspace1}). By the exponential property of Fock spaces we have the identification
\begin{equation}
\mathcal{H}\simeq
\Gamma_\s(\CC)\otimes\Gamma_\s\Big(l^2\bigl(\frac{2\pi}{L}\ZZ^d\backslash\{\0\}\bigr)\Bigr).
\label{physicalspace}
\end{equation}
Let us embed the space of  {\em zero modes} $\Gamma_\s(\CC)=l^2(\{0,1,\dots\})$ in a larger space $l^2(\ZZ)$. Thus we obtain the {\em extended Hilbert space}
\begin{equation}
\mathcal{H}^{\text{ext}}:= l^2(\ZZ)\otimes \Gamma_\s\Big(l^2\bigl(\frac{2\pi}{L}\ZZ^d\backslash\{\0\}\bigr)\Bigr).
\label{fullspace}
\end{equation} 
The physical space (\ref{physicalspace}) is spanned by vectors of the
form $|n_0\rangle\otimes \Psi^>$, where
 $|n_0\rangle$ represents $n_0$  zero modes  ($n_0\geq0$)
and $ \Psi^>$ represents a vector outside the zero mode.

The space (\ref{fullspace}) is also spanned by vectors  of this  form, where now 
 the relation $n_0\geq0$ is not imposed. 
The orthogonal complement of $\cH$ in $\cH^\ext$ will be denoted by $\cH^\nph$ (for ``non-physical''). 

 On $\cH^\ext$ we have a self-adjoint operator $N_0^\ext$ such that $N_0^\ext|n_0\rangle\otimes \Psi^>=n_0|n_0\rangle\otimes \Psi^>$. Its spectrum equals $\ZZ$. 
 Clearly
\[N_0^\ext\Big|_\cH=N_0,\ \ \ \cH=\Ran \one_{[0,\infty[}(N_0^\ext),\  \ 
\cH^\nph=\Ran \one_{]-\infty,0[}(N_0^\ext).\]
If $N\in\ZZ$, we will write $\cH_N^\ext$ for the subspace of $\cH^\ext$ corresponding to $N^>+N_0^\ext=N$.

We have also a unitary operator
\begin{gather*}
U|n_0\rangle\otimes\Psi^>=|n_0-1\rangle\otimes\Psi^>.
\intertext{Notice that both $U$ and $U^{\dagger}$ commute with both $a_{\p}$ and $a^{\dagger}_{\p}$ with $\p\neq\0$.  We now define for $\p\neq\0$ the following operator on $\mathcal{H}^{\text{ext}}$:}
{b}_{\p}:=a_{\p}U^{\dagger}. 
\end{gather*}
Operators $b_\p$ and $b_\q^\dagger$ satisfy the same CCR as $a_\p$ and $a_\q^\dagger$.

The extended space is useful in the study of $N$-body Hamiltonians.
To illustrate this, on $\cH_N^\ext$ let
us introduce the {\em extended Hamiltonian}
\begin{eqnarray*}
H_N^\ext&=&\sum_{\p\neq\0}\Big(\p^2+\frac{N_0^\ext}{N}(\hat{v}(\p)+\hat{v}(\0)\Big)b_\p^\dagger
b_\p\\&&
+
\frac{1}{2}\sum_{\p\neq\0}\Big(\hat{v}(\p)\frac{\sqrt{N_0^\ext(N_0^\ext
    -1)}}{N}
b_\p
b_{-\p}
+\hc\Big)\\
&&+\frac{1}{N}\sum_{\kk,\p\neq\0}\hat{v}(\kk)b_{\kk}^\dagger b_{\p-\kk}^\dagger
b_\p\sqrt{\max(N_0^\ext,0)}+\hc\Big)\\
&&+\frac{1}{2N}
\sum_{\p,\q,\kk\neq\0}\hat{v}(\kk)b_{\p+\kk}^\dagger b_{\q-\kk}^\dagger
b_\q b_\p.
\end{eqnarray*}
It is easy to see that $H_N^\ext$ preserves the $N$-particle physical
space $\cH_N$ and on $\cH_N$  it coincides with $H_N$.

Here we will use the {\em extended estimating Hamiltonian}, which is
the following operator on $\cH_N^\ext$:
\begin{eqnarray*}
H_{N,\epsilon}^\ext&:= &\frac{1}{2}\hat v(\0)(N-1)+\sum_{\p\neq\0}\big(|\p|^2+\hat v(\p)\big)
b_\p^\dagger b_\p\\
&&+\frac{1}{2}\sum_{\p\neq\0}\hat v(\p)\Big(\frac{\sqrt{(N_0^\ext-1)N_0^\ext}}{N}b_\p b_{-\p}+\hc
\Big)\\&&-\frac{1}{N}\sum_{\p\neq\0}\big(\hat v(\p)+\frac{\hat v(\0)}{2}\big)b_\p^\dagger b_\p N^>+\frac{\hat v(\0)}{2N}N^>\\
&&+\frac{\epsilon}{N}\sum_{\p\neq\0}\big(\hat v(\p)+\hat v(\0)\big)b_\p^\dagger b_\p N_0^\ext\\
&&+(1+\epsilon^{-1})\frac{1}{2N} v(\0)L^d N^>(N^>-1).
\end{eqnarray*}
Note that $H_{N,\epsilon}^\ext$ 
preserves $\cH_N$ and restricted to $\cH_N$ coincides with $H_{N,\epsilon}$.

\subsection{Bogoliubov Hamiltonian}

Consider the  operator
\begin{eqnarray*}
&&\sum_{\p\neq\0}\big(|\p|^2+\hat{v}(\p)\big)
b_\p^\dagger b_\p+\frac{1}{2}\sum_{\p\neq\0}\hat{v}(\p)\Big(b_\p b_{-\p}+b_\p^\dagger b_{-\p}^\dagger\Big).
\end{eqnarray*} 
acting on $\cH^\ext$.
It commutes with $N_0+N^>$ and $U$. 
In particular, it preserves
$\cH_N^\ext$. Its restriction to $\cH_N^\ext$ will be denoted
$H_{\Bog,N}$.

We can write
\begin{eqnarray}
H_{N,\epsilon}^\ext&=&\frac{1}{2}\hat{v}(\0)(N-1)+H_{\Bog,N}+R_{N,\epsilon}, \label{hamextepsilon}\\
R_{N,\epsilon}&:=
&\frac{1}{2}\sum_{\p\neq\0}\hat{v}(\p)\Big(\Big(\frac{\sqrt{(N_0^\ext-1)N_0^\ext}}{N}
-1\Big)b_\p b_{-\p} +\hc \Big) \notag\\
&&-\frac{1}{N}\sum_{\p\neq\0}\Big(\hat{v}(\p)+\frac{\hat{v}(\0)}{2}\Big)
b_{\p}^{\dagger} b_{\p} N^>+\frac{\hat{v}(\0)}{2N}N^>\notag\\
&&+\frac{\epsilon}{N}\sum_{\p\neq\0}\Big(\hat{v}(\p)+\hat{v}(\0)\Big)b_{\p}^\dagger b_\p N_0^\ext \notag\\
&&+(1+\epsilon^{-1})\frac{1}{2N} v(\0)L^d N^>(N^>-1). \label{rnepsilon}
\end{eqnarray}

Clearly, all $H_{\Bog,N}$ are unitarily equivalent to one another:
$UH_{\Bog,N} U^\dagger=H_{\Bog,N-1}$. It is easy to see that they are
all unitarily equivalent to what we can call the {\em standard Bogoliubov Hamiltonian}:
\begin{eqnarray}
H_\Bog&=&\sum_{\p\neq\0}\big(|\p|^2+\hat{v}(\p)\big)
a_{\p}^\dagger a_\p+\frac{1}{2}\sum_{\p\neq\0}\hat{v}(\p)\Big(a_\p a_{-\p}+a_{\p}^\dagger a_{-\p}^\dagger\Big).\label{stan}
\end{eqnarray}
$H_\Bog$ acts on $\Gamma_\s\Big(l^2\bigl(\frac{2\pi}{L}\ZZ^d\backslash\{\0\}\bigr)\Bigr)$.

We would now like to find a unitary transformation diagonalizing $H_\Bog$. To this end set
\begin{equation*}
A_{\p}:=|\p|^{2}+\hat{v}(\p), \,\,\,\,\,\,B_{\p}:= \hat{v}(\p).
\end{equation*}
Introduce also $\alpha_\p$, $\beta_\p$, $c_\p$ and $s_\p$ by
\begin{eqnarray*}
\alpha_{\p}&=&\frac{1}{B_\p}\Big(A_\p-\sqrt{A_\p^2-B_\p^2}\Big)\ =\ \tanh(2\beta_\p),\\
c_\p&=&\frac{1}{\sqrt{1-\alpha_\p^2}}\ =\ \cosh(2\beta_\p),\\
s_\p&=&\frac{\alpha_\p}{\sqrt{1-\alpha_\p^2}}\ =\ \sinh(2\beta_\p).
\end{eqnarray*}
Now let $S=\e^{-X}$,
where
\begin{gather}
X=\sum_{\p\neq\0}\beta_{\p}\left(a_{\p}^{\dagger}a_{-\p}^{\dagger}-a_{\p}a_{-\p}\right). \label{X} 
\end{gather} 
Then using the Lie formula
\begin{gather*}
\e^{-X}a_{\q}\e^X=\sum_{j=0}^\infty\frac{(-1)^j}{j!}\underset{\hbox{ $j$ times\hskip 3ex}}{[X,...[X,a_{\q}]\dots]} \\
=1+2\beta_{\q}a^{\dagger}_{-\q}+\frac12 4\beta^{2}_{\q}a_{\q}+\ldots
\end{gather*} 
we get 
\begin{gather}
Sa_{\q}S^{\dagger}=c_\q a_{\q}+s_\q a^{\dagger}_{-\q}. \label{sas}
\end{gather}
Therefore,
\begin{eqnarray}
H_{\Bog}&=&\sum_{\p\neq 0}\frac{1}{2}\left(A_{\p}(a_{\p}^{\dagger}a_{\p}+a_{-\p}^{\dagger}a_{-\p})+B_{\p}(a_{\p}^{\dagger}a_{-\p}^{\dagger}+a_{\p}a_{-\p})\right) \notag \\
&=&-\frac{1}{2}\sum_{\p\neq 0}\left(A_{\p}-\sqrt{A_{\p}^{2}-B_{\p}^{2}}\right)\\&&+
\sum_{\p\neq 0}\sqrt{A_{\p}^{2}-B_{\p}^{2}}
\big(c_\p a_\p^\dagger+s_\p a_{-\p}\big)\big(c_\p a_\p+s_\p a_{-\p}^\dagger\big)\\
&=&E_\Bog+S\big(\sum_{\p\neq 0}e_{\p}a_{\p}^{\dagger}a_{\p}\big)S^\dagger,
\label{hbogoliubov}
\end{eqnarray}
where $e_{\p}$ and $E_{\Bog}$ are defined by \eqref{ep} and \eqref{energiabogoliubowa} respectively.
Thus the spectrum of $H_\Bog-E_\Bog$  equals 
\[
\Big\{\sum_{i=1}^j e_{\kk_i}\ :\ 
\kk_1,\dots,\kk_j\in\frac{2\pi}{L}\ZZ^d\backslash\{\0\},\ \ 
j=0,1,2,\dots\Big\}.\]

For further reference note the identities:
\begin{eqnarray}
\alpha_\p&=&\frac{\hat v(\p)}{|\p|^2+\hat v(\p)+
|\p|\sqrt{2\hat v(\p)+|\p|^2}},\notag\\
(c_\p-s_\p)^2=\frac{1-\alpha_\p}{1+\alpha_\p}&=&
\frac{|\p|}
{\sqrt{|\p|^2+2\hat v(\p)}}, \label{(cp-sp)2} \\
s_\p(c_\p-s_\p)=\frac{\alpha_\p}{1+\alpha_\p}&=&
\frac{\hat v(\p)}
{|\p|^2+2\hat v(\p)+|\p|\sqrt{|\p|^2+2\hat v(\p)}}, \label{s(c-s)}\\
2s_\p c_\p(c_\p-s_\p)^2=\frac{\alpha_\p}{(1+\alpha_\p)^2}&=&
\frac{\hat v(\p)}{|\p|^2+2\hat v(\p)}.\notag
\end{eqnarray}
We note also an alternative formula for the Bogoliubov energy:
\[E_\Bog
=
-\frac{1}{2}\sum_{\p\in\frac{2\pi}{L}\mathbb{Z}^{d}\setminus \{ 0\}}\frac{\hat v(\p)^2}{|\p|^{2}+\hat{v}(\p)+|\p|\sqrt{|\p|^2+2\hat{v}(\p)|\p|}}. 
\]

\subsection{Lower bound}

In this section we prove the lower bound part of Theorem \ref{thm}. Using the notation introduced in the previous sections it follows from the following statement:

\begin{thm} Let $c>0$. Then  there exists $C$ such that for any $\kappa\geq 0$ with
 \begin{equation}
  L^{d+2}(L^d+\kappa)\leq cN\label{condit1}
  \end{equation}
 we have
\begin{eqnarray*}
\vecsp\big(\one_{[0,\kappa]}(H_N-E_N)H_N\big)&\geq&
\frac{1}{2}\hat{v}(\0)(N-1)+\vecsp\big(H_\Bog\big)\\&&-CN^{-1/2}L^{d/2+3}(\kappa+L^d)^{3/2}.
\end{eqnarray*}
\label{lower}\end{thm}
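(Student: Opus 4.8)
The plan is to deduce this lower bound on the spectrum of $H_N$ from the explicitly diagonalized spectrum of $H_\Bog$ by passing through the estimating Hamiltonian and the extended space constructed above. First I would use the operator bound $H_N\geq H_{N,-\epsilon}$ from \eqref{hamiltonianboundlower}, valid for $0<\epsilon\leq1$, together with the fact that $H_{N,\epsilon}$ is the restriction to $\cH_N$ of the extended estimating Hamiltonian $H_{N,\epsilon}^\ext$ on $\cH_N^\ext$. Since $H_{N,-\epsilon}^\ext$ preserves the physical subspace $\cH_N$ and coincides there with $H_{N,-\epsilon}$, the Rayleigh--Ritz principle gives $\vecsp\big(H_{N,-\epsilon}^\ext\big|_{\cH_N^\ext}\big)\leq\vecsp\big(H_{N,-\epsilon}\big|_{\cH_N}\big)$, so it suffices to bound the extended spectrum from below. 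On $\cH_N^\ext$ I would then invoke the decomposition \eqref{hamextepsilon}, namely $H_{N,-\epsilon}^\ext=\tfrac12\hat v(\0)(N-1)+H_{\Bog,N}+R_{N,-\epsilon}$, and recall that $H_{\Bog,N}$ is unitarily equivalent to the standard Bogoliubov Hamiltonian $H_\Bog$ of \eqref{stan}, whose spectrum was computed in \eqref{hbogoliubov}. The whole problem thus reduces to showing that the remainder $R_{N,-\epsilon}$ of \eqref{rnepsilon} contributes, on the relevant low-energy states, no more than $CN^{-1/2}L^{d/2+3}(\kappa+L^d)^{3/2}$.

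The first genuine step is an a priori bound on the number of excited particles. Restricting attention to $\Ran\one_{[0,\kappa]}(H_N-E_N)$, I would first establish a rough lower bound $E_N\geq\tfrac12\hat v(\0)(N-1)+E_\Bog-(\text{crude error})$ by dropping the manifestly positive pieces of $H_{N,-\epsilon}^\ext$ and using the explicit ground state energy of $H_\Bog$. Combined with the strict positivity of the dispersion $e_\p=|\p|\sqrt{|\p|^2+2\hat v(\p)}$ and the Bogoliubov relations \eqref{(cp-sp)2}--\eqref{s(c-s)}, this yields on the low-energy window an estimate of the form $\langle N^>\rangle\lesssim\kappa+L^d$, together with its second-moment companion $\langle (N^>)^2\rangle\lesssim(\kappa+L^d)^2$; the additive $L^d$ reflects the zero-point depletion $\sum_{\p\neq\0}s_\p^2$, which scales like the volume since $\hat v\in L^1$. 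The bound $\langle N^>\rangle\leq\tfrac{L^2}{(2\pi)^2}\langle T\rangle$ recorded before Lemma \ref{rhs} supplies the companion control of the kinetic energy.

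Next I would control $R_{N,-\epsilon}$ term by term, exploiting the identity $N_0^\ext=N-N^>$ on $\cH_N^\ext$. The diagonal pieces carrying an explicit factor $\tfrac1N N^>$ or $\tfrac{\epsilon}{N}N_0^\ext$ are estimated directly against $H_{\Bog,N}+\text{const}$ using the a priori bound. The delicate term is the off-diagonal pairing $\tfrac12\sum_{\p\neq\0}\hat v(\p)\big(\big(\tfrac{\sqrt{(N_0^\ext-1)N_0^\ext}}{N}-1\big)b_\p b_{-\p}+\hc\big)$: here the coefficient $\tfrac{\sqrt{(N_0^\ext-1)N_0^\ext}}{N}-1$ is of order $N^>/N$, and I would split it as $\tfrac1{\sqrt N}\cdot\tfrac{N^>}{\sqrt N}$, bound the pairing operator $\sum_\p\hat v(\p)b_\p b_{-\p}$ by Cauchy--Schwarz against $\sum_\p(|\p|^2+\hat v(\p))b_\p^\dagger b_\p\leq H_{\Bog,N}+\text{const}$, and absorb $N^>/\sqrt N\lesssim\sqrt{\kappa+L^d}$ from the a priori bound. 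The quadratic term $(1+\epsilon^{-1})\tfrac{1}{2N}v(\0)L^d N^>(N^>-1)$ is of size $\epsilon^{-1}L^d(\kappa+L^d)^2/N$. Collecting all contributions and optimizing $\epsilon$ (balancing the $\epsilon\,N_0^\ext$ piece against the $\epsilon^{-1}$ quadratic piece) produces the stated power $N^{-1/2}L^{d/2+3}(\kappa+L^d)^{3/2}$; the hypothesis $L^{d+2}(L^d+\kappa)\leq cN$ is exactly what keeps $N^>/N$ small and all remainders subordinate.

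Finally, I would assemble the pieces with the min-max principle: the operator inequality $H_{N,-\epsilon}^\ext\geq\tfrac12\hat v(\0)(N-1)+H_{\Bog,N}-(\text{error})$, once the error is controlled on the spectral subspace cut out by $\one_{[0,\kappa]}(H_N-E_N)$, transfers to the $\vecsp$ inequality for the truncated operator $\one_{[0,\kappa]}(H_N-E_N)H_N$, after which the unitary equivalence $H_{\Bog,N}\cong H_\Bog$ replaces $\vecsp(H_{\Bog,N})$ by $\vecsp(H_\Bog)$. I expect the main obstacle to be the self-consistent handling of the pairing remainder: the a priori bound on $N^>$ must carry the correct $\kappa+L^d$ dependence and volume scaling to close the estimate, yet it is itself obtained from an inequality of the same type, so care is needed both to avoid circularity and to prevent the volume factors from accumulating into the exponential $\e^{cL^{d/2}}$ that a naive adaptation of \cite{Seiringer2011} would produce.
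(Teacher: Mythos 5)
Your skeleton is the paper's: the bound $H_N\geq H_{N,-\epsilon}$, the splitting $H_{N,-\epsilon}^\ext=\tfrac12\hat v(\0)(N-1)+H_{\Bog,N}+R_{N,-\epsilon}$, a lower bound on $\one_\kappa^N R_{N,-\epsilon}\one_\kappa^N$ obtained from moment bounds on $N^>$ over the low-energy window, the optimization $\epsilon\sim N^{-1/2}L^{d/2+1}(L^d+\kappa)^{1/2}$, and the min-max principle applied to the truncated operator. (The Rayleigh--Ritz detour through the spectrum of the full extended operator in your first paragraph points the wrong way: once you pass to $\vecsp(H_{N,-\epsilon}^\ext)$ you lose the spectral localization $\one_\kappa^N$ on which all control of $N^>$ rests; the paper simply sandwiches with $\one_\kappa^N$, whose range lies in the physical space, so that $N_0^\ext=N_0$ there.)

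The genuine gap is in the a priori moment bounds on $N^>$. Your proposed derivation of $\langle N^>\rangle$ from ``the strict positivity of the dispersion $e_\p$'' is circular: to convert $H_N-E_N\leq\kappa$ into a bound on $H_{\Bog,N}-E_\Bog$ you must first control $R_{N,-\epsilon}$, which is precisely what the moment bounds are for, and the crude absorption $(N^>)^2\leq NN^>$ does not close the loop because the $\epsilon^{-1}L^dN^{-1}(N^>)^2$ term would then force $\epsilon\gtrsim L^{d+2}$. The paper avoids any Bogoliubov input here: it uses $N^>\leq\tfrac{L^2}{(2\pi)^2}T\leq CL^2(H_N-E_N+L^d)$, which follows from the positive-definiteness of $v$ and the constant trial function alone, giving $\one_\kappa^NN^>\one_\kappa^N\leq CL^2(L^d+\kappa)$. (Note the $L^2$: your stated moments without the powers of $L$ are inconsistent with your stated final error $N^{-1/2}L^{d/2+3}(\kappa+L^d)^{3/2}$, which arises exactly from balancing $\epsilon L^2(L^d+\kappa)$ against $\epsilon^{-1}N^{-1}L^{d+4}(L^d+\kappa)^2$.) More seriously, the second-moment bound $\one_\kappa^N(N^>)^2\one_\kappa^N\leq CL^4(L^d+\kappa)^2$ is the crux of the whole lower bound and you give no route to it: one cannot square an operator inequality, $(N^>)^2\leq NN^>$ is far too weak (it leaves an error that does not vanish as $N\to\infty$), and the estimate $(H_\Bog-E_\Bog)^2\gtrsim L^{-4}(N^>)^2$ is only usable on the $H_\Bog$-spectral subspace, which is the upper-bound situation, not this one. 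The paper's Lemma \ref{lemma2} obtains it from the $N$-body structure, writing $\langle\Psi|N^>T|\Psi\rangle$ in terms of $\langle\Psi|N^>(H_N-E_N-\kappa/2)|\Psi\rangle$ plus interaction terms via the symmetrization identities \eqref{Q1} and \eqref{Q1Q2}, and then solving the resulting quadratic inequality in $\langle\Psi|N^>T|\Psi\rangle^{1/2}$. Without this ingredient the estimate of the $\epsilon^{-1}$-term in Lemma \ref{lami} does not close.
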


The proof of the lower bound starts with estimates analogous to Lemmas 1 and 2 of \cite{Seiringer2011}. Note that in these estimates all operators involve the physical Hilbert space.

\begin{lemma}\label{lemma1}
The ground state energy $E_N$ of $H_N$ satisfies the bounds
\begin{equation}
 0\geq E_N - \frac{1}{2}\left(N-1\right)\hat{v}(\0)\geq \frac{1 }{2}\big(\hat v(\0)-L^d v(\0)\big). \label{lemma1a}
\end{equation}
 \label{lemma1b}  
\end{lemma}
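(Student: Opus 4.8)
The plan is to prove the two bounds separately: the upper one by a one-line variational argument, and the lower one by discarding the non-negative kinetic energy and exploiting the positive definiteness $\hat v(\p)\geq0$ of the interaction.

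For the upper bound $E_N\leq\frac12(N-1)\hat v(\0)$, I would take as trial vector the fully condensed state $\Psi_0$ in which all $N$ particles occupy the zero-momentum mode (the constant function on $\Lambda^N$). Then the kinetic energy vanishes, and in the second-quantized form of \eqref{hamiltonian}, namely $H=\sum_\p\p^2a_\p^\dagger a_\p+\frac{1}{2N}\sum_{\p,\q,\kk}\hat v(\kk)a_{\p+\kk}^\dagger a_{\q-\kk}^\dagger a_\q a_\p$, the operator $a_\q a_\p$ annihilates $\Psi_0$ unless $\p=\q=\0$, and the result has nonzero overlap with $\Psi_0$ only when $\kk=\0$. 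Hence only the term $\frac{1}{2N}\hat v(\0)N_0(N_0-1)$ survives, evaluated at $N_0=N$, giving $\langle\Psi_0|H_N|\Psi_0\rangle=\frac{1}{2N}\hat v(\0)N(N-1)=\frac12(N-1)\hat v(\0)$, which is the claimed upper bound.

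For the lower bound, since $T=-\sum_i\Delta_i^L\geq0$ it suffices to estimate the interaction from below at the operator level. Writing $v^L(\x)=\frac{1}{L^d}\sum_\p\hat v(\p)\e^{\ii\p\x}$ and introducing the density operators $\rho_\p=\sum_{k=1}^N\e^{-\ii\p\x_k}$, I would use
\[
\sum_{i,j=1}^N v^L(\x_i-\x_j)=\frac{1}{L^d}\sum_\p\hat v(\p)\,\rho_\p^\dagger\rho_\p .
\]
Because $\hat v(\p)\geq0$, each summand is a non-negative operator; keeping only $\p=\0$ (where $\rho_\0=N$) yields $\sum_{i,j}v^L(\x_i-\x_j)\geq\frac{N^2}{L^d}\hat v(\0)$. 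Subtracting the $N$ diagonal terms, each equal to $v^L(\0)=v(\0)$, gives $2\sum_{i<j}v^L(\x_i-\x_j)\geq\frac{N^2}{L^d}\hat v(\0)-N v(\0)$, so after multiplying by $\frac{L^d}{N}$,
\[
\frac{L^d}{N}\sum_{i<j}v^L(\x_i-\x_j)\geq\frac{N}{2}\hat v(\0)-\frac{L^d}{2}v(\0)=\frac12(N-1)\hat v(\0)+\frac12\big(\hat v(\0)-L^d v(\0)\big).
\]
Together with $T\geq0$ this is exactly the asserted lower bound, and passing to $E_N=\inf\sp H_N$ is immediate since everything holds as an operator inequality.

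This statement is elementary and I expect no genuine analytic obstacle. The only two points needing care are bookkeeping of the diagonal self-interaction — identifying the constant $L^d v(\0)$ with $\sum_\p\hat v(\p)$, i.e. recognizing that the ``$v(\0)$'' appearing here is the value at the origin of the periodized potential, $v(\0)=v^L(\0)=\frac{1}{L^d}\sum_\p\hat v(\p)$ — and ensuring that the positive-definiteness discard is carried out at the operator level so that the variational principle applies. Both are routine.
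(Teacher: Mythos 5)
Your proof is correct and follows essentially the same route as the paper: the constant (fully condensed) trial state for the upper bound, and for the lower bound the positivity of $\hat v(\p)$ applied to the density--density decomposition $\sum_{i,j}v^L(\x_i-\x_j)=\frac{1}{L^d}\sum_\p\hat v(\p)|\rho_\p|^2$, keeping only the $\p=\0$ mode and subtracting the diagonal. Your explicit remark that the constant $v(\0)$ in the bound is really $v^L(\0)=\frac{1}{L^d}\sum_\p\hat v(\p)$ is a point the paper's own proof glosses over, and is the right way to read the statement.
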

\begin{proof}
The upper bound to the ground state energy follows by using a constant trial wave function $\Psi=L^{-Nd/2}$, which gives
\begin{equation}
\frac{1}{2}(N-1)\hat v(\0)\geq E_N.\label{lemma1acalc2}
\end{equation}

Using $\hat{v}(\p)\geq 0$ for every $\p\in\frac{2\pi}{L}\ZZ^{d}$ we obtain
 $\sup\limits_\x v(\x)=v(\0)$. Moreover, 
\begin{equation*}
\frac{1}{2L^d}\sum_{\p\in\frac{2\pi}{L}\ZZ^{d}\setminus\{\0\}}\hat{v}(\p)\left|\sum_{i=1}^{N}\e^{\ii\p\x_{j}}\right|^{2}\geq 0.
\end{equation*}
This is equivalent to
\begin{equation}
\sum_{1\leq i<j\leq N}v(\x_{i}-\x_{j})\geq\frac{N^{2}}{2L^d}\hat{v}(\0)-\frac{N}{2}v(\0). \label{lemma1acalc}
\end{equation}
Hence,
\begin{equation}
H_N\geq T + \frac{ L^d}{N}  \left(\frac{N^{2}}{2L^d}\hat{v}(\0)-\frac{N}{2}v(\0)\right),
\label{lemma1acalc1}
\end{equation}
and so
\[E_N\geq \frac{ L^d}{N}   \left(\frac{N^{2}}{2L^d}\hat{v}(\0)-\frac{N}{2}v(\0)\right).\]
\end{proof} 

  Let $\kappa\geq0$. 
For brevity we  introduce the following notation for  the spectral
projection onto  the spectral subspace of $H_N$ corresponding to the
energy less than or equal to $E_N+\kappa$:
\[\one_{\kappa}^N:=\one_{[0,\kappa]}(H_N-E_N).\]
 $\one_{\kappa}^N$ can be  understood as a projection acting on the extended space with range in the physical space.

\begin{lemma} There exists $C$ such that
\begin{equation}
 N^>\leq CL^{2}(H_N-E_N+L^d).\label{pafa}
 \end{equation}
Consequently,
\begin{eqnarray}
\label{eq6a}
\one_{\kappa}^NN^{>}\one_{\kappa}^N&\leq &C L^{2}\left(L^d+\kappa\right).
\end{eqnarray}
\end{lemma}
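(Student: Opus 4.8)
The plan is to derive the operator inequality \eqref{pafa} by controlling the kinetic energy $T$ from above by $H_N-E_N$ up to a term of order $L^d$, and then to invoke the one-body bound $N^>\le\frac{L^2}{(2\pi)^2}T$ recorded in the previous subsection. The consequence \eqref{eq6a} will then follow by a routine sandwiching with the spectral projection $\one_\kappa^N$.

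First I would bound $T$ from above. The positivity $\hat v(\p)\ge0$ was already exploited in the proof of Lemma \ref{lemma1} to establish \eqref{lemma1acalc1}, which after simplification reads $H_N\ge T+\frac N2\hat v(\0)-\frac{L^d}{2}v(\0)$, i.e. $T\le H_N-\frac N2\hat v(\0)+\frac{L^d}{2}v(\0)$. The key point is that this bound is almost exactly compensated by the size of the ground state energy: by the upper bound \eqref{lemma1acalc2} we have $E_N\le\frac12(N-1)\hat v(\0)$, so writing $H_N=(H_N-E_N)+E_N$ gives
\begin{align*}
T&\le (H_N-E_N)+\tfrac12(N-1)\hat v(\0)-\tfrac N2\hat v(\0)+\tfrac{L^d}{2}v(\0)\\
&=(H_N-E_N)-\tfrac12\hat v(\0)+\tfrac{L^d}{2}v(\0)\ \le\ (H_N-E_N)+\tfrac{v(\0)}{2}L^d,
\end{align*}
where in the last step we dropped the nonpositive term $-\tfrac12\hat v(\0)$. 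Since $v(\0)=\sup_\x v(\x)$ is a finite constant depending only on $v$, and $H_N-E_N\ge0$, this yields $T\le C_0(H_N-E_N+L^d)$ with $C_0=\max\{1,v(\0)/2\}$.

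Feeding this into the one-body estimate $N^>\le\frac{L^2}{(2\pi)^2}T$ immediately gives $N^>\le\frac{C_0}{(2\pi)^2}L^2(H_N-E_N+L^d)$, which is \eqref{pafa} with $C=C_0/(2\pi)^2$. For \eqref{eq6a}, I would multiply \eqref{pafa} on both sides by $\one_\kappa^N$ and use the spectral theorem: on the range of $\one_\kappa^N=\one_{[0,\kappa]}(H_N-E_N)$ one has $\one_\kappa^N(H_N-E_N)\one_\kappa^N\le\kappa\,\one_\kappa^N$ and $\one_\kappa^N L^d\one_\kappa^N=L^d\one_\kappa^N$, whence $\one_\kappa^N N^>\one_\kappa^N\le CL^2(\kappa+L^d)$.

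There is no serious obstacle here; the whole content of the lemma is the cancellation in the second paragraph, where the order-$N$ contribution $\frac N2\hat v(\0)$ coming from the lower bound on the interaction is matched, to leading order, by the ground state energy $E_N$, leaving only a remainder of order $L^d$ (this is precisely why $L^d$, and not $N$, appears in \eqref{pafa}). The only points requiring a little care are the interpretation of all estimates as operator inequalities on the $N$-particle space and the uniform (in $L\ge1$) boundedness of $v(\0)$, which is immediate since $\hat v\ge0$ forces $v$ to attain its maximum at the origin with $v(\0)\le(2\pi)^{-d}\|\hat v\|_{L^1}$.
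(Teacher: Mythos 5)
Your proof is correct and follows essentially the same route as the paper: combine the lower bound \eqref{lemma1acalc1} on $H_N$ with the upper bound \eqref{lemma1acalc2} on $E_N$ to get $T\le (H_N-E_N)-\tfrac12\hat v(\0)+\tfrac12 L^d v(\0)\le C(H_N-E_N+L^d)$, then apply the one-body bound $N^>\le \frac{L^2}{(2\pi)^2}T$ and sandwich with $\one_\kappa^N$. Your explicit remark on the cancellation of the order-$N$ terms and on the uniform bound for $v(\0)$ only makes explicit what the paper leaves implicit.
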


\begin{proof}
Using first (\ref{lemma1acalc1}) and   
(\ref{lemma1acalc2}) we obtain
\begin{eqnarray*}
T&\leq &H_N-E_N-\frac{1}{2}\hat v(\0)+\frac{1}{2}L^d v(\0)\\
&\leq& C(H_N-E_N+L^d).
\end{eqnarray*}
By  (\ref{lemma1b-}) this implies (\ref{pafa}).
\end{proof}

\begin{lemma}\label{lemma2}
We have
\begin{eqnarray}
\label{eq6}
\one_{\kappa}^N(N^{>})^2\one_{\kappa}^N&\leq &C L^{4}\left(L^d+\kappa\right)^{2}.
\end{eqnarray}
\end{lemma}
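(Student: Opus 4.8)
The plan is to reduce the second-moment estimate to the first-moment estimate already available through the commutativity of $N^{>}$ with the kinetic energy, and then to control a single genuinely off-diagonal term by the explicit structure of the interaction. First I would note that $T=\sum_{\p\neq\0}|\p|^{2}a_{\p}^{\dagger}a_{\p}$ and $N^{>}=\sum_{\p\neq\0}a_{\p}^{\dagger}a_{\p}$ are both diagonal in the momentum occupation-number basis, hence commute, and that \eqref{lemma1b-} is the operator inequality $N^{>}\leq\frac{L^{2}}{(2\pi)^{2}}T$. Squaring a commuting pair of nonnegative self-adjoint operators preserves the order, so $(N^{>})^{2}\leq\frac{L^{4}}{(2\pi)^{4}}T^{2}$, and after sandwiching with $\one_\kappa^N$ it suffices to prove
\begin{equation*}
\one_\kappa^N T^{2}\one_\kappa^N\leq C(L^{d}+\kappa)^{2}\,\one_\kappa^N.
\end{equation*}

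Next I would exploit that $\one_\kappa^N=\one_{[0,\kappa]}(H_N-E_N)$ is a spectral projection of $H_N$, so both $H_N$ and $1-\one_\kappa^N$ commute with $\one_\kappa^N$. Writing $V:=H_N-T$ for the nonnegative rescaled interaction and inserting $1=\one_\kappa^N+(1-\one_\kappa^N)$ gives
\begin{equation*}
\one_\kappa^N T^{2}\one_\kappa^N=\bigl(\one_\kappa^N T\one_\kappa^N\bigr)^{2}+\one_\kappa^N T(1-\one_\kappa^N)T\one_\kappa^N.
\end{equation*}
For the first term I would use the bound $T\leq C(H_N-E_N+L^{d})$ established in the proof of \eqref{pafa} together with $H_N-E_N\leq\kappa$ on $\Ran\one_\kappa^N$, which yields $\one_\kappa^N T\one_\kappa^N\leq C(\kappa+L^{d})\one_\kappa^N$ and hence $(\one_\kappa^N T\one_\kappa^N)^{2}\leq C^{2}(\kappa+L^{d})^{2}\one_\kappa^N$, exactly the target order. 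Since $H_N$ is block-diagonal with respect to $\one_\kappa^N$ and $1-\one_\kappa^N$, one has $(1-\one_\kappa^N)T\one_\kappa^N=-(1-\one_\kappa^N)V\one_\kappa^N$, so the remaining cross term equals $\one_\kappa^N V(1-\one_\kappa^N)V\one_\kappa^N$.

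The main obstacle is this cross term. One cannot simply drop $1-\one_\kappa^N\leq1$, since that leaves $\one_\kappa^N V^{2}\one_\kappa^N$, which is of order $N^{2}$: only the \emph{energy-changing} part of $V$ survives between the two orthogonal spectral sectors. Passing to the momentum representation, the surviving contributions are the pairing term $\frac{1}{2N}\sum_{\p\neq\0}\hat{v}(\p)\,a_\0^{\dagger}a_\0^{\dagger}a_\p a_{-\p}$, its adjoint, and the cubic scattering terms, where the factor $a_\0^{\dagger}a_\0^{\dagger}$ interacts with the mean-field prefactor $\tfrac1N$. My plan is to estimate $\|(1-\one_\kappa^N)V\Psi\|$ for $\Psi\in\Ran\one_\kappa^N$ by Cauchy--Schwarz, retaining the $\tfrac1N$ factors honestly (bounding $N_0=N-N^{>}$ rather than crudely replacing it by $N$), and using $\hat v\in L^{2}$ so that $\sum_{\p}\hat v(\p)^{2}\leq CL^{d}$ together with the first-moment bound \eqref{eq6a}, namely $\one_\kappa^N N^{>}\one_\kappa^N\leq CL^{2}(L^{d}+\kappa)$. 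The standing hypothesis \eqref{condit1} ensures the mean-field factors render this contribution subleading, so that it too is controlled by $C(L^{d}+\kappa)^{2}$. Showing that the interaction couples the low-energy sector to its complement only this weakly is the heart of the argument; combining it with the diagonal bound and reinstating the factor $\frac{L^{4}}{(2\pi)^{4}}$ gives \eqref{eq6}, while the rest is routine bookkeeping.
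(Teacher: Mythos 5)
Your reduction to $\one_\kappa^N T^2\one_\kappa^N\leq C(L^d+\kappa)^2$ is fine (the operators $N^>$ and $T$ commute and are diagonal in the occupation basis, so squaring preserves the order), the identity $\one_\kappa^N T^2\one_\kappa^N=(\one_\kappa^N T\one_\kappa^N)^2+\one_\kappa^N T(1-\one_\kappa^N)T\one_\kappa^N$ is correct, and the diagonal term is handled correctly. The gap is the cross term, which you yourself identify as ``the heart of the argument'' but do not actually prove, and the tools you propose for it are insufficient. First, the claim that only the ``energy-changing'' part of $V$ in the momentum representation survives between $\one_\kappa^N$ and $1-\one_\kappa^N$ confuses spectral projections of the \emph{interacting} Hamiltonian $H_N$ with those of the free Hamiltonian or the number operators: $1-\one_\kappa^N$ annihilates nothing of $V$ except genuine multiples of the identity on $\cH_N$ (the $\kk=\0$ part $\tfrac{\hat v(\0)}{2}(N-1)$), so the best you can write is $\|(1-\one_\kappa^N)V\Psi\|^2\leq\langle\Psi|(V-c)^2|\Psi\rangle$ for a scalar $c$. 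Second, expanding $(V-c)^2$ one meets the square of the purely excited quartic part of $V$, which is only controlled by $\bigl(\tfrac{v(\0)L^d}{2N}\bigr)^2(N^>)^4$; bounding this requires the \emph{fourth} moment of $N^>$ on $\Ran\one_\kappa^N$, whereas only the first moment \eqref{eq6a} is available at this stage — the second moment is precisely what the lemma is establishing. Worse, since $H_N\Psi$ is (up to $O(\kappa)$) a multiple of $\Psi$ on this range, $\langle(V-c)^2\rangle$ is essentially $\langle(T-c')^2\rangle$, i.e.\ the very quantity you set out to bound, so the reduction is circular unless a genuinely new estimate is injected. A minor further point: you invoke \eqref{condit1}, but the lemma is stated and proved in the paper with no hypothesis relating $N$, $L$ and $\kappa$.

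The paper takes a different and non-circular route: it bounds the \emph{mixed} moment $\langle\Psi|N^>T|\Psi\rangle$ by writing $T=H_N-E_N-\tfrac\kappa2+(E_N+\tfrac\kappa2)-V$, using the bosonic symmetry to replace $N^>$ by $NQ_1$, splitting the interaction into the part not involving particle $1$ (controlled by the ground-state energy bound of Lemma \ref{lemma1}) and the part involving particle $1$ (controlled by Schwarz and the $P/Q$ decomposition). This produces an inequality of the form $X^2\leq c+aX$ for $X=\langle N^>T\rangle^{1/2}$, whose solution gives \eqref{eq5}, and then $(N^>)^2\leq\tfrac{L^2}{(2\pi)^2}N^>T$ yields \eqref{eq6}. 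The self-improving quadratic inequality is the device that lets one get the second moment out of first-moment information; your proposal is missing an analogue of it.
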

\begin{proof}
Let $\one_{\kappa}^N\Psi=\Psi$.
As in \cite{Seiringer2011},
\begin{eqnarray}
\langle\Psi|N^{>}T|\Psi\rangle&=&\langle\Psi|N^{>}(H_N-E_N-\frac{1}{2}\kappa)|\Psi\rangle\label{eq1}\\
&&+
N\Big\langle \Psi|Q_1\Big(E_N+\frac12\kappa-\frac{L^d}{N}\sum_{2\leq i<j\leq N}v(\x_{i}-\x_{j})\Big)|\Psi\Big\rangle\label{eq2}\\
&&
-L^d\Big\langle \Psi|Q_1\sum_{2\leq j\leq N}v(\x_{1}-\x_{j})|\Psi\Big\rangle\label{eq3}.
\end{eqnarray}
Using  Schwarz's inequality, the first term can be bounded as
\begin{eqnarray*}
|(\ref{eq1})|&\leq&\|N^>\Psi\|\,\Big\|H_N-E_N-\frac12\kappa\Big\|\\
&\leq&\frac\kappa2\langle\Psi|(N^>)^2\Psi\rangle^{1/2}.
\end{eqnarray*}
Let us estimate the second term. Using (\ref{lemma1acalc}) we get
\begin{eqnarray*}
E_N-\frac{L^d}{N}\sum_{2\leq i<j\leq N}v(\x_{i}-\x_{j})&\leq&\frac{1}{2}(N-1)\hat v(\0)+\frac{L^d}{2N}(N-1)v(\0)-\frac{1}{2N}(N-1)^2\hat v(\0)\\
&=&\frac{1}{2}\frac{N-1}{N}\big(\hat v(\0)+L^d v(\0)\big).
\end{eqnarray*}
Hence,
\begin{eqnarray*}
(\ref{eq2})&\leq&
\Big(\frac\kappa2+\frac12\frac{N-1}{N}\big(\hat v(\0)+L^dv(\0)\big)\Big)N\langle\Psi|Q_1|\Psi\rangle\\
&\leq &
\Big(\frac\kappa2+\frac12\big(\hat v(\0)+L^dv(\0)\big)\Big)\langle\Psi|N^>|\Psi\rangle
\end{eqnarray*}
Finally, let us consider the third term:
\begin{eqnarray*}
\langle\Psi|Q_1 v(\x_1-\x_2)|\Psi\rangle&=&
\langle \Psi|Q_1Q_2 v(\x_1-\x_2)|\Psi\rangle+
\langle \Psi|Q_1P_2 v(\x_1-\x_2)Q_2|\Psi\rangle\\
&&+\langle \Psi|Q_1P_2 v(\x_1-\x_2)Q_2|\Psi\rangle.
\end{eqnarray*}
Using Schwarz's inequality we obtain
\begin{eqnarray*} 
|\langle\Psi|Q_1 Q_2v(\x_1-\x_2)|\Psi\rangle|&\leq&
v(\0)\langle \Psi|Q_1Q_2|\Psi\rangle^{1/2},\\
|\langle\Psi|Q_1 P_2v(\x_1-\x_2)Q_2|\Psi\rangle|&\leq&
v(\0)\langle \Psi|Q_1|\Psi\rangle.
\end{eqnarray*}
Furthermore, denoting $Q_1\psi=\tilde{\psi}$, we have
\begin{eqnarray*}
\langle\Psi|Q_1 P_2v(\x_1-\x_2)P_2|\Psi\rangle= \langle\tilde{\Psi}| P_2v(\x_1-\x_2)P_2|\Psi\rangle= \\
\int\int\int\int \overline{\tilde{\Psi}}(\x_1,\x_2)v(\x_1-\x'_2)\Psi(\x_1,\x''_2)\d \x''_2 \d \x'_2 \d \x_2 \d \x_1 = 
\int v(\x)\d\x \langle \Psi|Q_1P_2|\Psi\rangle.
\end{eqnarray*}
Since $Q_1$ and $P_2$ are commuting orthogonal projections, we have 
$$\langle \Psi|Q_1P_2|\Psi\rangle=\langle \Psi|Q_1P_2 Q_1P_2|\Psi\rangle=\|Q_1P_2\Psi\|\geq 0.$$
Thus
$$\langle\Psi|Q_1 P_2v(\x_1-\x_2)P_2|\Psi\rangle\geq 0.$$
Therefore, using (\ref{Q1}) and (\ref{Q1Q2})
\begin{eqnarray*}
|(\ref{eq3})|&\leq&
v(\0)L^d\Big(\sqrt{\frac{N-1}{N}}\langle\Psi|(N^>{-}1)N^>|\Psi\rangle^{1/2}
+\frac{N-1}{N}\langle\Psi|N^>|\Psi\rangle\Big)\\&\leq&
v(\0)L^d
\Big(\langle\Psi|(N^>)^2\Psi\rangle^{1/2}+
\langle\Psi|N^>|\Psi\rangle\Big).
\end{eqnarray*}
Now 
\begin{eqnarray}
\langle \psi|(N^{>})^{2}|\psi\rangle
&\leq & \frac{L^{2}}{(2\pi)^{2}}\langle \psi|N^{>}T|\psi\rangle.\label{eq4}
\end{eqnarray}
We can add the three estimates, use (\ref{eq4}) and obtain
\begin{eqnarray*}
\langle\Psi|N^>T|\Psi\rangle&
\leq&C(\kappa +L^d)\Big(\langle\Psi|(N^>)^2|\Psi\rangle^{1/2}+
\langle\Psi|N^>|\Psi\rangle\Big)\\
&\leq &CL^{2}(\kappa +L^d)^{2}\\&&+
CL(\kappa +L^d)
\langle\Psi|N^>T|\Psi\rangle^{1/2}.
\end{eqnarray*}
Setting
 $X:=\langle \psi|N^{>}T|\psi\rangle^{1/2}$ we can rewrite this as $X^{2}<c+aX$
in the obvious notation. Solving this inequality we get that
\[X^{2}\leq\frac{a^{2}}{2}+c+\sqrt{a^{2}+4c}.\]
This implies
\begin{eqnarray}\label{eq5}
\one_{\kappa}^NN^{>}T\one_{\kappa}^N&\leq &CL^{2}\left(L^d+\kappa\right)^{2}.\end{eqnarray}
If in addition we use  (\ref{eq4}), we obtain
 (\ref{eq6}).
\end{proof}

\begin{lemma} 
\begin{equation}
\sup_{0<\epsilon\leq1}\one_{\kappa}^NR_{N,-\epsilon}\one_{\kappa}^N
\geq -CN^{-1/2}L^{d/2+3}(L^d+\kappa)^{3/2}.
\end{equation}
\label{lami}\end{lemma}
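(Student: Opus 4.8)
The plan is to reduce the asserted operator inequality to a uniform lower bound on the quadratic form $\Psi\mapsto\langle\Psi|R_{N,-\epsilon}|\Psi\rangle$ as $\Psi$ ranges over the physical subspace $\Ran\one_\kappa^N$, with $\epsilon\in(0,1]$ chosen at the very end as a function of $L,N,\kappa$ only; the supremum over $\epsilon$ is then at least the value at this choice. On $\Ran\one_\kappa^N$ one has $N_0^\ext=N_0=N-N^>\ge0$, so every square root occurring is of a nonnegative operator, and since $U$ commutes with $a_\p,a_\p^\dagger$ for $\p\ne\0$ we get $b_\p^\dagger b_\p=a_\p^\dagger a_\p=:N_\p^>$ and $\sum_{\p\neq\0}b_\p^\dagger b_\p=N^>$. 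I will use throughout that $0\le\hat v(\p)\le\hat v(\0)$ and $\sum_{\p\neq\0}\hat v(\p)=L^d v^L(\0)\le CL^d$ (from $v^L(\0)\to v(\0)$), together with the two a priori bounds on the range proved just above, namely \eqref{eq6a} and \eqref{eq6}.

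For the number-conserving groups the estimates are direct. In $-\frac1N\sum_{\p\neq\0}\big(\hat v(\p)+\tfrac{\hat v(\0)}2\big)b_\p^\dagger b_\p N^>$ I bound the coefficient by $\tfrac32\hat v(\0)$ and use $\sum_\p b_\p^\dagger b_\p=N^>$ to get $\ge-\frac CN(N^>)^2$; the leftover $\frac{\hat v(\0)}{2N}N^>\ge0$ is discarded. By \eqref{eq6} this is $\ge-\frac CN L^4(L^d+\kappa)^2$ on the range, and \eqref{condit1} (which yields $L^{2-d}(L^d+\kappa)\le cN$, since $2-d\le d+2$) makes it already $\ge-CN^{-1/2}L^{d/2+3}(L^d+\kappa)^{3/2}$. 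The two $\epsilon$-dependent groups I only record: using $0\le N_0^\ext\le N$, the term $-\frac\epsilon N\sum_{\p\neq\0}(\hat v(\p)+\hat v(\0))b_\p^\dagger b_\p N_0^\ext$ is $\ge-C\epsilon N^>\ge-C\epsilon L^2(L^d+\kappa)$ by \eqref{eq6a}, while, since $1-\epsilon^{-1}\le0$ and $N^>(N^>-1)\le(N^>)^2$, the last term is $\ge-\frac C{\epsilon N}L^d(N^>)^2\ge-\frac C{\epsilon N}L^{d+4}(L^d+\kappa)^2$ by \eqref{eq6}.

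The delicate term is the off-diagonal pairing $\frac12\sum_{\p\neq\0}\hat v(\p)\big(f(N^>)b_\p b_{-\p}+\hc\big)$, where $f(N^>):=\frac{\sqrt{(N_0-1)N_0}}N-1$. The key is the elementary two-sided bound $-\frac{N^>+1}N\le f(N^>)\le-\frac{N^>}N\le0$, which gives $\|f(N^>)\Psi\|\le\frac1N\|(N^>+1)\Psi\|\le\frac CN L^2(L^d+\kappa)\,\|\Psi\|$ on the range via \eqref{eq6}. Since $f(N^>)$ is self-adjoint, Cauchy--Schwarz in the momentum index yields
\begin{align*}
\Big|\sum_{\p\neq\0}\hat v(\p)\langle\Psi|f(N^>)b_\p b_{-\p}|\Psi\rangle\Big|
\le\|f(N^>)\Psi\|\Big(\sum_{\p\neq\0}\hat v(\p)\Big)^{1/2}\Big(\sum_{\p\neq\0}\hat v(\p)\,\|b_\p b_{-\p}\Psi\|^2\Big)^{1/2},
\end{align*}
and the Hermitian conjugate only doubles this in modulus. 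Because $b_\p^\dagger b_\p$ commutes with $b_{-\p},b_{-\p}^\dagger$ one has $\|b_\p b_{-\p}\Psi\|^2=\langle\Psi|N_\p^>N_{-\p}^>|\Psi\rangle$, so $\sum_\p\|b_\p b_{-\p}\Psi\|^2\le\langle\Psi|(N^>)^2|\Psi\rangle$; combined with $\hat v(\p)\le\hat v(\0)$ and $\sum_\p\hat v(\p)\le CL^d$ the whole term is bounded in modulus by $\frac CN L^{d/2+4}(L^d+\kappa)^2$, which by \eqref{condit1} (equivalently $L^2(L^d+\kappa)\le cN$) is $\le CN^{-1/2}L^{d/2+3}(L^d+\kappa)^{3/2}$.

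Finally I collect the two $\epsilon$-dependent contributions $-C\big(\epsilon L^2(L^d+\kappa)+\tfrac1{\epsilon N}L^{d+4}(L^d+\kappa)^2\big)$ and optimize. The minimum is at $\epsilon_*=\big(L^{d+2}(L^d+\kappa)/N\big)^{1/2}$, with optimal value exactly $-CN^{-1/2}L^{d/2+3}(L^d+\kappa)^{3/2}$; by \eqref{condit1} $\epsilon_*\le\sqrt c$, so $\epsilon_*$ may be taken in $(0,1]$, and in the borderline regime $\epsilon_*>1$ one takes $\epsilon=1$, which changes only the constant (again using $L^{d+2}(L^d+\kappa)\le cN$). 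Adding this to the bounds of the previous two paragraphs and absorbing the finitely many constants into $C$ gives the claim. The main obstacle is precisely the off-diagonal term: one must use the cancellation $f(N^>)=O(N^>/N)$ \emph{and} the quadratic a priori bound \eqref{eq6} at the same time, since neither alone suffices to beat the $N^{-1/2}$ rate; all the remaining terms are bookkeeping controlled by the single hypothesis $L^{d+2}(L^d+\kappa)\le cN$.
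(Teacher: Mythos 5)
Your proof is correct and follows essentially the same route as the paper's: the same term-by-term decomposition of $R_{N,-\epsilon}$, the same reduction to the a priori bounds \eqref{eq6a} and \eqref{eq6} on powers of $N^>$ restricted to $\Ran\one_\kappa^N$, and the same choice $\epsilon\sim N^{-1/2}L^{d/2+1}(L^d+\kappa)^{1/2}$ legitimized by \eqref{condit1}. The only (harmless) variation is in the pairing term, where you use a form-level Cauchy--Schwarz together with the two-sided bound on $\tfrac{1}{N}\sqrt{(N_0-1)N_0}-1$ and the identity $\|b_\p b_{-\p}\Psi\|^2=\langle\Psi|\,a_\p^\dagger a_\p\, a_{-\p}^\dagger a_{-\p}\,\Psi\rangle$, whereas the paper applies the operator inequality $A^\dagger B+B^\dagger A\ge -A^\dagger A-B^\dagger B$ and $N-\sqrt{(N_0-1)N_0}\le 2N^>+1$ to arrive at the same $(N^>)^2$ control.
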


\begin{proof}
\begin{eqnarray}\one_{\kappa}^NR_{N,-\epsilon}\one_{\kappa}^N
&\geq&\one_{\kappa}^N\frac{1}{2}\sum_{\p\neq\0}\hat v(\p)\Big(\Big(\frac{\sqrt{(N_0-1)N_0}}{N}
-1\Big)b_\p b_{-\p}+\hc
\Big)\one_{\kappa}^N \notag\\
&&-\one_{\kappa}^N\frac{1}{N}\sum_{\p\neq\0}\big(\hat v(\p)+\frac{\hat v(\0)}{2}\big)b_\p^\dagger b_\p N^>\one_{\kappa}^N \notag\\
&&-\epsilon\one_{\kappa}^N\frac{1}{N}\sum_{\p\neq\0}\big(\hat v(\p)+\hat v(\0)\big)b_\p^\dagger b_\p N_0\one_{\kappa}^N\notag\\
&&-\epsilon^{-1}\one_{\kappa}^N\frac{1}{2N} v(\0)L^d (N^>)^2 \one_{\kappa}^N.
\label{pada} 
\end{eqnarray}
Note that the range of  $\one_{\kappa}^N$ is inside the physical space, so whenever possible we replaced $N_0^\ext$ by $N_0$.
It is easy to estimate from below
 various terms on the right of (\ref{pada}) by expressions involving $N^>$. 
The first term requires more work than the others. We have
\begin{eqnarray*}
N-\sqrt{(N_0-1)N_0}&=&\frac{2NN^>-(N^>)^2+N-N^>}
{N+\sqrt{(N-N^>-1)(N-N^>)}}\\
&\leq&2N^>+1.
\end{eqnarray*}
Then we use
\begin{eqnarray*}
\big(\sqrt{(N_0-1)N_0}-N\big)\sum_{\p\neq\0}\hat v(\p)b_\p b_{-\p}+\hc&\geq
&-\Big(\sum_{\p\neq\0}\hat v(\p)b_\p b_{-\p}\Big)^\dagger
\sum_{\p\neq\0}\hat v(\p)b_\p b_{-\p}\\
&&-\big(\sqrt{(N_0-1)N_0}-N\big)^2\\
&\geq&-C(N^>)^2-(2N^>+1)^2\\& \geq&
 -C_1\big((N^>)^2+1\big).
\end{eqnarray*}

To bound the third term 
we use   $N_0\leq N$. We obtain 
\begin{eqnarray*}
\one_{\kappa}^NR_{N,-\epsilon}\one_{\kappa}^N&\geq&-C\one_{\kappa}^N\frac{(N^>)^2+1}{N}\one_{\kappa}^N\\
&&-C\one_{\kappa}^N\frac{(N^>)^2}{N}\one_{\kappa}^N\\
&&-\epsilon C\one_{\kappa}^N N^>\one_{\kappa}^N\\
&&-\epsilon^{-1}C\one_{\kappa}^NL^d\frac{(N^>)^2}{N}\one_{\kappa}^N.
\end{eqnarray*}
Using that $0\leq \epsilon\leq1$ and $L\geq1$, we can partly absorb the first two terms in the fourth:
\begin{eqnarray*}
&\geq&-\frac{C}{N}\one_{\kappa}^N-\epsilon C\one_{\kappa}^N N^>\one_{\kappa}^N-\epsilon^{-1}C\one_{\kappa}^NL^d\frac{(N^>)^2}{N}\one_{\kappa}^N.
\end{eqnarray*}
By (\ref{pafa}) and (\ref{eq6}), this can be estimated by
 \begin{eqnarray}
&\geq&-CN^{-1} -\epsilon CL^{2}(L^d+\kappa)-
\epsilon^{-1}CN^{-1}L^{d+4}(L^d+\kappa)^2.
\label{errorlowerbound}
\end{eqnarray}
 Setting  $\epsilon= c^{-1/2}L^{d/2+1}(L^d+\kappa)^{1/2}N^{-1/2}$ in (\ref{errorlowerbound}), which by Condition (\ref{condit1}) is less than $1$,  
we bound it by
\[\geq\ -CN^{-1}-CN^{-1/2}L^{d/2+3}(L^d+\kappa)^{3/2}.\]
Using $L\geq1$, we can absorb the first term in the second. \end{proof}

\begin{proof}[Proof of Thm \ref{lower}]
 Recall inequality (\ref{hamiltonianboundlower}), which
implies for $0<\epsilon\leq1$
\begin{equation*}
\one_{\kappa}^N H_N\one_{\kappa}^N\geq
\one_{\kappa}^N
\left(\frac{1}{2}\hat v(\0)(N-1)+H_{\Bog,N}+R_{N,-\epsilon}\right)\one_{\kappa}^N.
\end{equation*}
Thus it suffices to apply Lemma \ref{lami} and  the min-max principle.
\end{proof}

\begin{proof}[Proof of Thm \ref{thm} (1)]
First set $\kappa=0$. Then Condition (\ref{condit1}) becomes Condition (\ref{war1a}) and we obtain Thm \ref{thm} (1a).

Next set $\kappa=K_N^j(\p)$. Then Condition (\ref{condit1}) is equivalent to the conjunction of Conditions (\ref{war1a}) and (\ref{war1b}). We obtain Thm \ref{thm} (1b).
\end{proof}

\subsection{Upper bound}
In this section we prove the following theorem, which implies the upper bound of Theorem \ref{thm}:
\begin{thm}
Let $c>0$. Then there exist $c_1>0$ and $C$ such that if $\kappa\geq0$ and \begin{eqnarray}
L^{d+2}(\kappa+L^{d-1})&\leq &cN,\label{condit2}\\
L^{2}(\kappa+L^{d-1})&\leq &c_1N\label{condit2a}
\end{eqnarray}
then 
\begin{eqnarray*}
\vecsp\big(H_N\big)&\leq&
\frac{1}{2}\hat{v}(\0)(N-1)+\vecsp\Big(\one_{[0,\kappa]}(H_\Bog-E_\Bog)H_\Bog
\Big)
\\&&+CN^{-1/2}L^{d/2+3}(\kappa+L^{d-1})^{3/2}.
\end{eqnarray*}
\label{upper}\end{thm}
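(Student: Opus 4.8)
The plan is to prove Theorem~\ref{upper} by the Rayleigh--Ritz principle, manufacturing trial states in the physical space $\cH_N$ out of the low-lying eigenvectors of the Bogoliubov Hamiltonian. Let $\Phi_1,\dots,\Phi_m$ be orthonormal eigenvectors of $H_{\Bog,N}$ on the extended space $\cH_N^\ext$ with eigenvalues $E_\Bog+K_\Bog^{j}\le E_\Bog+\kappa$; since $H_{\Bog,N}$ is unitarily equivalent to $H_\Bog$, these eigenvalues are exactly the entries of $\vecsp\big(\one_{[0,\kappa]}(H_\Bog-E_\Bog)H_\Bog\big)$ lying below $E_\Bog+\kappa$. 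I would set $\Phi_j^{\ph}:=\one_{[0,\infty[}(N_0^\ext)\Phi_j\in\cH_N$ and take $\cK$ to be their linear span. Because $\cK\subset\cH_N$ and, by \eqref{hamiltonianboundupper}, $H_N\le H_{N,\epsilon}=H_{N,\epsilon}^\ext\big|_{\cH_N}$ as quadratic forms, the Rayleigh--Ritz and min--max principles give $\vecsp(H_N)\le\vecsp\big(P_\cK H_{N,\epsilon}^\ext P_\cK\big|_\cK\big)$. Using the decomposition \eqref{hamextepsilon}, namely $H_{N,\epsilon}^\ext=\tfrac12\hat v(\0)(N-1)+H_{\Bog,N}+R_{N,\epsilon}$, it then suffices to show that the Gram matrix $[\langle\Phi_j^\ph|\Phi_k^\ph\rangle]$ is close to the identity, that $[\langle\Phi_j^\ph|H_{\Bog,N}|\Phi_k^\ph\rangle]$ is close to $\operatorname{diag}\big(E_\Bog+K_\Bog^{j}\big)$, and that $[\langle\Phi_j^\ph|R_{N,\epsilon}|\Phi_k^\ph\rangle]$ is small in norm.

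Two ingredients make this work: a priori bounds on the excitation number $N^{>}$ over the low-energy sector of $H_\Bog$, and the smallness of the non-physical parts $\Phi_j^\nph:=\one_{]-\infty,0[}(N_0^\ext)\Phi_j$. For the former I would use the explicit diagonalizing transformation $S$ of \eqref{sas}: writing $\Phi_j=S\Psi_j$ with $\Psi_j$ in the spectral subspace of $\sum_{\p\neq\0}e_\p a_\p^\dagger a_\p$ below $\kappa$, one computes $S^\dagger N^{>} S=\sum_{\p\neq\0}(c_\p^2+s_\p^2)a_\p^\dagger a_\p+\sum_{\p\neq\0}s_\p^2-\sum_{\p\neq\0}c_\p s_\p(a_\p^\dagger a_{-\p}^\dagger+a_\p a_{-\p})$. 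Together with the elementary bound $(c_\p^2+s_\p^2)/e_\p\le CL^2$ on $\tfrac{2\pi}{L}\ZZ^d\setminus\{\0\}$ and the convergence of the lattice sums $\sum_{\p\neq\0}s_\p^2$ and $\sum_{\p\neq\0}c_\p s_\p$, this yields, with $\one_\kappa^{\Bog}:=\one_{[0,\kappa]}(H_{\Bog,N}-E_\Bog)$, bounds of the form $\one_\kappa^{\Bog}N^{>}\one_\kappa^{\Bog}\le CL^2(L^{d-1}+\kappa)$ and $\one_\kappa^{\Bog}(N^{>})^2\one_\kappa^{\Bog}\le CL^4(L^{d-1}+\kappa)^2$, the Bogoliubov-side analogues of \eqref{eq6a} and \eqref{eq6}. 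The non-physical part $\Phi_j^\nph$ is supported on $\{N^{>}>N\}$, so Chebyshev's inequality and the second-moment bound give $\|\Phi_j^\nph\|^2\le N^{-2}\langle\Phi_j|(N^{>})^2|\Phi_j\rangle\le C\big(L^2(L^{d-1}+\kappa)/N\big)^2$, which is small precisely under the auxiliary condition \eqref{condit2a} with $c_1$ chosen small; this simultaneously controls the Gram matrix, the claim $\dim\cK=m$, and the cross terms $\langle\Phi_j^\nph|H_{\Bog,N}|\Phi_k\rangle$ that arise when one replaces $\Phi_j^\ph$ by $\Phi_j$.

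Finally, $[\langle\Phi_j^\ph|R_{N,\epsilon}|\Phi_k^\ph\rangle]$ is estimated exactly as in the proof of Lemma~\ref{lami}, term by term from the explicit form \eqref{rnepsilon}, using $N_0^\ext\le N$ on the physical part, the bound $N-\sqrt{(N_0-1)N_0}\le 2N^{>}+1$, and the two moment estimates above; this produces an error controlled, in analogy with \eqref{errorlowerbound}, by $CN^{-1}+\epsilon\,CL^2(L^{d-1}+\kappa)+\epsilon^{-1}CN^{-1}L^{d+4}(L^{d-1}+\kappa)^2$. Choosing $\epsilon\simeq L^{d/2+1}(L^{d-1}+\kappa)^{1/2}N^{-1/2}$, which is $\le 1$ by \eqref{condit2}, gives the claimed bound $CN^{-1/2}L^{d/2+3}(\kappa+L^{d-1})^{3/2}$; Theorem~\ref{thm}(2) then follows by specializing $\kappa=0$ (yielding \eqref{thmnier3} under \eqref{war2a}--\eqref{war2a+}) and $\kappa=K_\Bog^{j}(\p)$ (yielding \eqref{thmnier4}), just as in the lower-bound case. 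The main obstacle, and the place where the argument is genuinely more delicate than the lower bound, is the trial-state construction: since the Bogoliubov eigenvectors carry no sharp particle number and $N^{>}$ is an extensive quantity of size $\sim L^d$, one must keep the non-physical tail negligible while tracking the volume dependence of every term of $R_{N,\epsilon}$ accurately enough to land the stated powers of $L$, all under the two scaling conditions \eqref{condit2}--\eqref{condit2a}.
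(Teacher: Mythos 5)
Your overall strategy coincides with the paper's: both proofs are Rayleigh--Ritz arguments in which the trial space is the low-energy spectral subspace of $H_{\Bog,N}$ cut down to the physical sector of $\cH_N^\ext$, and both rest on the same pillars, namely the moment bounds $\one_\kappa^\Bog N^>\one_\kappa^\Bog\leq CL^{2}(L^{d-1}+\kappa)$ and $\one_\kappa^\Bog (N^>)^2\one_\kappa^\Bog\leq CL^{4}(L^{d-1}+\kappa)^2$ (which the paper derives from the operator inequalities $H_\Bog-E_\Bog\geq C_1L^{-2}N^>-C_2L^{d-1}$ and $(H_\Bog-E_\Bog)^2\geq C_1L^{-4}(N^>)^2-C_2L^{2d-2}$, the quartic one requiring a nontrivial algebraic lemma that you assert but do not carry out), together with the choice $\epsilon\simeq L^{d/2+1}(L^{d-1}+\kappa)^{1/2}N^{-1/2}$ in the estimate of $R_{N,\epsilon}$, legitimate under \eqref{condit2}. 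The one genuine structural difference is the cutoff to the physical space: the paper replaces your sharp projection $\one_{[0,\infty[}(N_0^\ext)$ by the smooth operator $A_N=G(N^>/N)$ and packages the trial map as the partial isometry $Z_\kappa=(\one_\kappa^\Bog A_N^2\one_\kappa^\Bog)^{-1/2}A_N$, whose initial space lies in $\cH_N$ because $G$ vanishes for $N^>\geq N$.

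That choice is not cosmetic, and it points at the one real gap in your sketch. When you expand $\langle\Phi_j^\ph|H_{\Bog,N}|\Phi_k^\ph\rangle$ you account for the Gram matrix and for the cross terms $\langle\Phi_j^\nph|H_{\Bog,N}|\Phi_k\rangle$, which are indeed harmless since they equal $(E_\Bog+K_\Bog^k)\langle\Phi_j^\nph|\Phi_k^\nph\rangle$; but you never address $\langle\Phi_j^\nph|H_{\Bog,N}|\Phi_k^\nph\rangle$. This term cannot be bounded by $\|H_{\Bog,N}\|\,\|\Phi_j^\nph\|\,\|\Phi_k^\nph\|$ because $H_{\Bog,N}$ is unbounded, and you cannot move $H_{\Bog,N}$ onto an eigenvector because the sharp projection does not commute with the pairing terms $b_\p b_{-\p}+\hc$. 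The paper handles the analogous quantity $X_\kappa A_N^\nph(H_\Bog-E_\Bog)A_N^\nph X_\kappa$ by symmetrizing it into pieces controlled by $(N^>)^2$ plus the double commutator $\tfrac12X_\kappa\big[A_N^\nph,[A_N^\nph,H_\Bog]\big]X_\kappa$, and it is precisely the smoothness of $G$ that makes this double commutator $O\big(N^{-2}L^{2}(\kappa+L^{d-1})\big)$: each commutator with $G(N^>/N)$ produces a factor $N^{-1}$ via the Fourier representation of $G$. With the sharp projection the double commutator gains no powers of $N$; you would instead have to exploit that $H_{\Bog,N}$ changes $N^>$ by at most $2$, so that $[P^\nph,H_{\Bog,N}]$ is localized on the shell $\{N-2\leq N^>\leq N+2\}$, and combine the form bound $\pm\sum_{\p\neq\0}\hat v(\p)(b_\p b_{-\p}+\hc)\leq C(N^>+L^d)$ with the Chebyshev estimate for $\|\one_{\{N^>\geq N-2\}}\Phi_j\|$ coming from the second-moment bound. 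This can be made to close within the stated error, but it is a step your proposal leaves open, and it is exactly the step the smooth-cutoff construction of the paper is designed to finesse.
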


For brevity, we set
\[\one_\kappa^\Bog:=\one_{[0,\kappa]}(H_{\Bog,N}-E_\Bog).\]
 From now on, to simplify the notation we will also write $H_\Bog$ instead of $H_{\Bog,N}$, even though this is an abuse of notation. ($H_{\Bog,N}$ is  unitarily equivalent, but strictly speaking distinct from (\ref{stan})).

We also set
\[d_\p:=Sb_\p S^\dagger\]
 where $S$ is defined as in \eqref{X}  with operators $a$'s replaced by $b$'s.
Clearly,
\begin{eqnarray*}
d_\p&=\ c_\p b_{\p}+s_\p b_{-\p}^\dagger,\ \ \ d_\p^\dagger&=\ c_\p b_\p^\dagger+s_{\p}b_{-\p}.
\end{eqnarray*}

\begin{lemma}
There exist $C_1,C_2$ such that
\begin{equation} H_\Bog-E_\Bog\geq C_1L^{-2}N^>-C_2L^{d-1}.\label{qe1}
\end{equation}
Consequently,
\begin{equation}
\one_\kappa^\Bog  N^>\one_\kappa^\Bog\leq CL^{2}(L^{d-1}+\kappa).
\label{qe1a}
\end{equation}
\label{lemi1}\end{lemma}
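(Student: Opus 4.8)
The plan is to reduce the statement to the diagonal form of the Bogoliubov Hamiltonian and then to control a single $L$-dependent constant. Recall from the diagonalization \eqref{hbogoliubov} that, with the quasiparticle operators $d_\p=Sb_\p S^\dagger=c_\p b_\p+s_\p b_{-\p}^\dagger$, one has $H_\Bog-E_\Bog=\sum_{\p\neq\0}e_\p d_\p^\dagger d_\p\ge0$, where the $d_\p$ obey the same CCR as the $b_\p$ and $N^>=\sum_{\p\neq\0}b_\p^\dagger b_\p$. Inverting the rotation gives $b_\p=c_\p d_\p-s_\p d_{-\p}^\dagger$, and the first step I would take is to substitute this into $N^>$ and Wick order. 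Using the symmetries $c_\p=c_{-\p}$, $s_\p=s_{-\p}$ to reindex, this yields
\[
N^>=\sum_{\p\neq\0}(c_\p^2+s_\p^2)d_\p^\dagger d_\p+\sum_{\p\neq\0}s_\p^2-\sum_{\p\neq\0}c_\p s_\p\bigl(d_\p^\dagger d_{-\p}^\dagger+d_{-\p}d_\p\bigr).
\]

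Since $c_\p,s_\p\ge0$ for a repulsive potential, the second step is to dominate the off-diagonal (pairing) term by the diagonal one. From $(d_\p^\dagger+d_{-\p})(d_\p+d_{-\p}^\dagger)\ge0$ one gets $-(d_\p^\dagger d_{-\p}^\dagger+d_{-\p}d_\p)\le d_\p^\dagger d_\p+d_{-\p}^\dagger d_{-\p}+1$; summing, the coefficient of $d_\p^\dagger d_\p$ collects into the perfect square $(c_\p+s_\p)^2$, so that
\[
N^>\le\sum_{\p\neq\0}(c_\p+s_\p)^2 d_\p^\dagger d_\p+\Sigma,\qquad \Sigma:=\sum_{\p\neq\0}\bigl(s_\p^2+c_\p s_\p\bigr).
\]
The key algebraic observation, which I would read off from \eqref{(cp-sp)2}, is that $(c_\p+s_\p)^2=\bigl((c_\p-s_\p)^2\bigr)^{-1}=\sqrt{|\p|^2+2\hat v(\p)}/|\p|=e_\p/|\p|^2$. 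Because $\p$ ranges over $\frac{2\pi}{L}\ZZ^d\setminus\{\0\}$ we have $|\p|^2\ge(2\pi/L)^2$, hence $(c_\p+s_\p)^2\le\frac{L^2}{4\pi^2}e_\p$ mode by mode, and therefore $\sum_{\p\neq\0}(c_\p+s_\p)^2 d_\p^\dagger d_\p\le\frac{L^2}{4\pi^2}\sum_{\p\neq\0}e_\p d_\p^\dagger d_\p=\frac{L^2}{4\pi^2}\bigl(H_\Bog-E_\Bog\bigr)$.

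The last step, and the one I expect to carry the real work, is the uniform control of $\Sigma$. Using the same identities one simplifies $s_\p^2+c_\p s_\p=\frac{\sqrt{|\p|^2+2\hat v(\p)}-|\p|}{2|\p|}=\frac{\hat v(\p)}{|\p|\bigl(|\p|+\sqrt{|\p|^2+2\hat v(\p)}\bigr)}\le\frac{\hat v(\p)}{2|\p|^2}$, so $\Sigma\le\frac12\sum_{\p\neq\0}\hat v(\p)/|\p|^2$. I would split this sum at $|\p|=1$: on the high-momentum part, $\hat v(\p)/|\p|^2\le\hat v(\p)$ together with $\hat v\in L^1$ makes the Riemann sum at most $CL^d$; on the low-momentum part I use $\hat v(\p)\le\hat v(\0)$ and estimate $\sum_{2\pi/L\le|\p|\le1}|\p|^{-2}$ by $(L/2\pi)^d\int_{2\pi/L\le|\p|\le1}|\p|^{-2}\d\p$. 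This integral is where the dimension enters and must be tracked carefully, but in every dimension it is bounded by $CL^{d+1}$, the worst case being $d=1$ where $\int_{2\pi/L}^{1}r^{-2}\d r\sim L$. Combining the three estimates gives $N^>\le\frac{L^2}{4\pi^2}(H_\Bog-E_\Bog)+CL^{d+1}$, which rearranges to \eqref{qe1} with $C_1=4\pi^2$ and a suitable $C_2$; finally, \eqref{qe1a} follows by sandwiching \eqref{qe1} with $\one_\kappa^\Bog$ and using $\one_\kappa^\Bog(H_\Bog-E_\Bog)\one_\kappa^\Bog\le\kappa\,\one_\kappa^\Bog$. The main obstacle is thus the $L$-uniform bookkeeping in $\Sigma$: one must convert sums to integrals with the correct $(L/2\pi)^d$ factor and confirm that the small-momentum singularity $\hat v(\p)/|\p|^2$ never produces a power of $L$ worse than $L^{d+1}$.
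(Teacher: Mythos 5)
Your proof is correct, and it is essentially the mirror image of the paper's. The paper bounds $H_\Bog-E_\Bog=\sum_{\p\neq\0}e_\p d_\p^\dagger d_\p$ from below by first pulling out a factor $e_\p\geq C/L$ to get $\frac{C}{L}SN^>S^\dagger$, and then expanding $SN^>S^\dagger$ in the $b$'s and discarding the pairing terms via the same inequality $-(b_\p^\dagger b_{-\p}^\dagger+b_\p b_{-\p})\leq b_\p^\dagger b_\p+b_{-\p}^\dagger b_{-\p}+1$ that you use, which produces the coefficient $(c_\p-s_\p)^2\geq C/L$ and the additive error $\sum_\p 2s_\p(c_\p-s_\p)\leq CL^d$. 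You instead expand $N^>$ in the $d$'s and bound it from above; the net mode-by-mode comparison is identical, because your identity $(c_\p+s_\p)^2=e_\p/|\p|^2$ is exactly the statement $e_\p(c_\p-s_\p)^2=|\p|^2$, i.e.\ the product of the paper's two separate $1/L$ factors collapsed into one exact relation. Your version is slightly cleaner on that point (it needs only the lattice gap $|\p|\geq 2\pi/L$, with no separate pointwise lower bound on $e_\p$), at the price of a more singular error sum: your $s_\p(c_\p+s_\p)\leq\hat v(\p)/(2|\p|^2)$ is unbounded near $\p=\0$ and forces the careful low-momentum bookkeeping yielding $\Sigma\leq CL^{d+1}$ in the worst case $d=1$, whereas the paper's $s_\p(c_\p-s_\p)\leq\tfrac12$ per mode gives $CL^d$ with no effort; after the respective prefactors $L^{-2}$ and $L^{-1}$ both routes land on the same $C_2L^{d-1}$. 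Your Riemann-sum control of $\sum_{0<|\p|\leq1}|\p|^{-2}$ is indeed the one step carrying real work, and it closes in every dimension, so the argument is complete.
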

\proof
Using \eqref{hbogoliubov} we have that 
\begin{eqnarray*}H_{\Bog}-E_\Bog&=&\sum_{\p\neq 0}e_{\p}Sb_{\p}^{\dagger}b_{\p}S^{\dagger}\\
&\geq&\sum_{\p\neq 0} \frac{\pi\sqrt{8\hat v(\0)}}{L}Sb_{\p}^{\dagger}b_{\p}S^\dagger=
\frac{\pi\sqrt{8\hat v(\0)}}{L}SN^{>}S^{\dagger}.\end{eqnarray*}
Now
\begin{eqnarray*}SN^{>}S^{\dagger}\ =\ \sum_{\pm\p\neq\0} \big(d_\p^\dagger d_\p+d_{-\p}^\dagger d_{-\p}\big)&=&
\sum_{\pm\p\neq\0}\Big( (c_\p^2+s_\p^2)\big(b_{\p}^\dagger b_{\p}+b_{-\p}^\dagger b_{-\p}\big)\notag\\
&&+ 2c_\p s_\p\big(b_{\p}^\dagger b_{-\p}^\dagger+b_{\p} b_{-\p}\big)+ 2s_\p^2\Big).
\end{eqnarray*}
(When we write $\pm \p$ under the summation symbol, we sum over all pairs $\{\p,-\p\}$). Using
\[b_{\p}^\dagger b_{-\p}^\dagger+b_{\p} b_{-\p}\geq
-\big(b_{\p}^\dagger b_{\p}+b_{-\p}^\dagger b_{-\p}+1\big)
\]we obtain
\begin{eqnarray} &&\sum_{\pm\p\neq\0}\big(d_\p^\dagger d_\p+d_{-\p}^\dagger d_{-\p}\big)\notag\\&\geq&\sum_{\pm\p\neq\0}\Big(
(c_\p-s_\p)^2\big(b_\p^\dagger b_\p+b_{-\p}^\dagger b_{-\p}\big)-2s_\p(c_\p-s_\p)\Big).\label{sns}
\end{eqnarray}
By \eqref{(cp-sp)2} we know that $\inf\limits_{\p\neq 0}(c_{\p}-s_{\p})^2\geq\frac{\sqrt{2}\pi}{\sqrt{\hat{v}(0)}L}$. Also, \eqref{s(c-s)} yields
\[
\frac{1}{L^d}\sum_{\pm\p\neq\0}s_{\p}(c_{\p}-s_{\p})<\infty,\]
uniformly in $L$.  Thus
\begin{eqnarray*}H_{\Bog}-E_\Bog &\geq& \frac{C}{L}SN^{>}S^{\dagger}\\&\geq& \frac{C_1}{L^2}\sum_{\pm\p\neq\0}\big(b_\p^\dagger b_\p+b_{-\p}^\dagger b_{-\p}\big)-\frac{C_2 L^{d-1}}{L^d}\sum_{\pm\p\neq\0}2s_\p(c_\p-s_\p)\\&
=&C_1L^{-2}N^>-C_2L^{d-1}.
\end{eqnarray*}
This proves (\ref{qe1}), which can be rewritten as
\begin{equation}
 N^>\leq C_1^{-1}L^{2}(H_\Bog-E_\Bog+C_2L^{d-1}).\label{qe1+}
 \end{equation}
By the definition of $\one_\kappa^\Bog$ we have
$$\one_\kappa^\Bog(H_\Bog - E_\Bog)\one_\kappa^\Bog \leq \kappa$$ 
which implies
(\ref{qe1a}). \qed
\\
\\
We shall now prove a similar lemma, but for quartic operators. The calculations are therefore a little bit more burdensome.
\begin{lemma}
Set
\begin{eqnarray*}
M&:=&\sum_{\p\neq\0}(c_\p-s_\p)^2b_\p^\dagger b_\p,\\
A_1&:=&\sum_{\p\neq\0}2s_\p(c_\p-s_\p),\\
A_2&:=&\sum_{\p\neq0}4(c_\p-s_\p)^2s^2_\p.\end{eqnarray*}
Then
\begin{equation}
(SN^>S^\dagger+A_1)^2\geq M^2-A_2.
\label{lemik}
\end{equation}
\end{lemma}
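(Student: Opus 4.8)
The plan is to turn this squared operator inequality into a single scalar estimate, by first rewriting $SN^>S^\dagger+A_1$ in a form in which $M$ is isolated. Recall from the proof of the preceding lemma that $S$ is unitary, so that $SN^>S^\dagger=\sum_{\p\neq\0}d_\p^\dagger d_\p$ with $d_\p=c_\p b_\p+s_\p b_{-\p}^\dagger$. Expanding $d_\p^\dagger d_\p$ by the canonical commutation relations, exactly as in the computation leading to \eqref{sns} but retaining the equality instead of discarding the positive contribution, I would first establish the identity
\begin{equation*}
SN^>S^\dagger+A_1=M+\tfrac{1}{2}A_1+P,\qquad P:=\sum_{\p\neq\0}c_\p s_\p\,(b_\p+b_{-\p}^\dagger)^\dagger(b_\p+b_{-\p}^\dagger).
\end{equation*}
Since $c_\p s_\p\geq0$ we have $P\geq0$, and also $A_1\geq0$, $M\geq0$; this exhibits $SN^>S^\dagger+A_1$ as $M$ plus the manifestly nonnegative operator $R:=P+\tfrac12 A_1$.

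Next I would square this decomposition. With $M$ and $R$ self-adjoint,
\begin{equation*}
(SN^>S^\dagger+A_1)^2=M^2+(MR+RM)+R^2,
\end{equation*}
so that the assertion \eqref{lemik} is equivalent to $MR+RM+R^2\geq-A_2$. Substituting $R=P+\tfrac12 A_1$ and discarding the nonnegative terms $A_1M$, $A_1P$ and $\tfrac14A_1^2$, it suffices to prove the reduced inequality $MP+PM+P^2\geq-A_2$.

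To establish this I would normal-order the cross term $MP+PM$. The diagonal and scalar parts of each summand, coming from $(b_\p+b_{-\p}^\dagger)^\dagger(b_\p+b_{-\p}^\dagger)=b_\p^\dagger b_\p+b_{-\p}^\dagger b_{-\p}+b_\p^\dagger b_{-\p}^\dagger+b_\p b_{-\p}+1$, commute with $M$ and hence contribute nonnegative operators (products of commuting nonnegative operators) to $MP+PM$; thus only the pairing terms $b_\p^\dagger b_{-\p}^\dagger$ and $b_\p b_{-\p}$ are genuinely indefinite. Using
\begin{equation*}
[M,b_\p^\dagger b_{-\p}^\dagger]=2(c_\p-s_\p)^2\,b_\p^\dagger b_{-\p}^\dagger
\end{equation*}
to commute $M$ through these terms and completing the square against $P^2$, I expect the indefinite pieces to reassemble into a nonnegative operator, leaving behind exactly the scalar $-A_2=-\sum_{\p\neq\0}4(c_\p-s_\p)^2 s_\p^2$, the constant produced by the canonical contractions.

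The main obstacle is precisely this final reorganization. The cross term $MP+PM$ is indefinite, and the obvious Cauchy--Schwarz bound $\pm(MP+PM)\leq\eta M^2+\eta^{-1}P^2$ is of no use here: it would reintroduce a term of order $M^2$ and yield only the trivial estimate $(SN^>S^\dagger+A_1)^2\geq0$. The whole point of the lemma is that one may retain the full $M^2$ on the right-hand side, losing only the explicit constant $A_2$; achieving this requires a careful bookkeeping of the normal-ordering contractions so that all operator-valued remainders are absorbed into $P^2$ (together with the commuting nonnegative pieces) rather than estimated crudely. Carrying out that cancellation cleanly, while tracking every contraction, is the delicate computational heart of the proof.
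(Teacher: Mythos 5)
Your setup is sound: the identity $SN^>S^\dagger+A_1=M+\tfrac12A_1+P$ with $P=\sum_{\p\neq\0}c_\p s_\p(b_\p^\dagger+b_{-\p})(b_\p+b_{-\p}^\dagger)$ is correct (it is the equality version of \eqref{sns}, with the discarded positive part retained), and since $A_1\geq 0$ is a scalar the reduction to $MP+PM+P^2\geq -A_2$ is legitimate. But the argument stops exactly where the lemma becomes nontrivial: your final paragraph only records the \emph{expectation} that the indefinite pieces ``reassemble into a nonnegative operator, leaving behind exactly the scalar $-A_2$,'' and you yourself identify this as the unperformed computational heart. As written, this is a gap, not a proof.

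The missing device is to sandwich $M$ rather than to normal-order against it. With $G_\q:=b_\q+b_{-\q}^\dagger$ one has the exact identity
\begin{equation*}
MG_\q^\dagger G_\q+G_\q^\dagger G_\q M=2G_\q^\dagger MG_\q+(c_\q-s_\q)^2\bigl(2b_\q^\dagger b_\q-2b_{-\q}^\dagger b_{-\q}-2\bigr);
\end{equation*}
multiplying by $c_\q s_\q\geq0$ and summing, the $b_\q^\dagger b_\q-b_{-\q}^\dagger b_{-\q}$ part cancels by the $\q\mapsto-\q$ symmetry, so $MP+PM=2\sum_{\q}c_\q s_\q G_\q^\dagger MG_\q-\sum_{\q\neq\0}2c_\q s_\q(c_\q-s_\q)^2\geq-\sum_{\q\neq\0}2c_\q s_\q(c_\q-s_\q)^2$, and $P^2$ is not even needed. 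This is the same mechanism as the paper's proof, which sandwiches the full operator between $d_\p^\dagger$ and $d_\p$ and applies the bound \eqref{sns} twice, once outside and once inside the sandwich. Two caveats. First, the scalar this produces is $\sum_{\q\neq\0}2c_\q s_\q(c_\q-s_\q)^2$, which for large $\q$ (where $c_\q>2s_\q$) exceeds $A_2=\sum_{\q\neq\0}4s_\q^2(c_\q-s_\q)^2$ termwise, so your hope of landing exactly on $-A_2$ cannot materialize by this route; the paper's own computation likewise ends with $\sum_{\pm\p\neq\0}4(c_\p-s_\p)^2c_\p s_\p$, i.e.\ the same constant, so the stated $A_2$ appears to be a slip of notation --- harmless for the later application, which only uses that the constant is $O(L^d)$. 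Second, your proposed alternative of commuting $M$ through $b_\p^\dagger b_{-\p}^\dagger$ and ``completing the square against $P^2$'' would force exactly the kind of Cauchy--Schwarz you correctly reject; the sandwich identity avoids it entirely.
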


\proof
Recall that when we write $\pm \p$ under the summation symbol, we sum over all pairs $\{\p,-\p\}$. We have
\begin{eqnarray}&&\Bigg(\sum_{\pm\p\neq\0}\big(
d_\p^\dagger d_\p+d_{-\p}^\dagger d_{-\p}+2s_\p(c_\p-s_\p)\big)\Bigg)^2\notag\\
&=&
\sum_{\pm\p,\pm \q\neq\0}\Big(
d_\p^\dagger \big(d_\q^\dagger d_\q+d_{-\q}^\dagger
d_{-\q}+2s_\q(c_\q-s_\q)\big)d_\p\notag\\&&\ \ 
+d_{-\p}^\dagger \big(d_\q^\dagger d_\q+d_{-\q}^\dagger
d_{-\q}+2s_\q(c_\q-s_\q)
\big)d_{-\p}\Big)\notag
\\
&&+\sum_{\pm\p,\pm \q\neq\0}2s_\p(c_\p-s_\p)\Big(d_\q^\dagger
d_\q+d_{-\q}^\dagger d_{-\q}+2s_\q(c_\q-s_\q)\Big)\notag\\
&&+
\sum_{\pm\p\neq\0}\big(
d_\p^\dagger d_\p+d_{-\p}^\dagger d_{-\p}\big).
\notag
\end{eqnarray}
Using \eqref{sns} we bound this from below by
\begin{eqnarray}&&\sum_{\pm\p,\pm \q\neq\0}(c_\q-s_\q)^2\Big(d_\p^\dagger
 \big(b_\q^\dagger b_\q+b_{-\q}^\dagger b_{-\q}\big)d_\p+
 d_{-\p}^\dagger\big(b_\q^\dagger b_\q+b_{-\q}^\dagger b_{-\q}\big)d_{-\p}\Big)\notag \\
&&+\sum_{\pm\p,\pm \q\neq\0}2s_\p(c_\p-s_\p)(c_\q-s_\q)^2\big(b_\q^\dagger b_\q+b_{-\q}^\dagger b_{-\q}\big)\notag
\\
&&+
\sum_{\pm\p\neq\0}\Big((c_\p-s_\p)^2\big(
b_\p^\dagger b_\p+b_{-\p}^\dagger b_{-\p}\big)-2s_\p(c_\p-s_\p)\Big).
\notag\\
&=&\sum_{\pm\p,\pm \q\neq\0}(c_\q-s_\q)^2\Big(b_\q^\dagger
 \big(d_\p^\dagger d_\p+d_{-\p}^\dagger d_{-\p}\big)b_\q+
 b_{-\q}^\dagger\big(d_\p^\dagger d_\p+d_{-\p}^\dagger
 d_{-\p}\big)b_{-\p}\Big)\notag \\
&&+\sum_{\pm\p\neq\0}(c_\p-s_\p)^2\Big(
2s_\p^2\big(b_\p^\dagger b_\p+b_{-\p}^\dagger b_{-\p}\big)+2c_\p
s_\p\big(b_\p^\dagger b_{-\p}^\dagger+b_\p b_{-\p}\big)+2s_\p^2\Big)
\notag\\
&&+\sum_{\pm\p,\pm \q\neq\0}2s_\p(c_\p-s_\p)(c_\q-s_\q)^2\big(b_\q^\dagger b_\q+b_{-\q}^\dagger b_{-\q}\big)\notag
\\
&&+
\sum_{\pm\p\neq\0}\Big((c_\p-s_\p)^2\big(
b_\p^\dagger b_\p+b_{-\p}^\dagger b_{-\p}\big)-2s_\p(c_\p-s_\p)\Big).
\notag\\&=&\sum_{\pm\p,\pm \q\neq\0}(c_\q-s_\q)^2\Big(b_\q^\dagger
 \big(d_\p^\dagger d_\p+d_{-\p}^\dagger
 d_{-\p}+2s_\p(c_\p-s_\p)\big)b_\q\notag\\
&&\ \ \ \ +
 b_{-\q}^\dagger\big(d_\p^\dagger d_\p+d_{-\p}^\dagger
 d_{-\p}+2s_\p(c_\p-s_\p)\big)b_{-\p}\Big)\notag \\
&&+\sum_{\pm\p\neq\0}(c_\p-s_\p)^2\Big(
\big(2s_\p^2+1\big)\big(b_\p^\dagger b_\p+b_{-\p}^\dagger b_{-\p}\big)+2c_\p
s_\p\big(b_\p^\dagger b_{-\p}^\dagger+b_\p b_{-\p}\big)\Big)
\notag
\\
&&+
\sum_{\pm\p\neq\0}\big((c_\p-s_\p)^22s_\p^2-2s_\p(c_\p-s_\p)\big).
\notag\end{eqnarray}
Using \eqref{sns} one more time, we bound this from below by
\begin{eqnarray}
&&
\sum_{\pm\p,\pm \q\neq\0}(c_\q-s_\q)^2(c_\p-s_\p)^2\Big(b_\q^\dagger
 \big(b_\p^\dagger b_\p+b_{-\p}^\dagger b_{-\p})b_\q+
 b_{-\q}^\dagger\big(b_\p^\dagger b_\p+b_{-\p}^\dagger b_{-\p}\big)b_{-\q}\Big)\notag\\
&&+\sum_{\pm\p\neq\0}(c_\p-s_\p)^2(2s_\p^2-2c_\p s_\p+1)
\big(b_\p^\dagger b_\p+b_{-\p}^\dagger b_{-\p}\big)\notag\\
&&+\sum_{\pm\p\neq\0}\big((-2c_\p
s_\p+2s_\p^2)(c_\p-s_\p)^2-2s_\p(c_\p-s_\p)\big)\notag\\
&=&\Bigg(\sum_{\pm\p\neq0}(c_\p-s_\p)^2\big(b_\p^\dagger
b_{\p}+b_{-\p}^\dagger b_{-\p}\big)\Bigg)^2-\sum_{\pm\p\neq0}4(c_\p-s_\p)^2 c_\p s_\p.\notag
\end{eqnarray}
\qed

\begin{lemma}
There exist $C_1,C_2$ such that
\begin{equation} \big(H_\Bog-E_\Bog\big)^2\geq C_1L^{-4}(N^>)^2-C_2L^{2d-2}.\label{qe2}
\end{equation}
Therefore,
\begin{equation}
\one_\kappa^\Bog ( N^>)^2\one_\kappa^\Bog \leq CL^{4}(L^{d-1}+\kappa)^2.
\label{qe2a}
\end{equation}
\end{lemma}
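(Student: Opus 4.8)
The plan is to square the linear bound \eqref{qe1} and feed in the quartic estimate \eqref{lemik}. First I would observe that both $H_\Bog-E_\Bog=\sum_{\p\neq\0}e_\p\,d_\p^\dagger d_\p$ (from \eqref{hbogoliubov}) and $SN^>S^\dagger=\sum_{\p\neq\0}d_\p^\dagger d_\p$ are diagonal in the number operators $d_\p^\dagger d_\p$ of the rotated quasiparticles, hence they are commuting non-negative self-adjoint operators. Since the proof of \eqref{qe1} in fact produces the intermediate bound $H_\Bog-E_\Bog\geq\frac{\pi\sqrt{8\hat v(\0)}}{L}SN^>S^\dagger\geq0$, and squaring preserves order for commuting non-negative operators (write $H^2-B'^2=(H-B')(H+B')$ with $H-B'\geq0$ and $H+B'\geq0$ commuting, so the product is positive), I obtain
\[(H_\Bog-E_\Bog)^2\geq \frac{8\pi^2\hat v(\0)}{L^2}\,(SN^>S^\dagger)^2.\]
Everything then reduces to a lower bound of the form $(SN^>S^\dagger)^2\geq c\,L^{-2}(N^>)^2-C\,L^{2d}$.

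For this I would use \eqref{lemik}, namely $(SN^>S^\dagger+A_1)^2\geq M^2-A_2$. Since $M=\sum_{\p\neq\0}(c_\p-s_\p)^2 b_\p^\dagger b_\p$ commutes with $N^>$ and $(c_\p-s_\p)^2\geq\frac{\sqrt2\pi}{\sqrt{\hat v(\0)}L}$ by \eqref{(cp-sp)2}, the same commuting-squaring argument gives $M\geq\frac{\sqrt2\pi}{\sqrt{\hat v(\0)}L}N^>\geq0$, whence $M^2\geq\frac{2\pi^2}{\hat v(\0)L^2}(N^>)^2$. Expanding $(SN^>S^\dagger+A_1)^2=(SN^>S^\dagger)^2+2A_1\,SN^>S^\dagger+A_1^2$ (recall $A_1,A_2$ are scalars) then yields
\[(SN^>S^\dagger)^2\geq\frac{2\pi^2}{\hat v(\0)L^2}(N^>)^2-A_2-2A_1\,SN^>S^\dagger-A_1^2.\]

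The main obstacle is the unwanted linear term $-2A_1\,SN^>S^\dagger$, since $SN^>S^\dagger$ reappears on the right. I would absorb it using the elementary inequality $2A_1\,SN^>S^\dagger\leq \epsilon\,(SN^>S^\dagger)^2+\epsilon^{-1}A_1^2$ (valid as $(\sqrt\epsilon\,SN^>S^\dagger-\epsilon^{-1/2}A_1)^2\geq0$ for self-adjoint $SN^>S^\dagger$ and scalar $A_1$), take $\epsilon=1$, and move the resulting $(SN^>S^\dagger)^2$ to the left. What makes this harmless is a size count of the two scalars: by \eqref{s(c-s)} the summand of $A_1=\sum_{\p\neq\0} 2s_\p(c_\p-s_\p)$ is bounded and $\frac1{L^d}$-summable uniformly in $L$, so $A_1=O(L^d)$, and likewise $A_2=4\sum_{\p\neq\0}[s_\p(c_\p-s_\p)]^2=O(L^d)$; hence $A_1^2+A_2=O(L^{2d})$. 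This gives $(SN^>S^\dagger)^2\geq \frac{\pi^2}{\hat v(\0)L^2}(N^>)^2-C\,L^{2d}$, and combining with the squared intermediate bound displayed above produces exactly \eqref{qe2}, with $C_1,C_2$ depending only on $v$ and $d$.

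Finally, \eqref{qe2a} follows by rewriting \eqref{qe2} as $(N^>)^2\leq C_1^{-1}L^4\big((H_\Bog-E_\Bog)^2+C_2L^{2d-2}\big)$ and sandwiching between the spectral projections $\one_\kappa^\Bog$, using that $(H_\Bog-E_\Bog)^2\leq\kappa^2$ on the range of $\one_\kappa^\Bog$. The resulting bound $C_1^{-1}L^4\kappa^2+C_1^{-1}C_2L^{2d+2}$ is then dominated by $C\,L^4(L^{d-1}+\kappa)^2$, since $(L^{d-1}+\kappa)^2\geq L^{2d-2}+\kappa^2$.
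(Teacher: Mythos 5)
Your proposal is correct and follows essentially the same route as the paper: square the intermediate bound $H_\Bog-E_\Bog\geq\frac{\pi\sqrt{8\hat v(\0)}}{L}SN^>S^\dagger$ (legitimate since both sides are commuting non-negative functions of the $d_\p^\dagger d_\p$), invoke \eqref{lemik}, absorb the cross term $2A_1\,SN^>S^\dagger$ by Cauchy--Schwarz (the paper keeps a general $\delta$ where you fix $\epsilon=1$), use $A_1,A_2=O(L^d)$ and $M\geq CL^{-1}N^>$, and finally sandwich with $\one_\kappa^\Bog$. Your write-up merely makes explicit some steps the paper leaves as ``easily concluded.''
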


\proof As in the proof of Lemma \ref{lemi1},
\begin{eqnarray}\big(H_{\Bog}-E_\Bog\big)^2&\geq&
\frac{\big(\pi\sqrt{8\hat v(\0)}\big)^2}{L^{2}}\big(SN^{>}S^{\dagger}\big)^2.\label{pqo}
\end{eqnarray}
For any $\delta>0$, Lemma \ref{lemik} implies
\[(1+\delta)(SN^>S^\dagger)^2+(1+\delta^{-1})A_1^2\geq M^2-A_2.\]
Moreover, the limits $\lim\limits_{L\to\infty}\frac{A_1}{L^d}$ and 
$\lim\limits_{L\to\infty}\frac{A_2}{L^d}$ exist.
Therefore,

\begin{eqnarray*}
\big(SN^{>}S^{\dagger}\big)^2&\geq&M^2-CL^{2d}.
\end{eqnarray*}
Using (\ref{pqo}) and $M\geq C_1L^{-1}N^>$, we easily conclude that (\ref{qe2}) holds. Hence
\begin{equation*}
(N^>)^2\leq C_2^{-1}L^{4}\big((H_\Bog-E_\Bog)^2+C_3L^{2d-2}\big),
\end{equation*}
which easily implies  (\ref{qe2a}).
\qed

Suppose now that
$G$ is a smooth non-negative function on $[0,\infty[$ such that
\begin{equation}G(s)=
\begin{cases}
 1, & \text{if }s\in[0,\frac{1}{3}]\\
 0, & \text{if }s\in[1,\infty[.
\end{cases} \label{F}
\end{equation}
Set
\[A_N:=G(N^>/N),\ \ A_N^\nph:=\one-A_N.\]
The operator $A_N$ will serve as a smooth approximation to the projection onto the physical space. 
Set \[Y_\kappa:=\one_\kappa^\Bog A_N.\]
\begin{lemma}
We have
\[\one_\kappa^\Bog-Y_\kappa Y_\kappa^\dagger=O\big(
L^{2}(\kappa+L^{d-1})N^{-1}\big).\]
\label{lle1}\end{lemma}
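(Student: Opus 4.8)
The plan is to reduce the statement to the a priori control of $N^>$ on the range of $\one_\kappa^\Bog$ already established in (\ref{qe1a}); apart from that, the argument is purely algebraic.

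First I would record that $A_N=G(N^>/N)$ is a bounded function of the self-adjoint operator $N^>$, hence self-adjoint, and that $\one_\kappa^\Bog$ is an orthogonal projection, so $(\one_\kappa^\Bog)^2=\one_\kappa^\Bog$. Consequently $Y_\kappa^\dagger=A_N\one_\kappa^\Bog$, and
\begin{equation*}
\one_\kappa^\Bog-Y_\kappa Y_\kappa^\dagger=\one_\kappa^\Bog-\one_\kappa^\Bog A_N^2\one_\kappa^\Bog=\one_\kappa^\Bog\big(\one-A_N^2\big)\one_\kappa^\Bog.
\end{equation*}
In particular the left-hand side is self-adjoint, and since $0\le A_N\le\one$ it is non-negative; it therefore suffices to bound it from above by the claimed order.

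The key step is a pointwise functional-calculus estimate for $\one-A_N^2$. Because $G\equiv1$ on $[0,\tfrac13]$ and $0\le G\le1$, the function $x\mapsto 1-G(x/N)^2$ vanishes for $x\le N/3$ and is bounded by $1$ everywhere; comparing with $3x/N$ (which is $\ge1$ precisely when $x\ge N/3$) gives the operator inequality $0\le\one-A_N^2\le\frac{3}{N}N^>$ via the spectral theorem for $N^>$. Sandwiching this between two copies of $\one_\kappa^\Bog$ and invoking (\ref{qe1a}), which reads $\one_\kappa^\Bog N^>\one_\kappa^\Bog\le CL^{2}(L^{d-1}+\kappa)$, I would obtain
\begin{equation*}
0\le\one_\kappa^\Bog-Y_\kappa Y_\kappa^\dagger\le\frac{3}{N}\,\one_\kappa^\Bog N^>\one_\kappa^\Bog\le\frac{3C}{N}L^{2}(L^{d-1}+\kappa),
\end{equation*}
which is exactly the asserted $O\big(L^{2}(\kappa+L^{d-1})N^{-1}\big)$ bound in operator norm.

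There is no serious obstacle here: the genuine content has already been spent in proving (\ref{qe1a}) through Lemma \ref{lemi1}. The only points requiring a little care are the self-adjointness manipulation (so that $Y_\kappa Y_\kappa^\dagger$ collapses to $\one_\kappa^\Bog A_N^2\one_\kappa^\Bog$ and the difference is a non-negative operator) and the verification that $\one-A_N^2\le\frac{3}{N}N^>$ really follows from the scalar inequality on the spectrum of $N^>$; both are routine, so the lemma is essentially a corollary of the excitation-number bound on the Bogoliubov side.
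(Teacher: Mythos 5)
Your proof is correct and follows essentially the same route as the paper: both arguments reduce to the scalar fact that $(1-G(s)^2)/s$ is bounded (since $G\equiv 1$ on $[0,\tfrac13]$) combined with the bound $\one_\kappa^\Bog N^>\one_\kappa^\Bog\leq CL^{2}(L^{d-1}+\kappa)$ from (\ref{qe1a}). The paper factors the difference as $\one_\kappa^\Bog (N^>/N)^{1/2}\bigl(\,\cdot\,\bigr)(N^>/N)^{1/2}\one_\kappa^\Bog$ with a bounded middle factor, while you phrase the same content as the operator inequality $0\leq\one-A_N^2\leq\frac{3}{N}N^>$; the difference is purely cosmetic.
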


\proof We have
\begin{eqnarray*}
&&\one_\kappa^\Bog-Y_\kappa Y_\kappa^\dagger\ =\ \one_\kappa^\Bog\big(1-G(N^>/N)^2\big)
\one_\kappa^\Bog \\
&=&
\one_\kappa^\Bog (N^>/N)^{1/2}
\Big(\big(1-G(N^>/N)^2\big)(N^>/N)^{-1}\Big)
(N^>/N)^{1/2}
\one_\kappa^\Bog
.\end{eqnarray*}
But by the spectral theorem
\[\|\big(1-G(N^>/N)^2\big)(N^>/N)^{-1}\|=\sup\limits_s\{|(1-G(s)^2)s^{-1}|\}<\infty,\] and by (\ref{qe1a})
\[(N^>/N)^{-1/2}
\one_\kappa^\Bog=O\big(
L(\kappa+L^{d-1})^{1/2}N^{-1/2}\big).\]
\qed

Let $0<c_0<1$. If 
\begin{equation} 
\|\one_\kappa^\Bog-Y_\kappa Y_\kappa^\dagger\|\leq c_0, \label{pqi}
\end{equation}
then
$Y_\kappa Y_\kappa^\dagger$ is invertible on $\Ran \one_\kappa^\Bog$.
 We will denote by
$\big(Y_\kappa Y_\kappa^\dagger\big)^{-1}$ the corresponding inverse. We set
\[X_\kappa:=
\big(Y_\kappa Y_\kappa^\dagger\big)^{-1/2}.\]
On the orthogonal complement of $\Ran\one_\kappa^\Bog$ we extend it by
$0$.

 By Lemma \ref{lle1} and Condition \eqref{condit2a} with a sufficiently small $c_1$, we can guarantee that \eqref{pqi} holds with, say, $c_0\leq1/2$. Therefore, in what follows $X_\kappa$ is well defined.

\begin{lemma}
\begin{equation}
 \one_\kappa^\Bog-X_\kappa=O\big(
L^{2}(\kappa+L^{d-1})N^{-1}\big).\label{pqo1}
\end{equation}
\end{lemma}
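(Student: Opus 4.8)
The plan is to read off (\ref{pqo1}) directly from Lemma \ref{lle1} together with elementary functional calculus, since $X_\kappa$ is by construction the inverse square root of $Y_\kappa Y_\kappa^\dagger$ on $\Ran\one_\kappa^\Bog$. Write $E:=\one_\kappa^\Bog$ and $T:=Y_\kappa Y_\kappa^\dagger$. Because $A_N$ is self-adjoint (it is a function of $N^>$) and $Y_\kappa=EA_N$, we have $T=EA_N^2E$, which is positive and satisfies $ET=TE=T$. Hence $T$ maps $\Ran E$ into itself and annihilates $(\Ran E)^\perp$. Since both $E$ and $X_\kappa$ also vanish on $(\Ran E)^\perp$, it suffices to estimate the difference restricted to $\Ran E$, where $E$ acts as the identity and $X_\kappa$ acts as $\big(T|_{\Ran E}\big)^{-1/2}$.

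Next I would localize the spectrum of $T|_{\Ran E}$. Set $\varepsilon:=\|E-T\|$. As $E$ and $T$ both preserve $\Ran E$ and vanish on its orthogonal complement, the difference $E-T$ respects the splitting $\Ran E\oplus(\Ran E)^\perp$ and vanishes on the second summand, so $\varepsilon=\|\mathrm{Id}_{\Ran E}-T|_{\Ran E}\|$. Lemma \ref{lle1} gives $\varepsilon=O\big(L^{2}(\kappa+L^{d-1})N^{-1}\big)$, and Condition (\ref{condit2a}) with $c_1$ sufficiently small guarantees $\varepsilon\le c_0\le 1/2$, exactly as used just above (\ref{pqi}). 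Consequently $\sigma\big(T|_{\Ran E}\big)\subset[1-\varepsilon,1+\varepsilon]\subset[1/2,3/2]$; in particular $T|_{\Ran E}$ is strictly positive, so $\big(T|_{\Ran E}\big)^{-1/2}$ is well defined, consistent with the definition of $X_\kappa$.

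Finally I would apply the spectral theorem to $T|_{\Ran E}$:
\begin{equation*}
\big\|E-X_\kappa\big\|=\big\|\mathrm{Id}_{\Ran E}-\big(T|_{\Ran E}\big)^{-1/2}\big\|=\sup_{\lambda\in\sigma(T|_{\Ran E})}\big|1-\lambda^{-1/2}\big|.
\end{equation*}
For $\lambda\in[1/2,3/2]$ one has $\big|1-\lambda^{-1/2}\big|=\frac{|\lambda-1|}{\lambda^{1/2}(\lambda^{1/2}+1)}\le C|\lambda-1|\le C\varepsilon$, the Lipschitz bound being uniform since $\lambda$ is bounded away from $0$. Combining this with the estimate for $\varepsilon$ from Lemma \ref{lle1} yields $\|E-X_\kappa\|=O\big(L^{2}(\kappa+L^{d-1})N^{-1}\big)$, which is (\ref{pqo1}). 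I do not expect any genuine obstacle here; the only point that requires a little care is to restrict all operators to $\Ran\one_\kappa^\Bog$ \emph{before} extracting the inverse square root, so that the relevant spectrum stays near $1$ and $\lambda\mapsto\lambda^{-1/2}$ is manifestly Lipschitz on it.
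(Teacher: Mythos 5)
Your proof is correct and follows essentially the same route as the paper: both rest on Lemma \ref{lle1}, the smallness guaranteed by Condition (\ref{condit2a}), and the spectral theorem applied to $Y_\kappa Y_\kappa^\dagger$ restricted to $\Ran\one_\kappa^\Bog$. The only difference is cosmetic: the paper passes through a Neumann-series bound on $\big(Y_\kappa Y_\kappa^\dagger\big)^{-1}$ before invoking the spectral theorem for the square root, whereas you localize the spectrum near $1$ and use the Lipschitz bound on $\lambda\mapsto\lambda^{-1/2}$ directly.
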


\proof
We know already that $\big(Y_\kappa Y_\kappa^\dagger\big)^{-1}$ exists. By the convergent Neumann series
$$\big(Y_\kappa Y_\kappa^\dagger\big)^{-1}=\sum_{j=0}^{\infty}\left(\one-Y_\kappa Y_\kappa^\dagger\right)^{j}$$
and thus
$$\|\big(Y_\kappa Y_\kappa^\dagger\big)^{-1}\|\leq (1-c_0)^{-1}.$$
This implies
\[\|\one_\kappa^\Bog-\big(Y_\kappa Y_\kappa^\dagger\big)^{-1}\|\leq c_0(1-c_0)^{-1}\]
which is $O\big(
L^{2}(\kappa+L^{d-1})N^{-1}\big)$. This implies (\ref{pqo1}) by the spectral theorem. \qed

\begin{lemma}
\[X_\kappa[A_N,[A_N,H_\Bog]]X_\kappa=O\big(N^{-2} L^{2}(\kappa+L^{d-1})\big).\]
\label{???}\end{lemma}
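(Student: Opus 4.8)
The plan is to make the double commutator completely explicit and then reduce it to the two spectral bounds \eqref{qe1a} and \eqref{qe2a} on $\Ran\one_\kappa^\Bog$. First I would note that only the pairing part of $H_\Bog$ fails to commute with $N^{>}$: since $b_\p^\dagger b_{-\p}^\dagger$, resp. $b_\p b_{-\p}$, raise, resp. lower, $N^{>}$ by $2$, and $A_N=G(N^{>}/N)$ is a function of $N^{>}$, one has $[A_N,b_\p^\dagger b_{-\p}^\dagger]=b_\p^\dagger b_{-\p}^\dagger\,\delta_+(N^{>})$ where $\delta_\pm(N^{>}):=G((N^{>}\pm 2)/N)-G(N^{>}/N)$. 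Iterating (the factor $\delta_\pm(N^{>})$ commutes with $A_N$) gives the manifestly self-adjoint expression
\[\,[A_N,[A_N,H_\Bog]]=\frac12\sum_{\p\neq\0}\hat v(\p)\Big(b_\p^\dagger b_{-\p}^\dagger\,\delta_+(N^{>})^2+b_\p b_{-\p}\,\delta_-(N^{>})^2\Big),\]
the diagonal term $\sum_\p(|\p|^2+\hat v(\p))b_\p^\dagger b_\p$ dropping out. Using $b_\p^\dagger b_{-\p}^\dagger\,\delta_+(N^{>})=-\delta_-(N^{>})\,b_\p^\dagger b_{-\p}^\dagger$ this can be written symmetrically as $-\tfrac12\big(\delta_-(N^{>})\,W\,\delta_+(N^{>})+\hc\big)$ with $W:=\sum_{\p\neq\0}\hat v(\p)b_\p^\dagger b_{-\p}^\dagger$.

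The two decisive properties of $\delta_\pm$ are quantitative. Since $G\in C^\infty$, the mean value theorem gives $\|\delta_\pm(N^{>})\|\le 2N^{-1}\|G'\|_\infty$; and because $G$ is constant on $[0,1/3]$ and on $[1,\infty[$, the operator $\delta_\pm(N^{>})$ is supported on $\{N^{>}\ge N/3-2\}$. For large $N$ this yields $\delta_\pm(N^{>})^2\le C N^{-2}\,\one_{[N/4,\infty[}(N^{>})\le C N^{-3}N^{>}$, while the operators $N^{>}\delta_\pm(N^{>})$ are bounded (norm $\le C$) yet still carry the localizer $\one_{[N/4,\infty[}(N^{>})$. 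The first fact converts the smallness into spectral weight at high occupation, where $\one_\kappa^\Bog$ carries almost nothing by \eqref{qe1a}.

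Next I would apply $\pm(AB+\hc)\le \eta\,AA^\dagger+\eta^{-1}B^\dagger B$ to the symmetric form and sandwich with $\one_\kappa^\Bog$. The ``diagonal'' contribution reduces to $\one_\kappa^\Bog\,\delta_\pm(N^{>})^2\,\one_\kappa^\Bog\le CN^{-3}\,\one_\kappa^\Bog N^{>}\one_\kappa^\Bog\le CN^{-3}L^2(\kappa+L^{d-1})$ by \eqref{qe1a}. The pairing contribution is controlled by estimating $W$ through $N^{>}$ together with the bounded, high-occupation-localized factor $N^{>}\delta_-(N^{>})$, again feeding in \eqref{qe1a} and \eqref{qe2a}. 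Optimizing over $\eta$ and invoking the running hypotheses \eqref{condit2}, \eqref{condit2a} balances the two pieces into a bound of the form $N^{-2}\times$(polynomial in $L,\kappa$) matching the statement. Finally, since by \eqref{pqo1} the operator $X_\kappa$ differs from $\one_\kappa^\Bog$ by $O\big(L^2(\kappa+L^{d-1})N^{-1}\big)$ and satisfies $\|X_\kappa\|\le\sqrt2$ (because $Y_\kappa Y_\kappa^\dagger\ge\tfrac12\one_\kappa^\Bog$), conjugation by $X_\kappa$ only multiplies the estimate by a harmless constant; one may replace $X_\kappa$ by $\one_\kappa^\Bog$ throughout.

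The main obstacle is exactly this quantitative control of the pair operator $W=\sum_\p\hat v(\p)b_\p^\dagger b_{-\p}^\dagger$ by $N^{>}$ \emph{uniformly in $L$}: a naive estimate loses a factor $L^{d/2}$, since $\sum_\p\hat v(\p)$ and $\|W\Omega\|^2$ are both of order $L^d$. The whole gain must therefore be squeezed out of the localization of $G'$, which forces $N^{>}\gtrsim N/3$ on the support of $\delta_\pm$ — precisely the region avoided by the low-energy spectral subspace $\Ran\one_\kappa^\Bog$. Getting the interplay between this localization, the bounds \eqref{qe1a}/\eqref{qe2a}, and the conditions relating $L$, $N$ and $\kappa$ to produce the correct powers is the delicate bookkeeping; the algebraic steps above are routine.
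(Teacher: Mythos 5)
Your proposal is correct, and it reaches the estimate by a genuinely different route from the paper. You use the fact that the off\nobreakdash-diagonal part of $H_\Bog$ shifts $N^>$ by exactly $\pm2$ to write $[A_N,[A_N,H_\Bog]]$ in closed form as $-\tfrac12\bigl(\delta_-(N^>)\,W\,\delta_+(N^>)+\hc\bigr)$, and you extract the smallness from the finite differences $\delta_\pm(N^>)$, which are $O(N^{-1})$ in norm \emph{and} supported on $\{N^>\gtrsim N/3\}$; the localization is what lets you absorb the $L^{d/2}$ that the naive bound $\|W^\dagger\phi\|\le CL^{d/2}\|N^>\phi\|$ produces. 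The bookkeeping you defer does close: using $\one_{[N/4,\infty[}(N^>)\le(4N^>/N)^2$ together with \eqref{qe1a} and \eqref{qe2a}, the term $\eta^{-1}\,\delta_+W^\dagger W\delta_+$ is bounded by $C\eta^{-1}L^{d+4}N^{-2}(L^{d-1}+\kappa)^2$ and the term $\eta\,\delta_-^2$ by $C\eta N^{-3}L^2(L^{d-1}+\kappa)$, so optimizing $\eta$ gives $CN^{-5/2}L^{d/2+3}(L^{d-1}+\kappa)^{3/2}$, which is $\le CN^{-2}L^2(L^{d-1}+\kappa)$ precisely because of \eqref{condit2}. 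The paper proceeds differently: it represents $A_N=G(N^>/N)$ through the Fourier transform of $G$ and a Duhamel expansion of the commutators with $\e^{\ii tN^>/N}$, so the explicit double commutator $[N^>/N,[N^>/N,H_\Bog]]$ supplies the factor $N^{-2}$, and the remaining pair operator is controlled by the \emph{form} bound $\pm\sum\hat v(\p)(b_\p b_{-\p}+\hc)\le C(N^>+L^d)$, i.e.\ boundedness of $(N^>+L^d)^{-1/2}(\cdot)(N^>+L^d)^{-1/2}$, which never incurs an $L^{d/2}$ loss. As a result the paper's argument uses only smoothness of $G$ (not the plateau near the origin) and does not invoke \eqref{condit2} inside this lemma, whereas yours needs both; in exchange you obtain an exact and transparent algebraic identity for the double commutator in place of a fourfold oscillatory integral. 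Your closing remarks on replacing $X_\kappa$ by $\one_\kappa^\Bog$ via $Y_\kappa Y_\kappa^\dagger\ge\tfrac12\one_\kappa^\Bog$ are also correct and match what the paper implicitly does.
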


\proof We have
\[[N^>,[N^>,H_\Bog]]=2\sum \hat v(\p)(b_\p b_{-\p}+b_\p^\dagger b_{-\p}^\dagger).\]
Using
\[
-b_\p^\dagger b_\p-b_{-\p}^\dagger b_{-\p}-1
\leq
b_\p b_{-\p}+b_\p^\dagger b_{-\p}^\dagger\leq b_\p^\dagger b_\p+b_{-\p}^\dagger b_{-\p}+1
\] we obtain
\[-C(N^>+L^d)\leq
[N^>,[N^>,H_\Bog]]\leq
C(N^>+L^d).\]
This implies
\begin{equation}
\Big \|(N^>+L^d)^{-1/2}\Big[N^>,[N^>,H_\Bog]\Big](N^>+L^d)^{-1/2}\Big\|\leq C. 
\label{powq}
\end{equation}
Now notice that for operators $S$ and $T$ one has 
\begin{gather*}
\left[\e^{\ii tS},T\right]=\int_{0}^{t}\frac{\d}{\d u}\e^{\ii uS}T\e^{-\ii uS}\d u\e^{\ii tS}=\int_{0}^{t}\ii\e^{\ii uS}[S,T]\e^{\ii (t-u)S}\d u  
\intertext{which together with the representation mentioned above yields}
[A_N,H_\Bog]=\frac{1}{2\pi}\int_{\RR}\hat{G}(t)\int_{0}^{t}\ii\e^{\ii u\frac{N^>}{N}}\left[\frac{N^>}{N},H_\Bog\right] \e^{\ii (t-u)\frac{N^>}{N}}\d u \d t.
\intertext{We can repeat this calculation to obtain}
\Big[A_N,[A_N,H_\Bog]\Big]= \\
\frac{-1}{4\pi^2}\int_{\RR}\hat{G}(p)\int_{0}^{p}\e^{\ii s\frac{N^>}{N}}\frac{1}{2\pi}\int_{\RR}\hat{G}(t)\int_{0}^{t}[\frac{N^>}{N},\e^{\ii u\frac{N^>}{N}}[\frac{N^>}{N},H_\Bog] \e^{\ii (t-u)\frac{N^>}{N}}]\d u \d t\e^{\ii (p-s)\frac{N^>}{N}}\d s \d p \\
=\frac{-1}{4\pi^2}\int_{\RR}\int_{0}^{p}\int_{\RR}\int_{0}^{t}\hat{G}(p)\hat{G}(t)\e^{\ii (s+u)\frac{N^>}{N}}[\frac{N^>}{N},[\frac{N^>}{N},H_\Bog]]\e^{\ii (t-u+p-s)\frac{N^>}{N}}\d u \d t \d s \d p \\
=\frac{-1}{2\pi^2 N^2}\int_{\RR}\int_{0}^{p}\int_{\RR}\int_{0}^{t}\hat{G}(p)\hat{G}(t)\e^{\ii (s+u)\frac{N^>}{N}}\left(\sum \hat v(\q)(b_\q b_{-\q}+b_\q^\dagger b_{-\q}^\dagger)\right)\e^{i(t-u+p-s)\frac{N^>}{N}}\d u \d t \d s \d p.
\intertext{Thus, because all integrals are finite (and uniform with respect to $n$ and $L$), we only need to bound the expression}
\Big\langle\psi\Big|X_\kappa \e^{\ii a\frac{N^>}{N}}\left(\sum \hat {v}(\q)(b_\q b_{-\q}+b_\q^\dagger b_{-\q}^\dagger)\right)\e^{\ii c\frac{N^>}{N}} X_{\kappa}\Big|\psi\Big\rangle 
\end{gather*}
for some $a,c\in\RR$.\\
We can now insert the identity operator $\one=(N^>+L^d)^{1/2}(N^>+L^d)^{-1/2}$ between $X_\kappa$ and $\e^{\ii a\frac{N^>}{N}}$ and use Schwarz inequality to obtain 
\begin{gather*}
\Big\langle\psi\Big|X_\kappa \e^{\ii a\frac{N^>}{N}}\left(\sum \hat {v}(\q)(b_\q b_{-\q}+b_\q^\dagger b_{-\q}^\dagger)\right)\e^{\ii c\frac{N^>}{N}} X_{\kappa}\Big|\psi\Big\rangle \leq \Big\langle X_{\kappa}\psi \Big|(N^>+L^d)\Big|X_{\kappa}\psi\Big\rangle^{1/2} \\
\times \Big\langle X_{\kappa}\psi\Big|\e^{-\ii c\frac{N^>}{N}}(N^>+L^d)^{1/2}A^{\dagger}A(N^>+L^d)^{1/2}\e^{\ii c\frac{N^>}{N}}\Big|X_{\kappa}\psi\Big\rangle^{1/2}
\intertext{where}
A= (N^>+L^d)^{-1/2}\left(\sum \hat{v}(\p)(b_\p b_{-\p}+b_\p^\dagger b_{-\p}^\dagger)\right)(N^>+L^d)^{-1/2}.
\intertext{The self-adjoint operator $A$ is bounded. This follows from the fact that if}
-\langle\phi|B|\phi\rangle\leq\langle\phi|D|\phi\rangle\leq\langle\phi|B|\phi\rangle
\intertext{for some self-adjoint operators $B$ and $D$ (with $B>0$), then by setting $\psi=B^{1/2}\phi$ one obtains}
-\langle\psi|\psi\rangle\leq\langle\psi|B^{-1/2}DB^{-1/2}|\psi\rangle\leq\langle\psi|\psi\rangle
\intertext{and thus}
-\one\leq B^{-1/2}DB^{-1/2}\leq \one.
\end{gather*}
Now apply this for $B=(N^>+L^d)$ and $D=\sum \hat{v}(\p)(b_\p b_{-\p}+b_\p^\dagger b_{-\p}^\dagger)$. Now we use the operator inequality 
\begin{gather*}
\e^{-\ii c\frac{N^>}{N}}(N^>+L^d)^{1/2}A^{\dagger}A(N^>+L^d)^{1/2}\e^{\ii c\frac{N^>}{N}}\leq \|A\|^2 (N^>+L^d)
\end{gather*}
which together with Lemma \ref{lemi1} yields the proof.
\qed

We define
\begin{equation}
Z_\kappa:=X_\kappa A_N=
\big(\one_\kappa^\Bog A_N^2\one_\kappa^\Bog
\big)^{-1/2}
 A_N.\label{zk}
 \end{equation}
Clearly, $Z_\kappa$ is a partial isometry with initial space $\Ran (A_{N}\one_{\kappa}^{\Bog})$ and final space $\Ran (\one_{\kappa}^{\Bog})$.

\begin{lemma}\label{lemma4}
\begin{eqnarray*}
\one_\kappa^\Bog (H_\Bog-E_\Bog) \one_\kappa^\Bog&=&Z_\kappa (H_\Bog-E_\Bog) Z_\kappa^\dagger\\
&&+ O\big(L^{2}(L^{d-1}+\kappa)\kappa N^{-1}\big)\\
&&+
 O\big(L^{2}(L^{d-1}+\kappa)N^{-2}\big)
.\end{eqnarray*}
\end{lemma}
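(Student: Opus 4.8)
The plan is to substitute $Z_\kappa=X_\kappa A_N$ and $Z_\kappa^\dagger=A_N X_\kappa$ and track all errors relative to $\one_\kappa^\Bog B\one_\kappa^\Bog$. Throughout I write $B:=H_\Bog-E_\Bog$ and $\eta:=L^{2}(L^{d-1}+\kappa)N^{-1}$, so that Condition (\ref{condit2a}) gives $\eta\leq c_1\leq1$, Lemma \ref{lle1} reads $\|\one_\kappa^\Bog-Y_\kappa Y_\kappa^\dagger\|=O(\eta)$, and (\ref{pqo1}) both gives $\|\one_\kappa^\Bog-X_\kappa\|=O(\eta)$ and (via its Neumann-series proof) bounds $\|X_\kappa\|$ uniformly. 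The object to estimate is
\[
Z_\kappa B Z_\kappa^\dagger-\one_\kappa^\Bog B\one_\kappa^\Bog
=X_\kappa A_N B A_N X_\kappa-\one_\kappa^\Bog B\one_\kappa^\Bog .
\]
The point to keep in mind is that $B$ is a \emph{large} operator, so one may never simply replace $X_\kappa$ by $\one_\kappa^\Bog$ next to an unprojected $B$; instead one exploits that $\one_\kappa^\Bog$ is a spectral projection of $B$ (hence $[B,\one_\kappa^\Bog]=0$ and $\one_\kappa^\Bog B\one_\kappa^\Bog\leq\kappa\,\one_\kappa^\Bog$) together with $X_\kappa=\one_\kappa^\Bog X_\kappa=X_\kappa\one_\kappa^\Bog$.

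First I would peel off the double commutator. Since $E_\Bog$ is a scalar, $[A_N,[A_N,B]]=[A_N,[A_N,H_\Bog]]$, and the elementary identity $A_N B A_N=\tfrac12(A_N^2B+BA_N^2)-\tfrac12[A_N,[A_N,B]]$ yields, after conjugation by $X_\kappa$,
\[
X_\kappa A_N B A_N X_\kappa
=\tfrac12\bigl(X_\kappa A_N^2 B X_\kappa+X_\kappa B A_N^2 X_\kappa\bigr)
-\tfrac12 X_\kappa\bigl[A_N,[A_N,H_\Bog]\bigr]X_\kappa .
\]
By Lemma \ref{???} the last term is $O(N^{-2}L^{2}(L^{d-1}+\kappa))=O(\eta N^{-1})$, which is exactly the second error term in the statement.

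The heart of the argument is the symmetric part. Using $X_\kappa=X_\kappa\one_\kappa^\Bog$ and inserting $\one=\one_\kappa^\Bog+(\one-\one_\kappa^\Bog)$ between $A_N^2$ and $B$, the off-diagonal piece carries the factor $(\one-\one_\kappa^\Bog)B X_\kappa=(\one-\one_\kappa^\Bog)B\one_\kappa^\Bog X_\kappa=(\one-\one_\kappa^\Bog)\one_\kappa^\Bog B X_\kappa=0$, where I used $[B,\one_\kappa^\Bog]=0$; hence it vanishes identically. Writing $W:=Y_\kappa Y_\kappa^\dagger=\one_\kappa^\Bog A_N^2\one_\kappa^\Bog$, so that $X_\kappa=W^{-1/2}$ on $\Ran\one_\kappa^\Bog$, this collapses the two terms to
\[
X_\kappa A_N^2 B X_\kappa=W^{1/2}BW^{-1/2},\qquad
X_\kappa B A_N^2 X_\kappa=W^{-1/2}BW^{1/2},
\]
with $B$ now effectively $\one_\kappa^\Bog B\one_\kappa^\Bog$ (so $\|B\|\leq\kappa$). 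Finally I would expand around $W=\one_\kappa^\Bog-R$ with $R:=\one_\kappa^\Bog-W=\one_\kappa^\Bog(\one-A_N^2)\one_\kappa^\Bog=O(\eta)$ (Lemma \ref{lle1}); since $\|R\|$ is small, $W^{\pm1/2}=\one_\kappa^\Bog\mp\tfrac12 R+O(\eta^2)$ on the range, whence
\[
\tfrac12\bigl(W^{1/2}BW^{-1/2}+W^{-1/2}BW^{1/2}\bigr)=\one_\kappa^\Bog B\one_\kappa^\Bog+O(\eta^{2}\kappa),
\]
the linear $\tfrac12[B,R]$ contributions cancelling in the symmetric combination (and in any case $O(\eta^{2}\kappa)\subseteq O(\eta\kappa)$ since $\eta\le1$). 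Combining the two error bounds gives the claimed identity.

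The main obstacle is precisely the one flagged at the outset: $H_\Bog-E_\Bog$ is controlled by $\kappa$ only on $\Ran\one_\kappa^\Bog$ and is otherwise large, so every manipulation must be arranged so that a projection $\one_\kappa^\Bog$ sits adjacent to $B$, converting its operator norm into the usable $\kappa$. The clean vanishing of the off-diagonal term (via $[B,\one_\kappa^\Bog]=0$ and $X_\kappa\one_\kappa^\Bog=X_\kappa$) is what reduces the symmetric part to the pair of similarity transformations $W^{\pm1/2}BW^{\mp1/2}$ rather than to $B$ sandwiched against uncontrolled factors; carrying out this algebra carefully, and verifying that the two similarities combine to annihilate the first-order $O(\eta\kappa)$ term, is the delicate part of the proof.
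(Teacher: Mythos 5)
Your proof is correct, but it follows a genuinely different decomposition from the paper's. The paper writes $\one_\kappa^\Bog (H_\Bog-E_\Bog)\one_\kappa^\Bog - Z_\kappa(H_\Bog-E_\Bog)Z_\kappa^\dagger$ as a sum of seven explicit terms: two involving $\one_\kappa^\Bog-X_\kappa$ (bounded via (\ref{pqo1})), four involving $A_N^\nph$ sandwiched against $H_\Bog-E_\Bog$ (bounded via the second-moment estimate (\ref{qe2a}), using $A_N^\nph\leq C(N^>)^2/N^2$), and the double commutator handled by Lemma \ref{???}. You instead start from $Z_\kappa(H_\Bog-E_\Bog)Z_\kappa^\dagger=X_\kappa A_N(H_\Bog-E_\Bog)A_NX_\kappa$, symmetrize with the same double-commutator identity, and then observe that because $\one_\kappa^\Bog$ is a spectral projection of $H_\Bog$ and $X_\kappa=\one_\kappa^\Bog X_\kappa\one_\kappa^\Bog$, the off-diagonal pieces vanish \emph{exactly}, collapsing the symmetric part to $\tfrac12\bigl(W^{1/2}\tilde B W^{-1/2}+W^{-1/2}\tilde B W^{1/2}\bigr)$ with $W=Y_\kappa Y_\kappa^\dagger$ and $\tilde B=\one_\kappa^\Bog(H_\Bog-E_\Bog)\one_\kappa^\Bog$; the expansion of $W^{\pm1/2}$ around $\one_\kappa^\Bog$ (valid since $\|\one_\kappa^\Bog-W\|\leq c_0\leq 1/2$ by Lemma \ref{lle1} and Condition (\ref{condit2a})) then does the rest. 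What your route buys: the only inputs are Lemma \ref{lle1} and Lemma \ref{???} — the quartic bound (\ref{qe2a}) is not needed for this lemma — and the non-commutator error comes out as $O\bigl(L^{4}(L^{d-1}+\kappa)^{2}\kappa N^{-2}\bigr)$, slightly sharper than the stated $O\bigl(L^{2}(L^{d-1}+\kappa)\kappa N^{-1}\bigr)$ (and, as you note, the first-order cancellation is not even essential, since the uncancelled commutator terms are already of the allowed size). What the paper's route buys is that every one of the terms (\ref{tag1})--(\ref{tag7}) is estimated by a single elementary norm bound, at the cost of invoking (\ref{qe2a}). Both proofs hinge on the same two facts — the smallness of $\one_\kappa^\Bog-Y_\kappa Y_\kappa^\dagger$ and the double-commutator bound — so the difference is one of bookkeeping, but yours is a legitimate and arguably cleaner alternative.
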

\begin{proof}
We have
\begin{eqnarray}
\one_\kappa^\Bog (H_\Bog-E_\Bog) \one_\kappa^\Bog
&=&
\big(\one_\kappa^\Bog -X_\kappa\big)
(H_\Bog-E_\Bog) \one_\kappa^\Bog\label{tag1}
\\
&&+X_\kappa
(H_\Bog-E_\Bog) 
\big(\one_\kappa^\Bog -X_\kappa\big)\label{tag2}\\
&&+X_\kappa
(H_\Bog-E_\Bog) 
X_\kappa;\notag
\end{eqnarray}
\begin{eqnarray}
X_\kappa
(H_\Bog-E_\Bog) 
X_\kappa&=&-
X_\kappa A_N^\nph
(H_\Bog-E_\Bog)  A_N^\nph
X_\kappa\notag\\
&&+
X_\kappa 
(H_\Bog-E_\Bog)  A_N^\nph
X_\kappa\label{tag3}\\
&&+
X_\kappa A_N^\nph
(H_\Bog-E_\Bog)  
X_\kappa\label{tag4}\\
&&+
X_\kappa A_N
(H_\Bog-E_\Bog)  A_NX_\kappa;\notag
\end{eqnarray}
\begin{eqnarray}
-X_\kappa A_N^\nph
(H_\Bog-E_\Bog)  A_N^\nph
X_\kappa
&\!\!\!\!{=}&\!\!\!\!\!\!
{-}\frac12
X_\kappa (A_N^\nph)^2
(H_\Bog-E_\Bog) 
X_\kappa\label{tag5}\\
&&\!\!\!\!\!\!{-}\frac12X_\kappa 
(H_\Bog-E_\Bog)  (A_N^\nph)^2
X_\kappa\label{tag6}\\
&&\!\!\!\!\!\!{+}\frac12
X_\kappa \big[A_N^\nph,[A_N^\nph,
H_\Bog]\big]
X_\kappa.\label{tag7}
\end{eqnarray}
The error term in the lemma equals the sum of (\ref{tag1}),...,(\ref{tag7}).
By (\ref{pqo1}),
\[ (\ref{tag1}) ,(\ref{tag2})= O\big(L^{2}(L^{d-1}+\kappa)\kappa N^{-1}\big).\]
By (\ref{qe2a}),
\[(\ref{tag3}),\dots, (\ref{tag6})= O\big(L^{2}(L^{d-1}+\kappa)\kappa N^{-1}\big).\]
By Lemma \ref{???},\[ (\ref{tag7})= O\big(L^{2}(L^{d-1}+\kappa) N^{-2}\big).\]
\end{proof}

\begin{lemma}\label{lemma89}
Assume (\ref{condit2}). Then
\begin{eqnarray}
\inf_{0<\epsilon\leq1}Z_\kappa R_{N,\epsilon} Z_\kappa^{\dagger}&\leq&
C L^{d/2+3}(L^{d-1}+\kappa)^{3/2}N^{-1/2}
\label{epsilo}
\end{eqnarray}
\end{lemma}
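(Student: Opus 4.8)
The plan is to mirror the lower-bound computation of Lemma \ref{lami}, but now producing \emph{upper} bounds and conjugating with the partial isometry $Z_\kappa$ from \eqref{zk} rather than the spectral projection $\one_\kappa^N$. Writing $Z_\kappa=X_\kappa A_N$ and $Z_\kappa^\dagger=A_N X_\kappa$, we have $Z_\kappa R_{N,\epsilon}Z_\kappa^\dagger=X_\kappa\,A_N R_{N,\epsilon}A_N\,X_\kappa$. The crucial point is that $A_N$ is supported on $\{N^>\leq N\}$, so on $\Ran A_N$ one has $N_0^\ext=N-N^>\geq0$; this is exactly what makes the manipulations of the extended Hamiltonian legitimate. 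First I would bound each of the five groups of terms in $R_{N,\epsilon}$ (see \eqref{rnepsilon}) from above by operators that are functions of $N^>$ (up to additive constants) and therefore commute with $A_N$.

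Concretely, for the first line of \eqref{rnepsilon} I would use, just as in Lemma \ref{lami}, the elementary bound $N-\sqrt{(N_0-1)N_0}\leq 2N^>+1$ (valid since $N_0\geq0$ on $\Ran A_N$) together with the operator inequality $A^\dagger B+B^\dagger A\leq A^\dagger A+B^\dagger B$ applied to the self-adjoint $Y=\sqrt{(N_0-1)N_0}-N$ and $D=\sum_{\p\neq\0}\hat v(\p)b_\p b_{-\p}$, giving $YD+D^\dagger Y\leq Y^2+D^\dagger D\leq C_1((N^>)^2+1)$; hence this line is $\leq \tfrac{C}{N}\big((N^>)^2+1\big)$. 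The term $-\tfrac1N\sum_{\p\neq\0}(\hat v(\p)+\tfrac12\hat v(\0))b_\p^\dagger b_\p N^>$ is negative and is simply dropped, while $\tfrac{\hat v(\0)}{2N}N^>\leq \tfrac CN N^>$. For the line carrying $N_0^\ext$ I would use $\sum_{\p\neq\0}(\hat v(\p)+\hat v(\0))b_\p^\dagger b_\p\leq CN^>$ together with $N_0^\ext\leq N$ to get $\leq \epsilon C N^>$; and the last, positive term is bounded by $(1+\epsilon^{-1})\tfrac{CL^d}{N}(N^>)^2$. Collecting these and using $L\geq1$, $0<\epsilon\leq1$ to absorb lower-order pieces yields
\begin{equation*}
A_N R_{N,\epsilon}A_N\leq A_N\Big(\epsilon\,C N^>+\epsilon^{-1}\tfrac{CL^{d}}{N}(N^>)^2+\tfrac{C}{N}\Big)A_N .
\end{equation*}

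Next I would conjugate with $X_\kappa$. Since $X_\kappa=\one_\kappa^\Bog X_\kappa\one_\kappa^\Bog$ with $\|X_\kappa\|\leq C$ (controlled by the Neumann series for $(Y_\kappa Y_\kappa^\dagger)^{-1}$, cf.\ \eqref{pqo1}), and since each bounding operator commutes with $A_N$ while $A_N^2\leq\one$, the bounds \eqref{qe1a} and \eqref{qe2a} give
\begin{equation*}
Z_\kappa R_{N,\epsilon}Z_\kappa^\dagger\leq C\epsilon L^{2}(L^{d-1}+\kappa)+C\epsilon^{-1}N^{-1}L^{d+4}(L^{d-1}+\kappa)^2+CN^{-1},
\end{equation*}
the exact analogue of \eqref{errorlowerbound}. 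Taking the infimum over $\epsilon$, the balancing choice is $\epsilon=\big(L^{d+2}(L^{d-1}+\kappa)N^{-1}\big)^{1/2}$, which by Condition \eqref{condit2} (for a suitable $c$) lies in $(0,1]$ and equalizes the first two terms at the common value $CN^{-1/2}L^{d/2+3}(L^{d-1}+\kappa)^{3/2}$; the residual $CN^{-1}$ is absorbed using $L\geq1$. This gives \eqref{epsilo}.

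The main obstacle I expect is bookkeeping rather than conceptual: one must verify term by term that every upper estimate is a genuine operator inequality on $\Ran A_N$ (where $N_0^\ext\geq0$), that the bounding operators commute with $A_N$ so the factors $A_N^2\leq\one$ may be discarded, and above all that the non-number-conserving first term of \eqref{rnepsilon} is controlled by $(N^>)^2$. This last estimate, combined with the cutoff $A_N$, is precisely where the extended-space construction does the real work, since without restricting to $N^>\leq N$ the coefficient $\sqrt{(N_0^\ext-1)N_0^\ext}$ would involve negative arguments.
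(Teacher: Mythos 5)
Your proposal is correct and follows essentially the same route as the paper's proof: drop the manifestly negative term of $R_{N,\epsilon}$, bound the remaining pieces by $C\big((N^>)^2+1\big)/N$, $CN^>/N$, $\epsilon CN^>$ and $(1+\epsilon^{-1})CL^d(N^>)^2/N$ using $N_0^\ext\geq0$ on $\Ran A_N$ and $N_0^\ext\leq N$, then apply \eqref{qe1a} and \eqref{qe2a} and optimize $\epsilon$, which Condition \eqref{condit2} keeps in $(0,1]$. The paper makes the same choice $\epsilon= c^{-1/2}N^{-1/2}L^{d/2+1}(L^{d-1}+\kappa)^{1/2}$ and absorbs the residual $CN^{-1}$ exactly as you do.
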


\begin{proof}
From the definition \eqref{zk} of $Z_{\kappa}$ it follows that the final space for $Z_{\kappa}^{\dagger}$ is $\Ran \one_\kappa^\Bog A_N$. Thus 
\begin{eqnarray*}
Z_\kappa R_{N,\epsilon} Z_\kappa^{\dagger}&\leq&
Z_\kappa \frac{1}{2}\sum_{\p\neq\0}\hat v(\p)\Big(\Big(\frac{\sqrt{(N_0-1)N_0}}{N}
-1\Big) b_\p b_{-\p}
+\hc\Big)  Z_\kappa^{\dagger}\notag\\&&+Z_\kappa \frac{\hat v(\0)}{2N}N^> Z_\kappa^{\dagger}\notag\\
&&+\epsilon Z_\kappa \frac{1}{N}\sum_{\p\neq\0}\big(\hat v(\p)+\hat v(\0)\big)b_\p^\dagger b_\p N_0^\ext Z_\kappa^{\dagger} \notag\\
&&+(1+\epsilon^{-1})Z_\kappa 
\frac{1}{2N} v(\0)L^d N^>(N^>-1) Z_\kappa^{\dagger}\notag\\
&\leq&\one_\kappa^\Bog C\frac{(N^>)^2+1}{N}\one_\kappa^\Bog
\notag \\
&&+\one_\kappa^\Bog C\frac{N^>}{N}\one_\kappa^\Bog \notag \\
&&+\epsilon\one_\kappa^\Bog CN^>\one_\kappa^\Bog \notag\\
&&+(1+\epsilon^{-1})\one_\kappa^\Bog C\frac{L^d(N^>)^2}{N}\one_\kappa^\Bog . 
\end{eqnarray*}
Using $\epsilon\leq1$, we can simplify the bound as follows:
\begin{eqnarray}
&\leq &\one_\kappa^\Bog \frac{C}{N}
{+}\epsilon\one_\kappa^\Bog CN^>\one_\kappa^\Bog 
{+}\epsilon^{-1}\one_\kappa^\Bog
C\frac{L^d(N^>)^2}{N}\one_\kappa^\Bog,
 \label{rnepsilon+1}
\end{eqnarray}
By (\ref{qe1a}) and (\ref{qe2a}).
this can be estimated by
\begin{equation*}
 CN^{-1}+\epsilon C L^{2}(L^{d-1}+\kappa)+
\epsilon^{-1} C L^{d+4}(L^{d-1}+\kappa)^2N^{-1}.
\end{equation*}
Setting  $\epsilon=c^{-1/2}N^{-1/2}L^{d/2+1}(L^{d-1}+\kappa)^{1/2}$, which is less than $1$ by   Condition (\ref{condit2}),
we obtain
\begin{equation}
 CN^{-1}+ C L^{d/2+3}(L^{d-1}+\kappa)^{3/2} N^{-1/2}.
\label{rnepsilon+}
\end{equation}
By changing $C$, the second term can obviously absorb $CN^{-1}$.
\end{proof}

\begin{proof}[Proof of Thm \ref{upper}]
$Z_\kappa$ is a partial isometry with the initial space contained in the physical space and the final projection $\one_\kappa^\Bog$. 
Therefore,
\begin{eqnarray*}
\vecsp H_N&\leq&\vecsp\Bigg( Z_\kappa^\dagger Z_\kappa H_N
 Z_\kappa^\dagger Z_\kappa\Big|_{\Ran Z_\kappa^\dagger} \Bigg)\\
&=&\vecsp \Bigg(
Z_\kappa H_N
 Z_\kappa^\dagger\Big|_{\Ran\one_\kappa^\Bog}\Bigg).
\end{eqnarray*}

\begin{eqnarray}Z_\kappa H_N
 Z_\kappa^\dagger&\leq&Z_\kappa H_{N,\epsilon}
 Z_\kappa^\dagger\notag\\
&=&
\frac{1}{2}\hat v(\0)(N-1)\one_\kappa^\Bog
+H_\Bog\one_\kappa^\Bog\notag\\
&&+Z_\kappa (H_\Bog -E_\Bog)Z_\kappa^\dagger-(H_\Bog-E_\Bog)\one_\kappa^\Bog
\label{lab1}\\
&&
+Z_\kappa R_{N,\epsilon}  Z_\kappa^\dagger.
\label{lab2}
\end{eqnarray}
By Lemma \ref{lemma4},
\begin{eqnarray}
(\ref{lab1})&\leq&
CL^{2}(L^{d-1}+\kappa)\kappa N^{-1}\label{szac1}\\
&&+
CL^{2}(L^{d-1}+\kappa)N^{-2}.\label{szac2}
\end{eqnarray}
Using $\kappa<\kappa+L^{d-1}$ and later \eqref{condit2} we have
\begin{eqnarray*}
(\ref{szac1})&\leq &CL^2(L^{d-1}+\kappa)^{2} N^{-1}\\
&\leq&CL^{-d/2+1}(L^{d-1}+\kappa)^{3/2} N^{-1/2}.
\end{eqnarray*}
Thus (\ref{szac1}) can be absorbed in
$O(L^{d/2+3}(L^{d-1}+\kappa)^{3/2} N^{-1/2})$.  

We easily check that the same is true in the case of  (\ref{szac2}).  To bound \eqref{lab2} we use Lemma \ref{lemma89}.
\end{proof}

\begin{proof}[Proof of Thm \ref{thm} (2)]
First set $\kappa=0$. Then Condition (\ref{condit2})
 becomes Condition (\ref{war2a}) and 
 Condition (\ref{condit2a})
 becomes Condition (\ref{war2a+}). 
We obtain Thm \ref{thm} (2a).

Next set $\kappa=K_\Bog^j(\p)$. Then Condition (\ref{condit2}) is equivalent to the conjunction of Conditions (\ref{war2a}) and (\ref{war2b}).
Condition (\ref{condit2a}) is equivalent to the conjunction of Conditions (\ref{war2a+}) and (\ref{war2b+}).
This shows Thm \ref{thm} (2b).
\end{proof}

\subsection{Proof of Corollary \ref{coro}}\label{proof of coro}

The proof of the corollary is based on the following lemma:

\begin{lemma}\label{lemmaappendix}
\begin{enumerate}
\item Let $b>1$, $-1-\frac{1}{d+1}\leq \alpha \leq 1$ and $L^{4d+6}\leq b N^{1-\alpha}$.
 Then \vspace{2ex}
\begin{enumerate}
\item \hspace{8ex} $\frac{1}{2}\hat v(\0)(N-1)+E_\Bog\leq E_N+O(N^{-\alpha/2})$;
\item if $K_N^j(\p)\leq (bN^{1-\alpha}L^{-d-6})^{1/3}$, then
\[ \frac{1}{2}\hat v(\0)(N-1)+E_\Bog +K_\Bog^j(\p)\leq E_N+ K_N^j(\p)+O(N^{-\alpha/2});\]
\item if $0\leq\alpha\leq1$ and $K_N^j(\p)\leq bN^{1-\alpha}L^{-d-6}$, then
 \begin{eqnarray*}
 \frac{1}{2}\hat v(\0)(N-1)+E_\Bog+K_\Bog^j(\p)&\leq& E_N+ K_N^j(\p)\\&&+\left(1+K_N^j(\p)\right)O(N^{-\alpha/2}).\end{eqnarray*}
\end{enumerate}
\item
Let $b>1$,  $-1-\frac{1}{2d+1}< \alpha \leq 1$ and $L^{4d+3}\leq bN^{1-\alpha}$. Then there exists $M$ such that if $N>M$, then \vspace{2ex}
\begin{enumerate}\item
\hspace{8ex}
$ E_N\leq \frac{1}{2}\hat v(\0)(N-1)+E_\Bog+O(N^{-\alpha/2})$;
\item if $K_\Bog^j(\p)\leq (bN^{1-\alpha}L^{-d-6})^{1/3}$, then
\[ E_N+K_N^j(\p)\leq \frac{1}{2}\hat v(\0)(N-1)+E_\Bog+K_\Bog^j(\p)+O(N^{-\alpha/2});\]
\item  if  $0<\alpha\leq1$ 
and  $K_\Bog^j(\p)\leq bN^{1-\alpha}L^{-d-6}$, then
\begin{eqnarray*}
 E_N+K_N^j(\p)&\leq&\frac{1}{2}\hat v(\0)(N-1)+E_\Bog+ K_\Bog^j(\p)\\
&&+\left(1+K_\Bog^j(\p)\right)O(N^{-\alpha/2}).\end{eqnarray*}
\end{enumerate}\end{enumerate}
\end{lemma}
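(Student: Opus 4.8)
The plan is to derive Lemma~\ref{lemmaappendix} directly from the two-sided estimate of Theorem~\ref{thm}: part~(1) of the lemma is the lower-bound half and follows from Theorem~\ref{thm}~(1), while part~(2) is the upper-bound half and follows from Theorem~\ref{thm}~(2). In each case I would substitute the scaling hypothesis relating $L$, $N$ and $\alpha$ (namely $L^{4d+6}\le bN^{1-\alpha}$ in part~(1) and $L^{4d+3}\le bN^{1-\alpha}$ in part~(2)) into the estimates \eqref{thmnier1}--\eqref{thmnier4}, and check two things: that the structural hypotheses \eqref{war1a}--\eqref{war2b+} of the theorem are satisfied, and that the resulting error term collapses to the claimed $O(N^{-\alpha/2})$, or to $(1+K)O(N^{-\alpha/2})$ in the (c) parts. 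Statement (a) is the $\kappa=0$ specialization, (b) a specialization under a cube-root bound on the excitation energy, and (c) under a weaker linear bound, so I would prove the (b)/(c) forms and read off (a).

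For the error bounds, consider part~(1) first. With $\kappa=0$ in \eqref{thmnier1} the error is $CN^{-1/2}L^{d/2+3}(L^d)^{3/2}=CN^{-1/2}L^{2d+3}$, and the arithmetic identity $2(2d+3)=4d+6$ shows that $L^{4d+6}\le bN^{1-\alpha}$ is precisely what forces $N^{-1/2}L^{2d+3}=O(N^{-\alpha/2})$; this yields (1a). For (1b)/(1c) I would split $(K_N^j(\p)+L^d)^{3/2}\le C\big((K_N^j(\p))^{3/2}+L^{3d/2}\big)$, handle the $L^{3d/2}$ piece as above, and for the remaining piece use the hypothesis $K_N^j(\p)\le (bN^{1-\alpha}L^{-d-6})^{1/3}$, which gives $(K_N^j(\p))^{3/2}\le b^{1/2}N^{(1-\alpha)/2}L^{-d/2-3}$, so that the $L$-powers cancel exactly and the term is again $O(N^{-\alpha/2})$. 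For (1c), where only the linear bound $K_N^j(\p)\le bN^{1-\alpha}L^{-d-6}$ is assumed, I would instead write $(K_N^j(\p))^{3/2}=K_N^j(\p)\,(K_N^j(\p))^{1/2}$ and bound $(K_N^j(\p))^{1/2}\le b^{1/2}N^{(1-\alpha)/2}L^{-d/2-3}$, turning the error into $C\,K_N^j(\p)\,N^{-\alpha/2}$, which is absorbed by $(1+K_N^j(\p))O(N^{-\alpha/2})$. Part~(2) is identical with $L^d$ replaced by $L^{d-1}$ and $K_N^j$ by $K_\Bog^j$; there the leading power is $N^{-1/2}L^{2d+3/2}$ and the matching identity is $2(2d+3/2)=4d+3$.

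The remaining task is to verify the structural hypotheses. In part~(1) the binding condition is \eqref{war1a}, $L^{2d+2}\le cN$: comparing exponents, $(L^{4d+6})^{(2d+2)/(4d+6)}\le(bN^{1-\alpha})^{(2d+2)/(4d+6)}$, and the product $(2d+2)(1-\alpha)/(4d+6)$ is $\le 1$ exactly when $\alpha\ge -1-\frac{1}{d+1}$, which is the stated lower endpoint; since $c$ may be chosen freely, no lower threshold on $N$ is needed, and condition \eqref{war1b} on $K_N^j(\p)$ follows from the cube-root/linear hypotheses by the same exponent comparison. In part~(2) the analogous computation with \eqref{war2a}, $L^{2d+1}\le cN$, yields the endpoint $\alpha>-1-\frac{1}{2d+1}$; the genuinely new feature is \eqref{war2a+}, $L^{d+1}\le c_1N$, in which $c_1$ is the small \emph{fixed} constant produced by the theorem, so it can only be guaranteed once $N$ exceeds a threshold $M$ — this, together with the strict endpoint, is exactly why part~(2) is stated with ``there exists $M$''. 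Conditions \eqref{war2b} and \eqref{war2b+} then hold for $N>M$ by the same estimates.

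I expect the only real difficulty to be bookkeeping rather than analysis: one must track the powers of $L$ and $N$ through every term so that the $L$-powers cancel and the $N$-powers assemble to $N^{-\alpha/2}$, and one must quantify the constants in the correct order, choosing $c$ so that \eqref{war1a}/\eqref{war2a} hold, obtaining $c_1$ and $C$ from Theorem~\ref{thm}, and only then verifying the $c_1$-conditions for large $N$. The slightly delicate point is the (c) parts, where the cubic error is controlled only relative to the excitation energy; the factorization $K^{3/2}=K\cdot K^{1/2}$ combined with the linear hypothesis on $K$ is what makes the bound proportional to $(1+K)$ instead of uniform, and I would treat those cases with extra care.
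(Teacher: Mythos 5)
Your proposal is correct and follows essentially the same route as the paper's proof: specialize Theorem \ref{thm} with $\kappa=0$ and $\kappa=K^j$, verify the structural conditions \eqref{war1a}--\eqref{war2b+} by comparing exponents against the scaling hypothesis $L^{4d+6}\leq bN^{1-\alpha}$ (resp. $L^{4d+3}\leq bN^{1-\alpha}$), split $(K+L^{d})^{3/2}$ (resp. $(K+L^{d-1})^{3/2}$) into its two pieces, and in the (c) cases factor $K^{3/2}=K\cdot K^{1/2}$ to obtain the $(1+K)O(N^{-\alpha/2})$ bound. The only small discrepancy is in where the strictness of $\alpha>-1-\frac{1}{2d+1}$ actually bites: the paper's computation shows that \eqref{war2a+} holds for large $N$ even at the endpoint, and it is Condition \eqref{war2b+} that forces the strict inequality, but this misattribution does not affect the validity of your argument.
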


\begin{proof}
To prove (1), resp. (2) we use Thm \ref{thm} (1), resp. (2). We give a proof of the latter part, since it is slightly more involved (because of the parameter $c_1$).

(2a): First we check  Condition (\ref{war2a}):
\begin{eqnarray}
L^{2d+1}&=&\big(L^{4d+3}\big)^{\frac{2d+1}{4d+3}}\leq \big(bN^{1-\alpha}\big)^{\frac{2d+1}{4d+3}}.
\label{noto9}\end{eqnarray}
For  $-1-\frac{1}{2d+1}\leq \alpha$ we have (\ref{noto9})$\leq cN$.

Next,
\begin{eqnarray}
L^{d+1}&=&\big(L^{4d+3}\big)^{\frac{d+1}{4d+3}}\leq\big(bN^{1-\alpha}\big)^{\frac{d+1}{4d+3}}.
\label{noto9+}\end{eqnarray}
We have (\ref{noto9+})$\leq cN N^{\frac{-3d-2-\alpha(d+1)}{4d+3}}$. Therefore, for  $-1-\frac{1}{2d+1}\leq \alpha$,  
  Condition (\ref{war2a+}) is satisfied for large enough $N$.

Then we apply Thm \ref{thm} (2)(a), using
\[N^{-1/2}L^{2d+\frac32}\leq N^{-1/2}(bN^{1-\alpha})^{1/2}=O(N^{-\alpha/2}).\]

(2b): We check  Condition (\ref{war2b}):
\begin{eqnarray}\notag
K_{\Bog}^j(\p)&\leq&\big(bN^{1-\alpha}L^{-d-6}\big)^{1/3}\\\notag
&\leq&\big(bN^{1-\alpha}L^{-d-6}\big)^{1/3}\big(bN^{1-\alpha}L^{-4d-3}\big)^{\frac{2d}{12d+9}}\\
&=&C N^{\frac{(1-\alpha)(2d+1)}{12d+9}}L^{-d-2}.\label{noto}
\end{eqnarray}
For $ -1-\frac{1}{2d+1}\leq\alpha$ we have (\ref{noto})$\leq CNL^{-d-2}$.

Also
\begin{eqnarray*}
\ref{noto}=O(N^{\frac{-(2+\alpha)(4d+3)+2d(1-\alpha)}{12d+9}}) NL^{-2-d}
\end{eqnarray*}
which implies
\begin{eqnarray}\notag
K_{\Bog}^j(\p)
&\leq& O(N^{\frac{-(2+\alpha)(4d+3)+2d(1-\alpha)}{12d+9}}) NL^{-2}
.\notag
\end{eqnarray}
Therefore, if $ -1-\frac{1}{2d+1}<\alpha$,  Condition (\ref{war2b+}) is satisfied for large enough $N$.

We clearly have
\begin{eqnarray} N^{-1/2}L^{d/2+3}\big(K_{\Bog}^j(\p)+L^{d-1}\big)^{3/2}&\leq&
2^{3/2}N^{-1/2}L^{d/2+3}K_{\Bog}^j(\p)^{3/2}\label{noto0}\\&&+2^{3/2}N^{-1/2}L^{2d+\frac32}.
\label{noto1}\end{eqnarray} 
We already know that   (\ref{noto1}) is $O(N^{-\alpha/2})$. Thus to apply Thm \ref{thm} (2b) we need only to bound (\ref{noto0}):
\[N^{-1/2}L^{d/2+3}K_{\Bog}^j(\p)^{3/2}\leq
N^{-1/2}L^{d/2+3}\big(bN^{1-\alpha}L^{-d-6}\big)^{1/2}=O(N^{-\alpha/2}).
\]

(2c):  Condition (\ref{war2b}) is trivially satisfied, since for $L\geq1$, $N\geq1$ and $\alpha > 0$
\[K_{\Bog}^j(\p)\leq bN^{1-\alpha}L^{-d-6}\leq bNL^{-d-2}.
\]

We have
\begin{eqnarray}\notag
K_{\Bog}^j(\p)
&\leq&bN^{1-\alpha}L^{-d-6}
L^{d+4}\\
&=&O(N^{-\alpha}) NL^{-2}
.\notag
\end{eqnarray}
Therefore Condition (\ref{war2b+}) is satisfied for large enough $N$.

To apply Thm \ref{thm} (2b) we  bound  (\ref{noto0}):
\begin{eqnarray*}
N^{-1/2}L^{d/2+3}K_{\Bog}^j(\p)^{3/2}&=&
b^{1/2}N^{-\alpha/2}\big(b^{-1}N^{-(1-\alpha)}L^{d+6}\big)^{1/2}K_{\Bog}^j(\p)^{3/2}\\
&\leq& O(N^{-\alpha/2})K_{\Bog}^j(\p).\end{eqnarray*}
\end{proof}

\noindent{\em Proof of Corollary \ref{coro}} Part (1) follows directly from Lemma \ref{lemmaappendix} (1a) and (2a).

Let us prove (2). To simplify notation we drop $\p$ from
 $K_N^j(\p)$ and $K_\Bog^j(\p)$.

Assume first that $K_N^j\leq K_\Bog^j$. By Lemma \ref{lemmaappendix} (1b) for some $C>0$ 
\begin{eqnarray*}
\frac{1}{2}\hat v(\0)(N-1)+E_\Bog+K_\Bog^j&\leq &E_N+K_N^j+CN^{-\alpha/2}\\
&\leq &E_N+K_\Bog^j+CN^{-\alpha/2}.\end{eqnarray*}
Thus
\begin{eqnarray*}\frac{1}{2}\hat v(\0)(N-1)+E_\Bog-E_N+K_\Bog^j-CN^{-\alpha/2}&\leq& K_N^j\leq K_\Bog^j.
\end{eqnarray*}
By Lemma \ref{lemmaappendix} (2a), \[-CN^{-\alpha/2}\leq
\frac12 \hat{v}(\0)(N-1)+E_\Bog-E_N.\]
Hence the statement follows.

Assume now that $K_\Bog^j\leq K_N^j$. Then we use Lemma \ref{lemmaappendix} (2b) and obtain
\begin{gather*}
K_\Bog^j\leq K_N^j\leq \frac{1}{2}\hat v(\0)(N-1)+E_\Bog-E_N+K_\Bog^j+CN^{-\alpha/2}.
\end{gather*}
By Lemma \ref{lemmaappendix} (1a),
\[\frac{1}{2}\hat v(\0)(N-1)+E_\Bog-E_N\leq CN^{-\alpha/2}.\]
 The statement follows again. This ends the proof of part (2).

The proof of part (3) is similar, except that one  uses Lemma \ref{lemmaappendix} (1c) and (2c).
\qed

\section{Summary and outlook}
\subsection{Summary}
Let us give here a brief recap of what has been presented and proven in this thesis.
\subsubsection{Quasiparticles and excitation spectrum}
In the first part of the thesis we proposed different concepts and definitions related to the notion of a quasiparticle and excitation spectrum of \textit{translation invariant quantum systems}. The presented approach relies on a spectral and Hamiltonian-based point of view.

In Section \ref{excandquasi} we have defined the \textit{energy-momentum spectrum} and the \textit{excitation spectrum} of a translation invariant quantum system. Based on these two concepts, we have introduced the \textit{critical velocity} and the \textit{energy gap}. Having done that we introduced the notion of a \textit{quasiparticle quantum system} and described the properties of the excitation spectrum of such system. In particular, Theorem \ref{spectralpropquantusys} provides a description of the critical velocity and energy gap for these systems.

Because the notion of a quasiparticle quantum system may be too idealistic in real situations (e.g. in condensed matter physics), we introduced also the notion of \textit{quasiparticle-like systems} and a system which has \textit{quasiparticle-like excitation spectrum}. At the end of the first part we have also analyzed properties of translation invariant quantum systems which posses the fermionic parity superselection rule. We concluded the first part of the thesis by analyzing the energy-momentum spectrum of an non-interacting Fermi gas.  

\subsubsection{Mean-field methods and calculation of excitation spectrum}
In the second part of the thesis we presented approximate methods which allow us to calculate the excitation spectrum of a translation invariant interacting Bose and Fermi gas.

In the context of the Bose gas, we presented the well known \textit{Bogoliubov approximation} and the so-called \textit{improved Bogoliubov method}. The latter one was based on the procedure of \textit{minimization of the Hamiltonian over pure Gaussian states}. This method, together with the mean-field approach which neglects certain higher order terms in the transformed Hamiltonian, suggests that the excitation spectrum of a homogeneous Bose gas is quasiparticle-like.

Later, we applied the same approach to the homogeneous Fermi gas. It implies - again under certain assumptions - that the excitation spectrum of a homogeneous Fermi gas is quasiparticle-like. We also presented figures which show how the conjectured excitation spectrum should look like.

At the end of the second part we summarized the procedure of minimization of Hamiltonians over pure Gaussian states in more abstract terms. This led to Theorem \ref{main}, which we call \textit{Beliaev's Thoerem}. This theorem explains why this minimization scheme leads to a picture of approximate quasiparticles for systems with quite general Hamiltonians.  
 
\subsubsection{Rigorous justification of the Bogoliubov approximation}
In the last part of the thesis we presented a rigorous result concerning the Bogoliubov approximation. This result (Theorem \ref{thm}), which can be seen as the main result of this thesis, expresses the idea that the Bogoliubov approximation becomes exact for the mean-field Hamiltonian
$$H_N^{L}=-\sum_{i=1}^{N}\Delta^{L}_{i}+\frac{L^d}{N}\sum_{1\leq i<j\leq N} v^{L}(\x_{i}-\x_{j})$$ 
for large $N$ (number of particles) and $L$ (size of the system) provided that the volume does not grow too fast. We call this Hamiltonian mean-field because of the coupling term $\frac{L^d}{N}=\frac{1}{\rho}$ located in front of the interaction term. In the particular limit of large $N$ and $L$ for which the Bogoliubov approximation becomes exact, the coupling term tends to zero. 

The proof of this result involves the concept of the extended Fock space. In the extended Fock space we allow for a "negative number of particles" in the zero momentum state. The benefit of using the extended Fock space is that one can introduce operators which simplify certain calculations.

\subsection{Outlook}

Finally, let us mention a few directions we would like to follow in our future work and which arise from the considerations described in this thesis.

\subsubsection{Rigorous results in the fermionic case}
One of the possible directions to follow concerns a rigorous analysis of the excitation spectrum of a homogeneous Fermi gas. There are only few results concerning this system, mostly related to the so-called \textit{BCS functional} and \textit{gap equation} at positive temperature (e.g. \cite{HHSS2008,BHS2013,HS2008PRB}).

The Hartree-Fock-Bogoliubov approximation presented in Subsection \ref{HFBBCS} suggests in particular that a homogeneous Fermi gas at zero temperature has a positive energy gap and positive critical velocity. It suggests also that the excitation spectrum is quasiparticle-like. The hope is that these conjectures could be proven in the same spirit as in the case of the rigorous justification of the Bogoliubov approximation, in particular involving some kind of mean-field limit. 

Recall that arguments supporting the above, rather vague statements involve the assumption that the interaction is (at least partially) attractive. Note that rigorous methods used in part three of this thesis  use the positivity of the potential and its Fourier transform. Thus, other methods will be needed in the fermionic case.

\subsubsection{Beliaev damping}   
The term \textit{Beliaev damping}, as explained in Section \ref{s1},   describes a process of dissipation due to collisions between quasiparticles of an interacting Bose gas. Using perturbation theory  Beliaev calculated that the pole $e_{p}$ of the Green's function of an interacting Bose gas contains an imaginary part with
\begin{equation*}
\mathfrak{I}  e_{p} \approx |p|^{5}.
\end{equation*}
It took a long time before Beliaev's prediction was confirmed experimentally . The experimental realisation of Bose-Einstein Condensation in 1995 (\cite{Ketterle}) gave new impetus to theoretical work on Beliaev damping. One should mention here the work by Giorgini (\cite{G1998}) who rederived Belieav's damping term starting with the Gross-Pitaevskii equation and using the random phase approximation.

One might expect that a mathematically more rigorous understanding of this phenomenon could be obtained by introducing a Hamiltonian model which would rely on the already proven quasiparticle-like structure of the excitation spectrum. This approach could involve the so-called Friedrichs model (\cite{Do1965}) and an application of the Fermi Golden Rule. 

\subsubsection{Correlation functions}
Our proof concerning the justification of the Bogoliubov approximation in the mean-field limit involves the determination of an approximate ground state. Since correlation functions are given by expectation values of products of creation and annihilation operators in the ground state, one can hope that one could prove results concerning correlation functions in a more rigorous way, maybe using methods developed in this thesis.

Perhaps a similar procedure could be also applied to the rigorous study of the so-called \textit{van Hove form factor} (\cite{Hove1954}). 

An analysis of correlation and Green's functions could also offer an insight into the so-called Hugenholtz-Pines theorem (\cite{HP59}, see also \cite{GN64}).

\clearpage

\addcontentsline{toc}{section}{References}
\bibliographystyle{acm}
\bibliography{doktorat}

\end{document}